\definecolor{TUMBlau}{HTML}{0065bd}
\theoremstyle{plain}
\newtheorem{theorem}{Theorem}
\newtheorem{lemma}[theorem]{Lemma} 
\newtheorem{corollary}[theorem]{Corollary}  
\newtheorem{proposition}[theorem]{Proposition}  
\newtheorem{definition}[theorem]{Definition}
\newtheorem{mainlemma}{Main Lemma}
\newtheorem{remark}{Remark}
\newcommand{\bE}{\ensuremath{\mathbf{E}}\xspace}
\newcommand{\bP}{\ensuremath{\mathbf{P}}\xspace}
\newcommand{\CJ}{\ensuremath{\mathcal{J}}\xspace}
\newcommand{\OPT}{{\mathrm{OPT}}}
\newcommand{\rom}{{\mathrm{rom}}}
\newcommand{\BZ}{\ensuremath{\mathbb{Z}}\xspace}
\newcommand{\CC}{\ensuremath{\mathcal{C}}\xspace}
\newcommand{\CM}{\ensuremath{\mathcal{M}}\xspace}
\newcommand{\CP}{\ensuremath{\mathcal{P}}\xspace}
\newcommand{\jsmall}{{\mathrm{small}}}
\newcommand{\jmed}{{\mathrm{med}}}
\newcommand{\jbig}{{\mathrm{big}}}
\newcommand{\jmax}{{\mathrm{max}}}
\newcommand{\lowM}{{\mathrm{low}}}
\newcommand{\midM}{{\mathrm{mid}}}
\newcommand{\mempty}{{\mathrm{empty}}}
\newcommand{\guess}{{\mathrm{guess}}}
\newcommand{\LL}[1]{{\mathrm{Light Load[}#1\mathrm{]}}}
\newcommand{\pre}{{\mathrm{pre}}}
\newcommand{\res}{{\mathrm{res}}}
\newcommand{\rep}{{\mathrm{rep}}}
\author[1]{Susanne Albers}
\author[2]{Maximilian Janke}
\affil[1]{Department of Computer Science, Technical University of Munich\\
\href{mailto:albers@in.tum.de}{albers@in.tum.de}}
\affil[2]{Department of Computer Science, Technical University of Munich\\
\href{mailto:janke@in.tum.de}{janke@in.tum.de}}
\date{}
\title{Scheduling in the Secretary Model\footnote{Work supported by the European Research Council, Grant Agreement No. 691672, project APEG.}}
\begin{document}

\maketitle
\begin{abstract}
This paper studies Makespan Minimization in the random-order model. Formally, jobs, specified by their processing times, are presented in a uniformly random order. An online algorithm has to assign each job permanently and irrevocably to one of~$m$ parallel and identical machines such that the expected time it takes to process them all, the makespan, is minimized.

We give two deterministic algorithms. First, a straightforward adaptation of the semi-online strategy $\mathrm{Light Load}$ \cite{albers_semi-online_2012} provides a very simple algorithm retaining its competitive ratio of $1.75$. A new and sophisticated algorithm is $1.535$-competitive. These competitive ratios are not only obtained in expectation but, in fact, for all but a very tiny fraction of job orders.

Classically, online makespan minimization only considers the worst-case order. Here, no competitive ratio below $1.885$ for deterministic algorithms and $1.581$ using randomization is possible. The best randomized algorithm so far is $1.916$-competitive. Our results show that classical worst-case orders are quite rare and pessimistic for many applications. They also demonstrate the power of randomization when compared much stronger deterministic reordering models.

We complement our results by providing first lower bounds. A competitive ratio obtained on nearly all possible job orders must be at least $1.257$. This implies a lower bound of~$1.043$ for both deterministic and randomized algorithms in the general model.
\end{abstract}

\section{Introduction}
We study one of the most basic scheduling problems, the classic problem of makespan minimization. For the classic makespan minimization problem one is given an input set $\CJ$ of $n$ jobs, which have to be scheduled onto $m$ identical and parallel machines. Preemption is not allowed. Each job $J\in\CJ$ runs on precisely one machine. 
The goal is to find a schedule minimizing the \emph{makespan}, i.e.\ the last completion time of a job. This problem admits a long line of research and countless practical applications in both, its offline variant see e.g.\ \cite{graham_bounds_1966, hochbaum_using_1987} and references therein, as well as in the online setting studied in this paper.

In the online setting jobs are revealed one by one and each has to be scheduled by an online algorithm $A$ immediately and irrevocably without knowing the sizes of future jobs. The makespan of online algorithm $A$, denoted by $A(\CJ^\sigma)$, may depend on both the job set $\CJ$ and the job order~$\sigma$. The optimum makespan $\OPT(\CJ)$ only depends on the former. Traditionally, one measures the performance of $A$ in terms of competitive analysis. The input set $\CJ$ as well as the job order~$\sigma$ are chosen by an adversary whose goal is to maximize the ratio $\frac{A(\CJ^\sigma)}{\OPT(\CJ)}$. The maximum ratio, $c=\sup_{\CJ,\sigma} \frac{A(\CJ^\sigma)}{\OPT(\CJ)}$, is the \emph{(adversarial) competitive ratio}. The goal is to find online algorithms obtaining small competitive ratios. 

In the classical secretary problem the goal is to hire the best secretary out of a linearly ordered set $S$ of candidates. Its size $n$ is known. Secretaries appear one by one in a uniformly random order. An online algorithm can only compare secretaries it has seen so far. It has to decide irrevocably for each new arrival whether this is the single one it wants to hire. Once a candidate is hired, future ones are automatically rejected even if they are better. The algorithm fails unless it picks the best secretary. Similar to makespan minimization this problem has been long studied, see \cite{ dynkin_optimum_1963, feldman_simple_2014, ferguson_who_1989, kaplan_competitive_2020, kleinberg_multiple-choice_2005, lachish_o_2014, lindley_dynamic_1961} and references therein.

This paper studies a makespan minimization under the input model of the secretary problem. The adversary determines a job set of known size $n$.  Similar to the secretary problem, these jobs are presented to an online algorithm $A$ one by one in a uniformly random order. Again, $A$ has to schedule each job without knowledge of the future. The expected makespan is considered. The \emph{competitive ratio in the secretary (or random-order) model}  $c=\sup_{\CJ} \bE_\sigma\left[ \frac{A(\CJ^\sigma)}{\OPT(\CJ)}\right]$ is the maximum ratio between the expected makespan of~$A$ and the optimum makespan. The goal is again to obtain small competitive ratios.

We propose the term \emph{secretary model}, first used in \cite{molinaro_online_2017}, to set this model apart from the model studied by the same authors in~\cite{albers_scheduling_2020}  where~$n$, the number of jobs, is not known in advance. Not knowing $n$ is quite restrictive and has never been considered in any other work on scheduling with random-order arrival~\cite{gobel2015online, molinaro_online_2017, osborn_lists_2008}. For the adversarial model such information is useless.

Similar frameworks received a lot of recent attention in the research community sparking the area of random-order analysis. Random-order analysis has been successfully applied to numerous problems such as matching \cite{goel_online_2008, karande_online_2011, karp_optimal_1990, mahdian_online_2011}, various generalizations of the secretary problem \cite{babaioff_matroid_2018, feldman_simple_2014, ferguson_who_1989, kaplan_competitive_2020, kleinberg_multiple-choice_2005, lachish_o_2014}, knapsack problems \cite{babaioff_knapsack_2007}, bin packing \cite{kenyon_best-fit_1996}, facility location \cite{meyerson_online_2001},  packing LPs \cite{kesselheim_primal_2014}, convex optimization \cite{gupta_maximizing_2018}, welfare maximization \cite{korula_online_2018}, budgeted allocation \cite{mirrokni_simultaneous_2012} and recently scheduling \cite{albers_scheduling_2020, gobel2015online, molinaro_online_2017, osborn_lists_2008}. 

For makespan minimization the role of randomization is poorly understood.  The lower bound of $1.581$ from~\cite{chen_lower_1994,sgall_lower_1997} is considered pessimistic and exhibits quite a big gap towards the best randomized ratio of $1.916$ from~\cite{albers_randomized_2002}.

The upper bound of $1.535$ in this paper demonstrates surprising power when it comes to randomization in the input order.  The power of reordering has been studied by Englert~et.~al.~\cite{englert_power_2008}. Their lower bound considers online algorithms, which are able to look-ahead and rearrange almost all of the input sequence in advance. Their only disadvantage is that such rearrangement is deterministic. Englert~et.~al.\  show that these algorithms can not be better than $1.466$-competitive for general~$m$. This is quite close to our upper bound of $1.535$, given that the algorithm involved has neither look-ahead nor control over the arrangement of the sequence. 

A main consequence of the paper is that random-order arrival allows to beat the lower bound of $1.581$ for randomized adversarial algorithms. This formally sets this model apart from the classical adversarial setting even if randomization is involved.

\subparagraph*{Previous work:}
Online makespan minimization and variants of the secretary problem have been studied extensively. We only review results most relevant to this work beginning with the traditional adversarial setting. For $m$ identical machines, Graham \cite{graham_bounds_1966} showed 1966 that the greedy strategy, which schedules each job onto a least loaded machine, is $\left(2-\frac{1}{m}\right)$-competitive.  This was subsequently improved in a long line of research \cite{galambos_-line_1993, bartal_new_1992, karger_better_1996, albers_better_1999} leading to the currently best competitive ratio by Fleischer and Wahl \cite{fleischer_-line_2000}, which approaches $1.9201$ for $m\rightarrow\infty$. Chen et al.\ \cite{chen_approximating_2015} presented an algorithm whose competitive ratio is at most $(1+\varepsilon)$-times the optimum one, though the actual ratio remains to be determined. For general~$m$, lower bounds are provided in \cite{faigle_performance_1989, bartal_better_1994, gormley_generating_2000, rudin_improved_2001}. The currently best bound is due to Rudin III \cite{rudin_improved_2001} who shows that no deterministic online algorithm can be better than $1.88$-competitive.

The role of randomization in this model is not well understood. 
The currently best randomized ratio of $1.916$ \cite{albers_randomized_2002} barely beats deterministic guarantees. In contrast, the best lower bound approaches $\frac{e}{e-1}> 1.581$ for $m\rightarrow\infty$ \cite{chen_lower_1994,sgall_lower_1997}. There has been considerable research interest in tightening the~gap.
 
Recent results for makespan minimization consider variants where the online algorithm obtains extra resources. There is the semi-online setting where additional information on the job sequence is given in advance, like the optimum makespan \cite{kellerer_efficient_2013} or the total processing time of jobs \cite{albers_semi-online_2012, cheng_semi--line_2005, kellerer_semi_1997,kellerer2015efficient}. In the former model the optimum competitive ratio lies in the interval $[1.333,1.5]$, see~\cite{kellerer_efficient_2013}, while for the latter the optimum competitive ratio is known to be~$1.585$~\cite{albers_semi-online_2012,kellerer2015efficient}.
Taking this further, the advice complexity setting allows the algorithm to receive a certain number of advice bits from an offline oracle \cite{albers_online_2017, dohrau_online_2015, kellerer_semi_1997}.  

The work of Englert~et.~al.~\cite{englert_power_2008} is particularly relevant as they, too, study the power of reordering. Their algorithm has a buffer, which can reorder the sequence 'on the fly'. They prove that a buffer size linear in $m$ suffices to be $1.466$-competitive. Their lower bound shows that this result cannot be improved for any sensible buffer size.\footnote{A buffer size of $n$ would not be sensible since it reverts to the offline problem, which admits a PTAS~\cite{hochbaum_using_1987}. Their lower bound holds for any buffer size $b(n)$, depending on the input size $n$, if $n-b(n)$ is unbounded. Such a buffer can already hold almost all, say any fraction, of the input sequence.}

The secretary problem is even older than scheduling \cite{ferguson_who_1989}. Since the literature is vast, we only summarize the work most relevant to this paper. Lindley \cite{lindley_dynamic_1961} and Dynkin \cite{dynkin_optimum_1963} first show that the optimum strategy finds the best secretary with probability $1/e$ for $n\rightarrow\infty$.  Recent research focusses on many variants, among others generalizations to several secretaries \cite{albers2020new, kleinberg_multiple-choice_2005} or even matroids~\cite{babaioff_matroid_2018, feldman_simple_2014, lachish_o_2014}. A modern version considers adversarial orders but allows prior sampling \cite{correa2021secretary, kaplan_competitive_2020}. Related models are prophet inequalities and the game of googol \cite{correa2020two,correa2019prophet}.

So far, little is known for scheduling in the secretary model. Osborn and Torng \cite{osborn_lists_2008} prove that Graham's greedy strategy is still not better than $2$-competitive for~$m\rightarrow\infty$. In \cite{albers_scheduling_2020} we studied the very restricted variant where~$n$, the number of jobs, is not known in advance and provide a $1.8478$-competitive algorithm and first lower bounds. Here, most common techniques, e.g.\ sampling, do not work. We are the only ones who ever considered this restriction. Molinary \cite{molinaro_online_2017} studies a very general scheduling problem. His algorithm has expected makespan $(1+\varepsilon)\OPT+O(\log(m)/\varepsilon)$, but its random-order competitive ratio is not further analyzed. G\"obel~et~al.~\cite{gobel2015online} study a scheduling problem on a single machine where the goal is to minimize weighted completion times. Their competitive ratio is $O(\log(n))$ whereas they show that the adversarial model allows no sublinear competitive ratio.

\subparagraph*{Our contribution:}
We study makespan minimization for the secretary (or random-order) model in depth. We show that basic sampling ideas allow to adapt a fairly simple algorithm from the literature \cite{albers_semi-online_2012} to be $1.75$-competitive. A more sophisticated algorithm vastly improves this competitive ratio to $1.535$. This beats all lower bounds for adversarial scheduling, including the bound of $1.582$ for randomized algorithms.

Our main results focus on a large number of machines, $m\rightarrow\infty$. This is in line with most recent adversarial results \cite{albers_randomized_2002, fleischer_-line_2000} and all random-order scheduling results \cite{albers_scheduling_2020, gobel2015online, molinaro_online_2017, osborn_lists_2008}. While adversarial guarantees are known to improve for small numbers of machines, nobody has ever, to the best of our knowledge, explored guarantees for random-order arrival on a small number of machines. We prove that our simple algorithm is $\left(1.75+O\left(\frac{1}{\sqrt{m}}\right)\right)$-competitive. Explicit bounds on the hidden term are given as well as simulations, which indicate good performance in practice. This shows that the focus of contemporary analyses on the limit case is sensible and does not hide unreasonably large additional terms.

All results in this paper abide to the stronger measure of \emph{nearly competitiveness} from~\cite{albers_scheduling_2020}. An algorithm is required to achieve its competitive ratio not only in expectation but on nearly all input permutations. Thus, input sequences where it is not obtained can be considered extremely rare and pathological. Moreover, we require worst-case guarantees even for such pathological inputs. This seems quite relevant to practical applications, where we do not expect fully random inputs. Both algorithms in this paper hold up to this stronger measure of nearly competitiveness. 

A basic approch in random-order models relies on sampling; a small part of the input is used to predict the rest. Sampling allows us to include techniques from semi-online and advice scheduling with two further challenges. On the one hand, the advice is imperfect and may be, albeit with low probability, totally wrong. On the other hand, the advice has to be learned, rather than being available right from the start. In the beginning 'mistakes' cannot be avoided. This makes it impossible to adapt better semi-online algorithms than $\mathrm{Light Load}$, namely~\cite{cheng_semi--line_2005, kellerer_semi_1997,kellerer2015efficient} to our model. These algorithms need to know the total processing volume right from the start.
Note that the advanced algorithm in this paper out-competes the optimum competitive ratio of~$1.585$ these semi-online algorithms can achieve~\cite{albers_better_1999,kellerer2015efficient}. We conjecture that this is not possible for algorithms that solely use sampling.
 
Algorithms that can only use sampling are studied in a modern variant of the secretary problem \cite{correa2021secretary, kaplan_competitive_2020}. First, a random sample is observed, then the sequence is treated in adversarial order. The analysis of $\mathrm{Light Load}$ carries over to such a model without changes. The $1.535$-competitive algorithm does not maintain its competitive ratio in such a model. 

The $1.535$-competitive main algorithm is based on a modern point of view, which, analogous to kernelization, reduces complex inputs to sets of critical jobs. A set of critical jobs is estimated using sampling. Critical jobs impose a lower bound on the optimum makespan. If the bound is high, an enhanced version of Graham's greedy strategy suffices; called the Least-Loaded-Strategy. Else, it is important to schedule critical jobs correctly. The Critical-Job-Strategy, based on sampling, estimates the critical jobs and schedules them ahead of time. An easy heuristic suffices, due to uncertainty involved in the estimates. Uncertainty poses not only the main challenge in the design of the Critical-Job-Strategy. On a larger scale, it also makes it hard to decide, which of the two strategies to use. Sometimes the Critical-Job-Strategy is chosen wrongly. These cases comprise the crux of the analysis and require using random-order arrival in a novel way beyond sampling.

The analyses of both algorithms follows three steps. First, adversarial analyses give worst-case guarantees and take care of \emph{simple job sets}, which lack structure to be exploited via random reordering. Intuitively, random sequences have certain properties, like being not 'ordered'. A second step formalizes this, introducing stable orders. Non-stable orders are rare and negligible. Reducing to stable orders yields a natural semi-online setting. Third, we analyze our algorithm in this semi-online setting. See \Cref{fig:simplestable} for a lay of the land.
 
The paper concludes with lower bounds for the secretary~model. No algorithm, deterministic or randomized, is better than nearly $1.257$-competitive. This immediately yields a lower bound of $1.043$ in the general secretary model, too.

\section{Notation}
Almost all notations relevant in scheduling depend on the input set $\CJ$ or on the ordered input sequence $\CJ^\sigma$. We use the notation $[\CJ]$ and $[\CJ^\sigma]$ to indicate such dependency, for example $L[\CJ]$ and $L_\varphi[\CJ^\sigma]$. If such dependency needs not be mentioned, for example if the sequence $\CJ^\sigma$ is fixed, we drop this appendage, simply writing $L$ and $L_\varphi$. Similarly, we write $\OPT$ for $\OPT(\CJ)$. If we focus on the job order $\sigma$ whilst the dependency of the job set $\CJ$ does not deserve mention, the notation $[\sigma]$ instead of $[\CJ^\sigma]$ is used. We could for example write $L_\varphi[\sigma]$.

\section{A strong measure of random-order competitiveness}

Consider a set of $n$ jobs $\CJ=\{J_1,\ldots, J_n\}$ with non-negative sizes $p_1,\ldots, p_n$ and let $S_n$ be the group of permutations of the integers from $1$ to $n$. We consider $S_n$ a probability space under the uniform distribution, i.e.\ we pick each permutation with probability $1/n!$. Each permutation $\sigma\in S_n$, called an \emph{order}, gives a \emph{job sequence} $\CJ^\sigma=J_{\sigma(1)},\ldots,J_{\sigma(n)}$. Recall that traditionally an online algorithm $A$ is called \emph{$c$-competitive} for some $c\ge 1$ if we have for all job sets $\CJ$ and job orders $\sigma$ that $A(\CJ^\sigma)\le c\OPT(\CJ)$. We call this the \emph{adversarial model}.

In the secretary~model we consider the expected makespan of $A$ under a uniformly chosen job order, i.e.\ $A^\rom=\bE_{\sigma\sim S_n}[A(\CJ^\sigma)]=\frac{1}{n!}\sum_{\sigma\in S_n} A(\CJ^\sigma)$, rather than the makespan achieved in a worst-case order. The algorithm $A$ is \emph{$c$-competitive in the secretary model} if $A^\rom(\CJ)\le c\OPT(\CJ)$ for all input sets $\CJ$.

This model tries to lower the impact of particularly badly ordered sequences by looking at competitive ratios only in expectation. Interestingly, the scheduling problem allows for a stronger measure of random-order competitiveness for large $m$, called \emph{nearly competitiveness}~\cite{albers_scheduling_2020}. One requires the given competitive ratio to be obtained on nearly all sequences, not only in expectation, as well as a bound on the adversarial competitive ratio as well. We recall the definition and the main fact, that an algorithm is already $c$-competitive in the secretary model if it is nearly $c$-competitive.

\begin{definition}\label{def:comp}
A deterministic online algorithm $A$ is called {\em nearly $c$-competitive\/} if the following two conditions hold.
\begin{itemize}
\setlength{\parskip}{0pt} \setlength{\itemsep}{0pt plus 1pt}
\item The algorithm $A$ achieves a constant competitive ratio in the adversarial model.
\item For every $\varepsilon >0$, we can find $m(\varepsilon)$ such that for all machine numbers $m \geq m(\varepsilon)$
and all job sequences ${\cal J}$ there holds $\bP_{\sigma\sim S_n} [A({\cal J}^\sigma) \geq (c+\varepsilon) OPT({\cal J})] \leq \varepsilon$.
\end{itemize}
\end{definition}

\begin{lemma}\label{le.nearly}
If a deterministic online algorithm is nearly $c$-competitive, then it is $c$-competitive in the random-order model as $m\rightarrow \infty$.
\end{lemma}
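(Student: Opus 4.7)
The plan is a straightforward splitting of the expectation according to the event that the competitive ratio is violated. Fix a job set $\CJ$ and let $c'$ denote the (finite) adversarial competitive ratio guaranteed by the first bullet of \Cref{def:comp}. Pick any $\varepsilon>0$ and let $m(\varepsilon)$ be as in the second bullet. For every $m\ge m(\varepsilon)$, define the ``bad'' event $B=\{\sigma\in S_n : A(\CJ^\sigma)\ge (c+\varepsilon)\OPT(\CJ)\}$, which satisfies $\bP_\sigma[B]\le \varepsilon$. On the complement of $B$ the algorithm produces a makespan of at most $(c+\varepsilon)\OPT(\CJ)$ by definition, while on $B$ the adversarial guarantee still yields the pointwise bound $A(\CJ^\sigma)\le c'\OPT(\CJ)$.

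Conditioning on $B$ and its complement and combining these two bounds gives
\begin{equation*}
A^\rom(\CJ)=\bE_\sigma[A(\CJ^\sigma)] \le (c+\varepsilon)\OPT(\CJ)\cdot\bP_\sigma[B^c] + c'\OPT(\CJ)\cdot\bP_\sigma[B] \le \bigl(c+\varepsilon+\varepsilon c'\bigr)\OPT(\CJ),
\end{equation*}
where in the last inequality I simply dropped the factor $\bP_\sigma[B^c]\le 1$ and used $\bP_\sigma[B]\le \varepsilon$. Since $c'$ is a fixed constant independent of $\varepsilon$ and $m$, given any target slack $\delta>0$ I choose $\varepsilon=\delta/(1+c')$ to obtain $A^\rom(\CJ)\le (c+\delta)\OPT(\CJ)$ for every $m\ge m(\varepsilon)$ and every input set $\CJ$. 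Letting $\delta\to 0$ (equivalently, $m\to\infty$) shows that the random-order competitive ratio tends to $c$, which is the desired conclusion.

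The proof really has no obstacle beyond correctly invoking both halves of the nearly competitive definition simultaneously. The only conceptual point worth flagging is that the adversarial guarantee is used exclusively to control the contribution of the measure-$\varepsilon$ ``pathological'' event: without a constant worst-case bound the tail could in principle blow up the expectation even though it has vanishing probability, which is precisely why the first bullet of \Cref{def:comp} is indispensable.
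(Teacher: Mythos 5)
Your proof is correct and follows essentially the same route as the paper's: split the expectation over the low-probability bad event, bound the good case by $(c+\varepsilon)\OPT$, bound the bad case by the constant adversarial ratio, and choose $\varepsilon$ proportional to the target slack $\delta$. The only cosmetic difference is that you drop the harmless $(1-\varepsilon)$ factor (the paper keeps it), which leads to your choice $\varepsilon=\delta/(1+c')$ versus the paper's $\varepsilon=\delta/(C-c+1)$; both are valid.
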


\begin{proof}
Let $C$ be the constant adversarial competitive ratio of $A$. Given $\delta>0$ we need to show that we can choose $m$ large enough such that our algorithm is $(c+\delta)$-competitive in the random-order model. For $\varepsilon=\frac{\delta}{C-c+1}$ choose $m$ large enough such that $P_\varepsilon(\CJ)=\bP_{\sigma\sim S_n}\left[C_A(\CJ^\sigma)\ge (c+\varepsilon)\OPT(\CJ)\right] \le \varepsilon$ holds for every input sequence $\CJ$. Then we have for every input sequence $\CJ$ that
\begin{align*}C^\rom(\CJ)&\le \left(1-P_\varepsilon(\CJ)\right)\cdot (c+\varepsilon)\OPT(\CJ) +P_\varepsilon(\CJ)\cdot C\cdot\OPT(\CJ)\le ((1-\varepsilon)(c+\varepsilon)+\varepsilon C)\OPT(\CJ) \\ &\le(c+\delta(C-c+1))\OPT(\CJ)=(c+\delta)\OPT(\CJ). &\qedhere \end{align*}
\end{proof}

\section{Basic properties}\label{sec.basic}
Given an input sequence $\CJ^\sigma=J_{\sigma(1)},\ldots J_{\sigma(n)}$ and $0<\varphi\le 1$, we consider the \emph{load estimate} $L_\varphi=L_\varphi[\CJ^\sigma]=\frac{1}{\varphi m}\sum_{\sigma(t)\le\varphi n} p_{t}$, which is $\varphi^{-1}$-times the average load (in any schedule) after the first $\varphi n$ jobs have been assigned. We are particularly interested in the \emph{average load} $L=L[\CJ]=L_1[\CJ^\sigma]$, which is a lower bound for $\OPT$. The value $L_\varphi$ for smaller $\varphi$ is a guess for $L$, which can be made by an online algorithm after a $\varphi$-fraction of the input-sequence has been observed.
Given $t> 0$, let $p^t_\jmax=\max(p_{t'}\mid t'< t+1)$ be the size of the largest among the first $\lceil t\rceil$ jobs. In particular, $p_\jmax=p^n_\jmax$, the \emph{size of the largest jobs}, is again an important lower bound for $\OPT$.

\begin{proposition}\label{pro.bounds}
We have the following lower bounds for the optimum makespan: \vspace{-5pt}
\begin{itemize}
\item $p_\jmax \le \OPT$\vspace{-5pt}
\item $L \le \OPT$
\end{itemize}
\end{proposition}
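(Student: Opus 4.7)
Both bounds are classical and follow from a one-line averaging/pigeonhole argument, so the plan is simply to spell out the two observations cleanly. Throughout, fix an optimum schedule achieving makespan $\OPT$; it assigns each job to exactly one of the $m$ machines.

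For the first bound, the job of size $p_\jmax$ must be assigned to some machine in the optimum schedule. The load of that machine is at least $p_\jmax$, since job sizes are non-negative and the given job contributes $p_\jmax$ to its machine's load. The makespan $\OPT$ is the maximum load over all machines, hence $\OPT \ge p_\jmax$.

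For the second bound, recall $L = \frac{1}{m}\sum_{i=1}^n p_i$, so that $mL$ equals the total processing volume. In the optimum schedule the loads of the $m$ machines sum to exactly this total volume. Consequently the maximum load is at least the average, i.e.\ $\OPT \ge \frac{1}{m}\sum_i p_i = L$; otherwise every machine would have load strictly below $L$ and the total volume would be strictly below $mL$, a contradiction.

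There is no real obstacle here: both inequalities are ``the makespan dominates the max and the average of machine loads''. The only thing to be mildly careful about is matching the definition $L = L_1[\CJ^\sigma] = \frac{1}{m}\sum_{\sigma(t)\le n} p_t$ with the unordered sum $\sum_i p_i$, which is immediate because summing over $\sigma(t) \le n$ ranges over all jobs exactly once regardless of the permutation $\sigma$.
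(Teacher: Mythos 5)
Your proof is correct and uses exactly the same arguments as the paper: the first bound from the fact that the largest job sits on some machine, and the second from the maximum-exceeds-average observation. The extra care you take matching $L_1[\CJ^\sigma]$ to the unordered sum is fine but not needed in the paper's terse version.
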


\begin{proof}The first bound follows from observing that any schedule must, in particular, schedule the largest job on some machine whose load thus is at least $p_\jmax$. For the second bound one observes that the makespan, the maximum load of a machine in any given schedule, cannot be less than $L$, the average load of all machines.
\end{proof}

Let us consider any fixed (ordered) job sequence $\CJ^\sigma=J_{\sigma(1)},\ldots J_{\sigma(n)}$ and any (deterministic) algorithm that assigns these jobs to machines. We begin with some fundamental observations.

\begin{lemma}\label{le.avglb}
Let $\varphi>0$ and $t\le \varphi n$. Then the $k$-th least loaded machine at time $t$ has load at most $\frac{m}{m-k+1}\varphi L_{\varphi}$. In particular, its load is at most $\frac{m}{m-k+1} L$.
\end{lemma}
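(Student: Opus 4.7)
The plan is to combine two elementary observations: a counting bound on how much total load can be present at time $t$, and a pigeonhole/averaging bound relating the load of the $k$-th least loaded machine to the total load.

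First, I would bound the total load assigned on the $m$ machines at time $t$. Since jobs are assigned permanently and $t \le \varphi n$, the total load is exactly the sum of the sizes of the first $t$ jobs (in order $\sigma$). Because all processing times are non-negative, this is at most $\sum_{\sigma(t') \le \varphi n} p_{t'}$, which equals $\varphi m L_\varphi$ by the very definition of the load estimate $L_\varphi$.

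Next, I would apply an averaging argument. Let $\ell_k$ denote the load of the $k$-th least loaded machine at time $t$. Then the $m - k + 1$ machines whose rank is at least $k$ (the $k$-th itself, and the $m-k$ machines above it) each carry load at least $\ell_k$. So their combined load is at least $(m-k+1)\,\ell_k$, and since this cannot exceed the total load $\varphi m L_\varphi$, we obtain
\[
\ell_k \;\le\; \frac{\varphi m L_\varphi}{m-k+1} \;=\; \frac{m}{m-k+1}\,\varphi L_\varphi,
\]
which is the first claim.

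For the ``in particular'' part, I would simply note that $\varphi L_\varphi \le L$ holds for every $\varphi \in (0,1]$, again because $\varphi m L_\varphi$ is a sum of non-negative terms that is a sub-sum of $m L$. Substituting this into the bound above gives $\ell_k \le \frac{m}{m-k+1}\,L$. There is no real obstacle here; the proof is purely a counting argument together with the definition of $L_\varphi$.
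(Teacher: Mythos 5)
Your proof is correct and matches the paper's argument essentially step by step: both bound the total load at time $t$ by $\varphi m L_\varphi$ and then apply the pigeonhole bound $(m-k+1)\ell_k \le \varphi m L_\varphi$. You also make explicit the observation $\varphi L_\varphi \le L$ for the ``in particular'' part, which the paper leaves implicit.
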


\begin{proof}
Let $L^t$ be the sum of all loads at time $t$. Since this is the same as the sum of all processing times of jobs arriving at time $t$, we have $L^t\le\sum_{\sigma(t')\le t}p_{t'} \le \varphi m \frac{1}{\varphi m}\sum\limits_{\sigma(t')\le\varphi n} p_{t'}= \varphi m L_\varphi$.
Let $l$ be the load of the $k$-th least loaded machine at time $t$. Per definition $m-k+1$ machines had at least that load. Thus $(m-k+1)l\le L^t \le  \varphi m L_\varphi$ or, equivalently, $l\le\frac{m}{m-k+1}\varphi L_{\varphi}$.
\end{proof}

Consider the value $R ({\cal J}) =\min\{{L\over p_{\max}},1\}$, which measures the complexity of the input set independent of its order. Informally, a smaller value $R(\CJ)$ makes the job set easier to be scheduled but less suited to reordering arguments. Later, sets with a small value $R(\CJ)$ need to be treated separately. The following proposition is both interesting for its implication on general sequences and, particularly, simple sequences with $R ({\cal J})$ small. 

\begin{figure}[b]
    \centering
    \resizebox{\textwidth}{!}{
\begin{tikzpicture}
  \foreach \x in {0,...,161} {
  \ifnum\x=82
\draw[fill=lightgray!20] (\x,0) rectangle (\x+1,10);
\else
\draw[fill=lightgray!100] (\x,0) rectangle (\x+1,0.5);
\fi
}
\end{tikzpicture}
    }
    \caption{A surprisingly difficult sequence for random-order arguments. The big job carries most of the processing volume. Other jobs are negligible. Thus, all permutations look basically the same. Note that for such a 'simple' sequence $R ({\cal J}) $ is small.
    }
    \label{fig:counterx}
    \end{figure}
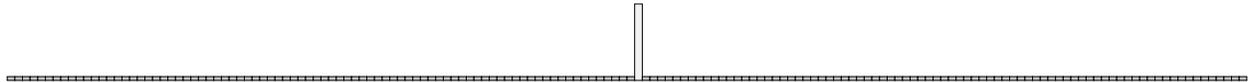

\begin{proposition}\label{prop1}
If any job $J$ is scheduled on the $k$-th least loaded machine, the load of said machine does not exceed $\left(\left(\frac{m}{m-i+1}\right)R(\CJ)+1 \right)OPT({\cal J})$ afterwards.
\end{proposition}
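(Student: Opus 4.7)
The plan is to combine \Cref{le.avglb} (applied at the moment when $J$ arrives) with the definition of $R(\CJ)$ and the lower bounds from \Cref{pro.bounds}. Denote by $l$ the load of the $k$-th least loaded machine immediately before $J$ is assigned, and let $t$ be the arrival time of $J$. Since $t\le n$, \Cref{le.avglb} with $\varphi=1$ yields $l\le \frac{m}{m-k+1}L$. After assigning $J$ the load of that machine becomes $l+p_J$, and because $p_J\le p_\jmax\le\OPT(\CJ)$ by \Cref{pro.bounds}, it suffices to show that
\[
\frac{m}{m-k+1}L \;\le\; \frac{m}{m-k+1}R(\CJ)\cdot\OPT(\CJ).
\]

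The crux is therefore the inequality $L\le R(\CJ)\cdot\OPT(\CJ)$, which I will establish by a short case distinction on the defining minimum of $R(\CJ)=\min\{L/p_\jmax,1\}$. If $L\le p_\jmax$, then $R(\CJ)=L/p_\jmax$, so $L=R(\CJ)\,p_\jmax\le R(\CJ)\,\OPT(\CJ)$ by the first lower bound in \Cref{pro.bounds}. Otherwise $R(\CJ)=1$ and the desired inequality is just the second bound $L\le \OPT(\CJ)$ from \Cref{pro.bounds}. Combining the two factors gives the claimed bound $\left(\frac{m}{m-k+1}R(\CJ)+1\right)\OPT(\CJ)$.

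There is no real obstacle here; the only subtle point is recognizing that $R(\CJ)$ was defined precisely so that it interpolates between the two lower bounds $p_\jmax$ and $L$ on $\OPT(\CJ)$, allowing us to replace the crude bound $L\le\OPT(\CJ)$ by the tighter $L\le R(\CJ)\,\OPT(\CJ)$ whenever the instance is ``simple'' in the sense that its total volume is dominated by one huge job (cf.\ the figure following the proposition). This improved dependence on $R(\CJ)$ is what makes the proposition useful when analyzing the simple job sets that resist random-order arguments.
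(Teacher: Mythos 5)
Your proof is correct and takes essentially the same route as the paper's: bound the pre-assignment load of the $k$-th least loaded machine by $\frac{m}{m-k+1}L$ via \Cref{le.avglb}, bound the added job by $p_\jmax\le\OPT$, and rewrite $L$ as $R(\CJ)\cdot\max(L,p_\jmax)\le R(\CJ)\cdot\OPT$. The paper does the last step in one line using $R(\CJ)=\frac{L}{\max(L,p_\jmax)}$, whereas you unfold the same identity into a two-case distinction on the minimum in the definition of $R(\CJ)$; the content is identical.
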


\begin{proof}
Let $l$ be the load of the $i$-th least loaded machine before $J$ is scheduled. Then $l \le \frac{m}{m-k+1}L$ by \Cref{le.avglb}. Since $J$ had size at most $p_\mathrm{max}$, the load of the machine it was scheduled on won't exceed
$l+p_\mathrm{max} \le \frac{m}{m-k+1}L + p_\mathrm{max}\le \frac{m}{m-k+1}\frac{L}{\max(L,p_\mathrm{max})}\OPT+\OPT =  \left(\left(\frac{m}{m-k+1}\right)R(\CJ)+1 \right)OPT$.
\end{proof}

Thus, if an algorithm avoids a constant fraction of most loaded machines, its competitive ratio is bounded and approaches $1$ as~$R ({\cal J})\rightarrow 0$.

We call a vector $(\tilde l_M^t)$ indexed over all machines~$M$ and all times $t=0,\ldots, n$ a \emph{pseudo-load} if $\tilde l_M^t\ge l_M^t$ for any time $t$ and machine $M$. We introduce such a pseudo-load in the analysis of our main algorithm. Let $\tilde L = \sup_t \frac{1}{m} \sum_M \tilde l_M^t$ be the \emph{maximum average pseudo-load} and, again, consider $\tilde R ({\cal J}) = \min\{ {\tilde L\over p_{\max}},{\tilde L\over L}\}=R(\CJ) {\tilde L\over L} $. The following observation is immediate.
\begin{lemma}\label{le.sensibletildeL}
We have $\tilde L\ge L$ and $R({\cal J}) \le \tilde R({\cal J})$.
\end{lemma}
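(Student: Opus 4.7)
The plan is that both claims are almost tautological once we evaluate the pseudo-load at the correct time. First I would prove $\tilde L \ge L$ by inspecting the final time $t=n$. At that point every job has been scheduled, so $\sum_M l_M^n$ equals the total processing volume $\sum_{J\in\CJ}p_J = mL$. Applying the coordinatewise bound $\tilde l_M^n\ge l_M^n$ and averaging yields $\frac{1}{m}\sum_M \tilde l_M^n \ge L$. Since $\tilde L$ is a supremum over all $t$, in particular including $t=n$, the inequality $\tilde L\ge L$ follows.

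For the second claim I would exploit the identity $\tilde R(\CJ)=R(\CJ)\cdot \tilde L/L$ which the paper states as part of the definition. Combined with $\tilde L\ge L$, i.e.\ $\tilde L/L\ge 1$, and with $R(\CJ)\ge 0$, this immediately gives $\tilde R(\CJ)\ge R(\CJ)$. Before invoking the identity it is worth briefly verifying that the two expressions $\min\{\tilde L/p_\jmax,\tilde L/L\}$ and $R(\CJ)\tilde L/L$ really coincide: if $L\le p_\jmax$ then $R(\CJ)=L/p_\jmax$, so $R(\CJ)\tilde L/L=\tilde L/p_\jmax$, which equals the minimum since $1/L\ge 1/p_\jmax$; conversely, if $L> p_\jmax$ then $R(\CJ)=1$, so $R(\CJ)\tilde L/L=\tilde L/L$, again matching the minimum because $1/p_\jmax\ge 1/L$.

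There is no real obstacle here. The only mildly substantive point is reconciling the two formulations of $\tilde R$, which reduces to a two-line case distinction on the sign of $L-p_\jmax$; everything else follows from dominating the actual-load vector by the pseudo-load vector at the single time $t=n$.
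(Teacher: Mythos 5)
Your argument is correct and matches the paper's one-line proof: the paper likewise observes $L=\frac{1}{m}\sum_M l_M^n\le\frac{1}{m}\sum_M\tilde l_M^n\le\tilde L$ and then invokes the identity $\tilde R(\CJ)=R(\CJ)\tilde L/L$ to conclude. You simply spell out the two-case verification of that identity, which the paper takes for granted.
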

\begin{proof}
Indeed, $L=\frac{1}{m}\sum_M l^n_M\le \frac{1}{m}\sum_M \tilde l^n_M \le \tilde L$. This already implies $R({\cal J}) \le \tilde R({\cal J})$.
\end{proof}

It will be important to note that \Cref{le.avglb} and Proposition~\ref{prop1} generalize to pseudo-loads. Since the proofs stay almost the same, we do not include them in the main body of the paper but leave them to \Cref{sec.basic.p} for completeness.

\begin{restatable}{lemma}{leavglbII}\label{le.avglb2}
Let $\varphi>0$ and $t\le \varphi n$. Then the machine with the $k$-th least pseudo-load at time $t$ had pseudo-load at most $\frac{m}{m-k+1}\tilde L$.
\end{restatable}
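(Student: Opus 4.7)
The plan is to mimic the proof of \Cref{le.avglb} almost verbatim, with the one key substitution being that the bound on the total pseudo-load at time $t$ no longer comes from summing processing times of jobs arrived so far, but directly from the definition of $\tilde L$.

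First I would fix a time $t\le\varphi n$ and consider $\tilde L^t:=\sum_M \tilde l_M^t$, the sum of pseudo-loads across all machines at time $t$. By the very definition of the maximum average pseudo-load, $\tilde L=\sup_s \frac{1}{m}\sum_M \tilde l_M^s$, we immediately obtain $\tilde L^t \le m\tilde L$. Note that, in contrast to the proof of \Cref{le.avglb}, the parameter $\varphi$ (and thus $\varphi n$ and $L_\varphi$) plays no role: the bound on the total pseudo-load is uniform in $t$, which is why the $\varphi$-factor disappears from the statement.

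Next, let $\tilde l$ denote the pseudo-load of the $k$-th least loaded (in pseudo-load) machine at time $t$. By definition of ``$k$-th least'', the $m-k+1$ machines ranked from $k$ to $m$ each have pseudo-load at least $\tilde l$, so $(m-k+1)\tilde l \le \tilde L^t \le m\tilde L$. Dividing yields $\tilde l\le \frac{m}{m-k+1}\tilde L$, which is exactly the claim.

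There is no real obstacle here; the only thing worth double-checking is that the inequality $\tilde L^t\le m\tilde L$ is valid for every single time $t\in\{0,\ldots,n\}$ (which it is, directly from the $\sup$ in the definition), so that the argument does not depend on which time slice we chose. The restriction $t\le\varphi n$ in the hypothesis is inherited from \Cref{le.avglb} but is not actually needed for this pseudo-load version; we just state the lemma in a form compatible with its intended use in the sequel.
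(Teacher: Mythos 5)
Your proof is correct and follows essentially the same argument as the paper: bound the total pseudo-load at time $t$ by $m\tilde L$ directly from the definition of $\tilde L$ as a supremum, then observe that $(m-k+1)$ machines have pseudo-load at least $\tilde l$. Your observation that the hypothesis $t\le\varphi n$ is not actually used is also accurate; the paper's proof does not invoke it either.
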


\begin{restatable}{proposition}{propII}\label{prop2}
If job $J_{\sigma(t+1)}$ is scheduled on the machine M with $i$-th smallest pseudo-load~$\tilde l_M^t$ at time $t$, then, afterwards, its load $l_M^{t+1}$ does not exceed $\left(1+\left(\frac{m}{m+1-i}\right)\tilde R(\CJ)\right)OPT({\cal J})$.
\end{restatable}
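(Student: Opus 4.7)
The plan is to mirror the proof of Proposition~\ref{prop1}, replacing each occurrence of an actual load with the corresponding pseudo-load, and then relate the resulting pseudo-load quantities back to $\OPT$ via $\tilde L$ and $\tilde R(\CJ)$.

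First I would apply \Cref{le.avglb2} to machine $M$, obtaining $\tilde l_M^t \le \frac{m}{m+1-i}\tilde L$. Because pseudo-loads dominate actual loads by definition, the real load of $M$ just before scheduling satisfies $l_M^t \le \tilde l_M^t \le \frac{m}{m+1-i}\tilde L$. Appending job $J_{\sigma(t+1)}$, whose processing time is at most $p_{\max}$, then yields the preliminary estimate $l_M^{t+1} \le \frac{m}{m+1-i}\tilde L + p_{\max}$.

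The remaining step is to translate $\tilde L$ and $p_{\max}$ into multiples of $\OPT$. From the definition $\tilde R(\CJ) = \min\{\tilde L/p_{\max},\tilde L/L\}$ and the fact (from \Cref{le.sensibletildeL}) that $\tilde L\ge L$, one has $\tilde R(\CJ) = \tilde L/\max(p_{\max},L)$. Since both $p_{\max}\le\OPT$ and $L\le\OPT$ by \Cref{pro.bounds}, also $\max(p_{\max},L)\le\OPT$, so $\tilde L \le \tilde R(\CJ)\cdot\OPT$. Substituting this together with $p_{\max}\le\OPT$ into the preliminary estimate gives
\[ l_M^{t+1} \;\le\; \tfrac{m}{m+1-i}\,\tilde R(\CJ)\,\OPT + \OPT \;=\; \left(1+\tfrac{m}{m+1-i}\tilde R(\CJ)\right)\OPT, \]
which is exactly the claim.

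I do not expect any serious obstacle: the argument is structurally identical to the proof of Proposition~\ref{prop1}, and the only new ingredient is the identity $\tilde R(\CJ) = \tilde L/\max(p_{\max},L)$, guaranteed by $\tilde L\ge L$, which makes the algebraic conversion at the end work just as in the original proof and collapses back to $R(\CJ)$ when $\tilde L=L$.
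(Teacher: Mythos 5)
Your proof is correct and takes essentially the same route as the paper: invoke \Cref{le.avglb2} to bound the $i$-th smallest pseudo-load by $\frac{m}{m-i+1}\tilde L$, add $p_{\max}$, and convert to multiples of $\OPT$ via the identity $\tilde R(\CJ)=\tilde L/\max(p_{\max},L)$ together with \Cref{pro.bounds}. The only cosmetic difference is that you state explicitly that $l_M^t\le\tilde l_M^t$ before adding the new job, a step the paper leaves implicit.
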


\subsection{Sampling and the Load Lemma}\label{sec.sampling}
Our model is particularly suited to sampling. Given a job set $\CJ$, we call a subset $\CC\subseteq \CJ$ a \emph{job class}. Consider any job order $\sigma\in S_n$. For $0<\varphi\le 1$, let $n_{\CC,\varphi}[\sigma]$ denote the number of jobs in $\CC$ arriving till time $\varphi n$, i.e.\ $n_{\CC,\varphi}[\sigma]=|\{J_{\sigma(i)}\mid J_{\sigma(i)}\in \CC \land\sigma(i) \le \varphi n\}|$. Let $n_{\CC}=n_{\CC,1}[\sigma]=|\CC|$ be the total number of jobs in $\CC$. The following is a consequence of Chebyshev's inequality. The proof is left to \Cref{sec.basic.p}.

\begin{restatable}{proposition}{prosample}\label{pro.sample}
Let $\CC\subset\CJ$ be a job class for a job set $\CJ$ of cardinality at least $m$. Given $\varphi>0$ and $E\ge 0$ we have
\[\bP_{\sigma\sim S_n}\left[\left|\varphi^{-1} n_{\CC,\varphi}[\sigma]-n_{\CC}\right|\ge E\right]\le \frac{n_\CC}{\varphi(E-1/m)^2}.\]
\end{restatable}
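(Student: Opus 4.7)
The plan is to view $X := n_{\CC,\varphi}[\sigma]$ as a sum of indicator random variables $\xi_t$ for the events $J_{\sigma(t)}\in\CC$, $t\le \varphi n$, and apply Chebyshev's inequality. Under a uniform random $\sigma$, the collection $(\xi_t)$ is exchangeable and corresponds to sampling $k:=\lfloor\varphi n\rfloor$ jobs without replacement from $\CJ$, so $X$ is hypergeometrically distributed.

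Using $\bE[\xi_t]=n_\CC/n$ together with the standard without--replacement covariance $\mathrm{Cov}(\xi_s,\xi_t)=-\frac{n_\CC(n-n_\CC)}{n^2(n-1)}$ for $s\ne t$, summation gives $\bE[X]=k\,n_\CC/n$ and
\[\mathrm{Var}[X]=k\cdot\frac{n_\CC(n-n_\CC)}{n^2}\cdot\frac{n-k}{n-1}\le \frac{k\,n_\CC}{n}\le \varphi\,n_\CC.\]
Rescaling by $\varphi^{-1}$ then yields $\mathrm{Var}[\varphi^{-1}X]\le n_\CC/\varphi$, which is exactly the numerator appearing in the target bound.

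I would then apply Chebyshev to $Y:=\varphi^{-1}X$ around its own mean $\bE[Y]=(k/(\varphi n))\,n_\CC$. Since the event of interest is phrased in terms of $n_\CC$ rather than $\bE[Y]$, the rounding discrepancy $|\bE[Y]-n_\CC|=n_\CC|k-\varphi n|/(\varphi n)$ must be accounted for; this is exactly the role of the $-1/m$ slack in the statement, the hypothesis $|\CJ|\ge m$ being what allows the bias from the integer cutoff to be absorbed into $1/m$. Shifting the threshold from $E$ to $E-1/m$ and invoking Chebyshev,
\[\bP[|Y-n_\CC|\ge E]\le \bP[|Y-\bE[Y]|\ge E-1/m]\le \frac{\mathrm{Var}[Y]}{(E-1/m)^2}\le \frac{n_\CC}{\varphi(E-1/m)^2}.\]

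The only real obstacle is bookkeeping around the rounding of $\varphi n$ to an integer position; the cardinality assumption $|\CJ|\ge m$ is what pins the resulting bias down to within $1/m$ of $n_\CC$ and so lets it be folded into the $-1/m$ correction in the denominator. Everything else is a textbook variance computation for uniform sampling without replacement plus a one-line Chebyshev application.
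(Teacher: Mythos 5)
Your proof follows exactly the paper's route: identify $n_{\CC,\varphi}$ as a hypergeometric variable, bound its variance by $\varphi n_\CC$, absorb the rounding bias from $\lfloor\varphi n\rfloor$ into the $-1/m$ slack using $|\CJ|\ge m$, and finish with Chebyshev's inequality. The only difference is cosmetic (you phrase the variance via exchangeable indicators rather than quoting the hypergeometric formula directly), and you share with the paper the same light touch in verifying that the rounding bias really is within $1/m$.
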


A basic lemma in random-order scheduling is the Load Lemma from~\cite{albers_scheduling_2020}, which allows a good estimate of the average load under very mild assumptions on the job set. Here, we introduce a more general version. It is all we need to adapt the semi-online algorithm $\mathrm{LightLoad}$ from the literature to the secretary model.

\begin{figure}[t]
\includegraphics[width=0.3\textwidth]{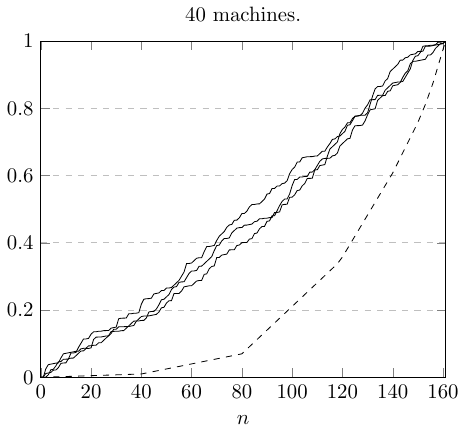}
\includegraphics[width=0.3\textwidth]{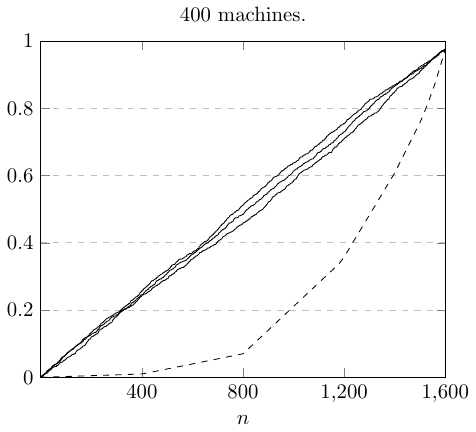}
\includegraphics[width=0.3\textwidth]{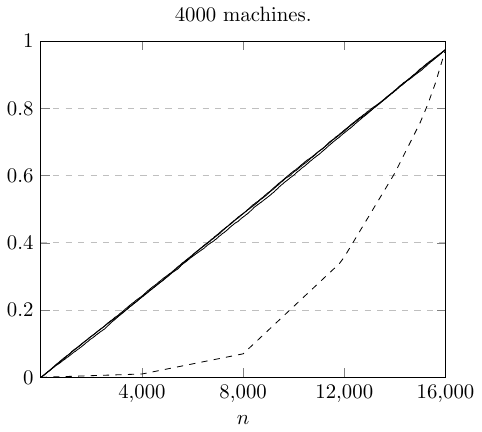}
\caption{A graphic depicting the average load over time on the classical lower bound sequence from \cite{albers_better_1999} for $40$, $400$ and $4000$ machines. The dashed line corresponds to the original adversarial order. The three solid lines corresponding to random permutations clearly approximate a straight line. Thus, sampling allows to predict the (final) average load.
 }\label{fig.ratios}
\end{figure}

\begin{restatable}{lemma}{Loadlemma}[Load Lemma~\cite{albers_scheduling_2020}]\label{Loadlemma}
Let $R_\mathrm{low}=R_\mathrm{low}(m)>0$, $1\ge \varphi=\varphi(m)>0$ and $\varepsilon=\varepsilon(m)>0$ be three functions such that $\varepsilon^{-4}\varphi^{-1}R_\mathrm{low}^{-1}=o(m)$. Then there exists a variable $m(R_\mathrm{low},\varphi,\varepsilon)$ such that we have for all $m\ge m(R_\mathrm{low},\varphi,\varepsilon)$ and all job sets $\CJ$ with $R(\CJ)\ge R_\mathrm{low}$ and $|\CJ|\ge m$:
\[ \bP_{\sigma\sim S_n} \left[\left|\frac{L_\varphi[\CJ^\sigma]}{L[\CJ]}-1\right|\ge\varepsilon \right]<\varepsilon.\]
\end{restatable}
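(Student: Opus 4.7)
The plan is to treat $L_\varphi$ as a scaled version of the total processing volume in a uniformly random size-$\lfloor\varphi n\rfloor$ subset of $\CJ$ and to apply Chebyshev's inequality. Concretely, writing $S_\varphi = \varphi m L_\varphi = \sum_{t:\sigma(t)\le\varphi n} p_t$ and introducing indicators $X_t = \mathbf{1}[\sigma(t)\le\varphi n]$, we have $S_\varphi=\sum_t p_t X_t$ with $\bE[X_t]=\varphi$ under a uniformly random $\sigma\in S_n$. Linearity of expectation immediately gives $\bE[S_\varphi] = \varphi\sum_t p_t = \varphi m L$, i.e., $\bE[L_\varphi]=L$, so the job is purely concentration.

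Next I would compute the variance of $S_\varphi$ using that the $X_t$ are sampled without replacement: $\mathrm{Var}(X_t)=\varphi(1-\varphi)$ and $\mathrm{Cov}(X_t,X_s)=-\varphi(1-\varphi)/(n-1)$ for $t\neq s$. A short calculation yields
\[
\mathrm{Var}(S_\varphi)=\varphi(1-\varphi)\frac{n}{n-1}\Bigl(\sum_t p_t^2-\tfrac{1}{n}(\textstyle\sum_t p_t)^2\Bigr)\le \varphi\cdot\frac{n}{n-1}\cdot p_\jmax\cdot\sum_t p_t = \frac{n}{n-1}\cdot \varphi m\, p_\jmax L,
\]
where we bounded $\sum p_t^2\le p_\jmax\sum p_t$. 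Dividing by $(\varphi m)^2$ gives $\mathrm{Var}(L_\varphi)\le 2\,p_\jmax L/(\varphi m)$ for $n$ large. Chebyshev's inequality then yields, for the event $\{|L_\varphi/L-1|\ge\varepsilon\}$,
\[
\bP_{\sigma\sim S_n}\!\left[\,|L_\varphi-L|\ge\varepsilon L\,\right]\le \frac{\mathrm{Var}(L_\varphi)}{\varepsilon^2 L^2}\le \frac{2\,p_\jmax}{\varepsilon^2\,\varphi\, m\, L}=\frac{2}{\varepsilon^2\,\varphi\, m\, R(\CJ)}\le \frac{2}{\varepsilon^2\,\varphi\, m\, R_\lowM}.
\]

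To finish, I would use the hypothesis $\varepsilon^{-4}\varphi^{-1}R_\lowM^{-1}=o(m)$: it certainly implies $\varepsilon^{-3}\varphi^{-1}R_\lowM^{-1}=o(m)$, so for all sufficiently large $m$ (i.e., $m\ge m(R_\lowM,\varphi,\varepsilon)$) the right-hand side above is strictly less than $\varepsilon$, as required. The main technical obstacle is the variance computation: one has to exploit the negative covariance of sampling without replacement rather than assume independence (otherwise the bound is off by a factor of roughly $1/\varphi$), and one needs to turn the $\sum p_t^2$ term into something controllable by $p_\jmax$ and $L$, which is where the complexity parameter $R(\CJ)$ enters and where the hypothesis $R(\CJ)\ge R_\lowM$ becomes essential. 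Everything else—floors for $\varphi n$, the factor $n/(n-1)$, and absorbing constants like~$2$ into the choice of $m(R_\lowM,\varphi,\varepsilon)$—is routine and justified by the extra slack of $\varepsilon^{-4}$ versus $\varepsilon^{-3}$ in the hypothesis.
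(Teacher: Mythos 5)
Your proof is correct, but it takes a genuinely different route from the paper's. You compute the variance of $L_\varphi$ directly: write $S_\varphi=\varphi m L_\varphi=\sum_t p_t X_t$ with $X_t=\mathbf{1}[\sigma(t)\le\varphi n]$, use the exact (hypergeometric) covariance structure to get $\mathrm{Var}(S_\varphi)=\varphi(1-\varphi)\frac{n}{n-1}\bigl(\sum_t p_t^2-\frac{1}{n}(\sum_t p_t)^2\bigr)$, bound $\sum_t p_t^2\le p_\jmax\sum_t p_t=p_\jmax mL$, and apply Chebyshev once to obtain $\bP[|L_\varphi-L|\ge\varepsilon L]=O\bigl(\frac{1}{\varepsilon^2\varphi m R_\lowM}\bigr)$, which the hypothesis makes $<\varepsilon$. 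The paper instead rounds job sizes geometrically into classes $\CC_j$ of comparable size, applies a per-class concentration bound (Proposition~\ref{pro.sample}, itself a Chebyshev estimate for a hypergeometric count), takes a union bound over the countably many classes, and then reassembles $|L^\downarrow_\varphi-L^\downarrow|$ via Cauchy--Schwarz. Your argument is shorter and more elementary, and the one place where $R(\CJ)\ge R_\lowM$ enters is transparent (the step $p_\jmax/L\le 1/R(\CJ)$). The paper's class-wise decomposition is heavier, but it is not wasted machinery: Proposition~\ref{pro.sample} is reused verbatim for the per-class estimates $c_p$ and for the probabilistic conditions in the proof of Main Lemma~\ref{le.main.stable}, so in the paper the Load Lemma falls out of infrastructure that is needed anyway.

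Two small points. First, your parenthetical ``otherwise the bound is off by a factor of roughly $1/\varphi$'' is not accurate: if one pretended the $X_t$ were independent Bernoulli$(\varphi)$, the variance would be $\varphi(1-\varphi)\sum_t p_t^2$, which is of the same order as what you obtain; the negative covariance only buys a factor of at most $\tfrac{n}{n-1}$. This does not affect the proof, since you never actually rely on the negative covariance being large -- the bound $\mathrm{Var}(S_\varphi)\le\varphi\cdot\frac{n}{n-1}\cdot p_\jmax mL$ would also hold under independence. Second, you should make explicit that one may assume $\varepsilon\le 1$ (otherwise the claim is vacuous); this is what justifies $\varepsilon^{-3}\le\varepsilon^{-4}$ when absorbing the constant and the $O(1/(\varphi m))$ centering error from the floor in $\lfloor\varphi n\rfloor$. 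With that caveat, the extra slack of one power of $\varepsilon$ in the hypothesis does absorb all constants and the centering error exactly as you claim, so the argument goes through.
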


We sketch the proof, leaving the details to \Cref{sec.basic.p} since it is technical and a slight generalization of the one found in~\cite{albers_scheduling_2020}. We use geometric rounding so that we only have to deal with countably many possible job sizes. Now, jobs of any given size $p$ form a job class $\CC=\CC_p$. Using \Cref{pro.sample}, we can relate their actual cardinality $n_\CC$ with the $\varphi$-estimate $n_{\CC,\varphi}$. Putting everything together yields the Load Lemma, which compares the load $L$ and the load estimate $L_\varphi$. The lemma relies intrinsically on the lower bound~$R_\mathrm{low}$ for $R(\CJ)$. Consider a job set $\CJ$ like the one in \Cref{fig:counterx}, only one job carries all the load while there are lots of other jobs with size zero (or negligible size $\epsilon>0$). Then $R(\CJ)=\frac{1}{m}$ and a statement as in \Cref{Loadlemma} could not be true for $\varepsilon< \min(1,\varphi^{-1}-1)$ since $L_\varphi\in\{0,\varphi^{-1}L\}$.

\Cref{fig.ratios} shows the behavior of the average load on three randomly chosen permutations of a classical input sequence. As predicted, this average load approaches a straight line for large number of machines. The Load Lemma is an important theoretical tool but only provides asymptotic guarantees. In \Cref{sec.Lunderest} we explore practical guarantees for small numbers of machines.

\subsection{A simple $\boldsymbol{1.75}$-competitive algorithm}\label{sec.1.75}
We modify the semi-online algorithm $\mathrm{Light Load}$ from the literature to obtain a very simple nearly $1.75$-competitive algorithm.
For any $0\le t\le n$ let $M_\midM^{t}$ be a machine having the $\lfloor m/2 \rfloor$-lowest load at time $t$, i.e.\ right before job $J_{t+1}$ is scheduled. Let $l_\midM^{t}$ be its load and let $l_\lowM^{t}$ be the smallest load of any machine.
We recall the algorithm $\LL{L_\guess}$ from Albers and Hellwig~\cite{albers_semi-online_2012}, where the parameter $L_\guess$ is a guess for $L$.

\begin{algorithm}[H]
\caption{The (semi-online) algorithm $\LL{L_\guess}$~\cite{albers_semi-online_2012}.}\label{alg.exceptional2}
\begin{algorithmic}[1]
\State \textit{Let $J_t$ be the job to be scheduled and let $p_t$ be its size.}
\If{$l_\lowM^{t-1} \le 0.25  L_\guess$ \textbf{or} $l_\midM^{t-1} + p_t > 1.75 L_\guess$}
\State Schedule $J_t$ on any least loaded machine;
\Else
{ }schedule $J_t$ on $M_\midM^{t-1}$;
\EndIf
\end{algorithmic}
\end{algorithm}

$\LL{L}$ has been analyzed in the setting where the average load $L$ is known in advance, i.e.\ with fixed parameter $L_\guess=L$. Albers and Hellwig obtain the following:
\begin{theorem}[\cite{albers_semi-online_2012}]\label{te.ll}
$\LL{L}$ is adversarially $1.75$-competitive, i.e.\ for every job sequence $\CJ^\sigma$ with average load $L=L[\CJ]$ there holds $\LL{L}(\CJ^\sigma)\le 1.75\OPT(\CJ)$.
\end{theorem}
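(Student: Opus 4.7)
Since this restates a result from~\cite{albers_semi-online_2012}, my plan is to follow the Albers--Hellwig case analysis rather than invent a new argument. Let $M^\ast$ be a machine realising the final makespan, let $J^\ast$ be the last job placed on $M^\ast$ with size $p^\ast$ and arrival time $t^\ast$, and abbreviate $l:=l_{M^\ast}^{t^\ast-1}$, so the makespan equals $l+p^\ast$.

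First I would split on which branch of \Cref{alg.exceptional2} assigned $J^\ast$. If the else-branch fired, then $M^\ast = M_\midM^{t^\ast-1}$ and the trigger $l_\midM+p^\ast>1.75L$ must have failed, so $l+p^\ast\le 1.75L\le 1.75\OPT$ immediately. If instead the if-branch fired because $l_\lowM^{t^\ast-1}\le 0.25L$, then $l\le 0.25L$ and $l+p^\ast\le 0.25L+p_\jmax\le 1.25\OPT$ by \Cref{pro.bounds}.

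The remaining case is the heart of the argument: $M^\ast = M_\lowM^{t^\ast-1}$ was chosen because $l_\midM^{t^\ast-1}+p^\ast>1.75L$ while $l_\lowM^{t^\ast-1}>0.25L$. Here I would run a volume argument. By the definition of $M_\midM$, at least $\lceil m/2\rceil+1$ machines carry load exceeding $1.75L-p^\ast$, whereas the total load on all $m$ machines at time $t^\ast-1$ is at most $mL$. Distributing the remaining capacity over the $\lfloor m/2\rfloor-1$ lighter machines bounds $l_\lowM^{t^\ast-1}$ above by $0.25L+p^\ast$ up to an $O(1/m)$ correction, giving $l+p^\ast\le 0.25L+2p^\ast$. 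A sub-split on $p^\ast$ then closes the argument: if $p^\ast\le 0.75L$ this bound is already $\le 1.75L\le 1.75\OPT$, and otherwise one has to sharpen the volume inequality using the extra constraint $l_\lowM^{t^\ast-1}>0.25L$ and combine it with $\OPT\ge p^\ast$ to conclude $l+p^\ast\le 1.75\OPT$ again.

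I expect the sub-case $p^\ast>0.75L$ to be the main obstacle: the crude averaging just sketched is no longer tight enough, and one must exploit both lower bounds $\OPT\ge L$ and $\OPT\ge p_\jmax$ simultaneously together with the extra load floor. The precise quantitative accounting is carried out in~\cite{albers_semi-online_2012}, and in the context of the present paper I would simply invoke it there.
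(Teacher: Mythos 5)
The paper does not prove \Cref{te.ll} at all: it is imported wholesale from~\cite{albers_semi-online_2012}, with the explicit remark ``The proof from~\cite{albers_semi-online_2012} is complicated and not repeated in this paper.'' So your ultimate decision to fall back on the citation is exactly what the paper does, and your first two cases (else-branch, and if-branch triggered by $l_\lowM^{t^*-1}\le 0.25 L$) are correct and complete as stated.

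The issue is the third case, and it is a larger gap than you let on. A single-snapshot volume argument at time $t^*-1$ cannot close it. Your bound $l_\lowM^{t^*-1}+p^*\le 0.25L+2p^*$ (up to the $O(1/m)$ correction you mention) fails against $\OPT\ge\max(L,p^*)$ as soon as $p^*>0.75L$, and the extra floor $l_\lowM^{t^*-1}>0.25L$ cannot rescue it: that floor is a \emph{lower} bound on every machine's load, so inserting it into the volume inequality only increases the total accounted load and never yields a sharper \emph{upper} bound on $l_\lowM^{t^*-1}$ from the same computation. What Albers and Hellwig actually do in this regime is not a tighter averaging at one time step but an induction over the whole execution combined with a potential argument, whose payoff is that every machine must already carry a job of size roughly $0.5L$ or more, which in turn forces $\OPT$ strictly above $L$ by the right amount. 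You can see the silhouette of that argument elsewhere in this paper, in the sketch accompanying \Cref{le.underest}: it refers to ``Lemma 1'' and ``Lemma 2'' of~\cite{albers_semi-online_2012} bounding loads at intermediate times $t_{j_0}$, and to the claim that ``every machine received a job of size $0.5L+\epsilon L$.'' Your instinct to cite the reference for the hard case is the right call in context, but the volume-argument direction you sketch is not the route that reference takes, and it does not lead there.
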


The proof from~\cite{albers_semi-online_2012} is complicated and not repeated in this paper. We need to deal with more general guesses $L_\guess$ that are slightly off. The following corollary is derived from \Cref{te.ll} by enlarging the input sequence.

\begin{restatable}{corollary}{coll}\label{co.ll}
Let $\CJ^\sigma$ be any (ordered) input sequence and let $L_\guess \ge L[\CJ]$. Then the makespan of $\LL{L_\guess}$ is at most $1.75 \cdot \max (L_\guess,\OPT(\CJ))$.
\end{restatable}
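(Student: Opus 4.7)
The plan is to reduce the corollary to Theorem~\ref{te.ll} by artificially extending the input sequence so that its true average load coincides with $L_\guess$. Assume the nontrivial case $L_\guess > L[\CJ]$ (the case $L_\guess = L[\CJ]$ gives $\max(L_\guess,\OPT(\CJ)) = \OPT(\CJ)$ by \Cref{pro.bounds}, and the claim is immediate from Theorem~\ref{te.ll}). For each $\epsilon > 0$ with $\epsilon$ dividing the volume $V := m(L_\guess - L[\CJ])$, I would construct an enlarged sequence $\CJ'^{\sigma'}_\epsilon$ by appending to $\CJ^\sigma$ exactly $V/\epsilon$ ``filler'' jobs, each of size $\epsilon$. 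Then $L[\CJ'_\epsilon] = L_\guess$ by construction.

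The first step is to transfer the makespan bound from the enlarged sequence back to the original. Because $\LL{L_\guess}$ is an online algorithm, its scheduling of the first $n$ jobs is identical on $\CJ^\sigma$ and on $\CJ'^{\sigma'}_\epsilon$, and the maximum machine load can only grow as more jobs arrive; hence $\LL{L_\guess}(\CJ^\sigma) \le \LL{L_\guess}(\CJ'^{\sigma'}_\epsilon)$. Since $L[\CJ'_\epsilon] = L_\guess$, Theorem~\ref{te.ll} applies directly and yields $\LL{L_\guess}(\CJ'^{\sigma'}_\epsilon) \le 1.75\cdot \OPT(\CJ'_\epsilon)$.

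The heart of the argument is to bound $\OPT(\CJ'_\epsilon) \le \max(\OPT(\CJ), L_\guess) + \epsilon$. I would produce an explicit schedule of $\CJ'_\epsilon$ by starting from an optimal schedule of $\CJ$ (whose loads $l_1\le\dots\le l_m$ satisfy $l_m \le \OPT(\CJ)$ and $\sum l_i = m\, L[\CJ]$) and placing the filler jobs one at a time on the currently least-loaded machine. Whenever a filler of size $\epsilon$ is placed, the least-loaded machine has load at most the current average, which is bounded by the final average $L_\guess$; thus the load after placement never exceeds $L_\guess + \epsilon$. Any machine that never receives a filler keeps its original load $\le \OPT(\CJ)$. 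Combining the two cases gives the desired bound on $\OPT(\CJ'_\epsilon)$.

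Chaining the three inequalities, $\LL{L_\guess}(\CJ^\sigma) \le 1.75(\max(\OPT(\CJ), L_\guess) + \epsilon)$ for every admissible $\epsilon>0$. As the left-hand side is independent of~$\epsilon$, letting $\epsilon\to 0$ (i.e.\ refining the filler jobs) proves the corollary. The only technically nontrivial step is the $\OPT$ bound, and even there the argument is essentially a single averaging observation; the rest is a clean reduction exploiting the online character of $\LL{L_\guess}$ and the monotonicity of makespan in the number of scheduled jobs.
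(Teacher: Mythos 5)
Your proposal is correct and follows essentially the same approach as the paper: append jobs to raise the average load to $L_\guess$, apply Theorem~\ref{te.ll} to the extended sequence, and use monotonicity of the makespan under sequence extension to pass back to the original prefix. The only difference is technical --- the paper appends finitely many adaptively-sized jobs to an optimal schedule so that $\OPT(\CJ')=\max(L_\guess,\OPT(\CJ))$ holds exactly, thereby avoiding your $\epsilon\to 0$ limiting step.
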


The idea of the proof is rather simple. We can add jobs to the end of the sequence $\CJ^\sigma$ such that for the resulting sequence $\CJ'^{\sigma'}$ there holds $\OPT(\CJ')=\max(L_\guess,\OPT(\CJ))$. We then apply \Cref{te.ll} to see that $\LL{L_\guess}$ has cost at most $1.75 \cdot \max (L_\guess,\OPT(\CJ))$ on this sequence. Passing over to the prefix $\CJ^\sigma$ of $\CJ'^{\sigma'}$ cannot increase this cost. A technical proof is left to \Cref{sec.1.75.p} for completeness.

We also need to deal with guesses $L_\guess$ that are totally of. Since $\LL{L_\guess}$  only considers the least or the $\lfloor m/2\rfloor$-th least loaded machine we get by \Cref{prop1}:
\begin{corollary}\label{co.ll2}
For any (ordered) sequence $\CJ^\sigma$ and any value $L_\guess$ the makespan of $\LL{L_\guess}$ is at most $(1+2R(\CJ))\OPT(\CJ)$. In particular, it is at most $3\OPT(\CJ)$.
\end{corollary}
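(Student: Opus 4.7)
The plan is a direct application of \Cref{prop1}. The algorithm $\LL{L_\guess}$ only ever places a job on one of two machines: a least loaded machine (the $k=1$ case) or on $M_\midM^{t-1}$, which by definition is a $\lfloor m/2 \rfloor$-th least loaded machine (the $k=\lfloor m/2 \rfloor$ case). So every job ends up on a machine whose rank (in increasing load order at the moment of assignment) is either $1$ or $\lfloor m/2 \rfloor$.

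First, I would apply \Cref{prop1} with $k=1$. The resulting bound on the new load of the receiving machine is $\bigl(\tfrac{m}{m}R(\CJ)+1\bigr)\OPT(\CJ)=(1+R(\CJ))\OPT(\CJ)$, which is trivially at most $(1+2R(\CJ))\OPT(\CJ)$. Next, I would apply \Cref{prop1} with $k=\lfloor m/2\rfloor$. Here the coefficient $\tfrac{m}{m-\lfloor m/2\rfloor+1}$ is at most $2$ for every $m\ge 1$ (both parities are an easy check: for even $m$ it equals $\tfrac{m}{m/2+1}<2$, and for odd $m$ it equals $\tfrac{2m}{m+3}<2$). So the load of $M_\midM^{t-1}$ after receiving the job is at most $(1+2R(\CJ))\OPT(\CJ)$.

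Thus after \emph{every} job assignment, the load of the machine that just received a job is at most $(1+2R(\CJ))\OPT(\CJ)$. Since the final load of each machine equals either zero (if nothing was assigned to it) or its load immediately after the last job it received, the final load of every machine, and hence the makespan, is bounded by $(1+2R(\CJ))\OPT(\CJ)$. The ``in particular'' part follows because $R(\CJ)=\min\{L/p_\jmax,1\}\le 1$, so $1+2R(\CJ)\le 3$.

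There is no real obstacle: the work was already done in \Cref{prop1}, and the only thing to verify is the trivial numerical bound $\tfrac{m}{m-\lfloor m/2\rfloor+1}\le 2$ plus the observation that the algorithm's two scheduling options correspond exactly to the two values of $k$ for which this bound and the trivial $k=1$ bound suffice.
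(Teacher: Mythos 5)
Your argument is exactly the paper's: the corollary is derived by noting that $\LL{L_\guess}$ only ever assigns to the least loaded or the $\lfloor m/2\rfloor$-th least loaded machine and then invoking \Cref{prop1} for $k=1$ and $k=\lfloor m/2\rfloor$. Your explicit check that $\tfrac{m}{m-\lfloor m/2\rfloor+1}\le 2$ is a correct fleshing-out of the step the paper leaves implicit.
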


\subsubsection*{Adapting LightLoad to the random-order model}
Let $\delta=\delta(m)=1/\log(m)$ be the \emph{margin of error our algorithm allows}. We will see that our algorithm is $(1.75+O(\delta))$-competitive. In fact, any function with $\delta(m)\in \omega(m^{-1/4})$ and $\delta(m)\in o_m(1)$ would do. Given an input sequence $\CJ^\sigma$ let $\hat L_\pre=\hat L_\pre[\CJ^\sigma]=\frac{L_{1/4}[\CJ^\sigma]}{1-\delta}$ be our \emph{guess for $L$}. We use the index 'pre' since our main algorithm later will use a slightly different guess $\hat L$. In this section we consider the algorithm $\mathrm{Light Load ROM}=\LL{\hat L_\pre}$. Let us observe first that this is indeed an online algorithm, not only a semi-online algorithm as one might expect since the $\textbf{if}$-clause uses the guess $\hat L_\pre$ before it is known.

\begin{lemma}\label{le.llromon}
The algorithm $\mathrm{Light Load ROM}$ can be implemented as an online algorithm.
\end{lemma}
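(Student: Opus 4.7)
The plan is to split the execution of $\mathrm{Light Load ROM}$ at time $t^\ast=\lfloor n/4\rfloor$. Once $t^\ast$ jobs have arrived, the value $L_{1/4}[\CJ^\sigma]$, and hence $\hat L_\pre=L_{1/4}[\CJ^\sigma]/(1-\delta)$, can be read off from the sizes the algorithm has already observed. From step $t^\ast+1$ onwards, simulating $\LL{\hat L_\pre}$ is trivially online: its two tests, $l_\lowM^{t-1}\le 0.25\,\hat L_\pre$ and $l_\midM^{t-1}+p_t>1.75\,\hat L_\pre$, inspect only the current load profile and the size of the current arrival. The substantive part of the lemma is therefore to argue that during the \emph{pre-sample} phase $t\le t^\ast$, when $\hat L_\pre$ is not yet known to the algorithm, the decisions prescribed by $\LL{\hat L_\pre}$ are in fact independent of what value $\hat L_\pre$ will eventually assume.

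For this I would first establish the averaging bound $l_\lowM^{t-1}<0.25\,\hat L_\pre$ for every $t\le t^\ast$. At this stage at most $t^\ast-1<n/4$ jobs have been assigned, all of them among the first $\lfloor n/4\rfloor$ arrivals, so the total load across the $m$ machines is at most $\sum_{\sigma(s)\le n/4} p_s=\tfrac{1}{4}m L_{1/4}$. Dividing by $m$ bounds the minimum load by $L_{1/4}/4=\tfrac{1-\delta}{4}\hat L_\pre<0.25\,\hat L_\pre$. Consequently the first branch of the \textbf{if}-clause of $\LL{\hat L_\pre}$ triggers at every step $t\le t^\ast$ and $J_t$ is placed on a least loaded machine -- an action that makes no reference to $\hat L_\pre$ and can therefore be carried out online.

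Putting the pieces together, I would define $\mathrm{Light Load ROM}$ operationally: for $t\le t^\ast$ place each arrival on a currently least loaded machine; at the beginning of step $t^\ast+1$ compute $L_{1/4}$ from the already-observed sizes and set $\hat L_\pre=L_{1/4}/(1-\delta)$; for every subsequent arrival follow the two-case rule of $\LL{\hat L_\pre}$ verbatim. By the two preceding paragraphs this reproduces the schedule of the semi-online algorithm $\LL{\hat L_\pre}$ on the full sequence, while each individual query only uses data available at the moment it is made. The only mild obstacle I foresee is keeping floor/ceiling conventions for $n/4$ aligned between the cut-off of the pre-sample phase and the definition of $L_{1/4}$; once this is pinned down the argument reduces to the one-line averaging bound above, so I do not expect any deeper difficulty.
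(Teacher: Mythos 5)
Your proof is correct and takes essentially the same approach as the paper: the paper's one-line argument is that the \textbf{if}-clause always triggers for $t<n/4$ because $l_\lowM \le 0.25\,L_{1/4} < 0.25\,\hat L_\pre$ by \Cref{le.avglb}, which is precisely the averaging bound you re-derive inline from scratch rather than citing the lemma.
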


\begin{proof}
It suffices to note that the $\textbf{if}$-clause always evaluates to \textsc{true} for $t<n/4$, i.e.\ before $L_\guess=\hat L_\pre$  is known. Indeed, in this case $l_\lowM \le 0.25L_{1/4} < 0.25  \hat L_\pre$ by \Cref{le.avglb}.
\end{proof}

We now prove the main theorem. \Cref{co.1.75} follows immediately from \Cref{le.nearly}.

\begin{theorem}\label{te.1.75}
The algorithm $\mathrm{Light Load ROM}$ is nearly $1.75$-competitive.
\end{theorem}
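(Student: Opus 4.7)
The plan is to verify the two conditions in Definition~\ref{def:comp} separately. The constant adversarial bound is immediate: \Cref{co.ll2} gives that $\mathrm{Light Load ROM}$ has makespan at most $3\OPT(\CJ)$ for \emph{any} ordered sequence, regardless of the value of $\hat L_\pre$. So the first bullet of the definition is free, and the content lies entirely in the random-order bound.

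For the random-order bound, fix $\varepsilon>0$ and split the analysis according to $R(\CJ)$, with threshold $R_\mathrm{low}=\varepsilon/2$. If $R(\CJ)<R_\mathrm{low}$, then \Cref{co.ll2} already yields
\[
\mathrm{Light Load ROM}(\CJ^\sigma)\le (1+2R(\CJ))\OPT(\CJ)\le (1+\varepsilon)\OPT(\CJ)
\]
deterministically for every order $\sigma$, so the bad event has probability $0$. The interesting regime is $R(\CJ)\ge R_\mathrm{low}$, where reordering arguments become available.

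In that regime I would invoke \Cref{Loadlemma} with $\varphi=1/4$, $R_\mathrm{low}=\varepsilon/2$, and accuracy parameter $\delta=1/\log m$. The hypothesis $\delta^{-4}\varphi^{-1}R_\mathrm{low}^{-1}=O(\log^4 m\cdot\varepsilon^{-1})=o(m)$ is satisfied once $m$ is large enough (with $\varepsilon$ fixed). The lemma then guarantees that, except on an event of probability at most $\delta$, the load estimate satisfies $(1-\delta)L\le L_{1/4}\le (1+\delta)L$. On this good event, our guess $\hat L_\pre = L_{1/4}/(1-\delta)$ automatically dominates $L$, so the hypothesis of \Cref{co.ll} is met, and furthermore
\[
\hat L_\pre\le\frac{1+\delta}{1-\delta}\,L\le (1+3\delta)L\le(1+3\delta)\OPT(\CJ)
\]
for $\delta$ sufficiently small, whence $\max(\hat L_\pre,\OPT(\CJ))\le(1+3\delta)\OPT(\CJ)$. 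Applying \Cref{co.ll} then gives
\[
\mathrm{Light Load ROM}(\CJ^\sigma)\le 1.75\,(1+3\delta)\OPT(\CJ)\le (1.75+6\delta)\OPT(\CJ).
\]
Choosing $m$ large enough that simultaneously $\delta<\varepsilon$ and $6\delta<\varepsilon$ yields both a failure probability below $\varepsilon$ and the target ratio $1.75+\varepsilon$ on the good event.

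The main (mild) obstacle is just the bookkeeping of ensuring $\hat L_\pre\ge L$ holds with high probability, because \Cref{co.ll} loses its usefulness otherwise; this is precisely why $\hat L_\pre$ was defined as $L_{1/4}/(1-\delta)$ rather than $L_{1/4}$ itself — the inflation by $1/(1-\delta)$ exactly absorbs the downward fluctuation allowed by the Load Lemma. Everything else is routine combination of \Cref{co.ll}, \Cref{co.ll2}, and \Cref{Loadlemma}, together with the reduction to nearly competitiveness via \Cref{le.nearly} for the stated \Cref{co.1.75}.
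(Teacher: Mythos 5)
Your proof is correct in its essentials and follows the same three-step structure the paper uses (adversarial bound via \Cref{co.ll2}, a split by $R(\CJ)$, and then the Load Lemma plus \Cref{co.ll} on the "interesting" regime). Two remarks, one of which is a genuine small gap.

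First, you chose the threshold $R_\mathrm{low}=\varepsilon/2$, so the simple/proper split depends on $\varepsilon$. The paper instead uses the $\varepsilon$-independent threshold $R_\mathrm{low}=3/8$, exploiting that $1+2\cdot(3/8)=1.75$ exactly: if $R(\CJ)<3/8$, \Cref{co.ll2} already beats $1.75$, so the bad event is empty for \emph{every} $\varepsilon$. Both work; the fixed threshold is marginally cleaner since the same "simple" class handles all $\varepsilon$ at once.

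Second, and this is the actual omission: when you invoke \Cref{Loadlemma} in the regime $R(\CJ)\ge R_\mathrm{low}$, the lemma's hypotheses also require $|\CJ|\ge m$, which you never address. A job set with few jobs but $R(\CJ)$ large is possible (e.g.\ $n<m$ identical jobs), and for such a set the Load Lemma gives you nothing. The paper closes this by observing that whenever $n\le m$ the least loaded machine is empty, so the \textbf{if}-clause of $\LL{L_\guess}$ always fires and $\mathrm{Light Load ROM}$ places every job on an empty machine, giving makespan $p_\jmax\le\OPT$. You need to add this (one-line) case to make the argument complete. With that addition, your proof matches the paper's.
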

\begin{corollary}\label{co.1.75}
$\mathrm{Light Load ROM}$ is $1.75$-competitive in the secretary model for $m\rightarrow\infty$.
\end{corollary}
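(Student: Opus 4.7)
I would address the two bullets of \Cref{def:comp} separately. The first is free: \Cref{co.ll2} gives the deterministic bound $\mathrm{Light Load ROM}(\CJ^\sigma)\le 3\OPT(\CJ)$ on every input, so the adversarial competitive ratio is at most $3$. The rest of the argument concerns the second bullet: given $\varepsilon>0$, I must produce an $m(\varepsilon)$ such that $\bP_\sigma[\mathrm{Light Load ROM}(\CJ^\sigma)\ge (1.75+\varepsilon)\OPT(\CJ)]\le \varepsilon$ for all $m\ge m(\varepsilon)$.

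Two easy sub-cases can be dispatched up front, with failure probability zero in each. If $|\CJ|<m$, then by \Cref{le.llromon} the first $\lceil n/4\rceil$ jobs go to distinct empty machines; thereafter the least loaded machine is still empty, so the first branch of the \textbf{if}-clause keeps firing and every job lands on its own machine, yielding makespan $p_\jmax\le \OPT$. If instead $R(\CJ)\le \varepsilon/2$, then \Cref{co.ll2} deterministically gives $\mathrm{Light Load ROM}(\CJ^\sigma)\le (1+\varepsilon)\OPT(\CJ)\le (1.75+\varepsilon)\OPT(\CJ)$. So I may restrict attention to $|\CJ|\ge m$ and $R(\CJ)>\varepsilon/2$.

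On such inputs I would invoke the Load Lemma to control the guess $\hat L_\pre$. Apply it with $R_\lowM=\varepsilon/2$, $\varphi=1/4$, and inner tolerance $\delta=\delta(m)=1/\log m$; the hypothesis $\delta^{-4}\varphi^{-1}R_\lowM^{-1}=\Theta(\log^4 m)=o(m)$ is met. For all sufficiently large $m$, the event $|L_{1/4}/L-1|<\delta$ has probability at least $1-\delta$, and on it
\[
 L[\CJ]\le \hat L_\pre=\frac{L_{1/4}}{1-\delta}\le \frac{1+\delta}{1-\delta}L[\CJ]\le (1+3\delta)\OPT(\CJ),
\]
using $L\le \OPT$ and $\delta\le 1/2$. \Cref{co.ll} with $L_\guess=\hat L_\pre$ then yields $\mathrm{Light Load ROM}(\CJ^\sigma)\le 1.75\max(\hat L_\pre,\OPT)\le 1.75(1+3\delta)\OPT$, and for $m$ large enough both the $O(\delta)$ slack and the failure probability $\delta$ drop below $\varepsilon$, finishing the second bullet.

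The main obstacle, and the reason the division by $1-\delta$ in the definition of $\hat L_\pre$ is essential, is the asymmetry in \Cref{co.ll}: the $1.75$-guarantee only triggers once $L_\guess\ge L$. The raw sampling estimate $L_{1/4}$ is just as likely to undershoot as to overshoot $L$, so it cannot be used directly. Inflating it by $1/(1-\delta)$ lifts it safely above $L$ on the Load-Lemma event while inflating the final bound by only a $(1+O(\delta))$ factor. Calibrating $\delta=1/\log m$ simultaneously satisfies the Load Lemma's growth constraint $\delta^{-4}=o(m)$ and lets $\delta\to 0$ as $m\to\infty$; matching this $\delta$ against the excluded regime $R(\CJ)\le \varepsilon/2$ is the delicate bookkeeping that makes the whole thing fit together.
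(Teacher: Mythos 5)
Your argument reproduces the paper's proof of \Cref{te.1.75} (and hence this corollary via \Cref{le.nearly}) almost step for step: the adversarial $3$-competitive bound from \Cref{co.ll2}, the same two easy sub-cases ($|\CJ|\le m$ and small $R(\CJ)$), the Load Lemma to control $\hat L_\pre$, and \Cref{co.ll} to convert a good guess into a $1.75(1+O(\delta))$ makespan. The only cosmetic differences are that you use an $\varepsilon$-dependent cutoff $R(\CJ)\le\varepsilon/2$ where the paper fixes the threshold $3/8$, and your inequality $\frac{1+\delta}{1-\delta}\le 1+3\delta$ actually requires $\delta\le 1/3$ rather than $\delta\le 1/2$ --- a harmless slip since $\delta=1/\log m\to 0$.
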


\begin{figure}[H]
    \centering
    \resizebox{\textwidth}{!}{
\begin{tikzpicture}
   \draw (0,0) rectangle (12,2) node[midway] {stable} ;
   \draw[pattern=north west lines, pattern color=black] (12,0) rectangle (12.3,2);
   \draw[dashed, line width=1.5mm, rounded corners=20](0.2,0.3) rectangle (12.1,1.8);
   \draw [->] (3.65,1.7) to[out=-20, in=0 ,looseness=1.6]  (3.55,1)   to[out=180,in=200,looseness=1.6]  (3.45,1.7) ;
    \node at (3.55,1.25){$S_n$};

   \draw (12.3,0) rectangle (14,2)node[midway] {simple} ;
   \draw[dashed, line width=1.5mm, rounded corners=15](12.5,0.3) rectangle (13.8,1.8);
    \draw [->] (13.25,1.7) to[out=-20, in=0,looseness=1.5]  (13.15,1.2)   to[out=180,in=200,looseness=1.5]  (13.05,1.7) ;
    \node at (13.15,1.43){$S_n$};
\end{tikzpicture}
    }
    \caption{The lay of the land of the analysis.
The algorithm is $(c+O(\delta))$-competitive on simple and proper stable sequences. Only the small unstable remainder (hashed) is problematic. Dashed lines mark orbits under the action of the permutation group $S_n$. Simple sequences stay simple under permutation. Non-simple orbits have at most an $\delta$-fraction, which is unstable (hashed). Thus, the algorithm is $(c+O(\delta))$-competitive with probability at least $1-\delta$ after random permutation.}
    \label{fig:simplestable}
    \end{figure}
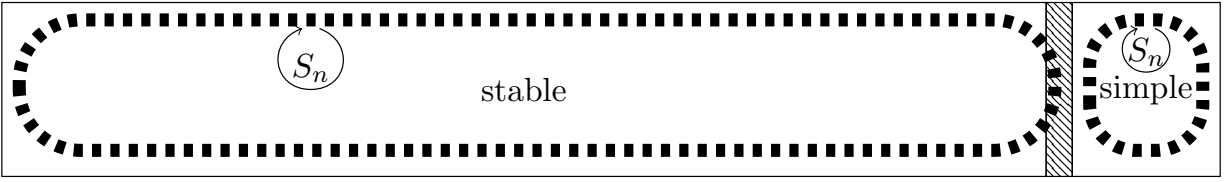

\begin{proof}[Proof of \Cref{te.1.75}]Our analysis forms a triad, which outlines how we are going to analyze our more sophisticated $1.535$-competitive algorithm later on as well. 

\noindent{\bf Analysis basics:} By \Cref{co.ll2} the algorithm $\mathrm{Light Load ROM}$ is adversarially $3$-competitive. We call the input set $\CJ$ \emph{simple} if $|\CJ|\le m$ or $R[\CJ]<\frac{3}{8}$. If $|\CJ|\le m$ every job is scheduled onto an empty machine, which is optimal. If $R[\CJ]<\frac{3}{8}$, \Cref{co.ll2} bounds the competitive ratio by $1+2R[\CJ]<1.75$. We thus are left to consider non-simple, so called \emph{proper}, job sets.

\noindent{\bf Stable job sequences:} We call a sequence $\CJ^\sigma$ \emph{stable} if $L\le \hat L_\pre \le \frac{1+\delta}{1-\delta}L$ holds true. By the Load Lemma, \Cref{Loadlemma}, the probability of the sequence $\CJ^\sigma$ being stable is at least $1-\delta$ if we choose $m$ large enough and $\CJ$ proper. Here we use that $\delta(m) = 1/\log(m) \in \omega(m^{-1/4})$.

\noindent{\bf Adversarial Analysis:} By \Cref{co.ll}, the makespan of $\mathrm{Light Load ROM}$ on stable sequences is at most
$1.75 \max (\hat L_\pre(\CJ),\OPT(\CJ))\le 1.75\frac{1+\delta}{1-\delta}\OPT(\CJ)=\big(1.75+\frac{3.5\cdot\delta}{1-\delta}\big)\OPT(\CJ).$

\noindent{\bf Conclusion:} Let $\varepsilon>0$. Since $\delta(m)\rightarrow 0$, we can choose $m$ large enough such that $\frac{3.5\delta(m)}{1-\delta(m)}\le \varepsilon$. In particular $\bP_{\sigma\sim S_n} [\mathrm{Light Load ROM}({\cal J}^\sigma) \geq (c+\varepsilon) OPT({\cal J})] \leq \delta(m) \le \varepsilon$ since the only sequences where the inequality does not hold are proper but not stable. This concludes the second condition of nearly competitivity.
\end{proof}

\subsubsection*{Why underestimating $L$ is actually not as bad as one may think.}\label{sec.Lunderest}
So far we were careful to choose our guess $\hat L_\mathrm{pre}$ in such a way that it is unlikely to underestimate~$L$ since this allowed us to prove results in a self-contained fashion, using \Cref{te.ll} from~\cite{albers_semi-online_2012} only as a black box. One should note that their analysis also allows us to tackle guesses $L_\guess<L$.

\begin{lemma}\label{le.underest}
Let $\CJ^\sigma$ be any (ordered) input sequence and let $L_\guess = (1-\delta)L[\CJ]$ for some $\delta \ge 0$. Then the makespan of $\LL{L_\guess}$ is at most $1.75(1+\delta) \OPT(\CJ)$.
\end{lemma}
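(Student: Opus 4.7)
My plan is to open up the Albers--Hellwig analysis of \Cref{te.ll} from~\cite{albers_semi-online_2012}, which we have so far invoked only as a black box, and to verify that it tolerates an underestimate of $L$ by a factor $(1-\delta)$ at the cost of a $(1+\delta)$-factor slack. The key observation is that the original analysis bounds the makespan via a case distinction on the last job scheduled on the makespan-determining machine---call it the critical job $J^*$ placed at time $t^*$---and in each case the resulting bound depends explicitly on $L_\guess$ together with structural lower bounds such as $p_\jmax$ and~$L$. Substituting $L_\guess = (1-\delta)L$ then only perturbs the bound in a controlled way.

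I would proceed by a case split on how $J^*$ was placed. If $J^*$ was scheduled on $M_\midM$ via the else-branch, the if-clause guarantees $l_\midM^{t^*-1}+p_{t^*}\le 1.75 L_\guess$, so the final load is at most $1.75 L_\guess \le 1.75(1-\delta)\OPT$. If it was scheduled via the first part of the if-condition, i.e.\ $l_\lowM^{t^*-1}\le 0.25L_\guess$, the final load is at most $0.25L_\guess+p_\jmax\le 0.25L+\OPT\le 1.25\OPT$. Both cases trivially lie below the target $1.75(1+\delta)\OPT$. The substantive case is therefore the remaining one, in which the if-branch is triggered purely by $l_\midM^{t^*-1}+p_{t^*}>1.75L_\guess$ while $l_\lowM^{t^*-1}>0.25L_\guess$.

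In this main case, the Albers--Hellwig proof uses an aggregate-load argument: the roughly $m/2$ upper-half machines all carry load $\ge l_\midM^{t^*-1}$, the lower half carry load exceeding $0.25L_\guess$, and summing yields a lower bound on the load placed by time $t^*-1$ that must be compatible with the global total processing volume $mL$. Originally, with $L_\guess=L$, this chain closes at a final-load bound of $1.75L$. Under $L_\guess=(1-\delta)L$, replaying the same chain produces a bound of the form $1.75 L_\guess + c\cdot\delta L$ for some small constant~$c$, which is still at most $1.75L\le 1.75\OPT\le 1.75(1+\delta)\OPT$ after using $L\le\OPT$; the $(1+\delta)$ slack in the statement comfortably absorbs any looseness introduced.

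The main obstacle is precisely this last case: the aggregate-load argument of~\cite{albers_semi-online_2012} is labelled complicated in the present excerpt and is not reproduced here, so carrying it out in the underestimation regime amounts to stepping through their proof and checking that the only penalty incurred by the underestimate lands inside the $(1+\delta)$ slack, and that no step implicitly relies on the assumption $L_\guess\ge L$. The generosity of the stated bound is reassuring here, as it shows that the re-derivation need not be tight.
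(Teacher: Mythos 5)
Your plan is essentially the paper's: both open up the Albers--Hellwig analysis of~\cite{albers_semi-online_2012} and track how the bound degrades when the guess $L_\guess$ is an underestimate, relying on the observation that the easy branches of the if-clause are handled trivially and deferring the substantive work to the aggregate-load argument. The paper is more explicit about which parts of that argument actually change---it induces on the sequence length, isolates the case where the least-loaded machine has load exceeding $(0.75+1.75\delta)L$ at the arrival of the last job, and records the precise weakenings needed in Lemmas 1 and 2 of~\cite{albers_semi-online_2012} (to $(0.25+1.75\delta)L$ and $(1.25+1.75\delta)L-\epsilon L$ respectively)---whereas you leave the replay as a to-do; your side remark that the main-case bound might already come out below $1.75L$ is almost certainly too optimistic, but since you only claim the $(1+\delta)$ slack absorbs the looseness, this is a harmless overstatement rather than a gap.
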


Showing this lemma requires carefully rereading the analysis of Albers and Helwig~\cite{albers_semi-online_2012}. 
We next describe how their analysis has to be adapted to derive \Cref{le.underest}.

\begin{proof}[How to adapt the proof from~\cite{albers_semi-online_2012}]
Consider any input sequence $J_1,\ldots J_n$. Using induction we may assume the result of the lemma to hold on the prefix $J_1,\ldots,J_{n-1}$. In~\cite{albers_semi-online_2012} they argue that the algorithm remains $1.75$-competitive if the least loaded machine had load at most $0.75L$ upon arrival of $J_n$. By a similar reasoning the less strict statement of \Cref{le.underest} holds if the least loaded machine had load at most $0.75L+1.75\delta L$ at that time. Thus we are left to consider the case that its load is  $0.75L+1.75\delta L +\epsilon L$ for some $0<\epsilon<0.25-1.75\delta$. Following the arguments~\cite{albers_semi-online_2012}, it suffices to show that every machine received a job of size $0.5L+\epsilon L$. The statement of Lemma 1 in~\cite{albers_semi-online_2012} needs to be weakened to 'At time $t_{j_0}$ the least loaded machine had load at most $(0.25+1.75\delta)L$.' The proof of the lemma remains mostly the same. The only change occurs in the induction step. Here, the size of a job causing a machine to reach load $0.75L+\delta L +\epsilon L$ and, in addition, the corresponding decrease in potential is only $(1-1.75\delta)L$. Similarly, the statement of Lemma 2 needs to be refined to 'the $j_0$-th least loaded machine had load at most $(1.25+1.75\delta)L-\epsilon L=1.75L_\guess-0.5L-\epsilon L$.' The proof of Lemma 2 stays the same. Using these modifications, the rest of the analysis of~\cite{albers_semi-online_2012} can be applied to conclude the proof. 
\end{proof}

\begin{theorem}\label{th.fulllightloadcapM}
Let $\CJ^\sigma$ be any (ordered) input sequence. The makespan of $\mathrm{Light Load ROM}$ on  $\CJ^\sigma$ is $1.75\cdot\left(1+\frac{|\hat L_\pre[\CJ^\sigma]-L|}{L}\right)\OPT$.
\end{theorem}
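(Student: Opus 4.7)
The plan is a short case analysis on the sign of $\hat L_\pre - L$, reducing the theorem to a direct combination of \Cref{co.ll} and \Cref{le.underest}. All the work has already been done in those two results; the theorem merely packages them into a single bound expressed in terms of the relative error $\frac{|\hat L_\pre - L|}{L}$.

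\textbf{Case 1: $\hat L_\pre \ge L$.} I would apply \Cref{co.ll} with $L_\guess = \hat L_\pre$, yielding that the makespan of $\mathrm{Light Load ROM} = \LL{\hat L_\pre}$ is at most $1.75 \cdot \max(\hat L_\pre, \OPT)$. If $\hat L_\pre \le \OPT$ the bound $1.75\,\OPT$ is trivially at most $1.75\bigl(1+\tfrac{|\hat L_\pre - L|}{L}\bigr)\OPT$. Otherwise, write $\hat L_\pre = L\bigl(1 + \tfrac{\hat L_\pre - L}{L}\bigr) = L\bigl(1 + \tfrac{|\hat L_\pre - L|}{L}\bigr)$ and use $L \le \OPT$ from \Cref{pro.bounds} to conclude $\hat L_\pre \le \OPT\bigl(1 + \tfrac{|\hat L_\pre - L|}{L}\bigr)$, which gives the claim.

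\textbf{Case 2: $\hat L_\pre < L$.} Set $\delta := \tfrac{L - \hat L_\pre}{L} = \tfrac{|\hat L_\pre - L|}{L} \ge 0$, so that $\hat L_\pre = (1-\delta)L$. This is exactly the hypothesis of \Cref{le.underest}, which then yields a makespan of at most $1.75(1+\delta)\OPT = 1.75\bigl(1 + \tfrac{|\hat L_\pre - L|}{L}\bigr)\OPT$.

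Combining both cases proves the theorem. There is no real obstacle: the only thing to verify is that the algebraic manipulation in Case~1 correctly absorbs the $\max(\hat L_\pre,\OPT)$ into the desired form, and that Case~2 matches the hypothesis of \Cref{le.underest} verbatim. Note that we never need to inspect the internals of $\LL{\cdot}$; both \Cref{co.ll} and \Cref{le.underest} are used as black boxes.
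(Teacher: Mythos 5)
Your proposal is correct and follows the paper's own route: split on the sign of $\hat L_\pre - L$, apply \Cref{co.ll} when the guess overestimates and \Cref{le.underest} when it underestimates. Your write-up is in fact more careful than the paper's terse proof (which, inconsistently with the theorem statement, writes $L_{1/4}$ in place of $\hat L_\pre$ and drops a factor in the displayed bound), and your Case~1 algebra correctly absorbs $\max(\hat L_\pre,\OPT)$ into the claimed form.
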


\begin{proof}
On input permutation $\CJ^\sigma$ the competitive ratio of our algorithm is at most $1.75+\frac{\left|L_{1/4}-L\right|}{L}$ by \Cref{co.ll}  if $L_{1/4} \ge L$. Else, recall that $\OPT\ge L$ and apply \Cref{le.underest}.
\end{proof}

Let us assume for simplicity that the input length $n$ is divisible by $4$. We can always add up to three jobs of size $0$ to obtain such a result. This 'adding' can be simulated by an online algorithm. 
Recall that the \emph{absolute mean deviation} of a random variable $X$ that has nonzero expectation is defined as $\mathrm{MD}[X]=\bE\left[\left|X-\bE[X]\right|\right]$ and its \emph{normalized absolute mean deviation} is $\mathrm{NMD}[X]=\frac{\bE\left[\left|X-\bE[X]\right|\right]}{\bE[X]}$. In particular, $\mathrm{NMD}[L_{1/4}]=\frac{\bE_{\sigma\sim S_N}\left[\left|L_{1/4}-L\right|\right]}{L}.$
From the previous theorem we obtain:


\begin{theorem}\label{th.reduc}
On input set $\CJ$ the competitive ratio of $\mathrm{Light Load ROM}$ in the random-order model is at most $1.75(1+\mathrm{NMD}(\hat L_\pre))$. If $n\le m$ or $R(\CJ)\le \frac{3}{8}$, then $\LL{L_{1/4}}$ is already $1.75$-competitive. 
\end{theorem}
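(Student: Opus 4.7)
The plan is to derive Theorem \ref{th.reduc} as a straightforward consequence of \Cref{th.fulllightloadcapM} by integrating over the random order, together with a direct argument for the trivially easy special cases.

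First I would apply \Cref{th.fulllightloadcapM} pointwise: for every order $\sigma \in S_n$,
\[
\mathrm{Light Load ROM}(\CJ^\sigma) \;\le\; 1.75\Bigl(1 + \tfrac{|\hat L_\pre[\CJ^\sigma]-L|}{L}\Bigr)\OPT(\CJ).
\]
Averaging over a uniform $\sigma\sim S_n$ and pulling the constants outside yields
\[
\bE_\sigma\bigl[\mathrm{Light Load ROM}(\CJ^\sigma)\bigr] \;\le\; 1.75\Bigl(1 + \tfrac{\bE_\sigma[|\hat L_\pre-L|]}{L}\Bigr)\OPT(\CJ),
\]
so everything hinges on identifying the right-hand expectation with $\mathrm{NMD}(\hat L_\pre)$.

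The key observation is that $\bE_\sigma[L_{1/4}] = L$: the value $L_{1/4}$ is $\frac{4}{m}$ times the sum of $n/4$ jobs drawn uniformly at random without replacement from $\CJ$, whose expected total size is $\frac{1}{4}\sum_i p_i = \frac{mL}{4}$. Consequently $\bE_\sigma[\hat L_\pre] = \frac{L}{1-\delta}$, and a brief rewriting gives
\[
\mathrm{NMD}(\hat L_\pre) \;=\; \frac{\bE_\sigma\bigl[|\hat L_\pre-\bE\hat L_\pre|\bigr]}{\bE\hat L_\pre} \;=\; \frac{\bE_\sigma\bigl[|L_{1/4}-L|\bigr]}{L}.
\]
In the clean case $\delta=0$ (where $\hat L_\pre = L_{1/4}$) this is exactly the quantity appearing in the integrated bound, yielding the first claim. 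When $\delta>0$ the triangle inequality $|\hat L_\pre - L| \le |\hat L_\pre - \bE\hat L_\pre| + \frac{\delta L}{1-\delta}$ introduces only an additive $O(\delta)$ overhead, which can be absorbed into the statement (or stated precisely as $\mathrm{NMD}(\hat L_\pre)$ by taking $\delta \to 0$, matching the second sentence of the theorem which refers to $\LL{L_{1/4}}$ directly).

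For the ``in particular'' clause I would argue pointwise, so the bound holds even in the adversarial order and a fortiori in expectation. If $n\le m$, the algorithm's first branch places each of the $n$ jobs on an empty machine (since $l_\lowM^{t-1}=0\le 0.25L_\guess$ always triggers), producing an optimal schedule. If $R(\CJ)\le 3/8$, then since $\LL{L_{1/4}}$ only ever schedules onto the least or the $\lfloor m/2\rfloor$-th least loaded machine, \Cref{co.ll2} bounds its makespan by $(1+2R(\CJ))\OPT \le 1.75\,\OPT$. The only mildly tricky bit is keeping the $(1-\delta)$-factor bookkeeping straight in the first part; this is routine rather than a genuine obstacle, as the whole proof is essentially a clean expectation of an existing pointwise bound combined with a worst-case fallback on degenerate inputs.
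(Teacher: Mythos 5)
Your proposal takes essentially the same route as the paper: average the pointwise bound of \Cref{th.fulllightloadcapM} over $\sigma$, and handle the two special cases directly. The paper's own proof is exactly that terse — ``The first statement follows from \Cref{th.fulllightloadcapM} by taking expected values. If $n\le m$, the algorithm $\LL{L_{1/4}}$ places every job on a separate machine and is thus optimal. If $R(\CJ)\le \frac{3}{8}$, it is $1.75$-competitive by \Cref{co.ll2}.''

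What you add beyond the paper is the explicit bookkeeping of the $(1-\delta)^{-1}$ factor in $\hat L_\pre = L_{1/4}/(1-\delta)$, which the paper's one-line ``taking expected values'' glosses over. Your observation that $\bE_\sigma[L_{1/4}]=L$ and hence $\mathrm{NMD}(\hat L_\pre)=\mathrm{NMD}(L_{1/4})=\bE_\sigma[|L_{1/4}-L|]/L$ by scale invariance of $\mathrm{NMD}$ is correct and is implicitly what the paper relies on. You are also right that the quantity emerging from the averaged pointwise bound is $\bE_\sigma[|\hat L_\pre - L|]/L$, which differs from $\mathrm{NMD}(\hat L_\pre)$ by an additive $O(\delta)$ term (via the triangle inequality against $\bE[\hat L_\pre]=L/(1-\delta)$); the paper silently identifies the two. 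Calling this out is fair — it is a genuine, if minor, imprecision in the paper rather than a gap in your argument — and the second sentence of the theorem deliberately uses $\LL{L_{1/4}}$ (i.e.\ $\delta=0$) rather than $\mathrm{Light Load ROM}$, which you also correctly note. Overall your proposal is correct, matches the paper's approach, and surfaces a subtlety the paper suppresses.
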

\begin{proof}
The first statement follows from \Cref{th.fulllightloadcapM} by taking expected values. If $n\le m$, the algorithm $\LL{L_{1/4}}$ places every job on a separate machine and is thus optimal. If $R(\CJ)\le \frac{3}{8}$, it is $1.75$-competitive by \Cref{co.ll2}.
\end{proof}

We will now provide estimates on $\mathrm{NMD}(\hat L_\pre)$. \Cref{fig.NAMD} depicts practical estimates, while our analysis will focus on theoretical bounds.

\begin{figure}[H]
\includegraphics[width=\textwidth]{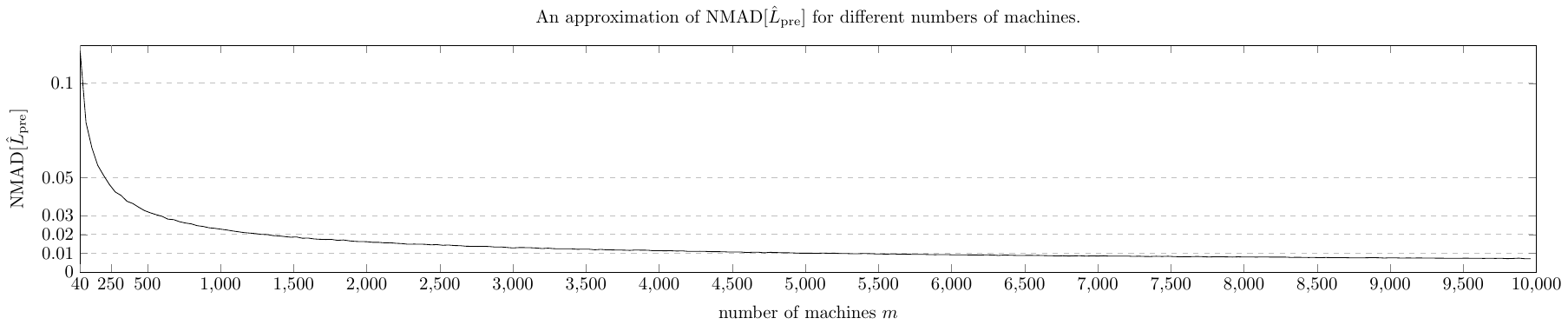}\vspace{-20pt}
\caption{The extra cost for small numbers of machines.  The graph shows an estimation of $\mathrm{NMD}(\hat L_\pre)$ on the lower bound sequence from \cite{albers_better_1999} based on $10,000$ random samples. The curve indicates good performance of $\mathrm{Light Load ROM}$ in practice by \Cref{th.reduc}. 
 }\label{fig.NAMD}
\end{figure}

Given any job set $\CJ$ of size $n>m$ and $R(\CJ) \ge \frac{3}{8}$ we are left to estimate this normalized standard deviation of $L_{1/4}$. One observes that $\mathrm{NMD}[L_{1/4}]$ does not change if we scale all jobs by a common factor $\lambda>0$. By choosing $\lambda=L^{-1}$ we may wlog.\ assume that $L=1$. In particular, $\mathrm{NMD}[L_{1/4}]=\mathrm{MD}[L_{1/4}]$. Now $R(\CJ)\ge \frac{3}{8}$ implies that all jobs have size at most $8/3$. The following lemma allows us to reduce ourselves to particularly easy job instances.

\begin{lemma}
Consider two jobs $J_a,J_b\in \CJ$ of sizes $p_a\le p_b$ and $0\le\varepsilon\le p_a$. If we set the size of $J_a$ to $p_a-\varepsilon$ and the size of $J_b$ to $p_b+\varepsilon$, then $
\mathrm{MD}(L_{1/4})$ does not decrease.
\end{lemma}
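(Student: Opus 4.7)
The plan is to condition on the assignment of all jobs except $J_a, J_b$ to positions, and argue that the conditional mean deviation of $L_{1/4}$ about $L$ can only go up after the swap. The key initial observation is that the total processing volume, and hence $L = \bE_\sigma[L_{1/4}]$, is invariant under the modification, so $\mathrm{MD}(L_{1/4}) = \bE[|L_{1/4} - L|]$ is compared to the same target before and after.

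After fixing all other positions, two slots $s_1, s_2$ remain for $J_a, J_b$. Let $k \in \{0, 1, 2\}$ count how many of these slots lie in the first quarter $\{t : \sigma(t) \le n/4\}$. I would split into three cases: for $k \in \{0, 2\}$ the joint contribution of $J_a, J_b$ to $L_{1/4}$ is $0$ or $\tfrac{4}{m}(p_a + p_b)$ regardless of how they are placed and regardless of the swap, so the conditional distribution of $L_{1/4}$ is unchanged and there is nothing to show. The only substantive case is $k = 1$, where, by uniformity of $\sigma$, the two orderings of $J_a, J_b$ into $s_1, s_2$ are equally likely, so $L_{1/4}$ takes exactly two values, $\tfrac{4}{m}(f + p_a)$ and $\tfrac{4}{m}(f + p_b)$, with $f$ the total size of the other first-quarter jobs.

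Setting $\alpha = \tfrac{4}{m}(f + p_a) - L$, $\beta = \tfrac{4}{m}(f + p_b) - L$, and $\delta = \tfrac{4\varepsilon}{m}$, the conditional mean deviation is $\tfrac{1}{2}(|\alpha| + |\beta|)$ before and $\tfrac{1}{2}(|\alpha - \delta| + |\beta + \delta|)$ after. Since $p_a \le p_b$ gives $\alpha \le \beta$, everything reduces to the elementary inequality
\[
|\alpha - \delta| + |\beta + \delta| \ge |\alpha| + |\beta| \qquad (\alpha \le \beta,\ \delta \ge 0),
\]
which I would dispatch either by a short case analysis on the signs of $\alpha$ and $\beta$, or by noting that the map $\delta \mapsto |\alpha - \delta| + |\beta + \delta|$ is piecewise linear with non-negative slope on $[0, \infty)$ (the only cases where it is strictly positive are precisely those where $0$ lies between $\alpha - \delta$ and $\beta + \delta$). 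Averaging over the conditioning then yields $\mathrm{MD}(L_{1/4}^{\mathrm{new}}) \ge \mathrm{MD}(L_{1/4})$.

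I do not anticipate a genuine obstacle: the only subtlety is making the conditioning precise so that in the case $k=1$ the two orderings of $J_a, J_b$ are indeed equiprobable after fixing everything else, which follows from the observation that transposing $a$ and $b$ is a measure-preserving involution on $S_n$. Everything else is bookkeeping.
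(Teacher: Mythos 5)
Your proof is correct, but it takes a genuinely different route from the paper's. The paper argues by abstract convexity: it notes that $\mathrm{MD}(L_{1/4})$, viewed as a function of $(p_1,\ldots,p_n)$, is convex because it is the uniform average over $\sigma\in S_n$ of the convex functions $|L_{1/4}[\sigma]-L|$ (each the absolute value of an affine form). As written, the paper then stops; to actually conclude one must also invoke permutation-invariance of $\mathrm{MD}$, so that its restriction to the line $t\mapsto(p_a-t,p_b+t)$ is a convex \emph{even} function of the centered variable and hence non-decreasing outward, which is exactly the claim. Your argument instead conditions on the placement of every job other than $J_a,J_b$, splits on how many of the two free slots lie in the first quarter, and in the only nontrivial case ($k=1$) reduces the claim to the elementary two-point inequality $|\alpha-\delta|+|\beta+\delta|\ge|\alpha|+|\beta|$ for $\alpha\le\beta$, $\delta\ge 0$, which you verify by a sign/slope analysis. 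This is the same convexity-plus-symmetry phenomenon, but made explicit: the equiprobability of the two orderings of $J_a,J_b$ encodes the permutation symmetry that the paper leaves tacit, and the two-point inequality encodes the convexity. The paper's approach is shorter and is re-used a few lines later (to push the one leftover job to an extreme size), and it would generalize cheaply to any other convex statistic; your approach is more self-contained, fills in the monotonicity step the paper elides, and makes the probabilistic mechanism visible via the $k\in\{0,1,2\}$ decomposition.
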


\begin{proof}
Consider $\mathrm{MD}(L_{1/4})(p_1,\ldots,p_n)=\bE[|L_{1/4}-L|](p_1,\ldots,p_n)$ as a function on the job sizes $p_1,\ldots,p_n$. This function is convex since it is a convex combination of the convex functions $|L_{1/4}[\sigma]-L|$ for all~$\sigma\in S_n$.
\end{proof}

We apply the previous lemma to any pairs of $J_a,J_b\in \CJ$ of sizes $0<p_a\le p_b<8/3$ to set either $p_a=0$ or $p_b=8/3$. We then repeat this process till all jobs but one last one have either size $0$ or $8/3$. So far at most $\left\lfloor\frac{3m}{8}\right\rfloor$ jobs have size~$8/3$ since by assumption $L=1$. Using again the fact that $\mathrm{MD}[L_{1/4}]$ is convex in the size of jobs, setting the size $p$ of this remaining job to at least one of the values $0$ or $8/3$ cannot decrease $\mathrm{MD}[L_{1/4}]$. Let us do so. This breaks the assumption that $L=1$, which is why we consider $\mathrm{MD}$ instead of $\mathrm{NMD}$. Let $K=K(n)$ be the number of jobs of size~$8/3$. Then $K$ is either $\left\lfloor\frac{3m}{8}\right\rfloor$ or $\left\lfloor\frac{3m}{8}\right\rfloor+1$. Let $X\sim\mathrm{HyperGeom}(n,K,n/4)$, in other words $X$ corresponds to drawing $n/4$ elements without replacement from a population of size $n$ that contains precisely $K$ successes. Then $L_{1/4}=\frac{8X}{3m}$ for our modified job set. Since all modifications never caused $\mathrm{MD}[X]$ to decrease we have shown so far:

\begin{lemma}\label{le.Ldevbound}
Let $\CJ$ be a job set of size $n>m$ with $R(\CJ)\le \frac{3}{8}$. Then we can choose $K$ either $\left\lfloor\frac{3m}{8}\right\rfloor$ or $\left\lfloor\frac{3m}{8}\right\rfloor+1$ such that for $X\sim\mathrm{HyperGeom}(n,K,n/4)$ there holds $\mathrm{NMD}[L_{1/4}]\le \frac{8}{3m}\mathrm{MD}[X]$.
\end{lemma}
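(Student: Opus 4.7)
The plan is to execute the scheme laid out in the paragraph directly preceding the lemma, which has already identified the worst-case shape of the instance; the remaining task is to package the outcome into a hypergeometric distribution and track the constant.

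First, I would scale the instance so that $L[\CJ]=1$; this leaves both $R(\CJ)$ and $\mathrm{NMD}[L_{1/4}]$ invariant, and identifies $\mathrm{NMD}[L_{1/4}]$ with $\mathrm{MD}[L_{1/4}]$ since $\bE_{\sigma\sim S_n}[L_{1/4}] = L = 1$. Appealing to the convexity lemma stated just above, I would then iterate the pairwise push argument on pairs $(J_a,J_b)$ with $0<p_a\le p_b<8/3$, noting that each application strictly increases the count of jobs whose size lies in $\{0,8/3\}$. Hence after at most $n-1$ iterations every job has size in $\{0,8/3\}$ except possibly one residual job of intermediate size.

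Second, I would drive that residual job to an extreme value as well: by convexity of $\mathrm{MD}[L_{1/4}]$ in the residual job's size (a convex combination of absolute-value functions of the input sizes), at least one of the endpoints $\{0,8/3\}$ weakly increases $\mathrm{MD}[L_{1/4}]$, so I pick that endpoint. Because this final push may shift $L$ off the value $1$, I would keep $\mathrm{MD}[L_{1/4}]$ (not $\mathrm{NMD}[L_{1/4}]$) as the working quantity through this step; the pre-push identity $\mathrm{NMD}[L_{1/4}]=\mathrm{MD}[L_{1/4}]$ is what carries the bound back to $\mathrm{NMD}$ at the end.

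Third, I would count the resulting number $K$ of size-$8/3$ jobs. Before the last push the total processing volume equalled $m$, and the residual job contributed a quantity in $[0,8/3]$, so $K$ must be either $\lfloor 3m/8\rfloor$ or $\lfloor 3m/8\rfloor+1$. Then, under a uniformly random permutation, the count $X=n_{\CC,1/4}[\sigma]$ of size-$8/3$ jobs among the first $n/4$, where $\CC$ is the job class of size-$8/3$ jobs, is exactly distributed as $\mathrm{HyperGeom}(n,K,n/4)$. By the definition $L_{1/4}=\frac{1}{(1/4)m}\sum_{\sigma(t)\le n/4}p_t$, the load estimate becomes an affine function of $X$, so $\mathrm{MD}[L_{1/4}]$ equals its leading coefficient times $\mathrm{MD}[X]$, yielding the claimed factor $\tfrac{8}{3m}$ once the scalings are collected.

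The main obstacle is the bookkeeping around the last push. Because $L$ has drifted away from $1$, the normalization for $\mathrm{NMD}$ is no longer the current $L$ but the pre-push value $L=1$. The convexity lemma guarantees that each reduction step only weakly increases $\mathrm{MD}[L_{1/4}]$, so the final $\mathrm{MD}[L_{1/4}]$ on the two-point instance upper-bounds the $\mathrm{MD}[L_{1/4}]$ of the original (scaled) instance, which in turn equals the original $\mathrm{NMD}[L_{1/4}]$ via $L=1$. Chaining these monotone comparisons is the delicate part; everything else is routine arithmetic with the hypergeometric affine relation.
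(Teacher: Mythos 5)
Your proposal follows essentially the same route as the paper's own argument: scale so that $L=1$ and $\mathrm{NMD}=\mathrm{MD}$, iterate the convexity lemma on pairs to push all job sizes to $\{0,8/3\}$, dispose of the single residual job by one more convex push (correctly switching to tracking $\mathrm{MD}$ since $L$ drifts off $1$), and then read off $L_{1/4}$ as a linear function of the hypergeometric count $X$. The one caveat, which you share with the paper, is that the final coefficient is asserted rather than computed --- writing out $L_{1/4}=\frac{1}{(1/4)m}\cdot\frac{8}{3}X=\frac{32}{3m}X$ explicitly would be worthwhile, as the resulting constant does not obviously match the $\frac{8}{3m}$ claimed in the lemma.
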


It is possible to evaluate the standard mean deviation of $X\sim\mathrm{HyperGeom}(n,K,n/4)$ directly using the techniques in~\cite{diaconis1991closed}. Since such an analysis is quite complex we present a simpler proof, which yields somewhat worse bounds.

\begin{lemma}\label{le.XYcomp}
Let $X\sim\mathrm{HyperGeom}(n,K,n/4)$ and $Y\sim\mathrm{Bin}(K,1/4)$ then $\mathrm{MD}(X)\le\mathrm{MD}(Y)$.
\end{lemma}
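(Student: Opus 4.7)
The plan is to reduce the claim to the classical fact that sampling without replacement is dominated by sampling with replacement in the convex order. First I would observe that both variables share the same mean $\bE[X]=\bE[Y]=K/4$, so that writing $\phi(x)=|x-K/4|$ the assertion $\mathrm{MD}(X)\le\mathrm{MD}(Y)$ is simply $\bE[\phi(X)]\le\bE[\phi(Y)]$. Since $\phi$ is convex, it suffices to establish the convex-order domination $X\le_{\mathrm{cx}} Y$, i.e.\ $\bE[\psi(X)]\le\bE[\psi(Y)]$ for every convex $\psi$.

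Next I would invoke the symmetry of the hypergeometric distribution: drawing $n/4$ items out of $n$ with $K$ ``successes'' gives the same count as drawing $K$ items out of $n$ with $n/4$ ``successes.'' Hence $X\sim\mathrm{HyperGeom}(n,n/4,K)$ as well. In this representation, $X$ is the number of successes when $K$ items are drawn \emph{without} replacement from a population with success probability $(n/4)/n=1/4$, while $Y$ is the corresponding count \emph{with} replacement. The desired convex-order inequality between these two experiments is precisely Hoeffding's classical theorem (Section~6 of ``Probability inequalities for sums of bounded random variables,'' 1963), which gives $\bE[\psi(X)]\le\bE[\psi(Y)]$ for every continuous convex $\psi$; applying it to our $\phi$ finishes the proof.

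If one wanted to avoid citing Hoeffding, the standard self-contained route is a coupling/swap argument. Enumerate the $K$ draws sequentially and write $Y=\sum_{i=1}^K\xi_i$ with the $\xi_i$ iid $\mathrm{Ber}(1/4)$ and $X=\sum_{i=1}^K\eta_i$ where $\eta_i$ is the indicator that the $i$-th draw (without replacement) is a success, conditional on the history. One then shows by induction on the prefix length that hybrid partial sums with prefix without-replacement and suffix with-replacement are ordered in convex order as one flips more indices from ``with'' to ``without'' replacement; each single-step comparison is a tractable Jensen-type inequality because, conditional on the prior draws, $\eta_i$ is a mean-preserving contraction of $\xi_i$ (its conditional mean is still $1/4$ by exchangeability, and its support is a subset of $\{0,1\}$).

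The main obstacle I expect is bookkeeping around the symmetry swap between the parameter triples $(n,K,n/4)$ and $(n,n/4,K)$, since Hoeffding's theorem is usually phrased so that the sample size (not the number of successes in the population) sits as the number of summands; without this swap, the natural binomial comparison would be $\mathrm{Bin}(n/4,K/n)$ rather than $\mathrm{Bin}(K,1/4)=Y$. Once this detail is handled, the rest is a direct application of a textbook convex-order result.
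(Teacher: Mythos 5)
Your proposal is correct, and it shares the paper's crucial structural move but differs on the comparison step. Both you and the paper begin by exploiting the hypergeometric symmetry $\mathrm{HyperGeom}(n,K,n/4)=\mathrm{HyperGeom}(n,n/4,K)$, so that $X$ becomes the success count in $K$ draws without replacement from a population of size $n$ containing $n/4$ successes, while $Y$ is the with-replacement analogue. From there you invoke Hoeffding's classical theorem that the without-replacement sum is dominated by the with-replacement sum in the convex order, and apply it to the convex function $\phi(x)=\lvert x-K/4\rvert$. The paper instead runs a direct induction on the prefix length $i$, asserting that $\lvert Y_i - i/4\rvert$ stochastically dominates $\lvert X_i - i/4\rvert$ after each step and sketching the induction as ``a case distinction,'' which is a pointwise (stochastic) claim about the absolute deviations — strictly stronger than what the $\mathrm{MD}$ comparison actually needs, and left largely to the reader. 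Your route buys modularity and rigor by citing a standard, fully proved result stated at exactly the right level of generality (convex order), and your secondary self-contained sketch (hybrid swap argument, each step being a Jensen-type mean-preserving contraction) is the textbook way to prove Hoeffding's theorem, which is close in spirit to the paper's induction but avoids the harder deviation-dominance statement. One small bookkeeping point, which you already anticipated: the symmetry swap is essential, since without it the naive binomial comparison is to $\mathrm{Bin}(n/4,K/n)$ rather than to the $\mathrm{Bin}(K,1/4)$ that appears in the lemma.
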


\begin{proof}
Indeed, both random variables correspond to $K$ draws from a population of size $n$ that contains $n/4$ successes. For $X$ these draws occur without replacement, while for $Y$ these are draws with replacement. Let $X_i$ respectively $Y_i$ be the respective random variable, which only considers the first $i$ draws for $0\le i\le K$. We can show via induction that  the random variable $|Y_i-i/4|$ dominates $|X_i-i/4|$ by a case distinction on the possible values of $Y_{i-1}-(i-1)/4$ and $X_{i-1}-(i-1)/4$. Thus, the random variable $|Y-\bE[Y]|$ dominates $|X-\bE[X]|$. This implies $\mathrm{MD}(Y)=\bE[|Y-\bE[Y]|]\ge\bE[|X-\bE[X]|]=\mathrm{MD}(X)$.
\end{proof}

Given $Y\sim\mathrm{Bin}(K,p)$, we are interested in $p=1/4$, let $\mathrm{bin}(k,K,p)=P[Y=k]={K\choose k}p^k(1-p)^{K-k}$. Then we can evaluate the median deviation of $Y$ using de Moivre's theorem. 

\begin{theorem}[de Moivre]
$\mathrm{MD}(Y)=2\lfloor pK+1\rfloor (1-p)\mathrm{bin}(\lfloor pK+1\rfloor,K,p)$ for $Y\sim\mathrm{Bin}(K,p)$.
\end{theorem}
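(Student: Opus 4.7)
Set $q=1-p$, $\mu=Kp$, $m=\lfloor pK+1\rfloor$, and $f(k)=\mathrm{bin}(k,K,p)=\binom{K}{k}p^k q^{K-k}$. The plan is first to split the mean absolute deviation into its positive and negative parts, then to evaluate one of the parts by a telescoping identity obtained from the standard binomial identity $k\binom{K}{k}=K\binom{K-1}{k-1}$ together with a one-step recursion on the upper tail.

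First I would write
\[
\mathrm{MD}(Y)=\sum_{k<\mu}(\mu-k)f(k)+\sum_{k\ge m}(k-\mu)f(k),
\]
noting that when $\mu$ is an integer the term $k=\mu$ contributes $0$ and so can be put in either sum; this is exactly the role of the rule $m=\lfloor pK+1\rfloor$, which equals $\lceil\mu\rceil$ if $\mu\notin\BZ$ and $\mu+1$ otherwise. Since $\bE[Y-\mu]=0$, the two sums above are equal, so it suffices to prove
\[
\sum_{k\ge m}(k-\mu)f(k)=mq\cdot f(m),
\]
and the factor $2$ in the statement accounts for the doubling.

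The key computation is to rewrite the left-hand side using $k\binom{K}{k}=K\binom{K-1}{k-1}$. Writing $B(j;N,p):=\sum_{k\ge j}\binom{N}{k}p^kq^{N-k}$ for the upper tail, this yields
\[
\sum_{k\ge m}(k-\mu)f(k)=Kp\bigl[B(m-1;K-1,p)-B(m;K,p)\bigr].
\]
Now I would apply the standard one-step recursion $B(m;K,p)=p\,B(m-1;K-1,p)+q\,B(m;K-1,p)$ (which follows from conditioning on the last Bernoulli trial). Substituting collapses the bracket to
\[
q\bigl[B(m-1;K-1,p)-B(m;K-1,p)\bigr]=q\binom{K-1}{m-1}p^{m-1}q^{K-m}.
\]
Finally, using $K\binom{K-1}{m-1}=m\binom{K}{m}$ gives $Kp\cdot q\binom{K-1}{m-1}p^{m-1}q^{K-m}=mq\cdot f(m)$, which together with the factor $2$ proves the theorem.

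The only subtle point, and the one I would be most careful about, is the boundary case $\mu\in\BZ$: here $\lfloor pK+1\rfloor=\mu+1$ (not $\mu$), and one has to verify that summing from $m=\mu+1$ instead of $\mu$ is both legitimate (because the $k=\mu$ term vanishes) and consistent with the telescoping, which it is since the identity $\sum_{k\ge m}(k-\mu)f(k)=mq\,f(m)$ was derived for arbitrary $m$.
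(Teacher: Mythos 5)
Your proof is correct. The paper itself does not prove this theorem; it simply cites Diaconis and Zabell (1991) and writes ``A proof of the theorem can be found in \cite{diaconis1991closed}.'' So there is no in-paper argument to compare against, but your derivation is the standard telescoping proof and every step checks out: the decomposition of $\mathrm{MD}(Y)$ into two equal halves via $\bE[Y-\mu]=0$ is valid for $m=\lfloor pK+1\rfloor$ in both the integer and non-integer cases of $\mu$ (the $k=\mu$ term vanishes when $\mu\in\BZ$); the identity $k\binom{K}{k}=K\binom{K-1}{k-1}$ converts $\sum_{k\ge m}k f(k)$ into $Kp\,B(m-1;K-1,p)$; the one-step recursion $B(m;K,p)=pB(m-1;K-1,p)+qB(m;K-1,p)$ collapses the bracket to $q\binom{K-1}{m-1}p^{m-1}q^{K-m}$; and $K\binom{K-1}{m-1}=m\binom{K}{m}$ finishes the computation, giving $\sum_{k\ge m}(k-\mu)f(k)=mq\,f(m)$ as claimed. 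One small presentational remark: you state and use the identity $\sum_{k\ge m}(k-\mu)f(k)=mq\,f(m)$ ``for arbitrary $m$,'' which is indeed true and is a nice way to isolate what is specific to the choice $m=\lfloor pK+1\rfloor$ (namely only the initial split of the MAD); it might be cleaner to flag explicitly that the telescoping step is an identity in $m$, whereas the doubling step is the only place $m=\lfloor pK+1\rfloor$ is needed. In short, you have supplied a correct, self-contained proof of a fact the paper merely quotes.
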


A proof of the theorem can be found in \cite{diaconis1991closed}.
Consider $Y\sim\mathrm{Bin}(K,1/4)$ and set $k=\lfloor K/4+1\rfloor$.
Using Stirling's Approximation we derive
\begin{align*}
\mathrm{MD}(Y)&=\lfloor 2k\rfloor \cdot 3/4 \cdot \mathrm{bin}\left(k,K,1/4\right)\\
&=(1+o_K(1)) \frac{3}{8}K \frac{K!}{k!(K-k)!} (1/4)^k(3/4)^{K-k} \\
&=(1+o_K(1)) \frac{3}{8}K \frac{\sqrt{2\pi K}\left(\frac{K}{e}\right)^K}{\sqrt{2\pi k}\left(\frac{k}{e}\right)^k\sqrt{2\pi (K-k)}\left(\frac{K-k}{e}\right)^{K-k}} \left(\frac{1}{4}\right)^k\left(\frac{3}{4}\right)^{K-k} \\
&=(1+o_K(1)) \frac{3}{8}K \sqrt\frac{K}{2\pi k(K-k)} \left(\frac{K}{4k}\right)^k\left(\frac{3}{4}\frac{K}{K-k}\right)^{K-k}  .
\end{align*}
Recall $K/4<k\le K/4+1$. Thus $\left(\frac{K}{4k}\right)^k\le 1$ and $\left(\frac{3}{4}\frac{K}{K-k}\right)^{K-k} \le \left(1+\frac{1}{3K/4-1}\right)^{3K/4}=e+o_K(1)$. 
We get that
$\mathrm{MD}(Y)\le (1+o_K(1)) \frac{3e}{8}K \sqrt\frac{K}{2\pi k(K-k)} $. Recall that $K=\left\lfloor\frac{3m}{8}\right\rfloor$ or $K=\left\lfloor\frac{3m}{8}\right\rfloor+1$. In particular, $K=(1+o_m(1))\frac{8}{3}m$ and $k=(1+o_m(1))\frac{2}{3}m$ . Thus 
\[\mathrm{MD}(Y)= (1+o_m(1)) \frac{3e}{8}\frac{8m}{3} \sqrt\frac{8m/3}{2\pi 2m/3\cdot(1m/3)} = (1+o_m(1))\sqrt\frac{6e^2m}{\pi}.\]

Consider any job set $\CJ$ of size $n$ with $R(\CJ)\le \frac{3}{8}$. Combining the previous bound with \Cref{le.Ldevbound} and \Cref{le.XYcomp} yields
\[\mathrm{NMD}[L_{1/4}]\le \frac{8}{3m}\mathrm{MD}[X] =  (1+o_m(1))\frac{8}{3m}\sqrt\frac{6e^2m}{\pi}<\frac{10.02+o_m(1)}{\sqrt{m}}.\]

This bound allows to establish a competitive ratio in the random-order model.

\begin{theorem}\label{th.fulllightloadcap2}
The competitive ratio of $\mathrm{Light Load ROM}$ in the random-order model is $1.75+\frac{18}{\sqrt{m}}+O\left(\frac{1}{m}\right)$.
\end{theorem}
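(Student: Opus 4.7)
The plan is to assemble the bounds developed in this section. Theorem~\ref{th.reduc} reduces the competitive ratio of $\mathrm{Light Load ROM}$ to $1.75(1+\mathrm{NMD}(\hat L_\pre))$ and already disposes of simple instances, so everything boils down to feeding the Stirling-based estimate of $\mathrm{NMD}[L_{1/4}]$ just computed into this bound and performing one numerical check.

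First I would split on whether the input is simple. If $|\CJ|\le m$ or $R(\CJ)\le \tfrac{3}{8}$, the second half of Theorem~\ref{th.reduc} already guarantees a $1.75$-competitive ratio, comfortably within the target $1.75+\tfrac{18}{\sqrt{m}}+O(\tfrac{1}{m})$. In the remaining case, $|\CJ|>m$ and $R(\CJ)>\tfrac{3}{8}$, the first half of Theorem~\ref{th.reduc} yields a competitive ratio of at most $1.75(1+\mathrm{NMD}(\hat L_\pre))$, and since $\hat L_\pre=L_{1/4}/(1-\delta)$ is a deterministic rescaling of $L_{1/4}$ measured against the same base $L=\bE[L_{1/4}]$, I would identify $\mathrm{NMD}(\hat L_\pre)$ with $\mathrm{NMD}(L_{1/4})$ up to the factor $(1-\delta)^{-1}$, which I will absorb at the end.

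To bound $\mathrm{NMD}(L_{1/4})$, I would chain \Cref{le.Ldevbound} and \Cref{le.XYcomp} to reduce to $\tfrac{8}{3m}\mathrm{MD}(Y)$ for $Y\sim\mathrm{Bin}(K,1/4)$ with $K\in\{\lfloor 3m/8\rfloor,\lfloor 3m/8\rfloor+1\}=\Theta(m)$, then apply de Moivre's theorem together with the Stirling expansion already displayed to arrive at
\[\mathrm{NMD}(L_{1/4})\le\tfrac{8\sqrt{6}\,e}{3\sqrt{\pi m}}+O(m^{-3/2}).\]
The numerical check $\tfrac{8\sqrt{6}\,e}{3\sqrt{\pi}}<10.02$ then gives $1.75\cdot\mathrm{NMD}(L_{1/4})\le \tfrac{17.535}{\sqrt{m}}+O(m^{-3/2})$, and the slack between $17.535$ and $18$ absorbs both the Stirling correction and the $(1-\delta)^{-1}$ rescaling into the $O(\tfrac{1}{m})$ term.

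The analytic work has already been carried out in the preceding lemmas, so the main obstacle is bookkeeping: the $(1-\delta)^{-1}$ factor relating $\hat L_\pre$ and $L_{1/4}$, the multiplicative $1.75$, and the $O(1/K)$ Stirling relative error must all compound cleanly into $\tfrac{18}{\sqrt{m}}+O(\tfrac{1}{m})$. Should the $(1-\delta)^{-1}$ rescaling fail to fit inside $O(1/m)$ (since $\delta=1/\log m$), I would instead invoke the second statement of \Cref{th.reduc} and apply the same Stirling estimate to $\LL{L_{1/4}}$, where the rescaling disappears and the $17.535/\sqrt{m}+O(m^{-3/2})$ bound carries over directly.
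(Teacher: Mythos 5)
Your main argument tracks the paper exactly: the paper's own proof is literally one sentence, "this is a consequence of Theorem~\ref{th.reduc} and the prior bound on $\mathrm{NMD}[L_{1/4}]$," which is precisely the chain you describe (simple vs.\ proper split, Theorem~\ref{th.reduc}, Stirling estimate, numerical check). Where you go wrong is in the two hedges you add around that chain. The claim that $\mathrm{NMD}(\hat L_\pre)$ equals $\mathrm{NMD}(L_{1/4})$ ``up to the factor $(1-\delta)^{-1}$'' is false. Under the paper's own definition $\mathrm{NMD}[X]=\bE[|X-\bE X|]/\bE X$, the quantity is scale-invariant: with $\hat L_\pre=L_{1/4}/(1-\delta)$ one has $\bE[\hat L_\pre]=L/(1-\delta)$, so both numerator and denominator pick up the same $(1-\delta)^{-1}$ and cancel, giving $\mathrm{NMD}(\hat L_\pre)=\mathrm{NMD}(L_{1/4})$ exactly. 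There is no factor to absorb, and your worry about $1/\log m$ failing to fit in $O(1/m)$ does not arise for the quantity $\mathrm{NMD}(\hat L_\pre)$. (If instead you try to run the computation on $\bE[|\hat L_\pre-L|]/L$ --- which is what ``taking expected values'' of Theorem~\ref{th.fulllightloadcapM} literally produces --- then the discrepancy is an \emph{additive} $\Omega(\delta)=\Omega(1/\log m)$ term, not a multiplicative $(1-\delta)^{-1}$, because $\bE[\hat L_\pre]=L/(1-\delta)\ne L$. That gap lives in the paper's proof of Theorem~\ref{th.reduc} itself and is not something your hedge addresses or your fallback repairs.)

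The fallback is also not sound. $\mathrm{Light Load ROM}$ is defined as $\LL{\hat L_\pre}$, and $\LL{L_{1/4}}$ is a different algorithm; proving a $1.75+O(1/\sqrt m)$ bound for $\LL{L_{1/4}}$ via the second clause of Theorem~\ref{th.reduc} does not yield Theorem~\ref{th.fulllightloadcap2}, which is a statement about $\mathrm{Light Load ROM}$. In fact the detour is unnecessary: Corollary~\ref{co.ll2} gives the $(1+2R(\CJ))\le 1.75$ bound for $\LL{L_\guess}$ with \emph{any} guess $L_\guess$, so the ``simple'' case is already covered for $\mathrm{Light Load ROM}$ without switching algorithms. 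Drop both hedges and your proposal reduces to the paper's argument; keep them and you have introduced one false claim and one non sequitur.
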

\begin{proof}
This is a consequence of \Cref{th.reduc} and the prior bound on $\mathrm{NMD}[L_{1/4}]$.
\end{proof}

\begin{remark}
The constant in the previous theorem is far from optimal. As mentioned before a first improvement can be derived by estimating the absolute mean average deviation of the hypergeoemtric distribution directly using the techniques from~\cite{diaconis1991closed}. A much stronger improvement results from filtering out huge jobs reducing essentially to the case that $R(\CJ)=1$. Given any guess $L_\guess$ let $\delta(L_\guess)$ be $L-L_\guess$ if $L_\guess \le L$; $L_\guess-\OPT$ if $L_\guess>\OPT$; and $0$ else. Combining \Cref{co.ll}  and \Cref{le.underest} yields that $\LL{L_\guess}$ is $1.75(1+\delta(L_\guess))$-competitive. So far, we picked for $L_\guess$ an estimator for $L$. But besides $L$ we could also try to estimate the lower bound for $\OPT$, that is we could consider $B_\mathrm{pre}[\CJ^\sigma]=\max\left(L_{1/4}[\CJ^\sigma],p^{n/4}_\jmax[\CJ^\sigma]\right)$, which is similar to how we estimate $\OPT$ in our main algorithm. The nature of the guess $B_\mathrm{pre}[\CJ^\sigma]$ ensures that only few job have size exceeding $B_\mathrm{pre}$; only $4$ in expectation. A more careful analysis reveals that the competitive ratio of $\LL{B_\mathrm{pre}}$ is in fact $1.75+\frac{4.4}{\sqrt{m}}+\frac{7}{m}+O\left(m^{-3/2}\right)$ in the random-order model.
\end{remark}

\section{The new, nearly 1.535-competitive algorithm}\label{sec.algorithm}

\begin{figure}[t]
\resizebox{\textwidth}{!}{
\begin{tikzpicture}
\draw[draw=blue!80, very thick] (2.8,1) rectangle node[above= -1cm] { \resizebox{6.3cm}{!}{\begin{tikzpicture}[
smallstyle/.style={fill=black!80},
mediumstyle/.style={fill=gray!70},
]

\draw[draw=none] (0.0,0) -- (-0.4,0);

\draw[smallstyle] (0.0,0) rectangle (0.3,0.498221250635);


\draw[smallstyle] (0.3,0) rectangle (0.6,0.328120246241);


\draw[smallstyle] (0.6,0) rectangle (0.9,0.29360077356);


\draw[smallstyle] (0.9,0) rectangle (1.2,0.24537432934);


\draw[smallstyle] (1.2,0) rectangle (1.5,0.122291033642);


\draw[smallstyle] (1.5,0) rectangle (1.8,0.0995934755325);


\draw[smallstyle] (1.8,0) rectangle (2.1,0.0992786523462);


\draw[smallstyle] (2.1,0) rectangle (2.4,0.0972837876892);


\draw[smallstyle] (2.4,0) rectangle (2.7,0.0894246039287);


\draw[smallstyle] (2.7,0) rectangle (3.0,0.089080993847);


\draw[smallstyle] (3.0,0) rectangle (3.3,0.0887643258615);


\draw[smallstyle] (3.3,0) rectangle (3.6,0.0865657914244);


\draw[smallstyle] (3.6,0) rectangle (3.9,0.0833225793393);


\draw[smallstyle] (3.9,0) rectangle (4.2,0.0793138962095);


\draw[smallstyle] (4.2,0) rectangle (4.5,0.0724484688372);


\draw[smallstyle] (4.5,0) rectangle (4.8,0.07029300532);


\draw[smallstyle] (4.8,0) rectangle (5.1,0.0678367006483);


\draw[smallstyle] (5.1,0) rectangle (5.4,0.066689467333);


\draw[smallstyle] (5.4,0) rectangle (5.7,0.0654076306032);


\draw[smallstyle] (5.7,0) rectangle (6.0,0.0580182887605);


\draw[smallstyle] (6.0,0) rectangle (6.3,0.0540405846805);


\draw[smallstyle] (6.3,0) rectangle (6.6,0.0504310521831);


\draw[smallstyle] (6.6,0) rectangle (6.9,0.0502308191398);


\draw[smallstyle] (6.9,0) rectangle (7.2,0.0400406246461);


\draw[smallstyle] (7.2,0) rectangle (7.5,0.0356719776493);


\draw[smallstyle] (7.5,0) rectangle (7.8,0.0337437367067);


\draw[smallstyle] (7.8,0) rectangle (8.1,0.031119714696);


\draw[smallstyle] (8.1,0) rectangle (8.4,0.0300256923413);


\draw[smallstyle] (8.4,0) rectangle (8.7,0.0294381205328);


\draw[smallstyle] (8.7,0) rectangle (9.0,0.0288727636286);


\draw[smallstyle] (9.0,0) rectangle (9.3,0.0285877811619);


\draw[smallstyle] (9.3,0) rectangle (9.6,0.0264442926506);


\draw[smallstyle] (9.6,0) rectangle (9.9,0.0224346768974);


\draw[smallstyle] (9.9,0) rectangle (10.2,0.0203136575482);


\draw[smallstyle] (10.2,0) rectangle (10.5,0.0153039440602);


\draw[smallstyle] (10.5,0) rectangle (10.8,0.0120382897926);


\draw[smallstyle] (10.8,0) rectangle (11.1,0.00864442199583);


\draw[smallstyle] (11.1,0) rectangle (11.4,0.00524726489768);
\draw[mediumstyle] (11.1,0.5) rectangle (11.4,0.828120246241);


\draw[smallstyle] (11.4,0) rectangle (12.6,0);
\draw [decorate,decoration={brace,amplitude=5pt},xshift=0pt,yshift=0pt](11.4,0) -- (0,0) node [anchor=north, midway, yshift=-2.5pt] {\scriptsize principal machines};
\draw [decorate,decoration={brace,amplitude=5pt},xshift=0pt,yshift=0pt](12.6,0) -- (11.4,0) node [anchor=north, midway, yshift=-2.5pt] {\scriptsize reserve machines};

\end{tikzpicture} }} node[above=3mm] {\bf Sampling Phase} (9.2,2.7);
\draw[draw=blue!80, very thick] (-1.2,-1.5) rectangle node[above= -1cm] { \resizebox{6.3cm}{!}{\input{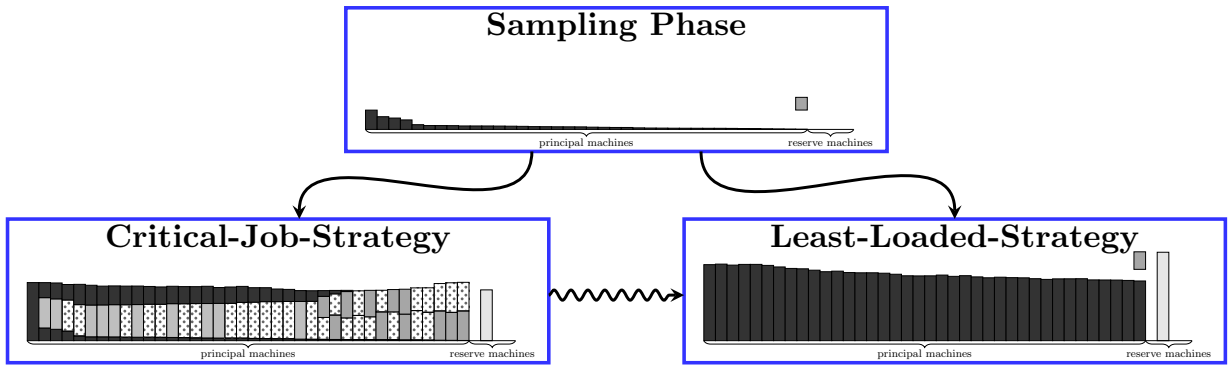}}}  node[above=3mm] {\bf Critical-Job-Strategy}(5.2,0.2);
\draw[draw=blue!80, very thick] (6.8,-1.5) rectangle node[above= -1cm] { \resizebox{6.3cm}{!}{\begin{tikzpicture}[
smallstyle/.style={fill=black!80},
mediumstyle/.style={fill=gray!70},
hugestyle/.style={fill=lightgray!40}
]

\draw[draw=none] (0.0,-0.5) -- (-0.4,-0.5);

\draw[smallstyle] (0.0,-0.5) rectangle (0.3,1.463319039);
\draw[smallstyle] (0.3,-0.5) rectangle (0.6,1.47725494546);
\draw[smallstyle] (0.6,-0.5) rectangle (0.9,1.45166686574);
\draw[smallstyle] (0.9,-0.5) rectangle (1.2,1.46984184715);
\draw[smallstyle] (1.2,-0.5) rectangle (1.5,1.47042872348);
\draw[smallstyle] (1.5,-0.5) rectangle (1.8,1.44396477212);
\draw[smallstyle] (1.8,-0.5) rectangle (2.1,1.40405393062);
\draw[smallstyle] (2.1,-0.5) rectangle (2.4,1.36580565292);
\draw[smallstyle] (2.4,-0.5) rectangle (2.7,1.35624317583);
\draw[smallstyle] (2.7,-0.5) rectangle (3.0,1.32042181436);
\draw[smallstyle] (3.0,-0.5) rectangle (3.3,1.28122451371);
\draw[smallstyle] (3.3,-0.5) rectangle (3.6,1.29551170244);
\draw[smallstyle] (3.6,-0.5) rectangle (3.9,1.25930490857);
\draw[smallstyle] (3.9,-0.5) rectangle (4.2,1.25270698828);
\draw[smallstyle] (4.2,-0.5) rectangle (4.5,1.25331105937);
\draw[smallstyle] (4.5,-0.5) rectangle (4.8,1.24240239286);
\draw[smallstyle] (4.8,-0.5) rectangle (5.1,1.21557464257);
\draw[smallstyle] (5.1,-0.5) rectangle (5.4,1.17592564058);
\draw[smallstyle] (5.4,-0.5) rectangle (5.7,1.17245499017);
\draw[smallstyle] (5.7,-0.5) rectangle (6.0,1.1745476073);
\draw[smallstyle] (6.0,-0.5) rectangle (6.3,1.19416562656);
\draw[smallstyle] (6.3,-0.5) rectangle (6.6,1.1612802245);
\draw[smallstyle] (6.6,-0.5) rectangle (6.9,1.1785354634);
\draw[smallstyle] (6.9,-0.5) rectangle (7.2,1.14732359072);
\draw[smallstyle] (7.2,-0.5) rectangle (7.5,1.12430604002);
\draw[smallstyle] (7.5,-0.5) rectangle (7.8,1.13919274206);
\draw[smallstyle] (7.8,-0.5) rectangle (8.1,1.12784085446);
\draw[smallstyle] (8.1,-0.5) rectangle (8.4,1.12303994167);
\draw[smallstyle] (8.4,-0.5) rectangle (8.7,1.1010920727);
\draw[smallstyle] (8.7,-0.5) rectangle (9.0,1.08263035799);
\draw[smallstyle] (9.0,-0.5) rectangle (9.3,1.09628647979);
\draw[smallstyle] (9.3,-0.5) rectangle (9.6,1.09769634598);
\draw[smallstyle] (9.6,-0.5) rectangle (9.9,1.10021288007);
\draw[smallstyle] (9.9,-0.5) rectangle (10.2,1.06962737214);
\draw[smallstyle] (10.2,-0.5) rectangle (10.5,1.06754131669);
\draw[smallstyle] (10.5,-0.5) rectangle (10.8,1.0652027866);
\draw[smallstyle] (10.8,-0.5) rectangle (11.1,1.05612279349);
\draw[smallstyle] (11.1,-0.5) rectangle (11.4,1.04201033623);
\draw[mediumstyle] (11.1,1.34201033623) rectangle (11.4,1.79509760776);
\draw[hugestyle] (11.7,-0.5) rectangle (12.0,1.78372408847);
\draw[smallstyle] (11.4,-0.5) rectangle (12.6,-0.5);
\draw [decorate,decoration={brace,amplitude=5pt},xshift=0pt,yshift=0pt](11.4,-0.5) -- (0,-0.5) node [anchor=north, midway, yshift=-2.5pt] {\scriptsize principal machines};
\draw [decorate,decoration={brace,amplitude=5pt},xshift=0pt,yshift=0pt](12.6,-0.5) -- (11.4,-0.5) node [anchor=north, midway, yshift=-2.5pt] {\scriptsize reserve machines};
\end{tikzpicture} }} node[above=3mm] {\bf Least-Loaded-Strategy}(13.2,0.2);
\draw[thick,->,>=stealth, line width=1pt] (5,1) to [ out=-90, in=90] (2.25,0.2);
\draw[thick,->,>=stealth, line width=1pt] (7,1) to [ out=-90, in=90] (10,0.2);
\draw[thick,->,>=stealth, line width=1pt, decorate, decoration={snake, amplitude=0.6mm, segment length=2mm, post length=1mm}] (5.2,-0.7) to  (6.8,-0.7);

\end{tikzpicture}}
\caption{The 1.535-competitive algorithm.  First, few jobs are sampled. Then, the algorithm decides between two strategies. The Critical-Job-Strategy tries to schedule critical jobs ahead of time. The Least-Loaded-Strategy follows a greedy approach, which reserves some machines for large jobs. Sometimes, we realize very late that the Critical-Job-Strategy does not work and have to switch to the Least-Loaded-Strategy 'on the fly'. We never switch in the other direction.
}
    \label{fig:inputsequence}
\end{figure}

Our new algorithm achieves a competitive ratio of $c=\frac{1+\sqrt{13}}{3}\approx 1.535$. Let $\delta=\delta(m)=\frac{1}{\log(m)}$ be the \emph{margin of error our algorithm allows}. Throughout the analysis it is mostly sensible to treat $\delta$ as a constant and forget about its dependency on $m$.
Our algorithm maintains a certain set $\CM_\res$ of $\lceil\delta m\rceil$ \emph{reserve machines}. Their complement, the \emph{principal machines}, are denoted by~$\CM$. Let us fix an input sequence $\CJ^\sigma$. Let $\hat L=\hat L[\CJ^\sigma]=L_{\delta^2}[\CJ^\sigma]$. For simplicity, we hide the dependency on $\CJ^\sigma$ whenever possible. Our online algorithm uses $ B=\max\left(p^{\delta^2 n}_\jmax,\hat L\right)$ as an \emph{estimated lower bound for $\OPT$}, which is known after the first $\lfloor\delta^2 n\rfloor$ jobs are treated. Our algorithm uses geometric rounding implicitly. Given a job $J_t$ of size $p_t$ let $f(p_t)=(1+\delta)^{\left\lfloor \log_{1+\delta}p_t\right\rfloor}$ be its \emph{rounded size}. We also call $J_t$ an $f(p_t)$-job. Using rounded sizes, we introduce job classes. Let $p_\jsmall =c-1=\frac{\sqrt{13}-2}{3}\approx 0.535$ and $p_\jbig=\frac{c}{2}=\frac{1+\sqrt{13}}{6}\approx 0.768$. Then we call job $J_t$
\begin{itemize}
\setlength{\parskip}{0pt} \setlength{\itemsep}{0pt plus 1pt}
\item \emph{small} if $f(p_t)\le p_\jsmall B$ and \emph{critical} else,
\item \emph{big} if $f(p_t)>p_\jbig B$,
\item \emph{medium} if $J$ is neither small nor big, i.e.\ $p_\jsmall B\le f(p_t)\le p_\jbig B$,
\item \emph{huge} if its (not-rounded) size exceeds $B$, i.e.\ $B<p_t$, and \emph{normal} else.
\end{itemize}

Consider the sets $\CP_\jmed=\{(1+\delta)^i \mid (1+\delta)^{-1}p_\jsmall B < (1+\delta)^i \le p_\jbig B\}$ and $\CP_\jbig=\{(1+\delta)^i \mid  p_\jbig B < (1+\delta)^i \le  B\}$ corresponding to all possible rounded sizes of medium respectively big jobs, excluding huge jobs. Let $\CP=\CP_\jmed\cup\CP_\jbig$. This subdivision gives rise to a \emph{weight function}, which will be important later. Let $w(p)=1/2$ for $p\in \CP_\jmed$ and $w(p)=1$ for $p\in \CP_\jbig$. The elements $p\in\CP$ define job classes  $\CC_p\subseteq\CJ$ consisting of all $p$-jobs, i.e.\ jobs  of rounded size $p$. By some abuse of notation, we call the elements in $\CP$ 'job classes', too.
Using the notation from \Cref{sec.sampling} we set $n_p=n_{\CC_p}=|\CC_p|$ and $\hat n_{p}=n_{\CC_p,\delta^2}=|\{J_{\sigma(j)}\mid \sigma(j)\le \delta^2 n \land J_{\sigma(j)} \textrm{ is a $p$-job}\}|$. We want to use the values $\hat n_{p}$, which are available to an online algorithm quite early, to estimate the values $n_{p}$, which accurately describe the set of critical jobs. First, $\delta^{-2}\hat n_{p}$ comes to mind as an estimate for $n_p$. Yet, we need a more complicated guess: $c_p=\max\left(\left\lfloor\left(\delta^{-2}\hat n_{p}-m^{3/4}\right)w(p)\right\rfloor,\hat n_{p}\right) w(p)^{-1}$. It has three desirable advantages. First, for every $p\in\CP$ the value $c_p$ is close to $n_p$ with high probability, but, opposed to $\delta^{-2}\hat n_{p}$, unlikely to exceed it. Overestimating $n_p$ turns out to be far worse than underestimating it. Second, $w(p)c_p$ is an integer and, third, we have  $c_p\ge \hat n_{p}w(p)^{-1}$. A fundamental fact regarding the values $(c_p)_{p\in\CP}$ and $B$ is, of course,  that they are known to the online algorithm once $\lfloor\delta^2 n\rfloor$ jobs are scheduled.

\subparagraph*{Statement of the algorithm:}
If there are less jobs than machines, i.e.\ $n\le m$, it is optimal to put each job onto a separate machine. Else, a short sampling phase greedily schedules each of the first $\lfloor \delta^2 n\rfloor$ jobs to the least loaded principal machine $M\in\CM$. Now, the values $B$ and $(c_p)_{p\in\CP}$ are known. Our algorithm has to choose between two strategies, the Least-Loaded-Strategy and the Critical-Job-Strategy, which we will both introduce subsequently. It maintains a variable $\textsc{strat}$, initialized to $\textsc{Critical}$, to remember its choice. If it chooses the Critical-Job-Strategy, some additional preparation is required. It may at any time discover that the Critical-Job-Strategy is not feasible and switch to the Least-Loaded-Strategy but it never switches the other way around.

\begin{algorithm}[H]
\caption{The complete algorithm: How to schedule job $J_t$.}\label{alg.complete}
\begin{algorithmic}[1]
\State \textit{\textsc{strat} is initialized to \textsc{Critical}, $J_t$ is the job to be scheduled.}
\If{$n\le m$}{ }Schedule $J_t$ on any empty machine;
\ElsIf{$t \le \varphi n$}{ }schedule $J_t$ on a least loaded machine in $\CM$;\Comment{\textit{Sampling phase}}
\Else 
\If{we have $t=\lfloor \varphi n\rfloor+1$}
    \If{$\sum_{p\in\CP} w(p)c_p>m$}
    	{ }$\textsc{strat}\gets \textsc{Least-Loaded}$
    \Else
		{ }proceed with the Preparation for the Critical-Job-Strategy (\Cref{alg.preparation});
	\EndIf
\EndIf
\If{$\textsc{strat}=\textsc{Critical}$}
proceed with the Critical-Job-Strategy (\Cref{alg.main});
\Else 
{ }proceed with the Least-Loaded-Strategy (\Cref{alg.exceptional});
\EndIf
\EndIf
\end{algorithmic}
\end{algorithm}

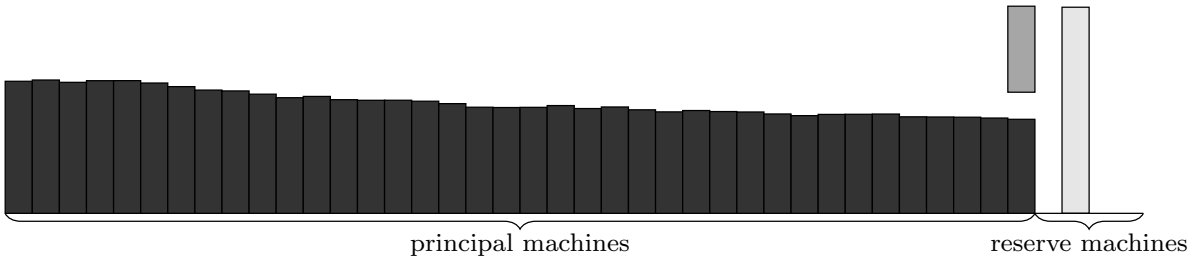
\begin{figure}[b]
    \centering
    \resizebox{\textwidth}{!}{\begin{tikzpicture}[
smallstyle/.style={fill=black!80},
mediumstyle/.style={fill=gray!70},
hugestyle/.style={fill=lightgray!40}
]

\draw[draw=none] (0.0,0) -- (-0.4,0);
\draw[smallstyle]  (0.0,0) rectangle (0.3,1.463319039);
\draw[smallstyle](0.3,0) rectangle (0.6,1.47725494546);
\draw[smallstyle](0.6,0) rectangle (0.9,1.45166686574);
\draw[smallstyle](0.9,0) rectangle (1.2,1.46984184715);
\draw[smallstyle](1.2,0) rectangle (1.5,1.47042872348);
\draw[smallstyle](1.5,0) rectangle (1.8,1.44396477212);
\draw[smallstyle](1.8,0) rectangle (2.1,1.40405393062);
\draw[smallstyle](2.1,0) rectangle (2.4,1.36580565292);
\draw[smallstyle](2.4,0) rectangle (2.7,1.35624317583);
\draw[smallstyle](2.7,0) rectangle (3.0,1.32042181436);
\draw[smallstyle](3.0,0) rectangle (3.3,1.28122451371);
\draw[smallstyle](3.3,0) rectangle (3.6,1.29551170244);
\draw[smallstyle](3.6,0) rectangle (3.9,1.25930490857);
\draw[smallstyle](3.9,0) rectangle (4.2,1.25270698828);
\draw[smallstyle](4.2,0) rectangle (4.5,1.25331105937);
\draw[smallstyle](4.5,0) rectangle (4.8,1.24240239286);
\draw[smallstyle](4.8,0) rectangle (5.1,1.21557464257);
\draw[smallstyle](5.1,0) rectangle (5.4,1.17592564058);
\draw[smallstyle](5.4,0) rectangle (5.7,1.17245499017);
\draw[smallstyle](5.7,0) rectangle (6.0,1.1745476073);
\draw[smallstyle](6.0,0) rectangle (6.3,1.19416562656);
\draw[smallstyle](6.3,0) rectangle (6.6,1.1612802245);
\draw[smallstyle](6.6,0) rectangle (6.9,1.1785354634);
\draw[smallstyle](6.9,0) rectangle (7.2,1.14732359072);
\draw[smallstyle](7.2,0) rectangle (7.5,1.12430604002);
\draw[smallstyle](7.5,0) rectangle (7.8,1.13919274206);
\draw[smallstyle](7.8,0) rectangle (8.1,1.12784085446);
\draw[smallstyle](8.1,0) rectangle (8.4,1.12303994167);
\draw[smallstyle](8.4,0) rectangle (8.7,1.1010920727);
\draw[smallstyle](8.7,0) rectangle (9.0,1.08263035799);
\draw[smallstyle](9.0,0) rectangle (9.3,1.09628647979);
\draw[smallstyle](9.3,0) rectangle (9.6,1.09769634598);
\draw[smallstyle](9.6,0) rectangle (9.9,1.10021288007);
\draw[smallstyle](9.9,0) rectangle (10.2,1.06962737214);
\draw[smallstyle](10.2,0) rectangle (10.5,1.06754131669);
\draw[smallstyle](10.5,0) rectangle (10.8,1.0652027866);
\draw[smallstyle](10.8,0) rectangle (11.1,1.05612279349);
\draw[smallstyle](11.1,0) rectangle (11.4,1.04201033623);
\draw[mediumstyle] (11.1,1.34201033623) rectangle (11.4,2.29509760776);
\draw[hugestyle] (11.7,0) rectangle (12.0,2.28372408847);
\draw[smallstyle](11.4,0) rectangle (12.6,0);
\draw [decorate,decoration={brace,amplitude=5pt},xshift=0pt,yshift=0pt](11.4,0) -- (0,0) node [anchor=north, midway, yshift=-2.5pt] {\scriptsize principal machines};
\draw [decorate,decoration={brace,amplitude=5pt},xshift=0pt,yshift=0pt](12.6,0) -- (11.4,0) node [anchor=north, midway, yshift=-2.5pt] {\scriptsize reserve machines};
\end{tikzpicture} }
    \caption{The Least-Loaded-Strategy schedules jobs greedily. A few machines are reserved for unexpected huge jobs, such as the largest job, which is unlikely to arrive in the sampling phase. 
}
    \label{fig:inputsequence}
\end{figure}

The \textbf{Least-Loaded-Strategy} places any normal job on a least loaded principal machine. Huge jobs are scheduled on any least loaded reserve machine. This machine will be empty, unless we consider rare worst-case orders.

\begin{algorithm}[H]
\caption{The Least-Loaded-Strategy: How to schedule job $J_t$.}\label{alg.exceptional}
\begin{algorithmic}[1]
\If{$J_t$ is huge}{ }schedule $J_t$ on any least loaded reserve machine;
\Else{ }schedule $J_t$ on any least loaded principal machine;
\EndIf
\end{algorithmic}
\end{algorithm}

For the Critical-Job-Strategy we introduce \emph{$p$-placeholder-jobs} for every size $p\in\CP$. Sensibly, the size of a $p$-placeholder-job is $p$.
During the Critical-Job-Strategy we treat placeholder-jobs similar to \emph{real} jobs.
The \emph{anticipated load} $\tilde l_M^t$ of a machine $M$ at time $t$ is the sum of all jobs on it, including placeholder-job, opposed to the common load $l_M^t$, which does not take the latter into account. Note that $\tilde l_M^t$ defines a pseudo-load as introduced  in \Cref{sec.basic}.

\begin{figure}[H]
    \centering
    \resizebox{\textwidth}{!}{\input{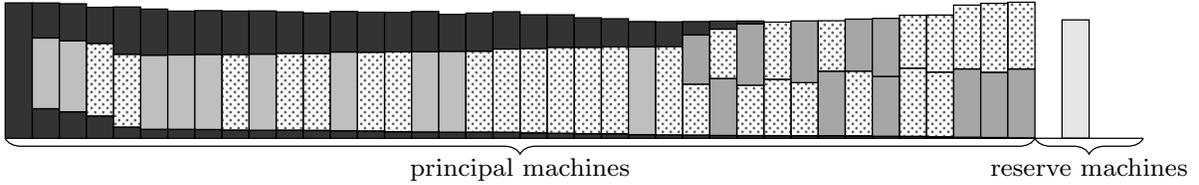}}
    \caption{The Critical-Job-Strategy. Each machine gets either two medium, one large or no critical job. Placeholder jobs (dotted) are assigned during the Preparation and reserve space for critical jobs yet to come. Processing volume of small jobs (dark) 'on the bottom' arrived during the sampling phase. Reserve machines accommodate huge jobs or, possibly, jobs without matching placeholders.
    }
    \label{fig:CriticalJobStrategy}
\end{figure}

During the \textbf{Preparation for the Critical-Job-Strategy} the algorithm maintains a counter $c_p'$ of all $p$-jobs scheduled so far (including placeholders). A job class $p\in\CP$ is called \emph{unsaturated} if $c_p'\le c_p$. First, we add unsaturated medium placeholder-jobs to any principal machine that already contains a medium real job from the sampling phase. We will see in \Cref{le.prep.welldef} that such an unsaturated medium job class always exists. Now, let $m_\mempty$ be the number of principal machines which do not contain critical jobs. We prepare a set $\CJ_\rep$ of cardinality at most $m_\mempty$, which we will then schedule onto these machines. The set $\CJ_\rep$ may contain single big placeholder-jobs or pairs of medium placeholder-jobs. We greedily pick any unsaturated job class $p\in\CP$ and add a $p$-placeholder-job to $\CJ_\rep$. If $p$ is medium, we pair it with a job belonging to any other, not necessarily different, unsaturated medium job class. Such a job class always exists by \Cref{le.prep.welldef}. We stop once all job classes are saturated or if $|\CJ_\rep|=m_\mempty$. We then assign the elements in $\CJ_\rep$ to machines. We iteratively pick the element $e\in\CJ_\rep$ of maximum size and assign the corresponding jobs to the least loaded principal machine, which does not contain critical jobs yet. Sensibly, the size of a pair of jobs in $\CJ_\rep$ is the sum of their individual sizes. We repeat this until all jobs and job pairs in $\CJ_\rep$ are assigned to some principal machine.

\begin{algorithm}[H]
\caption{Preparation for the Critical-Job-Strategy.}\label{alg.preparation}
\begin{algorithmic}[1]
      \While{there is a machine $M$ containing a single medium job}
      \State Add a placeholder $p$-job for an unsaturated size class $p\in\CP_\jmed$ to $M$; $c_p'\gets c_p'+1$;
      \EndWhile
      \While{there is an unsaturated size class $p\in\CP$ and $|\CJ_\rep|< m_\mempty$}
      \State Pick an unsaturated size class $e=p\in \CP$ with $c_p'$ minimal; $w(e)\gets p$;
      $c_p'\gets c_p'+1$;
      \If{$p$ is medium}{ }pick $q\in \CP_\jmed$ unsaturated. $e\gets(p,q)$; $w(e)\gets p+q$; $c_q'\gets c_q'+1$;
 		\EndIf
      \State Add $e$ to $\CJ_\rep$;
      \EndWhile
      
      \While{$\CJ_\rep\neq \emptyset$}
      \State Pick a least loaded machine $M\in\CM$, which does not contain a critical job yet;
      \State Pick $e\in \CJ_\rep$ of maximum size $w(e)$ and add the jobs in $e$ to $M$; \State $\CJ_\rep\gets\CJ_\rep\setminus\{e\}$;
      \EndWhile
\end{algorithmic}
\end{algorithm}
\vspace{-5pt}

\begin{lemma}\label{le.prep.welldef}
In line 2 and 5 of \Cref{alg.preparation} there is always an unsaturated medium size class available. Thus, \Cref{alg.preparation}, the Preparation for the Critical-Job-Strategy, is well defined.
\end{lemma}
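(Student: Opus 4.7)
The statement splits into two independent well-definedness checks, one for each while-loop in Algorithm 4. My strategy in both cases is a capacity-counting argument anchored on the structural inequality $c_p\ge\hat n_p/w(p)$ wired into the definition of $c_p$. Specialized to medium classes (where $w(p)=1/2$) this yields $c_p\ge 2\hat n_p$, so the residual capacity of each medium class above the sampled jobs is at least $\hat n_p$.

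For Line 2 I would bound the number of iterations by $K$, the number of principal machines holding a single medium real job after the sampling phase. Each such machine contributes at least one sampled medium job, so $K\le \sum_{p\in\CP_\jmed}\hat n_p$. Every Line 2 iteration increments exactly one medium $c_p'$, so the total residual $R=\sum_{p\in\CP_\jmed}(c_p - c_p')$ drops by one per iteration. Starting from $R\ge\sum_{p\in\CP_\jmed}\hat n_p\ge K$, an easy induction on the iteration index shows that before the $i$th iteration $R\ge K-(i-1)\ge 1$. Since each term $c_p-c_p'$ is an integer bounded below by $-1$ (once saturated the class is never touched again), a strictly positive total forces at least one summand to be $\ge 1$, exhibiting an unsaturated medium class.

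For Line 5 I pick $p$ of minimal $c_p'$ among unsaturated classes (available by the loop guard). If $p$ is big, no medium partner is needed. If $p$ is medium and some other medium $q'\ne p$ is still unsaturated, minimality gives $c_{q'}'\ge c_p'$ and $q'$ serves as the partner. The delicate subcase is when $p$ is the unique medium unsaturated class; here I would set $q=p$ and argue that $c_p' < c_p$ strictly at the moment of picking, so the class remains unsaturated after the first increment. I would establish this strictness by tracking $\sum_{p\in\CP_\jmed}c_p' - \sum_{p\in\CP_\jmed}\hat n_p$, noting that Line 2 adds singletons while Line 5 with $p$ medium adds pairs, combined with the evenness of each medium $c_p$ (since $w(p)c_p$ is constructed to be integer) and the entry-gate bound $\sum_p w(p)c_p\le m$.

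\textbf{Main obstacle.} The capacity bookkeeping for Line 2 is routine, but the uniqueness subcase of Line 5 is the genuine challenge: ruling out the boundary configuration $c_p'=c_p$ at the sole unsaturated medium class requires a careful parity/invariant argument coordinating the singleton increments of Line 2 with the paired medium increments of Line 5 and using that each medium $c_p$ is even. Once this bookkeeping is in place, both bullets fuse to give well-definedness of the entire Preparation.
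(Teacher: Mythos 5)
Your decomposition and core ideas match the paper's: capacity counting for line 2, and a parity argument based on evenness of the medium $c_p$'s for line 5. For line 2 your induction on the residual $\sum_{p\in\CP_\jmed}(c_p-c_p')$ is a drawn-out version of the paper's one-line count: the loop runs $\sum_{p\in\CP_\jmed}\hat n_p$ times (one iteration per machine that received a single medium job in sampling), while the available placeholders number $\sum_{p\in\CP_\jmed}(c_p-\hat n_p)\ge\sum_{p\in\CP_\jmed}\hat n_p$, so it cannot run dry. For line 5 the paper sidesteps your big/medium/unique case split with one uniform contradiction: before every line-4 pick, every machine and every element of $\CJ_\rep$ carries an even number of medium jobs, so $\sum_{p\in\CP_\jmed}c_p'$ is even; after picking a medium $p$ that sum is odd; and if no unsaturated medium class remained, the running total would equal $\sum_{p\in\CP_\jmed}c_p$, which is even since each medium $c_p$ is even by construction. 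Your delicate subcase can in fact be closed by exactly this parity observation (it gives $c_p'$ even, $c_p$ even, $c_p'<c_p$, hence $c_p-c_p'\ge 2$), but as written it is only a keyword list. Two smaller issues: the entry-gate bound $\sum_p w(p)c_p\le m$ you invoke plays no role in the paper's proof and is a red herring here, and the quantity $\sum_{p\in\CP_\jmed}c_p'-\sum_{p\in\CP_\jmed}\hat n_p$ you propose to track only has the parity of $\sum_{p\in\CP_\jmed}\hat n_p$, so you would still need the exact line-2 iteration count to recover evenness of $\sum_{p\in\CP_\jmed}c_p'$ — at which point you are effectively tracking $\sum_{p\in\CP_\jmed}c_p'$, which the paper reads off directly from the everything-in-pairs structure.
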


\begin{proof}
Concerning line 2, there are precisely $\sum_{p\in\CP_\mathrm{med}} \hat n_p$ machines with critical jobs while there are at least $\sum_{p\in\CP_\mathrm{med}} (c_p-\hat n_p) \ge \sum_{p\in\CP_\mathrm{med}} \hat n_p$ placeholder-jobs available to fill them. Here we make use of the fact that for medium jobs $p\in\CP_\mathrm{med}$ we have $c_p\ge \hat n_p  w(p)^{-1}=2\hat n_p$.

Concerning line 5, observe that so far every machine and every element in $\CJ_\rep$ contains an even number of medium jobs. If the placeholder picked in line 4 was the last medium job remaining, $\sum_{p\in\CP_\mathrm{med}}c_p$ would be odd. But this is not the case since every $c_p$ for $p\in\CP_\mathrm{med}$ is even.
\end{proof}

After the Preparation  is done, the \textbf{Critical-Job-Strategy} becomes straightforward. Each small job is scheduled on a principal machines with least anticipated load, i.e.\ taking placeholders into account. Critical jobs of rounded size $p\in\CP$ replace $p$-placeholder-jobs whenever possible. If no such placeholder exists anymore, critical jobs are placed onto the reserve machines. Again, we try pair up medium jobs whenever possible. If no suitable machine can be found among the reserve machines, we have to switch to the Least-Loaded-Strategy.
We say that the algorithm \emph{fails} if it ever reaches this point. In this case, it should rather have chosen the Least-Loaded-Strategy to begin with. Since all reserve machines are filled at this point, the Least-Loaded-Strategy is impeded, too. The most difficult part of our analysis shows that, excluding worst-case orders, this is not a problem on job sets that are prone to cause failing.
 
\vspace{-5pt}
\begin{algorithm}[H]
\caption{The Critical-Job-Strategy.}\label{alg.main}
\begin{algorithmic}[1]
    \If{$J_t$ is medium or big}
    \textit{let $p$ denote its rounded size;}
    	\If{there is a machine $M$ containing a $p$-placeholder-job $J$}
    	\State Delete the $p$-placeholder-job $J$ and assign $J_t$ to $M$;
    	\ElsIf{$J_t$ is medium and there exists $M\in\CM_\res$ containing a single medium job}
    	\State Schedule $J_t$ on $M$;
    	\ElsIf{there exists an empty machine $M\in\CM_\res$}{ }schedule $J_t$ on $M$;
    	\Else{ }$\textsc{stat}\gets\textsc{Least-Loaded}$; \Comment{We say the algorithm \emph{fails}.}
\State\textbf{use} the Least-Loaded-Strategy (\Cref{alg.exceptional}) from now on; 
    	\EndIf
    \Else
    	 {} assign $J_t$ to the least loaded machine in $\CM$ (take placeholder jobs into account);
    	\EndIf
\end{algorithmic}
\end{algorithm}

\section{Analysis of the algorithm}\label{sec.ana}
\Cref{te.main} is  main result of the paper. \Cref{co.main} follows immediately by \Cref{le.nearly}.

\begin{theorem}\label{te.main}
Our algorithm is nearly $c$-competitive. Recall that $c=\frac{1+\sqrt{13}}{3}\approx 1.535$.
\end{theorem}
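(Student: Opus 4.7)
My plan is to follow the same triad structure that worked for $\mathrm{Light Load ROM}$: (i) an adversarial bound plus a side argument for simple job sets, (ii) a definition of stable orders that holds with probability at least $1-O(\delta)$ on proper inputs, and (iii) a semi-online analysis on stable sequences. In contrast to the $\mathrm{Light Load ROM}$ analysis, step (iii) will split into three subcases according to which branch of \Cref{alg.complete} is used, and the delicate case will be when the algorithm begins with the Critical-Job-Strategy but \emph{fails} and is forced to switch.

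For the adversarial bound first. Since the algorithm always schedules onto a least-loaded machine of some pool (principal for small/normal jobs, reserve for huge jobs, and by \Cref{alg.main} on a least anticipated-load principal machine for small jobs in the Critical-Job branch, falling back to the Least-Loaded-Strategy on failure), \Cref{prop1} (applied to the pool of principal machines, whose complement is only an $O(\delta)$-fraction of all machines) gives a constant adversarial competitive ratio $C$ for every input. In parallel, call $\CJ$ \emph{simple} if $|\CJ|\le m$ or $R(\CJ)$ is below some small constant $R_0$: in the first case the algorithm is optimal by line~2 of \Cref{alg.complete}, and in the second case \Cref{prop1} / \Cref{prop2} already imply competitive ratio $\le 1+O(R(\CJ))\le c$. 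So from now on I restrict to proper $\CJ$.

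For step (ii) I call a sequence $\CJ^\sigma$ \emph{stable} if three things happen simultaneously: (a) $|\hat L / L - 1|\le \delta$, (b) $|B-\max(p_\jmax,L)|$ is small relative to $\OPT$, and (c) for every job class $p\in\CP$ one has $\hat n_{p}$ close enough to $\delta^2 n_p$ that the constructed guess $c_p$ satisfies both $c_p\le n_p$ and $c_p\ge (1-O(\delta))n_p$. By the Load Lemma (\Cref{Loadlemma}) and \Cref{pro.sample} applied once per class $p\in\CP$ (there are only $O(\delta^{-1})$ classes, so a union bound absorbs the error), the sequence is stable with probability at least $1-O(\delta)$ on proper inputs, provided $m$ is large enough as a function of $\delta=1/\log m$. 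Non-stable sequences contribute only to the $\varepsilon$-slack in \Cref{def:comp} via the adversarial bound $C$.

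It remains to show that on stable, proper sequences the makespan is at most $(c+O(\delta))\OPT$. Here the analysis splits into three cases. If the algorithm takes the Least-Loaded branch, then $\sum_{p\in\CP} w(p)c_p > m$; on stable sequences this is a certificate that the volume of critical jobs is large, so $\OPT$ must be substantially bigger than $B$, and a direct greedy argument using \Cref{le.avglb} on the principal pool together with $p_\jmax\le \OPT$ absorbed by the reserve machines gives the desired bound. If the algorithm takes the Critical-Job branch and does not fail, then by construction every principal machine carries at most either one big critical job or two medium critical jobs plus its initial small load from the sampling phase. Using the pseudo-load introduced after \Cref{alg.preparation}, \Cref{prop2} bounds the contribution of small jobs, and the choice of $p_\jsmall=c-1$ and $p_\jbig=c/2$ ensures that big+small and medium+medium+small each stay below $c\cdot B\le c\OPT$ up to $O(\delta)$ terms. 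This is the algebraic motivation for the constant $c=(1+\sqrt{13})/3$.

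The hard part, and what I expect to be the main obstacle, is the failing case: on a stable sequence the algorithm starts in \textsc{Critical} because $\sum_p w(p)c_p\le m$, yet eventually it exhausts both all $p$-placeholder slots (for some size $p$) and all reserve machines. On a stable sequence the counts $c_p$ underestimate the $n_p$, so $\sum_p w(p)n_p$ could still exceed $m$; the hope is that the discrepancy is at most $O(m^{3/4})$ per class by the conservative definition of $c_p$, hence negligible compared to the $\lceil \delta m\rceil$ reserve machines. I plan to show: on a stable input, the number of critical jobs without a placeholder is $o(\delta m)$, so all reserve machines absorb them and failure cannot happen; thus the failing case has probability $o(1)$ and is swept into the $\varepsilon$-slack via the adversarial $C$. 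Putting the three subcases together with the simple-set handling and the stability bound yields $\bP[A(\CJ^\sigma)\ge (c+O(\delta))\OPT]\le O(\delta)$, which is exactly the second condition of \Cref{def:comp}; combined with the constant adversarial ratio from step~(i), this gives nearly $c$-competitiveness.
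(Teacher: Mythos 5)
Your proposal mirrors the paper's architecture closely — adversarial bound plus simple sets, stable orders via the Load Lemma and \Cref{pro.sample}, then a semi-online analysis split over the Least-Loaded branch, the successful Critical-Job branch, and the failing case. The first two steps and the first two subcases of the third step are essentially correct, and the observation that $\sum_{p\in\CP} w(p)c_p > m$ certifies $\OPT \gtrsim \frac{2}{c}B$ is exactly the key lever the paper uses (\Cref{le.OPT.bound1}).

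The gap is in your treatment of the failing case. You claim that on a stable input the placeholder overflow is $o(\delta m)$, so the reserve machines absorb it and \emph{failure cannot happen on stable sequences}, allowing you to sweep the failing case into the $\varepsilon$-slack. This claim is not correct, and it is precisely the point where the paper does something different and more subtle. First, the overflow argument is too optimistic: when the Preparation for the Critical-Job-Strategy terminates because $|\CJ_\rep| = m_\mempty$ rather than because all classes are saturated, the unsaturated classes can have a weighted placeholder deficit of $\sum_p w(p)(c_p - c_p')$ up to $\Theta(\lceil\delta m\rceil)$ (since $\sum_p w(p)c_p \le m$ while $\sum_p w(p)c_p' = |\CM| = m-\lceil\delta m\rceil$ when all principal machines are filled), not $o(\delta m)$. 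Second, even the small $O(|\CP|m^{3/4})$ overflow from saturated classes competes for reserve machines with up to $\lceil\delta m\rceil$ huge jobs, which alone can saturate the reserve pool. Third, failure is a deterministic function of the order: if it can occur on a stable order, it is not a low-probability event that can be charged to the $\varepsilon$-slack via the adversarial constant $C$. The paper instead \emph{allows} failure on stable sequences and proves that it is harmless: \Cref{le.fails.bound} (via \Cref{le.bigjobclass}, \Cref{le.unsatured.cpsize}, \Cref{le.saturatedjobclass}) shows that at the moment of failure every huge job has already been scheduled, and \Cref{le.fails.bound.l} together with \Cref{le.OPT.bound1} shows that failure is a \emph{certificate} that $\sum_p w(p)n_p>m$ and hence $\OPT\ge \frac{2}{c}B$; combined with \Cref{le.ll.bound} and \Cref{le.ll.trivialbound}, the post-failure Least-Loaded makespan of $(2+O(\delta))B$ still yields ratio $c+O(\delta)$. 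Replacing this argument by "failure does not happen" would require a genuinely new proof, and the constructions suggest it is in fact false for borderline job sets with $m < \sum_p w(p) n_p \le m + O(|\CP|m^{3/4})$. You would need to rebuild your step (iii) along the paper's lines: accept failure, show huge jobs are already placed, and extract the $\OPT\ge \frac{2}{c}B$ lower bound from the failure event itself.

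A smaller point: your adversarial-bound sketch applies \Cref{prop1} to "least-loaded machine of some pool," but the Critical-Job-Strategy places critical jobs on machines selected by matching placeholders, not by load. The paper handles this with the pseudo-load machinery (\Cref{prop2}, \Cref{le.tildeR.bound}) and separate lemmas for critical versus general machines (\Cref{le.basic.bound}, \Cref{le.reserve.bound}, \Cref{le.critical.bound}); a bare appeal to \Cref{prop1} would not cover those machines.
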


\begin{corollary}\label{co.main}
Our algorithm is $c$-competitive in the secretary~model as $m\rightarrow\infty$.
\end{corollary}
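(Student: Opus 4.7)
The plan is to derive Corollary~\ref{co.main} as an immediate consequence of Theorem~\ref{te.main} combined with Lemma~\ref{le.nearly}. Since Theorem~\ref{te.main} asserts that our algorithm is nearly $c$-competitive, it meets exactly the hypothesis of Lemma~\ref{le.nearly}, and that lemma already does the work of translating nearly-competitiveness into a competitive guarantee in the secretary model as $m\rightarrow\infty$.

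More concretely, I would first recall from Definition~\ref{def:comp} what nearly $c$-competitiveness guarantees us: a constant adversarial competitive ratio $C$ (serving as a worst-case fallback), together with the tail bound $\bP_{\sigma\sim S_n}[A(\CJ^\sigma)\ge (c+\varepsilon)\OPT(\CJ)]\le\varepsilon$ for every $\varepsilon>0$ once $m\ge m(\varepsilon)$. Our algorithm satisfies both ingredients by Theorem~\ref{te.main}. I would then cite Lemma~\ref{le.nearly} verbatim: the proof given there splits the expectation over $\sigma\in S_n$ into a 'good event' of probability at least $1-\varepsilon$ contributing at most $(c+\varepsilon)\OPT(\CJ)$, and a 'bad event' of probability at most $\varepsilon$ contributing at most $C\cdot \OPT(\CJ)$. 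Choosing $\varepsilon = \delta/(C-c+1)$ for any desired slack $\delta>0$ yields $\bE_\sigma[A(\CJ^\sigma)]\le (c+\delta)\OPT(\CJ)$, which is precisely $c$-competitiveness in the secretary model in the limit $m\rightarrow\infty$.

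Since Lemma~\ref{le.nearly} has already been proved in full earlier in the paper, the only obstacle here is a bookkeeping one, namely verifying that the two ingredients required of Definition~\ref{def:comp} are actually supplied by Theorem~\ref{te.main}. The constant adversarial ratio follows because our algorithm, whether it ends up using the Critical-Job-Strategy or the Least-Loaded-Strategy, places every job either on a least loaded machine (among principal or reserve machines) or on a machine whose anticipated load is bounded via Proposition~\ref{prop2}; in the worst case this yields a trivial $O(1)$ ratio analogous to Corollary~\ref{co.ll2}. The probabilistic bound is exactly the content of Theorem~\ref{te.main}. Hence the corollary follows with essentially no further work.
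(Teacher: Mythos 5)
Your proposal is correct and follows exactly the paper's route: \Cref{co.main} is obtained by applying \Cref{le.nearly} to the nearly $c$-competitiveness established in \Cref{te.main}, with the constant adversarial ratio already guaranteed by \Cref{le.main.basic} as part of that theorem. No gap; this is the same one-line deduction the paper makes.
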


The analysis of our algorithm proceeds along the same three reduction steps used in the proof of \Cref{te.1.75}. First, we assert that our algorithm has a bounded adversarial competitive ratio, which approaches $1$ as $R(\CJ)\rightarrow 0$. Not only does this lead to the first condition of nearly competitiveness, it also enables us to introduce \emph{simple} job sets on which we perform well due to basic considerations resulting from \Cref{sec.basic}.
\begin{definition}
A job set $\CJ$ is called \emph{simple} if $R(\CJ)\le \frac{(1-\delta)\delta^3}{2(\delta^2+1)}(2-c)$ or if it consists of at most $m$ jobs. Else, we call it $\emph{proper}$. We call any ordered input sequence $\CJ^\sigma$ \emph{simple} respectively \emph{proper} if the underlying set $\CJ$ has this property.
\end{definition}

Next we are going to sketch out main proof introducing three Main Lemmas. These follow the three proof steps introduced in the proof of \Cref{co.1.75}.

\begin{mainlemma}\label{le.main.basic}
In the adversarial model our algorithm has competitive ratio $4+O(
\delta)$ on general input sequences and $c+O(\delta)$ on simple sequences.

\end{mainlemma}

The proof is discussed later. We are thus reduced to treating \emph{proper} job sets. In the second reduction we introduce \emph{stable} sequences. These have many desirable properties. Most notably, they are suited to sampling. We leave the formal definition to 
\Cref{sec.ana.stable} since it is rather technical. The second reduction shows that stable sequences arise with high probability if one orders a proper job set uniformly randomly.

Formally, for $m$ the number of machines, let $P(m)$ be the maximum probability by which the permutation of any proper sequence may not be stable, i.e.\
\[P(m)=\sup\limits_{\CJ \textrm{ proper}}\bP_{\sigma\sim S_n}\left[\CJ^\sigma \textrm{ is not stable}\right].\]
The second main lemma asserts that this probability vanishes as $m\rightarrow \infty$. 
\begin{mainlemma}\label{le.main.stable}
$\lim\limits_{m\rightarrow\infty}P(m)=0$.
\end{mainlemma}

In other words, non-stable sequences are very rare and of negligible impact in random-order analyses. Thus, we only need to consider stable sequences. In the final, third, step we analyze our algorithm on these. This analysis is quite general. In particular, it does not rely further on the random-order model. Instead, we work with worst-case stable input sequences, i.e.\ we allow the adversary to present any (ordered) stable input sequence.
\begin{mainlemma}\label{le.conclusion}
Our algorithm is adversarially $(c+O(\delta))$-competitive on stable sequences.
\end{mainlemma}

These three main lemmas allow us to conclude the proof of \Cref{te.main}.

\begin{proof}[Proof of \Cref{te.main}]
By \Cref{le.main.basic}, the first condition of nearly competitiveness holds, i.e.\ our algorithm has a constant competitive ratio. Moreover, by \Cref{le.main.basic} and
\Cref{le.conclusion}, given $\varepsilon>0$, we can pick $m_0(\varepsilon)$ such that our algorithm is $(c+\varepsilon)$-competitive on all sequences that are stable or simple if there are at least $m_0(\varepsilon)$ machines. Here, we need that  $\delta(m)\rightarrow 0$ for $m\rightarrow\infty$. This implies that for $m\ge m_0(\varepsilon)$ the probability of our algorithm not being $(c+\varepsilon)$-competitive is at most $P(m)$, the maximum probability with which a random permutation of a proper, i.e.\ non-simple, input sequence is not stable. By \Cref{le.main.stable}, we can find $m(\varepsilon)\ge m_0(\varepsilon)$ such that this probability is less than $\varepsilon$. This satisfies the second condition of nearly competitiveness.
\end{proof}

\subsection{The adversarial case. Proof of \Cref{le.main.basic}}\label{sec.ana.basics}
Recall that the \emph{anticipated load} $\tilde l_M^t$ of a machine $M$ at time $t$ denotes its load including placeholder-jobs. It satisfies the definition of a pseudo-load as introduced in \Cref{sec.basic}. We obtain the following two bounds on the average anticipated load $\tilde L=\sup_t \frac{1}{m} \sum_M \tilde l_M^t$.

\begin{lemma}\label{le.tildeR.bound}
We have $\tilde L \le L+2p_\mathrm{max}$. In particular $\tilde R(\CJ)\le 3$.
\end{lemma}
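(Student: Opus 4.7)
My plan is to reduce the claim to bounding the total placeholder volume $V^{*}$ added during the Preparation for the Critical-Job-Strategy. Since placeholder-jobs are created only there and can only ever be removed afterwards (the Critical-Job-Strategy merely deletes placeholders and assigns real jobs), at every time $t$ we have $\sum_{M}\tilde l_M^{t}\le \sum_{M} l_M^{t}+V^{*}\le mL+V^{*}$. Dividing by $m$ and taking a supremum over $t$ reduces the first inequality to $V^{*}\le 2mp_{\jmax}$. The case in which the Critical-Job-Strategy is never invoked is immediate because then no placeholder is ever created, so $V^{*}=0$ and $\tilde L = L$.

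I will bound $V^{*}$ as (max size of one placeholder) times (max number of placeholders per machine) times (number of machines). First, every placeholder belongs to some class $p\in\CP$ with $c_{p}>0$, and inspection of the definition of $c_p$ shows that $c_{p}>0$ forces $\hat n_{p}\ge 1$: if $\hat n_p=0$ both arguments of the outer $\max$ are non-positive. Hence the sampling phase already contained a real job of rounded size $p$, whose actual size is at least $p$ by the geometric rounding, giving $p\le p_{\jmax}$. Second, I will trace through \Cref{alg.preparation}: its first \textbf{while} loop acts only on principal machines that hold exactly one medium real critical job and adds one placeholder to each; once a machine receives its placeholder it has two medium jobs and is never revisited. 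The remaining loops assign to each principal machine in $m_\mempty$ (those without any critical real job) at most one element $e\in\CJ_\rep$, and $e$ consists of at most two placeholders (either a single big placeholder or a pair of medium placeholders). The two categories of recipient machines are disjoint, and reserve machines never receive any placeholder, so every machine accumulates at most two placeholders. Multiplying yields $V^{*}\le 2mp_{\jmax}$.

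The main obstacle is this bookkeeping for \Cref{alg.preparation}; it is elementary but has to be done carefully and involves no probability. Finally, I will deduce $\tilde R(\CJ)\le 3$ by unwinding $\tilde R(\CJ)=R(\CJ)\,\tilde L/L$ and splitting cases: if $L\ge p_{\jmax}$ then $R(\CJ)=1$ and $\tilde R(\CJ)\le (L+2p_{\jmax})/L\le 3$; if $L<p_{\jmax}$ then $R(\CJ)=L/p_{\jmax}$ and $\tilde R(\CJ)\le (L+2p_{\jmax})/p_{\jmax}\le 3$.
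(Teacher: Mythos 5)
Your proof is correct and follows the paper's strategy: bound the total placeholder volume by $2mp_\jmax$ (each placeholder has size at most $p_\jmax$, each machine holds at most two), add the real-job volume $mL$ to get $\tilde L\le L+2p_\jmax$, and derive $\tilde R(\CJ)\le 3$ by a short case split. You fill in details the paper leaves implicit, in particular why a placeholder's size is bounded by $p_\jmax$ (via $c_p>0\Rightarrow\hat n_p\ge 1$, hence some real $p$-job of actual size $\ge p$ appeared in the sampling phase) and why each machine receives at most two placeholders (the first and third \textbf{while} loops of \Cref{alg.preparation} target disjoint sets of principal machines, and reserve machines get none).
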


\begin{proof}
First note that every placeholder-job has at most the size of some job encountered during the sampling phase. In particular,  the size of any placeholder-job is at most $p_\mathrm{max}$. Since there are at most two placeholder-jobs on each machine, the total processing time of all placeholder-jobs is at most $2mp_\mathrm{max}$. The total processing time of real jobs is at most $mL$. Thus the total processing time of all placeholder and real jobs scheduled at any time cannot exceed $m(L+2p_\mathrm{max})$. In particular, at any time $t$, we have $\frac{1}{m} \sum_M \tilde l_M^t\le \frac{1}{m}m(L+2p_\mathrm{max})$. Thus, $\tilde L \le L+2p_\mathrm{max}$. For the second part we conclude that $\tilde R(\CJ)\le \min\left(\frac{\tilde L}{p_\mathrm{max}},\frac{\tilde L}{L}\right) \le \min\left(\frac{L}{p_\mathrm{max}}+2,2\frac{p_\mathrm{max}}{L}+1\right)\le 3.$
\end{proof}

\begin{lemma}\label{le.2ndRtildeBound}
We have $\tilde L \le \left(1+ \frac{1}{\delta^2}\right)L$, in particular $\tilde R(\CJ) \le \left(1+ \frac{1}{\delta^2}\right)R(\CJ)$.
\end{lemma}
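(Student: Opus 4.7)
The plan is to bound the anticipated load by separately controlling the total volume of real jobs and the total volume of placeholder jobs that can ever be introduced. The bound $\tilde L \le L$ for real jobs is immediate since their total volume is $mL$, so the crux is to bound the placeholder volume by $m\hat L/\delta^2$ or similar.

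First I would show the key bound $c_p \le \delta^{-2}\hat n_p$ for every $p\in\CP$, directly from the definition $c_p=\max\left(\left\lfloor(\delta^{-2}\hat n_p-m^{3/4})w(p)\right\rfloor,\hat n_p\right)w(p)^{-1}$. If the first term in the max dominates, then $c_p\le \delta^{-2}\hat n_p-m^{3/4}\le \delta^{-2}\hat n_p$. If instead $c_p=\hat n_p w(p)^{-1}$, the bound follows from $w(p)\ge 1/2 \ge \delta^2$ for $m$ large enough. Since the algorithm only ever schedules placeholders for jobs in classes $p\in\CP$ up to counter $c_p'\le c_p$ (including the $\hat n_p$ real sample jobs already present), the number of placeholders added of class $p$ is at most $c_p-\hat n_p\le c_p$.

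Next I would bound the total processing volume of all placeholder jobs. A placeholder of class $p$ has size exactly $p$, so the total placeholder volume is at most $\sum_{p\in\CP}c_p\cdot p\le \delta^{-2}\sum_{p\in\CP}\hat n_p\cdot p$. Since every real job in $\CC_p$ has actual size at least its rounded size $p$, we have $\sum_{p\in\CP}\hat n_p\cdot p$ bounded by the total actual processing volume of the (critical) jobs arriving in the sampling phase, which is at most the total processing volume of all jobs arriving in the sampling phase, namely $\delta^2 m\hat L$ by definition of $\hat L=L_{\delta^2}$. Combining, the placeholder volume is at most $\delta^{-2}\cdot\delta^2 m\hat L=m\hat L$.

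Putting everything together, at any time $t$ the average anticipated load satisfies
\[
\tfrac{1}{m}\sum_M \tilde l_M^t \;\le\; \tfrac{1}{m}\bigl(\text{volume of real jobs}+\text{volume of placeholders}\bigr) \;\le\; L + \hat L.
\]
Finally, $\hat L\le L/\delta^2$ because the total volume of sampling-phase jobs, $\delta^2 m\hat L$, cannot exceed the total volume $mL$ of all jobs. Therefore $\tilde L\le L+\hat L\le L+L/\delta^2=(1+1/\delta^2)L$. The second part follows immediately from $\tilde R(\CJ)=R(\CJ)\cdot \tilde L/L$ by Lemma~\ref{le.sensibletildeL}.

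The main conceptual step is the tight accounting that relates $c_p$ to the sampling-phase volume via the factor $\delta^{-2}$; the rest is bookkeeping. There is no serious obstacle, but one must be careful to handle both branches of the $\max$ defining $c_p$ and to remember that placeholder sizes are the rounded sizes, which are dominated by the actual sample sizes.
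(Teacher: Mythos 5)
Your proof is correct and follows essentially the same route as the paper: bound the placeholder volume by $m\hat L$ (the paper's Lemma~\ref{le.placeholdersize}) and then use $\hat L = L_{\delta^2}\le\delta^{-2}L$. You have in fact filled in details the paper elides — verifying $c_p\le\delta^{-2}\hat n_p$ through both branches of the $\max$, and observing that one must weight $\hat n_p$ by the rounded size $p$ and use $p\le p_t$ to reach the sampling-phase volume $\delta^2 m\hat L$ — so your write-up is slightly more explicit than the paper's, but the underlying argument is identical.
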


Let us first show the following stronger lemma.
\begin{lemma}\label{le.placeholdersize}
The total size of the placeholder-jobs is at most $m\hat L$. In particular, $\tilde L\le L+\hat L$.
\end{lemma}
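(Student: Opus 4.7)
The plan is to first bound the total processing time of all placeholder-jobs by $m\hat L$, and then deduce $\tilde L \le L+\hat L$ essentially for free.

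I would begin by counting, for each $p\in \CP$, the number of $p$-placeholders that \Cref{alg.preparation} ever creates. The counter $c_p'$ starts at $\hat n_p$ (it already records the real $p$-jobs from the sampling phase) and is incremented by exactly one whenever a $p$-placeholder is added; moreover, placeholders are added only while the class is unsaturated, i.e., while $c_p'\le c_p$. Hence the number of $p$-placeholders created is at most $c_p - \hat n_p$, and the total placeholder size is at most $\sum_{p\in\CP} p(c_p-\hat n_p) \le \sum_{p\in\CP} p c_p$.

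Next, I would show that $c_p \le \delta^{-2}\hat n_p$ for every $p\in\CP$. Dropping the floor and the non-positive $-m^{3/4}$ term, the first argument of the max in the definition of $c_p$ satisfies $\lfloor(\delta^{-2}\hat n_p - m^{3/4})w(p)\rfloor\, w(p)^{-1} \le \delta^{-2}\hat n_p$. For the second argument, $\hat n_p w(p)^{-1} \le 2\hat n_p \le \delta^{-2}\hat n_p$ once $\delta(m)=1/\log m$ is small enough, which certainly holds for $m$ large enough. Thus $\sum_{p\in\CP} p c_p \le \delta^{-2}\sum_{p\in\CP} p\hat n_p$. Since rounded sizes satisfy $p=f(p_t)\le p_t$, the sum $\sum_{p\in\CP} p\hat n_p$ is bounded by the total actual size of the sampled medium and big jobs, and in particular by the total size of all sampled jobs, $\sum_{\sigma(t)\le \delta^2 n} p_t = \delta^2 m \hat L$. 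Combining yields total placeholder size $\le \delta^{-2}\cdot\delta^2 m \hat L = m\hat L$.

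For the ``in particular'' part, at any time $t$ the quantity $\sum_M \tilde l_M^t$ is the sum of the sizes of all real and placeholder jobs scheduled by time $t$, which is at most $mL+m\hat L$. Dividing by $m$ and taking the supremum over $t$ gives $\tilde L\le L+\hat L$. The only mildly subtle step is the first one: accurately interpreting the counter $c_p'$ and verifying $c_p\le \delta^{-2}\hat n_p$; everything afterwards is a direct chain of inequalities.
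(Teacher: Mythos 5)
Your proof is correct and takes essentially the same route as the paper's: bound the number of $p$-placeholders via $c_p\le\delta^{-2}\hat n_p$, multiply by the (rounded) size $p$, sum over $p\in\CP$, observe that $\sum_p p\,\hat n_p$ is at most the total sampled volume $\delta^2 m\hat L$, and then add the real-job contribution $mL$ for the pseudo-load bound. You merely supply the intermediate details the paper leaves implicit, namely the verification that both branches of the $\max$ in the definition of $c_p$ are $\le\delta^{-2}\hat n_p$ and that the $p$-weighted sum of $\hat n_p$ is controlled by $\delta^2 m\hat L$ via $f(p_t)\le p_t$.
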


\begin{proof}[Proof of \Cref{le.placeholdersize}]
For every $p\in\CP$ we schedule at most $c_p\le\delta^{-2}\hat n_p$ placeholder-jobs of type $p$. Thus, the total size of the placeholder-jobs is at most
$\sum_{p\in\CP} \delta^{-2}\hat n_p \le m\hat L.$ The total size of all real jobs is precisely $mL$. Since $\tilde L$ is at most $\frac{1}{m}$-times the total processing time of all jobs, we have $\tilde L \le \frac{1}{m}(mL+m\hat L)$.
\end{proof}

\begin{proof}[Proof of \Cref{le.2ndRtildeBound}]
Observe that $\hat L=L_{\delta^2} \le \delta^{-2} L$. Then, the bound follows from \Cref{le.placeholdersize}.
\end{proof}

We call a machine \emph{critical} if it receives a critical job from the Critical-Job-Strategy but no small job after the sampling phase. Else, we call it \emph{general}. General machines can be analyzed using \Cref{prop1} and~\ref{prop2}. \emph{Critical} machines need more careful arguments.
\begin{lemma}\label{le.basic.bound}
At any time, the load of any general machine is at most $\left(\frac{\tilde R(\CJ)}{1-\delta}+1 +2\delta\right)\OPT(\CJ)$.
\end{lemma}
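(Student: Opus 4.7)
My plan is to bound $l_M^t$ by the anticipated load $\tilde l_M^t$, which is a pseudo-load satisfying $l_M^t \leq \tilde l_M^t$, and to show that $\tilde l_M^t$ is monotone non-decreasing in $t$ (additions are non-negative; a replacement of a $p$-placeholder by a real job of size in $[p,(1+\delta)p)$ changes $\tilde l_M$ by a non-negative amount at most $\delta p$). Hence it suffices to bound $\tilde l_M^n$.

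Next, I would identify the last \emph{addition} event on $M$ (sampling phase job, post-sampling small job, or Preparation placeholder) as opposed to a replacement, and call its time $t^*$. A crucial observation is that for a general machine these additions are all made to the least anticipated-loaded principal machine: sampling phase jobs go to the least loaded principal (where anticipated load equals real load because no placeholders exist yet), post-sampling small jobs go to the least anticipated-loaded principal, and Preparation placeholders (lines~9--12) go to the least anticipated-loaded principal without a critical job; the exceptional Preparation line~2 only applies to machines that already carry a medium sampling job and will therefore be analyzed via the monotonicity bound from the preceding sampling-phase step. In all cases, the rank of $M$ at time $t^*-1$ (ordered by $\tilde l^{t^*-1}$) is at most $|\CM_\res|+1$, so \Cref{le.avglb2} yields
\[
\tilde l_M^{t^*-1} \;\leq\; \frac{m}{|\CM|}\,\tilde L \;\leq\; \frac{\tilde L}{1-\delta} + O(1/m)\,\tilde L .
\]
Adding the job placed at $t^*$, of real size at most $p_\jmax\leq\OPT$ (for a small/sampling job) or at most the placeholder sum contribution bounded in the next step, gives $\tilde l_M^{t^*} \leq \frac{\tilde L}{1-\delta} + \OPT + O(1/m)\,\tilde L$.

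After $t^*$ only replacements of placeholders by real critical jobs can occur on $M$. The key quantitative observation is that any placeholder which is \emph{actually replaced} has size at most $\OPT$: the replacing real critical job has size at most $p_\jmax\leq \OPT$ and rounds down to the placeholder size, so the placeholder size is at most the real-job size. Since each machine carries at most two placeholders, the total increase of $\tilde l_M$ caused by all replacements after $t^*$ is at most $\delta\cdot 2\OPT = 2\delta\,\OPT$. Combining,
\[
l_M^t \;\leq\; \tilde l_M^n \;\leq\; \frac{\tilde L}{1-\delta} + \OPT + 2\delta\,\OPT + O(1/m)\,\tilde L.
\]
Finally, I use $\OPT\geq\max(L,p_\jmax)$ together with $\tilde R(\CJ) = \min(\tilde L/p_\jmax,\tilde L/L)$ to deduce $\tilde L \leq \tilde R(\CJ)\,\OPT$, and absorb the $O(1/m)\,\tilde L$ term into the $2\delta\,\OPT$ slack (using $\tilde R \leq 3$ by \Cref{le.tildeR.bound} and $\delta=1/\log m \gg 1/m$ for sufficiently large $m$). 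This yields the claimed bound $\bigl(\tfrac{\tilde R(\CJ)}{1-\delta}+1+2\delta\bigr)\OPT$.

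The main technical obstacle is the careful bookkeeping of the three kinds of increases to $\tilde l_M$ (initial least-loaded addition, Preparation-line-2 exceptional placeholder on sampling-phase medium machines, and subsequent replacements). The conceptually critical insight is that a replaced placeholder's size is at most $\OPT$, which is what turns a potentially unbounded $\delta B$ correction into the harmless $2\delta\,\OPT$ term in the final bound.
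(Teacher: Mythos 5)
Your plan of bounding $l_M^t$ by $\tilde l_M^n$ and then controlling $\tilde l_M^n$ via the last \emph{addition} event has a genuine gap. The lemma is stated for the adversarial model, where $B=\max(p^{\delta^2 n}_\jmax,\hat L)$ can greatly exceed $\OPT$ — indeed $\hat L=L_{\delta^2}$ can be as large as $\delta^{-2}L$. A general machine $M$ can carry an \emph{unreplaced} placeholder (it is general precisely because it never receives the real critical job that would replace it, or because it later gets a small job), and that placeholder may have size close to $B\gg\OPT$. In that case $\tilde l_M^n$ genuinely exceeds the target $\bigl(\tfrac{\tilde R(\CJ)}{1-\delta}+1+2\delta\bigr)\OPT$, because $\tilde R$ is built from the \emph{average} $\tilde L$ and a single machine's large placeholder barely moves that average. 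Your ``key quantitative observation'' that replaced placeholders have size at most $\OPT$ is correct but does not help here: when you ``add the job placed at $t^*$'' and that job is a placeholder, you have no bound on its size, and the sentence ``or at most the placeholder sum contribution bounded in the next step'' never materializes into an actual estimate. Separately, the rank claim for lines~9--12 of \Cref{alg.preparation} is not justified: those placeholders go to the least anticipated-loaded principal machine \emph{without a critical job}, so machines carrying sampling-phase critical jobs are excluded from the comparison and $M$'s true rank among all machines can exceed $|\CM_\res|+1$. The paper's proof of \Cref{le.critical.bound} has to work hard precisely to control this case.

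The paper avoids both problems by never bounding $\tilde l_M^n$. Instead it bounds the \emph{actual} load directly via the last \emph{real} job placed on $M$. If that last real job arrives in the sampling phase (or under the Least-Loaded-Strategy), $M$ was then among the $\lfloor\delta m\rfloor+1$ least loaded machines and \Cref{prop1} closes the case — any subsequent unreplaced placeholders are invisible to $l_M$. If the last real job arrives during the Critical-Job-Strategy, the definition of ``general'' forces $M$ to have received a small job there; taking the \emph{last small job} $J$, \Cref{prop2} bounds $\tilde l_M$ right after $J$ (because $M$ was the least anticipated-loaded principal machine at that moment), and the only later changes to $M$ are at most two replacements, each raising the load by at most $\delta p_\jmax\le\delta\OPT$. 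Your monotonicity observation and the $2\delta\OPT$ replacement accounting are both correct and match the paper's; what is missing is the switch from ``last addition'' to ``last real (small) job'' so that placeholders you cannot bound simply never enter the estimate.
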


\begin{proof}
For sequences of length $n\le m$ our algorithm is optimal. Hence assume $n>m$.

During the sampling phase and the Least-Loaded-Strategy, our algorithm always uses either a least loaded machine or a least loaded principal machine. Both lie among the $\lfloor \delta m\rfloor+1$ least loaded machines. By \Cref{prop1} this cannot cause any load to exceed $\left(\frac{m}{m-\lfloor\delta m\rfloor}R(\CJ)+1\right)\OPT(\CJ)\le \left(1+\frac{R(\CJ)}{1-\delta}\right)\OPT(\CJ)$. Observing that $\tilde R(\CJ)\ge R(\CJ)$, see \Cref{le.sensibletildeL}, the lemma holds for every machine that does not receive its last job during the Critical-Job-Strategy.

Now consider a general machine $M$, which received its last job during the Critical-Job-Strategy. Since it is a general machine, it also received a small job during the Critical-Job-Strategy. Let $J$ be the last small job it received. Right before receiving $J$ machine $M$ must have been a principal machine of least anticipated load. In total, it had at most the  $(\lfloor \delta m\rfloor+1)$-smallest anticipated load. By \Cref{prop2} its anticipated load was at most $\left(\frac{\tilde R(\CJ)}{1-\delta}+1\right)\OPT(\CJ)$ after receiving $J$. Afterwards machine $M$ may have received up to two critical jobs, which replaced placeholder-jobs. Since these jobs had at most $(1+\delta)$-times the size of the job they replaced. The load-increase is at most $\delta p_\mathrm{max}\le \delta \OPT$ for each of these two jobs.
\end{proof}

We can now consider critical machines.

\begin{lemma}\label{le.reserve.bound}
The load of a reserve machine is at most
$\min(\max((1+\delta)c B,p_\jmax), 2p_\jmax)$ till it receives a job from the Least-Loaded-Strategy. Critical reserve machines in particular fulfill this condition.
\end{lemma}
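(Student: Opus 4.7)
The plan is to trace which jobs a reserve machine can hold before it first receives a job from the Least-Loaded-Strategy, and then bound the resulting load by a case analysis. Since only Algorithm~\ref{alg.exceptional} (the Least-Loaded-Strategy) or Algorithm~\ref{alg.main} (the Critical-Job-Strategy) ever dispatch to reserve machines, and the sampling phase as well as the placement of small jobs in line~10 of Algorithm~\ref{alg.main} explicitly schedule on principal machines $\CM$ only, any job on a reserve machine before failure must enter through lines~5--7 of Algorithm~\ref{alg.main}. Inspecting those three lines, a reserve machine ends up carrying exactly one of the following: a single medium job awaiting its partner (line~5--6 with the medium-pair branch stopped halfway), a pair of medium jobs, or a single big or huge job placed via line~7 on an initially empty reserve machine. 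In particular at most two medium jobs, or at most one big/huge job, ever lie on a given reserve machine in this regime.

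Next I would bound the load in each case using the rounding estimate $p_t\le (1+\delta)f(p_t)$. For two medium jobs, each has rounded size $f(p_t)\le p_\jbig B$, so the total is at most $2(1+\delta)p_\jbig B=(1+\delta)cB$; and since each medium job satisfies $p_t\le p_\jmax$, the sum is also at most $2p_\jmax$. For a single non-huge big job, $f(p_t)\le B$, giving $p_t\le (1+\delta)B\le (1+\delta)cB$, and trivially $p_t\le p_\jmax\le 2p_\jmax$. For a single huge job, $p_t\le p_\jmax$ directly, so the load is at most $p_\jmax$ and at most $2p_\jmax$. In every case the load is bounded by both $\max((1+\delta)cB,p_\jmax)$ and $2p_\jmax$, hence by their minimum, which is the claimed bound.

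Finally, the "in particular" clause is a direct consequence: a \emph{critical} machine is one receiving a critical job from the Critical-Job-Strategy but no small job afterwards, and reserve machines never receive small jobs (all small-job placements in the Critical-Job-Strategy and the Least-Loaded-Strategy target $\CM$). Thus a critical reserve machine falls into one of the cases above before it has a chance to receive anything from the Least-Loaded-Strategy, so the bound applies. The only subtlety, and the main thing to check carefully, is the exhaustive case enumeration that reserve machines really cannot accumulate more than two medium jobs or more than one big/huge job before a failure occurs, which is immediate from the structure of lines~5--7 in Algorithm~\ref{alg.main} (line~5 only fires when the reserve machine currently holds a single medium job, and line~7 only on empty reserve machines).
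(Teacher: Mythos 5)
Your proof is correct and follows essentially the same approach as the paper's: identify that, before any Least-Loaded-Strategy job arrives, a reserve machine can carry at most one big (possibly huge) job or at most two medium jobs, then bound the two medium jobs by $2(1+\delta)p_\jbig B=(1+\delta)cB$ and any single job by $p_\jmax$, also noting that in all cases the load is at most $2p_\jmax$. Your write-up is more explicit than the paper's (tracing all code paths to confirm reserve machines receive nothing from the sampling phase, the preparation step, or the small-job branch), which is a reasonable addition; the only small imprecision is the line numbering in \Cref{alg.main} (the medium-pair and empty-reserve placements occur a line or two earlier than the lines you cite), but this does not affect the substance of the argument.
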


\begin{proof}
Every critical reserve machine receives either one big job or at most two medium ones, until the Least-Loaded-Strategy is applied. The second bound, $2p_\jmax$, follows immediately from that. The first bound follows from the fact that a single big job has size at most $p_\mathrm{max}$, while two medium jobs have size at most $2(1+\delta)p_\jbig  B=(1+\delta)c B$. The $(1+\delta)$-factor comes from using rounded sizes in the definition of medium jobs.
\end{proof}

The following lemma uses similar arguments to \Cref{le.nocritjob} only for the adversarial model.

\begin{lemma}\label{le.critical.bound}
The load of a critical machine is at most
$\min((1+\delta)cB+2\frac{\tilde R(\CJ)}{(1-\delta)}\OPT,\frac{L}{1-\delta}+3p_\mathrm{max})$ if it was a principal machine.
\end{lemma}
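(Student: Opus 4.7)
The plan is to decompose the final load $l_M$ of a critical principal machine $M$ as $l_M = S + C$, where $S$ is the total size of the real jobs placed on $M$ during the sampling phase and $C$ is the total size of the real critical jobs placed on $M$ during the Critical-Job-Strategy. Each of the two bounds in the $\min$ will then come from a different way of bounding $S$, combined with a suitable bound on $C$.

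To bound $C$, I would use the placeholder structure from \Cref{alg.preparation}. Because $M$ is a critical principal machine, its placeholders were assigned either in line 2 of Preparation (Case A: a single medium placeholder was added on top of a medium sampling job already present on $M$) or in the third loop (Case B: either a single big placeholder or a pair of medium placeholders was added). A real critical job replacing a $p$-placeholder has actual size less than $(1+\delta)p$, so summing over the at most two replacements on $M$ yields $C \le (1+\delta)cB$ (using $2p_\jbig = c$ and $B \le cB$). Trivially, also $C \le 2p_\max$ since at most two critical real jobs land on $M$ during CJS.

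For the sampling load $S$ I would argue in two complementary ways. For the second bound, Graham's classical greedy analysis applied to the sampling phase on the $|\CM| = m - \lceil \delta m\rceil$ principal machines gives $S \le \frac{\delta^2 m\hat L}{|\CM|} + p_\jmax^{\delta^2 n} \le \frac{L}{1-\delta} + p_\max$, using that the total sampling volume $\delta^2 m\hat L$ is at most the total volume $mL$. Combined with $C \le 2p_\max$ this yields $l_M \le \frac{L}{1-\delta} + 3p_\max$. For the first bound I would apply \Cref{prop2} at the moment the last sampling job was scheduled on $M$ (the bound is vacuous if no sampling job lands on $M$): at that time $M$ was a least loaded principal machine and, because no placeholders exist yet, the pseudo-load coincides with the actual load on every machine; hence $M$'s rank among all $m$ machines is at most $\lceil \delta m\rceil + 1$ and \Cref{prop2} gives $S \le \bigl(1 + \tilde R(\CJ)/(1-\delta)\bigr)\OPT$. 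Combined with $C \le (1+\delta)cB$ this matches the first claimed bound, once the additive $\OPT$ is absorbed into $\frac{2\tilde R(\CJ)}{1-\delta}\OPT$.

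The main obstacle I expect is careful bookkeeping in Case A: the medium sampling job on $M$ must be counted inside $S$ (so that Graham's bound and \Cref{prop2} both apply to it) while its CJS replacement is counted inside $C$, and one must avoid double-counting since otherwise an extra $(1+\delta)(c/2)B$ would be introduced into both bounds, which we cannot afford.
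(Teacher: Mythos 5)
Your decomposition $l_M = S + C$ (sampling load plus CJS critical-job load) is fine, and your derivation of the second bound $\frac{L}{1-\delta} + 3p_\max$ is correct and essentially the same as the paper's. However, the derivation of the first bound has a genuine gap. You bound $S \le (1 + \tilde R(\CJ)/(1-\delta))\OPT$ via \Cref{prop2} and $C \le (1+\delta)cB$, so your sum is $(1 + \tilde R(\CJ)/(1-\delta))\OPT + (1+\delta)cB$. For this to be at most the claimed $(1+\delta)cB + \frac{2\tilde R(\CJ)}{1-\delta}\OPT$, you need $\OPT \le \frac{\tilde R(\CJ)}{1-\delta}\OPT$, i.e.\ $\tilde R(\CJ)\ge 1-\delta$ — but this is false in general. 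Indeed $\tilde R(\CJ)=\min(\tilde L/p_\max,\tilde L/L)$ can be arbitrarily small, and the lemma is precisely invoked in \Cref{co.goodonsimple} on simple job sets where $R(\CJ)$ (and hence $\tilde R(\CJ)$, by \Cref{le.2ndRtildeBound}) is tiny. The root of the problem is the additive "$+1$" (i.e.\ $+p_\max$) coming from \Cref{prop2}: if the last sampling job $J$ landing on $M$ is critical, that additive term accounts for $J$ (which can have size $\Theta(B)\approx\Theta(\OPT)$), yet you are simultaneously spending the full $(1+\delta)cB$ budget on $C$, and there is no slack left to absorb the $+\OPT$.

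The paper's proof avoids this by using a different decomposition: (non-critical load) $+$ (critical load). When $J$ is critical, it bounds the load of $M$ at the moment \emph{before} $J$ arrives by $\frac{L}{1-\delta}$ via \Cref{le.avglb} — a bound with no additive $p_\max$ term — and charges $J$ together with any CJS critical job against the single $(1+\delta)cB$ budget for critical jobs. When $M$ instead received placeholders from $\CJ_\rep$, the paper bounds $M$'s (anticipated) load at preparation time via \Cref{le.avglb2} — again with no additive term — and this requires splitting into $i\le m/2-1$ (rank argument, factor $\frac{2}{1-\delta}$) and $i\ge m/2$ (volume argument), two sub-cases you do not consider. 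So to repair your proof you would need either to switch to the non-critical/critical split in the case where $J$ is critical, or else argue that the additive term is $p(J)\le p_\jsmall B$ when $J$ is small (and then separately handle large $B$ relative to $\tilde L$); as written the step "the additive $\OPT$ is absorbed into $\frac{2\tilde R(\CJ)}{1-\delta}\OPT$" does not go through.
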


\begin{proof}
Consider any critical principal machine $M$. Let $J$ be the last job received in the sampling phase. Before $J$ was scheduled on $M$ it was a least loaded principle machine and thus had load at most $\frac{L}{1-\delta}$ by \Cref{prop1}. After $J$ machine $M$ received at most two more jobs and thus its load cannot exceed $\frac{L}{1-\delta}+3p_\mathrm{max}$, the second term in the min-term.

If $J$ was critical, this implies that the load on $M$ of non-critical jobs was at most $\frac{L}{1-\delta}$, while the load of critical jobs no $M$ cannot exceed $2(1+\delta)p_\jbig B =(1+\delta) cB$. The first term in the min-term follows. We are left to consider the case that $M$ did not receive a critical job in the sampling phase, which means that it receives an element of $\CJ_\rep$, else $M$ would not be critical. In fact, assume that $M$ was the $i$-th machine to receive an element from $\CJ_\rep$.

First consider the case $i\le m/2-1$. Right before the while loop in line 7 of \Cref{alg.preparation} machine~$M$ had the $i$-th least anticipated load among the principal machines. By \Cref{le.avglb2} its anticipated load was at most $\frac{m}{m-i-\delta m+1}\tilde L\le \frac{2\tilde L}{1-\delta} \le 2\frac{\tilde R(\CJ)}{1-\delta}\OPT$ before receiving placeholder jobs of processing volume at most $cB$. The processing volume of the placeholder jobs increases by at most a factor $(1+\delta)$ once they are replaced by real jobs. Thus the bound of the lemma follows if $i\le m/2-1$.

Finally, consider the case $i\ge m/2$. Recall that $\delta^{-2} L_{\delta^2} = \hat L \le B$. Since $M$ did not receive a critical job in the sampling phase it follows from \Cref{le.avglb} that its load was at most $\frac{\delta^{-2} L_{\delta^2}}{1-\delta}+(c-1)B \le (c-\frac{\delta}{1-\delta})B\le (1+\delta)\cdot cB$ after the sampling phase. Let $p$ be the processing volume machine $M$ receives from $\CJ_\rep$. Since the algorithm assigns the elements of $\CJ_\rep$ in decreasing order at least $i$ machines received processing volume at least $p$ from $\CJ_\rep$. Thus $i\cdot p \le m\cdot \tilde L$ and, using that $i\ge m/2$, we derive that $p\le \frac{m}{i}\tilde L \le 2\tilde L\le 2\frac{\tilde R(\CJ)}{1-\delta}\OPT$. Again the first term of the min-term follows.
\end{proof}

From these lemmas the two statements of \Cref{le.main.basic} follow.

\begin{corollary}\label{co.goodongeneral}
Our algorithm is adversarially $\left(3+\frac{3}{1-\delta}+2\delta\right)$-competitive.
\end{corollary}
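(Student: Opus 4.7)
The plan is to classify every machine into one of three types already handled by the preceding lemmas, and then patch the resulting bounds together. If $n\le m$, the algorithm places each job on a separate empty machine and is optimal, so assume $n>m$. Every machine then falls into exactly one of: general, critical principal, or critical reserve.

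For general machines, \Cref{le.basic.bound} combined with the universal bound $\tilde R(\CJ)\le 3$ from \Cref{le.tildeR.bound} immediately gives a load of at most $\left(\frac{3}{1-\delta}+1+2\delta\right)\OPT(\CJ)$, which already matches the target. For critical reserve machines, \Cref{le.reserve.bound} bounds the load by $2p_\jmax\le 2\OPT(\CJ)$, and $2\le 1+\frac{3}{1-\delta}+2\delta$ since $\frac{3}{1-\delta}\ge 3$. For critical principal machines, I would invoke the second term in the minimum of \Cref{le.critical.bound}, namely $\frac{L}{1-\delta}+3p_\mathrm{max}$, and apply the lower bounds $L\le\OPT$ and $p_\mathrm{max}\le\OPT$ from \Cref{pro.bounds}, obtaining a bound of at most $\left(3+\frac{1}{1-\delta}\right)\OPT(\CJ)$.

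The only numerical step worth a line is verifying $3+\frac{1}{1-\delta}\le 1+\frac{3}{1-\delta}+2\delta$. Subtracting, this is equivalent to $1\le\frac{1}{1-\delta}+\delta$, i.e.\ $(1-\delta)^2\le 1$, which is trivial for $0<\delta<1$. Consequently every machine has load at most $\left(1+\frac{3}{1-\delta}+2\delta\right)\OPT(\CJ)$, giving the claimed adversarial competitive ratio.

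I do not anticipate any genuine obstacle. The three preceding lemmas were tailored exactly to cover every machine produced by the algorithm, and the corollary is essentially a case analysis plus the one elementary inequality above; the only place where care is needed is to make sure the trichotomy general/critical-principal/critical-reserve is exhaustive, which follows straight from the definitions.
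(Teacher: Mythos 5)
Your proof matches the paper's: both bound the makespan by applying \Cref{le.basic.bound}, \ref{le.reserve.bound} and \ref{le.critical.bound} together with $\tilde R(\CJ)\le 3$ and $L,p_\jmax\le\OPT$, and then verify that $\frac{3}{1-\delta}+1+2\delta$ dominates $2$ and $3+\frac{1}{1-\delta}$. Your elementary check that $3+\frac{1}{1-\delta}\le 1+\frac{3}{1-\delta}+2\delta$ via $(1-\delta)^2\le 1$ is correct and merely spells out a step the paper leaves implicit.
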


\begin{proof}
	By \Cref{le.tildeR.bound} we have $\tilde R(\CJ)\le 3$,  also recall that $L,p_\mathrm{max}\le\OPT$. By \Cref{le.basic.bound}, \ref{le.reserve.bound} and~\ref{le.critical.bound} the makespan of the algorithm is thus at most $$\max\left(\frac{3}{1-\delta}+1+2\delta,2,\frac{1}{1-\delta}+3\right)\OPT(\CJ)=\left(1+\frac{3}{1-\delta}+2\delta\right)\OPT(\CJ).\qedhere$$
\end{proof}

\begin{corollary}\label{co.goodonsimple}
Our algorithm has makespan at most $(c+2\delta)\OPT$ on simple sequences~$\CJ^\sigma$.
\end{corollary}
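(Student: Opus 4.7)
The plan is to split a simple sequence into its two defining sub-cases. If $|\CJ|\le m$, the algorithm places each job on its own machine, achieving makespan $p_\jmax=\OPT$, so the bound holds trivially. The entire work lies in the other sub-case $R(\CJ)\le \frac{(1-\delta)\delta^3}{2(\delta^2+1)}(2-c)$, where I will bound the final load of every machine type separately using Lemmas~\ref{le.basic.bound}, \ref{le.reserve.bound} and~\ref{le.critical.bound}.

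The preparatory estimates are immediate. By \Cref{le.2ndRtildeBound} we have $\tilde R(\CJ)\le (1+\delta^{-2})R(\CJ)$; combined with the threshold on $R(\CJ)$ this simplifies to $\tilde R(\CJ)/(1-\delta)\le \delta(2-c)/2$. Moreover, the assumption on $R(\CJ)$ implies $\delta^{-2}L\le p_\jmax$, so $\hat L=L_{\delta^2}\le \delta^{-2}L\le p_\jmax$ and therefore $B=\max(p^{\delta^2 n}_\jmax,\hat L)\le p_\jmax\le \OPT$.

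Equipped with these, the three machine-type bounds each land at $(c+2\delta)\OPT$. For general machines, \Cref{le.basic.bound} yields load at most $(1+2\delta+\delta(2-c)/2)\OPT$, which stays below $(c+2\delta)\OPT$ as soon as $\delta$ is small enough that $1+\delta(2-c)/2\le c$ (true since $c>1$). For reserve machines, \Cref{le.reserve.bound} gives load at most $(1+\delta)cB\le (1+\delta)c\OPT=(c+c\delta)\OPT$, and using $c\le 2$ we can absorb $c\delta$ into $2\delta$. For a critical principal machine, \Cref{le.critical.bound} bounds the load by
\[
(1+\delta)cB+2\frac{\tilde R(\CJ)}{1-\delta}\OPT\le (c+c\delta)\OPT+\delta(2-c)\OPT=(c+2\delta)\OPT,
\]
where the crucial identity $c\delta+\delta(2-c)=2\delta$ makes the two contributions fit exactly.

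The only delicate point is this last estimate for critical principal machines: the two summands must cancel precisely to $2\delta$, which is what forces the specific form of the simplicity threshold. Indeed, the value $\frac{(1-\delta)\delta^3}{2(\delta^2+1)}(2-c)$ is tuned so that $\tilde R(\CJ)/(1-\delta)\le \delta(2-c)/2$, producing the matching term $\delta(2-c)\OPT$ that pairs with the $c\delta\OPT$ coming from $(1+\delta)c$. The remainder is routine verification that every machine type obeys the claimed bound, after which the corollary follows.
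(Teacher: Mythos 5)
Your proof is correct and takes essentially the same route as the paper: both invoke Lemmas~\ref{le.basic.bound}, \ref{le.reserve.bound}, and~\ref{le.critical.bound}, derive $\tilde R(\CJ)/(1-\delta)\le \delta(2-c)/2$ from \Cref{le.2ndRtildeBound} together with the simplicity threshold, use that threshold to force $B\le p_\jmax\le\OPT$, and then let the exact cancellation $c\delta+\delta(2-c)=2\delta$ close the critical-machine case. The only cosmetic differences are that you spell out the $|\CJ|\le m$ branch explicitly and that you absorb $c\delta$ into $2\delta$ for reserve machines via $c\le 2$, whereas the paper merely lists the term $p_\mathrm{max}$ for reserve machines and folds the $(1+\delta)cB$ part into the critical-machine bound.
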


\begin{proof}
By \Cref{le.basic.bound}, \ref{le.reserve.bound} and~\ref{le.critical.bound} we see that the makespan of our algorithm is at most \[\max\left(\bigg(\frac{\tilde R(\CJ)}{1-\delta}+1 +2\delta\bigg)\OPT(\CJ),p_\mathrm{max}, (1+\delta)cB+2\frac{R(\CJ)}{1-\delta}\right).\]
Now, by lemma \Cref{le.2ndRtildeBound} and the definition of simple sequences, there holds  $\tilde R(\CJ)\le \left(1+ \frac{1}{\delta^2}\right)R(\CJ) \le (1-\delta)\frac{\delta}{2}(2-c)$. In particular, $\left(\frac{\tilde R(\CJ)}{1-\delta}+1 +2\delta\right)\OPT(\CJ)\le (c+2\delta)\OPT(\CJ)$. The second term $p_\mathrm{max}$ is always smaller than $\OPT$. Concerning the third bound in the max-term observe using \Cref{le.placeholdersize} that there holds
\begin{align*} B & = \max\left(p_\mathrm{max}^{\delta^2n}, \hat L\right) \\
&\le \max\left(p_\mathrm{max}, \delta^{-2}L\right)\\
&\le \max\left(p_\mathrm{max}, \delta^{-2}R(\CJ)p_\mathrm{max}\right)\\
&\le  \max\left(p_\mathrm{max},\delta^{-2}\frac{(1-\delta)\delta^3}{2(\delta^2+1)}(2-c)p_\mathrm{max}\right)\\&\le p_\mathrm{max}\\
&\le \OPT.\end{align*} Since $2\frac{\tilde R(\CJ)}{1-\delta}\le (2-c)\delta$ we have $(1+\delta)cB+2\frac{R(\CJ)}{1-\delta}\OPT \le (c+2\delta)\OPT$.
\end{proof}

\begin{proof}[Proof of \Cref{le.main.basic}]
\Cref{le.main.basic} follows immediately from \Cref{co.goodongeneral} and \Cref{co.goodonsimple}.
\end{proof}

\subsection{Stable job sequences. Proof sketch of  \Cref{le.main.stable}}\label{sec.ana.stable}
We introduce the class of \emph{stable} job sequences. The first two conditions state that all estimates our algorithm makes are \emph{accurate}, i.e.\ sampling works. 
By the third condition there are less huge jobs than reserve machines and the fourth condition states that these jobs are distributed evenly. The final condition is a technicality. Stable sequences are useful since they occur with high probability if we randomly order a proper job set.

\begin{definition}\label{def.stable}
A job sequence $\CJ^\sigma$ is \emph{stable} if the following conditions hold:
\begin{itemize}
\setlength{\parskip}{0pt} \setlength{\itemsep}{0pt plus 1pt}
\item The estimate $\hat L$ for $L$ is accurate, i.e.\ $(1-\delta)L \le \hat L \le (1+\delta)L$.
\item The estimate $c_p$ for $n_p$ is accurate, i.e.\ $c_p\le n_p\le c_p+2m^{3/4}$ for all $p\in \CP$.
\item There are at most $\lceil \delta m \rceil$ huge jobs in $\CJ^\sigma$.
\item Let $\tilde t$ be the time the last huge job arrived and let $n_{p,\tilde t}$ be the number of $p$-jobs scheduled at that time for a given $p\in \CP$. Then $n_{p,\tilde t}\le \left(1-\delta^3\right)n_{p}$ for every $p\in \CP$ with $n_p>\left\lfloor\frac{\left(1-\delta-2\delta^2\right)m}{|\CP|} \right\rfloor$.
\item $\delta^3 \left\lfloor \left(1-\delta-2\delta^2\right)m / |\CP| \right\rfloor \ge 2|\CP|m^{3/4}$.
\end{itemize}
\end{definition}
\begin{proof}[Proof sketch of \Cref{le.main.stable}] The first two conditions are covered by arguments following \Cref{sec.sampling}. Here, we require that only proper sequences are considered. The third condition is equivalent to demanding one of the $\lceil \delta m \rceil$ largest jobs to occur during the sampling phase. This is extremely likely. The expected rank of the largest job occurring in the sampling phase is $\delta^{-2}$, a constant. The fourth condition states that, for any $p\in\CP$, the huge jobs are evenly spread throughout the sequence when compared to any sizable class of $p$-jobs. Again, this is expected of a random sequence and corresponds to how one would view randomness statistically. For the final condition it suffices to choose the number of machines $m$ large enough.
One technical problem arises since the class $\CP=\CP[\CJ^\sigma]$ is defined using the value $B[\CJ^\sigma]$. It thus highly depends on the input permutation $\sigma$. We rectify this by passing over to a larger class $\hat\CP$ such that $\CP\subset\hat\CP$ with high probability. 
\end{proof}
The formal proof of \Cref{le.main.stable} is simple but very technical. That is, we consider the underlying ideas to be rather simple but in order to give a rigorous proof many cases have to be considered. We leave it to \Cref{sec.ana.stable.p}. The definition of stable sequences is suited for our future algorithmic arguments. To make probabilistic arguments, we introduce \emph{probabilistically stable} sequences and prove that probabilistically stable sequences are always stable. Their definition is more convenient, as it avoids certain problems such as $\CP$ being dependent on the job permutation. We then prove \Cref{le.main.stable} for all six conditions of probabilistically stable sequences separately.

\subsection{Adversarial analysis on stable sequences. Proof sketch of \Cref{le.conclusion}}
\subsubsection*{General observations}\label{sec.ana.critical.p}
This section is devoted for some general observations needed several times throughout the analysis. Recall that $\tilde L=\sup_t \frac{1}{m} \sum_M\tilde l_M^t$ denotes the maximum average load taking placeholder jobs into account. We will see that this does, in fact, not overestimate the total load $L$ if the sequence is stable.

\begin{lemma}\label{le.tildeL=L}
For every stable sequence $\CJ^\sigma$ there holds $\tilde L=L$.
\end{lemma}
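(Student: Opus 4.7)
The plan is to prove $\tilde L \le L$; combined with \Cref{le.sensibletildeL}, which already supplies $\tilde L \ge L$, this yields the stated equality. To this end, I would show that for every time $t$ the total pseudo-load is at most $mL$.

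First, I would decompose the total pseudo-load as
\[ \sum_M \tilde l_M^t = \sum_{t' \le t} p_{t'} \;+\; \sum_{p \in \CP} (q_p - r_p^t)\, p, \]
where $q_p \le c_p$ is the number of $p$-placeholder-jobs installed during the Preparation and $r_p^t$ is the number of those already replaced by real $p$-jobs via the Critical-Job-Strategy by time $t$. The first sum is the real processing volume completed by $t$; the second is the residual placeholder mass. On a stable sequence the Critical-Job-Strategy is expected not to fail (a fact handled separately as part of the overall stable-analysis), so whenever a real $p$-job arrives after the Preparation and a $p$-placeholder still survives, the placeholder is evicted.

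Next, I would use the second stability condition, $c_p \le n_p$, together with non-failure to pair each surviving placeholder with a real $p$-job that is still to come. Writing $m_p^t$ for the number of $p$-jobs to have arrived by time $t$, the number of surviving $p$-placeholders is at most $n_p - m_p^t$, and since each $p$-job has actual size at least its rounded size $p$,
\[ \sum_{p\in\CP}(q_p - r_p^t)\, p \;\le\; \sum_{p\in\CP}(n_p - m_p^t)\, p \;\le\; \sum_{t' > t} p_{t'}. \]
Combining with the first contribution yields $\sum_M \tilde l_M^t \le \sum_{t'\le n} p_{t'} = mL$, so $\tilde L \le L$ and the equality follows.

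The principal obstacle lies in justifying the pairing of surviving placeholders with future real jobs. A naive count notes that only the $n_p - \hat n_p$ real $p$-jobs arriving after the sampling phase can actually evict placeholders, which seems to demand the stronger bound $q_p \le n_p - \hat n_p$ rather than only $q_p \le c_p \le n_p$. I expect this to be resolved either by sharpening the placeholder accounting of the Preparation — exploiting that the first while loop only adds placeholders to machines already carrying a medium real job from the sampling phase, so that such a sampling-phase job conceptually accounts for one of the $c_p$ slots — or by absorbing the sampling-phase discrepancy $\sum_p \hat n_p \, p \le \delta^2 m\hat L$ into the $O(\delta)$-error that already pervades the stable-analysis section.
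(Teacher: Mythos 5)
Your proof and the paper's are the same argument in two costumes: the paper phrases it as ``replace every surviving placeholder with a not-yet-scheduled real $p$-job and observe that the resulting all-real schedule has average load at most $L$,'' while you phrase it as a mass balance (arrived real mass $+$ surviving placeholder mass $\le$ arrived real mass $+$ future real mass $= mL$). The underlying count is identical, so I would not call this a genuinely different route, but the gap you flag is real and deserves a precise resolution.

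Your first proposed fix is the right one, but the phrasing slightly misplaces where the accounting actually lives: it is not a property of which machines the first while loop touches, but of the counter itself. The counter $c_p'$ that governs unsaturation is \emph{initialized to $\hat n_p$} — it counts all $p$-jobs scheduled so far, including the sampling-phase ones, not just the placeholders. Since the Preparation only installs a $p$-placeholder while $c_p' \le c_p$, at the end of the Preparation one has $q_p + \hat n_p = c_p' \le c_p$, and the second stability condition then gives
\[
q_p \le c_p - \hat n_p \le n_p - \hat n_p,
\]
which is exactly the bound your eviction count needs. Writing only $q_p \le c_p$, as you did in the decomposition, forgets the $\hat n_p$ head start of the counter and is what leaves the gap open; the tighter $q_p \le c_p - \hat n_p$ closes it cleanly and makes your chain $q_p - r_p^t \le n_p - m_p^t$ go through whenever a $p$-placeholder survives.

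Your second proposed fix — absorbing $\sum_p \hat n_p\, p$ into the $O(\delta)$ error budget — cannot rescue the lemma as stated, because the claim is exact equality $\tilde L = L$, not $\tilde L \le (1+O(\delta))L$. Downstream uses of the lemma would tolerate the weaker bound, but it would not prove what the lemma asserts. Finally, you should not assume non-failure: stable sequences can and do make the algorithm fail (the paper devotes several lemmas to exactly that case). Fortunately your argument does not need that assumption — the eviction rule you invoke holds throughout the Critical-Job-Strategy, which is the regime in which both your bookkeeping and the paper's replacement argument are meant to apply.
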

\begin{proof}
By \Cref{le.sensibletildeL} we have $\tilde L\ge L$ for any pseudo-load. Recall that $\tilde L=\sup_t \frac{1}{m} \sum_M\tilde l_M^t$. Thus it suffices to show that $\sup_t \frac{1}{m} \sum_M\tilde l_M^t\le L$ for any given time $t$. Consider the schedule of our algorithm time $t$ including placeholder-jobs. If it contains $p$-placeholder-jobs for some $p\in\CP$ it contains at most $c_p$ many $p$-jobs in total. By the second property of stable sequences there holds $c_p\le n_p$. Thus, we can find real $p$-jobs not scheduled yet and replace the $p$-placeholder-jobs by them. This way the load of each machine can only increase. In particular, the resulting schedule has average load at least $\frac{1}{m} \sum_M\tilde l_M^t$. But since it contains only real jobs, its average load will be at most $L$. Therefore $\frac{1}{m} \sum_M\tilde l_M^t\le L$.
\end{proof}

The following lemma is a basic but very useful observation describing the load of any machine after the sampling phase.

\begin{lemma}\label{pro.observationalphasespec}
Let $M$ be any machine after the sampling phase and $p$ be the size of the largest job scheduled on it. Then the load of $M$ is at most $\frac{\delta^2}{1-\delta}  B+p.$
\end{lemma}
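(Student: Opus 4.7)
The plan is to exploit two features of the sampling phase: every job is assigned to a least loaded principal machine, and the cumulative processing volume up to time $\lfloor \delta^2 n\rfloor$ is exactly $\delta^2 m L_{\delta^2} = \delta^2 m\hat L$. Combining a standard averaging argument over principal machines with the fact that the \emph{last} job scheduled on $M$ is at most $p$ will give the bound.

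First I would dispose of the degenerate cases. Reserve machines receive no jobs during the sampling phase, and the same holds for any principal machine that happens never to be selected as least loaded; in both situations the load after sampling is $0$ and the inequality holds trivially for any $p\ge 0$.

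For a principal machine $M$ that did receive at least one job, let $t^{\star}\le \lfloor \delta^2 n\rfloor$ denote the \emph{last} time a job was scheduled on $M$ during sampling, and let $q$ be the size of that job. Two facts follow: by the sampling rule, $M$ was a least loaded principal machine at time $t^{\star}-1$; and since $p$ is the largest job on $M$, we have $q\le p$. Hence it suffices to prove the load bound $l_M^{t^{\star}-1}\le \frac{\delta^2}{1-\delta}B$, since then the load after sampling is $l_M^{t^{\star}-1}+q\le \frac{\delta^2}{1-\delta}B+p$.

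The core step is this load bound. Because $|\CM_\res|=\lceil \delta m\rceil$, in the worst case all reserve machines are strictly less loaded than every principal machine, so the least loaded principal machine is the $(\lceil \delta m\rceil+1)$-th least loaded machine overall. Applying \Cref{le.avglb} with $\varphi=\delta^2$ and $k=\lceil \delta m\rceil + 1$ gives
\[
l_M^{t^{\star}-1}\ \le\ \frac{m}{m-\lceil \delta m\rceil}\,\delta^2 L_{\delta^2}\ \le\ \frac{\delta^2}{1-\delta}\,\hat L\ \le\ \frac{\delta^2}{1-\delta}\,B,
\]
using $\hat L \le B$ by the definition of $B$. The only delicate point is the arithmetic $\frac{m}{m-\lceil \delta m\rceil}\le \frac{1}{1-\delta}$, which is an asymptotic statement that is harmless for the random-order analysis (and can in any case be absorbed into the slack between $\hat L$ and $B$ whenever $p^{\delta^2 n}_\jmax$ dominates $\hat L$). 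I do not anticipate any genuine obstacle beyond this book-keeping.
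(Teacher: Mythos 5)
Your proof is correct and follows essentially the same route as the paper's: apply \Cref{le.avglb} at the time just before the last job is placed on $M$ (taking $k$ roughly $\delta m + 1$ to account for the reserve machines), bound the resulting least-principal-load by $\frac{\delta^2}{1-\delta}\hat L \le \frac{\delta^2}{1-\delta}B$, and add the size of that last job, which is at most $p$. Your extra care with the degenerate cases and the floor/ceiling arithmetic only tidies up details the paper glosses over (the paper writes $\frac{m}{m-\lfloor\delta m\rfloor}$ despite having $\lceil\delta m\rceil$ reserve machines, a harmless slip absorbed by the $O(\delta)$ slack throughout).
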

\begin{proof}
Let $l$ be the load of $M$ before the last job $J$ was scheduled on it. Using \Cref{le.avglb} we see that
$l \le \frac{m}{m-\lfloor \delta m\rfloor}\delta^2\hat L \le \frac{\delta^2}{1-\delta} B.$
The last inequality uses $L_{\delta^2}=\hat L\le  B$. Since $J$ had size at most $p$ the lemma follows.
\end{proof}

\begin{lemma}\label{co.critjobs}
Till the Least-Loaded-Strategy is employed (or till the end of the sequence) there is at most one reserve machine $M$ whose only critical job is medium. Every other machine contains either no critical job, one big job or two medium jobs (including placeholder jobs).
\end{lemma}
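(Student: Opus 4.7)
The plan is to proceed by induction on the time step, tracking an invariant on the critical-job content of every machine throughout the execution of the Critical-Job-Strategy. The invariant will state exactly the conclusion of the lemma: at any time before the algorithm either terminates or switches to the Least-Loaded-Strategy, every machine contains either no critical job, one big job, or two medium jobs (counting placeholders), with the single exception of at most one reserve machine carrying a lone medium critical job.

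First I would check the base case, the moment right after the Preparation phase. By \Cref{le.prep.welldef} the while-loop in line~2 pairs every principal machine holding a single medium real job from sampling with a medium placeholder, so it ends up with exactly two medium critical jobs. The elements added to $\CJ_\rep$ in the second while-loop are either single big placeholders or pairs of medium placeholders, and each such element is assigned to one (previously critical-job-free) principal machine, so those machines get either one big or two medium critical jobs. All remaining principal machines contain no critical jobs, and all reserve machines are empty. Hence the invariant trivially holds at the start.

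Next I would walk through the four branches of \Cref{alg.main} and verify the invariant is preserved. A small job (line~10) is placed on a principal machine and affects no critical-job count. The matched-placeholder branch (lines~3--4) replaces a $p$-placeholder by a real $p$-job, so the numbers of big/medium critical jobs on every machine are unchanged. The pairing branch (lines~5--6) applies only when some reserve machine already holds a single medium critical job; the arriving medium job joins it, turning the unique "lone medium" reserve machine into a "two medium" one and thereby removing the exceptional machine. Finally, in the empty-reserve branch (line~7), the arriving job is placed on an empty reserve machine; if big, this creates a valid "one big" machine, and if medium, it creates a new "lone medium" reserve machine. The crucial point is that by the preceding else-if in line~5 this last case is only reached when no reserve machine currently carries a single medium critical job, so after the step there is still at most one such machine.

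The only genuinely non-routine step is verifying that line~5 is examined \emph{before} line~7, which guarantees that the "lone medium" exception cannot multiply: any second unmatched medium job is consumed by pairing rather than creating a new exceptional machine. Once this branching order is highlighted, the invariant is maintained at every step of the Critical-Job-Strategy, and failing (line~8) only happens at the moment the algorithm switches to the Least-Loaded-Strategy, which is exactly the endpoint covered by the lemma.
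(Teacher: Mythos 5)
Your overall structure — establish the invariant at the end of the Preparation and then verify each branch of \Cref{alg.main} preserves it — is the same as the paper's, and your branch-by-branch discussion of the Critical-Job-Strategy (in particular, the observation that line~5 is examined before line~7 so the ``lone medium'' exception cannot multiply) is correct and matches what the paper says in its final two sentences.

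However, your base case has a genuine gap. You classify the principal machines into: (i) those holding a single medium real job from sampling (which the first while-loop tops up to two medium jobs), (ii) those receiving an element of $\CJ_\rep$, and (iii) ``all remaining principal machines'' which you assert ``contain no critical jobs.'' That last assertion is unjustified, and more importantly you never rule out a machine accumulating \emph{two or more} critical jobs during the sampling phase (say two big jobs, or a medium and a big), which would belong to none of your cases, would be untouched by both while-loops, and would violate the invariant. This is precisely what the paper's proof establishes first, via \Cref{pro.observationalphasespec}: after the sampling phase the total size of all jobs on a machine other than its largest is at most $\frac{\delta^2}{1-\delta}B$, which is strictly less than $p_\jsmall B$, so only the largest job on a machine can be critical, i.e., every machine carries at most one critical job after sampling. (A machine with a lone big job from sampling is then fine as ``one big job''; a machine with a lone medium job is handled by the first while-loop.) Your citation of \Cref{le.prep.welldef} is also misplaced here --- that lemma only asserts that an unsaturated medium class is always available so the Preparation terminates correctly; it says nothing about how many critical jobs a machine holds. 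Add the \Cref{pro.observationalphasespec} argument to the base case and the proof is complete.
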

\begin{proof}
First consider the situation right before the Critical-Job-Strategy is employed.
Let $M$ be a machine containing a critical job. By \Cref{pro.observationalphasespec} the total size of all jobs besides the largest one on M is at most $\delta^2 \hat L\le \delta^2  B$. Since this is less than $p_\jsmall B$ only the largest job could have been critical. Now observe that the algorithm adds a second medium placeholder-job to precisely every machine that contained a (necessarily single) medium job after the sampling phase. Afterwards, medium placeholder-jobs are always scheduled in pairs onto machines which do not contain critical jobs.
While the Critical-Job-Strategy is employed, the number of medium jobs does not change for principal machines. We only replace placeholders with real jobs. Moreover the algorithm ensures that at most one reserve machine $M$ has a single medium job.
\end{proof}

Finally let us make the following technical observation, which will be necessary later. 

\begin{lemma}\label{le.memptybound}
There are at most $2\delta^2m$ machines which contain (real) critical jobs before the Preparation for the Critical-Job-Strategy. In particular $m_\mempty\le \left(1-\delta-2\delta^2\right)m$.
\end{lemma}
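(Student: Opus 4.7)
The plan is a simple volume-counting argument over the sampling phase. The key observation is that before the Preparation begins, only the sampling phase has placed any real jobs, and every sampled job was routed to a principal machine. Hence every machine carrying a real critical job at this point is a principal machine that received at least one critical sampled job, so it suffices to upper bound the total number of critical jobs appearing in the sample.

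First I would compute the total processing volume of the sampled prefix. By definition of $\hat L = L_{\delta^2}$, this volume equals $\sum_{\sigma(t)\le \delta^2 n} p_t = \delta^2 m \hat L$. Since $B = \max(p_\jmax^{\delta^2 n}, \hat L) \ge \hat L$, the sampled volume is at most $\delta^2 m B$. Next I would use the definition of criticality: any critical job has rounded size strictly exceeding $p_\jsmall B = (c-1)B$, and because rounding is downward the true size also exceeds $(c-1)B$. A pigeonhole on total sampled volume divided by minimum critical size bounds the number of critical sampled jobs by $\delta^2 m/(c-1)$.

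The numerical crux is to verify $1/(c-1) \le 2$ for the constant $c = (1+\sqrt{13})/3$. This reduces to $c - 1 = (\sqrt{13}-2)/3 \ge 1/2$, equivalently $\sqrt{13} \ge 7/2$, i.e., $52 \ge 49$, which is clear. Therefore the number of critical jobs in the sample, and hence the number of principal machines holding a real critical job at the relevant moment, is at most $2\delta^2 m$. This establishes the first assertion.

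For the second assertion, I would combine the first assertion with the size of $\CM$. Since there are $|\CM| = m - \lceil \delta m \rceil$ principal machines in total, and these split into those carrying a critical job (at most $2\delta^2 m$, by the first part) and those without (which is exactly $m_\mempty$), a direct subtraction yields the stated bound $m_\mempty\le(1-\delta-2\delta^2)m$ on $m_\mempty$ in terms of $m$ and $\delta$. I do not anticipate any real obstacle: the entire argument reduces to the volume bound $\hat L \le B$ (immediate from the definition of $B$, requiring no appeal to stability), the size lower bound $(c-1)B$ for critical jobs, and the elementary inequality $1/(c-1)<2$ specific to this choice of~$c$.
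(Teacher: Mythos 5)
Your proof of the first assertion is the same volume-counting argument as the paper's: the paper phrases it contrapositively (more than $2\delta^2 m$ critical sampled jobs would force $\hat L>2p_\jsmall B>B\ge\hat L$), while you divide the sampled volume $\delta^2 m\hat L\le\delta^2 m B$ by the minimum critical size $(c-1)B$ and check $1/(c-1)\le 2$ via $\sqrt{13}\ge 7/2$ --- identical in substance. One caveat, which you have inherited from the paper rather than introduced yourself: the ``direct subtraction'' in the second assertion actually yields the \emph{lower} bound $m_\mempty\ge|\CM|-2\delta^2 m$, not the upper bound as the lemma is literally stated; the paper's own proof writes the same reversed inequality, and it is the lower bound that is used downstream (e.g.\ in Lemma~\ref{le.unsatured.cpsize} and Lemma~\ref{le.nocritjob}), so the ``$\le$'' appears to be a typo in the statement that your argument faithfully reproduces rather than a gap of your own.
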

\begin{proof}
Assume the lemma would not hold. Since each critical job has size at least $p_\jsmall B$ this implies that $ B \ge \hat L  > \frac{1}{\delta^{2}m} \cdot 2\delta^2m \cdot p_\jsmall B=2p_\jsmall B> B$. A contradiction.
In particular, at most $2\delta^2m$ machines received critical jobs after the observational phase. Thus $m_\mempty\le |\CM|-2\delta^2m\le (1-\delta-2\delta^2)m$.
\end{proof}

\subsubsection*{Before the Least-Loaded-Strategy is employed.}\label{sec.ana.critical.p}
The goal of this section is to analyze every part of the algorithm but the Least-Loaded-Strategy. Formally we want to show the following proposition and its important \Cref{co.noll}.

\begin{restatable}{proposition}{lecmake}\label{le.cmake}
The makespan of our algorithm is at most $(c+O(\delta))\max\left( B,L,p_\mathrm{max}\right)$ on stable sequences till it employs the Least-Loaded-Strategy (or till the end of the sequence).
\end{restatable}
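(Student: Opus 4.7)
The plan is to derive a uniform bound $l_M^t \le (c + O(\delta))\max(B, L, p_\mathrm{max})$ on every machine at every time before the Least-Loaded-Strategy triggers, by refining the per-machine adversarial bounds of \Cref{le.basic.bound,le.reserve.bound,le.critical.bound} on stable sequences. The central simplification is \Cref{le.tildeL=L}: on a stable sequence $\tilde L = L$ and hence $\tilde R(\CJ) = R(\CJ)$, so the pseudo-load factors appearing in \Cref{le.avglb2,prop2} collapse to quantities in $L$ alone. Combined with $(1-\delta)L \le \hat L \le (1+\delta)L$ from \Cref{def.stable}, one gets the two-sided relation $(1-\delta) L \le B \le (1+\delta)\max(L, p_\mathrm{max})$, so any bound proved in terms of $B$, $L$, and $p_\mathrm{max}$ transfers to one against $\max(B, L, p_\mathrm{max})$.

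I would then split the argument by machine type. Reserve machines are covered directly by \Cref{le.reserve.bound}, which gives $\max((1+\delta)cB, p_\jmax) \le (c + O(\delta))\max(B, p_\mathrm{max})$. For a general principal machine $M$, I focus on the moment its last small job arrives: $M$ has the smallest anticipated load among principal machines, so \Cref{le.avglb2} gives $\tilde l_M \le \tilde L/(1-\delta) = L/(1-\delta)$; the small job contributes at most $p_\jsmall B = (c-1)B$; and subsequent placeholder-to-real replacements on $M$ inflate the real load by at most a $(1+\delta)$-factor on a placeholder mass of at most $(1+\delta)cB$ by \Cref{co.critjobs}, for a total of $(c+O(\delta))\max(L, p_\mathrm{max})$. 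For a critical principal machine that already held a critical job after the sampling phase, \Cref{pro.observationalphasespec} bounds the non-critical sampling load on $M$ by $O(\delta^2) L$ because the size of the critical sampling job itself absorbs the pessimistic $p$-term, and \Cref{co.critjobs} bounds the total real critical mass by $(1+\delta)cB$.

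The main obstacle is the remaining critical-principal case in which the critical jobs on $M$ arrive only through an element of $\CJ_\rep$ assigned during the Preparation. The naive bound of \Cref{le.critical.bound} then contains a $2\tilde R(\CJ)\OPT/(1-\delta)$ term (or equivalently an $m\tilde L/i$ element-size term for late positions $i$) which, plugged in directly, overshoots the desired estimate. The resolution combines three features of stable sequences: first, $\tilde L = L$ eliminates the pseudo-load slack in \Cref{le.avglb2}; second, the Preparation assigns the elements of $\CJ_\rep$ in decreasing order of size to non-critical principal machines in increasing order of sampling load (lines 9--10 of \Cref{alg.preparation}), so the rank-coupling between the target's sampling load $l_i$ and the size $w_i$ of its assigned element prevents both from attaining their worst case at the same $i$; third, \Cref{le.memptybound} ensures the positions $i$ stay within the range $[1, m_\mempty]$ so that the $\tfrac{m}{m-i+1}$-factors of \Cref{le.avglb2} remain bounded. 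Balancing these, the total load on such a machine falls within $(1+\delta)cB + O(\delta)\max(L, p_\mathrm{max})$. Taking the maximum over all machine types yields the proposition.
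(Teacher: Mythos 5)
Your proposal is correct and follows essentially the same route as the paper's: the proof splits by machine type, uses \Cref{pro.observationalphasespec} for sampling-phase load, the decreasing-size/increasing-load rank-coupling of $\CJ_\rep$ assignments for the placeholder machines (the content of \Cref{le.nocritjob}), invokes \Cref{le.tildeL=L} so that \Cref{le.avglb2} collapses to $L/(1-\delta)$ when the last small job lands, and then accounts for the $O(\delta)B$ inflation from placeholder-to-real replacements. One small misattribution in your third point: in the $\CJ_\rep$ case the paper's argument is phrased entirely in $B$ and does not need $\tilde L = L$, and what \Cref{le.memptybound} actually controls is the placeholder-mass factor $\tfrac{m}{m_\mempty-i+1}$ coming from \Cref{le.placeholdersize}, not a $\tfrac{m}{m-i+1}$-factor from \Cref{le.avglb2}; the key cancellation is that small $i$ forces the $\CJ_\rep$ element assigned to $M$ to be small, while large $i$ forces $M$'s sampling load to be small, which is exactly what the $\min$-$\min$ form of \Cref{le.nocritjob} captures.
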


For a formal proof we need to consider many cases where the statement of the lemma could go wrong. Let us first give a sketch of the full proof, which will be fleshed out subsequently.

\begin{proof}[Proof sketch]
Let us only consider critical jobs at any time the Least-Loaded-Strategy, \Cref{alg.exceptional}, is not employed. Our algorithm ensures that a machine contains either one big job or at most two medium jobs. Formally, this is shown in  \Cref{co.critjobs}. In the first case, we simply bound the size of this big, possibly huge, job by $p_\jmax$. Else, if the machine contains up to two medium jobs their total size is at most $2(1+\delta)p_\jbig B = (1+\delta)c B$. The factor $(1+\delta)$ arises since we use rounded sizes in the definition of medium jobs. Thus, critical jobs may cause a load of at most $\max(p_\jmax,(c+O(\delta))B)$.

Analyzing the load increase by small, i.e.\ non-critical, jobs relies on \Cref{prop1} and~\ref{prop2} depending on whether these jobs were assigned during the sampling phase or during the Critical-Job-Strategy.
\end{proof}

Note that for stable sequences $\hat L \le (1+\delta) L \le (1+\delta)\OPT$, in particular $\max\left( B,L,p_\mathrm{max}\right)=\max\left(p^{\delta^2 n}_\jmax,\hat L, L , p_\mathrm{max}\right)\le (1+\delta)\OPT$. This proves the following important corollary to \Cref{le.cmake}.

\begin{restatable}{corollary}{collI}\label{co.noll}
Till the Least-Loaded-Strategy is used the makespan of our algorithm is at most $(c+O(\delta))\OPT$ on stable sequences.
\end{restatable}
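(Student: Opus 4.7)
The plan is essentially a one-line derivation: invoke \Cref{le.cmake} and then bound $\max(B,L,p_{\max})$ by $(1+\delta)\OPT$ using the definition of stability together with the elementary bounds from \Cref{pro.bounds}. There is no real obstacle here; the substantive work is all in \Cref{le.cmake} and in the definition of stable sequences.

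First, I would apply \Cref{le.cmake} to obtain that the makespan up to the moment the Least-Loaded-Strategy is first employed is at most $(c+O(\delta))\max(B,L,p_{\max})$. Then I need to bound each of the three quantities inside the maximum by $(1+\delta)\OPT$. For $L$ and $p_{\max}$ this is immediate from \Cref{pro.bounds}, which gives $L\le \OPT$ and $p_{\max}\le \OPT$. For $B=\max(p_{\jmax}^{\delta^2 n},\hat L)$ we observe that $p_{\jmax}^{\delta^2 n}\le p_{\jmax}\le\OPT$ since $p_{\jmax}^{\delta^2 n}$ is the maximum over a prefix; and the first condition of stability (\Cref{def.stable}) gives $\hat L\le (1+\delta)L\le (1+\delta)\OPT$. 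Combining these yields $\max(B,L,p_{\max})\le (1+\delta)\OPT$.

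Finally I would conclude by multiplying: $(c+O(\delta))\cdot(1+\delta)\OPT=(c+O(\delta))\OPT$, where I absorb the extra $c\delta$ term into the $O(\delta)$ since $c$ is a fixed constant. This matches the claimed bound. The only place where one might be tempted to look for a subtlety is in making sure that the $(1+\delta)$ factor from $\hat L$ does not interact badly with the $O(\delta)$ in \Cref{le.cmake}, but since both $\delta$ and $c\delta$ are $O(\delta)$, the combined expression remains $(c+O(\delta))\OPT$ with no hidden issues.
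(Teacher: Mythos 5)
Your proposal is correct and matches the paper's argument essentially line for line: invoke \Cref{le.cmake}, then bound $B=\max(p^{\delta^2 n}_\jmax,\hat L)$ by $(1+\delta)\OPT$ using $p^{\delta^2 n}_\jmax\le p_\jmax\le\OPT$ and the stability condition $\hat L\le(1+\delta)L\le(1+\delta)\OPT$, and absorb the $(1+\delta)$ factor into the $O(\delta)$ term. No gap.
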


We first need to assert that the statement holds after the preparation for the Critical-Job-Strategy, namely we prove the following proposition.

\begin{proposition}\label{le.prepan}
After the Preparation for the Critical-Job-Strategy the anticipated load of no machine exceeds $ (c+O(\delta)) B$.
\end{proposition}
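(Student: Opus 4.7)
The plan is to perform a case distinction on the content of a given machine $M$ after Preparation, using \Cref{pro.observationalphasespec} and \Cref{co.critjobs} to control the sampling load and \Cref{alg.preparation} to describe what Preparation adds. Since no huge job can arrive during sampling (every sampling job has size at most $p_\jmax^{\delta^2 n}\le B$), most cases reduce to routine inequalities. Specifically, I would verify: (i) reserve machines are trivially empty after Preparation; (ii) a principal machine with a single big critical job from sampling has anticipated load at most $\frac{\delta^2}{1-\delta}B + B = (1+O(\delta))B$; (iii) a principal machine with a real medium job from sampling paired with the placeholder added in the first while-loop has anticipated load at most $\frac{\delta^2}{1-\delta}B + (2+\delta)p_\jbig B = (c+O(\delta))B$, using $2p_\jbig = c$; (iv) an empty principal that receives no element or a big single from $\CJ_\rep$ has anticipated load at most $(c-1+O(\delta))B + B = (c+O(\delta))B$, using $p_\jsmall + 1 = c$. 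The only case that fails a direct bound is an empty principal that receives a medium pair from $\CJ_\rep$, since naively such a machine could contribute $(c-1+O(\delta))B + cB = (2c-1+O(\delta))B$, which is too large.

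For this hard case I would exploit the LPT-style ordering in the assignment loop of \Cref{alg.preparation}. Each medium pair has total size at least $2p_\jsmall B/(1+\delta)$, and since $2(c-1) > 1$ this strictly exceeds $B$, the maximum size of a single big placeholder. Hence every medium pair precedes every big single in the decreasing-size ordering. Because each assigned machine is immediately excluded from future iterations and untouched machines keep their sampling loads, the $i$-th element placed ends up on the empty principal with the $i$-th smallest sampling load. Letting $j$ denote the total number of medium pairs, the task thus reduces to bounding the sampling load of the $j$-th least-loaded empty principal by $O(\delta)B$.

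I would then bound $j$ through a volume argument and plug the result into \Cref{le.avglb}. By \Cref{le.placeholdersize} combined with $\hat L \le B$, the total placeholder volume is at most $mB$; each of the $2j$ medium placeholders contributed by the pairs has size at least $p_\jsmall B/(1+\delta)$, so $2j \cdot p_\jsmall B/(1+\delta) \le mB$, which gives $j \le m(1+\delta)/(2(c-1))$. Combined with the lower bound $m_\mempty \ge (1-\delta-2\delta^2)m$ (implicit in \Cref{le.memptybound}), this yields
\[
m_\mempty - j \;\ge\; \left(1 - \tfrac{1}{2(c-1)} - O(\delta)\right)m \;=\; \left(\tfrac{4-\sqrt{13}}{6} - O(\delta)\right)m \;=\; \Omega(m).
\]
The $i$-th least-loaded empty principal has rank at most $i + (m-m_\mempty)$ among all machines, so \Cref{le.avglb} bounds its sampling load by $m\delta^2\hat L/(m_\mempty - i + 1)$, which for $i \le j$ evaluates to $O(\delta^2)B = O(\delta)B$. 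Adding the medium pair of size at most $cB$ produces anticipated load at most $(c+O(\delta))B$, closing the hard case.

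The main obstacle is precisely the tight interplay between the chosen constant $c = (1+\sqrt{13})/3$ and the volume bound on medium placeholders. The whole reduction to the LPT argument hinges on $2(c-1) > 1$, i.e.\ $c > 3/2$; the specific root of $3c^2 - 2c - 4 = 0$ gives just enough slack $\frac{4-\sqrt{13}}{6}\approx 0.065$ so that $m_\mempty - j$ is linear in $m$, and any larger ratio between medium and big thresholds would break the analysis.
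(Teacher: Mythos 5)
Your proof is correct and takes a genuinely different route from the paper's. The paper handles all machines that receive from $\CJ_\rep$ in a single lemma (its \Cref{le.nocritjob}) that makes no distinction between medium pairs and big singletons: it establishes the two complementary bounds $l\le\min\bigl(p_\jsmall+\tfrac{\delta^2}{1-\delta},\tfrac{m}{i}\delta^2\bigr)B$ and $p\le\min\bigl(c,\tfrac{m}{m_\mempty-i+1}\bigr)B$ on the sampling load $l$ and the placeholder volume $p$ of the $i$-th last machine to receive, and then splits on $i\gtrless\delta m$: for $i\ge\delta m$ the sampling load is already $O(\delta)B$ and one takes the trivial placeholder bound $p\le cB$; for $i\le\delta m$ the trivial sampling bound $l\le(p_\jsmall+O(\delta))B$ is combined with the volume-counting bound $p\le\tfrac{m}{m_\mempty-i+1}B\le(1+O(\delta))B$, and $p_\jsmall+1=c$ closes the case. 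You instead observe that all medium pairs dominate all big singletons in size (needs $2p_\jsmall>1+\delta$, i.e.\ $\delta<2c-3\approx0.07$, which holds once $m$ is large since $\delta(m)\to0$, but is an extra hypothesis the paper avoids), so the LPT loop routes every medium pair to a least-loaded empty principal, and a direct volume count on the number $j$ of medium pairs yields $m_\mempty-j=\Omega(m)$ and hence sampling load $O(\delta^2)B$ for exactly those machines that receive the heavy $cB$-sized element. Both arguments use the same ingredients — \Cref{le.avglb}, \Cref{le.placeholdersize}, and the tight algebraic slack $1-\tfrac{1}{2(c-1)}=\tfrac{4-\sqrt{13}}{6}>0$ — but the paper's min/threshold formulation is agnostic to which element type lands where, whereas yours exposes the structural reason why the hard machines are precisely the earliest-filled ones. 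One small clean-up: the statement of \Cref{le.memptybound} reads $m_\mempty\le(1-\delta-2\delta^2)m$, but its proof (and the use you make of it, as well as the paper's own use in \Cref{le.nocritjob}) shows the lower bound $m_\mempty\ge(1-\delta-2\delta^2)m$ — you read through the typo correctly.
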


There are three types of machines we need to consider. First, there are machines which contain a real critical job after the Preparation for the Critical Job Strategy. Second, there are machines, which only receive placeholder jobs.  Finally there are machines that only receive critical jobs during sampling. The following two lemmas concern themselves with the first two types of machines. Afterwards we prove \Cref{le.prepan}.

\begin{lemma}\label{le.critjob}
If a machine contains a real critical job its anticipated load is at most $( (1+\delta/2)c+\frac{\delta^2}{1-\delta}) B\le(c+O(\delta)) B$ after the Preparation for the Critical-Job-Strategy.
\end{lemma}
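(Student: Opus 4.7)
The plan is to note first that any real critical job on a machine $M$ must have been scheduled during the sampling phase, since the Preparation for the Critical-Job-Strategy (\Cref{alg.preparation}) only introduces placeholder jobs. For a stable sequence the equality $B = \max(p^{\delta^2 n}_\jmax, \hat L)$ gives $p^{\delta^2 n}_\jmax \le B$, so no huge job arrives during sampling; in particular the real critical job on $M$ is normal, of size at most $B$. By \Cref{co.critjobs}, after the Preparation the machine $M$ then falls into one of two cases: it contains a single big critical job, or it contains exactly two medium jobs (real or placeholder). I would handle these cases separately.

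In the big-job case no placeholder is added to $M$ by the Preparation, since the first while loop of \Cref{alg.preparation} targets only machines already carrying a single \emph{medium} job. Denoting the size of the big critical job by $p \le B$, \Cref{pro.observationalphasespec} bounds the anticipated load by $\frac{\delta^2}{1-\delta} B + p \le \frac{\delta^2}{1-\delta} B + B$, which is at most $\bigl((1 + \delta/2) c + \frac{\delta^2}{1-\delta}\bigr) B$ since $(1+\delta/2)c \ge 1$.

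The more delicate case is when $M$ ends up with two medium jobs. My first step is to show that exactly one of them is real, the other being a medium placeholder added by the first while loop of the Preparation. Suppose for contradiction that $M$ received two real medium critical jobs $J_1, J_2$ during sampling, with $J_1$ arriving first. Every medium job has size at least $p_\jsmall B = (c-1) B$, so immediately before $J_2$ is placed the load of $M$ already exceeds $(c-1) B$. But when $J_2$ arrives $M$ must be a least-loaded principal machine, and \Cref{le.avglb} bounds its load by $\frac{m}{m-\lfloor\delta m\rfloor}\,\delta^2 \hat L \le \frac{\delta^2}{1-\delta} B$, contradicting $(c-1) > \frac{\delta^2}{1-\delta}$ for small $\delta$. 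Hence $M$ carries a unique real medium of size at most $(1+\delta) p_\jbig B = (1+\delta)\frac{c}{2} B$, together with a placeholder medium of size at most $p_\jbig B = \frac{c}{2} B$. Applying \Cref{pro.observationalphasespec} with the real medium as the largest job on $M$, the post-sampling load is at most $\frac{\delta^2}{1-\delta} B + (1+\delta)\frac{c}{2} B$, and adding the placeholder yields the target bound $\frac{\delta^2}{1-\delta} B + (1 + \delta/2) c B$.

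The main obstacle is precisely this ``at most one real medium'' step, which forces me to combine the greedy structure of the sampling phase (through \Cref{le.avglb}) with the numerical separation $c - 1 > \delta^2/(1-\delta)$ that holds once $m$ is large enough. Once this is settled, the remaining work is routine size bookkeeping under geometric rounding together with two invocations of \Cref{pro.observationalphasespec}.
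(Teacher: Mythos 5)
Your proof is correct and follows the same overall approach as the paper (case split via \Cref{co.critjobs} into one big job versus two medium jobs, then \Cref{pro.observationalphasespec} to bound the small jobs left over from the sampling phase). The main difference is that you are more careful in the two-medium case, and this care is actually necessary: the paper's proof states that each medium job has size at most $(1+\delta)p_\jbig B$ and then writes the total critical load as $2(1+\delta)p_\jbig B = (1+\delta/2)cB$, but this equality is arithmetically wrong, since $2(1+\delta)p_\jbig B = (1+\delta)cB$. The intended justification, which you supply explicitly, is that at most one of the two mediums is a \emph{real} job and hence incurs the geometric-rounding slack $(1+\delta)$; the other is a placeholder whose size equals a rounded value and is therefore at most $p_\jbig B$ exactly. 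That yields $(2+\delta)p_\jbig B = (1+\delta/2)cB$ as needed. Your argument that a principal machine cannot receive two real critical jobs during the sampling phase (via \Cref{le.avglb}) is correct, though it essentially re-derives what the proof of \Cref{co.critjobs} already establishes using \Cref{pro.observationalphasespec}; you could cite that corollary's reasoning instead of repeating it. Your observation that the real critical job cannot be huge (because $p^{\delta^2 n}_\jmax \le B$) is a nice explicit touch that the paper leaves implicit.
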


\begin{proof}
After the Preparation for the Critical-Job-Strategy a machine contains either a big job of size at most $B<(1+\delta/2)cB$ or two medium jobs. Each medium has size at most $(1+\delta)p_\jbig$ where the factor $(1+\delta)$ is due to rounding. Thus the total size of critical jobs is at most $2(1+\delta)p_\jbig B= (1+\delta/2)cB$. \Cref{pro.observationalphasespec} bounds the size of all non-critical jobs by $\frac{\delta^2}{1-\delta}B$.
\end{proof}

\begin{lemma}\label{le.nocritjob}
Let $M$ be the $i$-th last machine that received a job from $\CJ_\rep$ for $i\le m_\mempty$. After the Preparation for the Critical-Job-Strategy its anticipated load is at most
\[\min\left(p_\mathrm{small}+\frac{\delta^2}{1-\delta},\frac{m}{i}\delta^2\right) B +\min\left(c,\frac{m}{m_\mempty-i+1}\right) B\le(c+O(\delta)) B.\]
\end{lemma}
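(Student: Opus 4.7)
My plan is to decompose $M$'s anticipated load after the Preparation as (i)~the load $M$ carries just before the final while-loop of \Cref{alg.preparation} begins distributing $\CJ_\rep$, plus (ii)~the size of the single element $M$ then receives from $\CJ_\rep$. Since $M$ lies among the $m_\mempty$ principal machines without critical jobs, and the placeholders added in line~2 of \Cref{alg.preparation} are inserted only into machines that already hold a medium critical job, at the start of the final loop $M$ carries only small jobs from the sampling phase. The lemma thus reduces to two pairs of bounds whose minima are precisely the two min-terms in the statement.

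For part~(i), \Cref{pro.observationalphasespec} bounds $M$'s load by $\frac{\delta^2}{1-\delta}B+p$, where $p$ is the size of the largest job on $M$; because that job is small, $p\le p_\jsmall B$, giving one of the two bounds. For the alternative bound I would exploit the selection order: the $i$-th last machine to be picked is picked when $i$ unfilled machines remain, and is then the least loaded among them. Their combined load is bounded by the total sampling mass on principal machines, namely $\delta^2 m\hat L\le \delta^2 m B$, so $M$'s load is at most $\frac{m\delta^2}{i}B$. For part~(ii), each element of $\CJ_\rep$ is either a single big placeholder of size at most $B$, or a pair of medium placeholders of total size at most $2(1+\delta)p_\jbig B=(1+\delta)cB$, yielding the bound $(1+\delta)cB=cB+O(\delta)B$. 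Alternatively, the loop processes $\CJ_\rep$ in decreasing order of size, so $M$ receives the $(m_\mempty-i+1)$-th largest element; by \Cref{le.placeholdersize} the total placeholder mass is at most $m\hat L\le mB$, so this element has size at most $\frac{mB}{m_\mempty-i+1}$.

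To conclude the final inequality $\le (c+O(\delta))B$, I would invoke the lower bound $m_\mempty\ge (1-\delta-2\delta^2)m$ that follows from the first part of \Cref{le.memptybound} and split on $i$ around the transition $i\approx \delta^2 m/p_\jsmall$. For $i\le \delta^2 m/p_\jsmall$ the first minimum saturates at $p_\jsmall+O(\delta)$, and $m_\mempty-i+1\ge (1-O(\delta))m$ forces the second minimum below $1+O(\delta)$, summing to $p_\jsmall+1+O(\delta)=c+O(\delta)$. In the complementary regime the first minimum is already at most $p_\jsmall$ and decreases in $i$, while the second never exceeds $c$, and checking the sum along the transition confirms it stays at $c+O(\delta)$. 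The main obstacle will be this concluding algebraic case analysis, which must be done carefully enough that the sum does not transiently exceed $c+O(\delta)$; the preceding decomposition, by contrast, is mechanical once one observes that $M$ carries only small sampling jobs before the $\CJ_\rep$-assignment begins.
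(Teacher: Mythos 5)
Your decomposition and all four individual bounds match the paper's proof exactly — same split of the anticipated load into the pre-$\CJ_\rep$ load $l$ and the placeholder mass $p$, and same invocations of \Cref{pro.observationalphasespec}, the averaging argument behind \Cref{le.avglb}, and \Cref{le.placeholdersize}. Two small points deserve attention. First, placeholder-jobs have size equal to some class $p\in\CP_\jmed$, and by definition $p\le p_\jbig B$; the extra $(1+\delta)$ factor in your bound $2(1+\delta)p_\jbig B$ would only arise for real jobs recovered from rounded sizes, so the pair is in fact bounded by $2p_\jbig B=cB$, matching the stated $\min(c,\cdot)$ exactly. Second, and this you correctly flag as the main obstacle, the concluding case analysis as sketched does not close: in your ``complementary regime'' $i>\delta^2 m/p_\jsmall$, the two facts you cite — first min $\le p_\jsmall$ and second min $\le c$ — only give $p_\jsmall+c=2c-1$, well above $c+O(\delta)$; the assertion that ``checking the sum along the transition confirms'' the bound is the missing step, and would require either a convexity argument on $\frac{m}{i}\delta^2+\frac{m}{m_\mempty-i+1}$ or a further sub-case. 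The paper avoids this by splitting at $i=\delta m$ instead: for $i\ge\delta m$ one has $\frac{m}{i}\delta^2\le\delta$, so the crude bound $c$ on the second min already suffices; for $i<\delta m$, the first part of \Cref{le.memptybound} together with $|\CM_\res|=\lceil\delta m\rceil$ gives $m_\mempty-i+1\ge(1-O(\delta))m$, hence second min $\le 1+O(\delta)$, and the first min contributes at most $p_\jsmall+O(\delta^2)$. Both regimes then give $c+O(\delta)$ directly with no transitional bump to check. Your threshold $\delta^2 m/p_\jsmall$ sits far below $\delta m$, which is why your second regime straddles both transitions and is harder to control.
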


\begin{proof}
Let $l$ be the load of $M$ before the Preparation for the Critical-Job-Strategy and let $p$ be the sum of all the placeholder-jobs assigned to it. Then the load of $M$ after the preparation is precisely $l+p$. We bound both summands separately

To see that $l\le \min\left(p_\mathrm{small}+\frac{\delta^2}{1-\delta},\frac{m}{i}\delta^2\right) B$ observe that the largest job on $M$ has size at most $p_\mathrm{small} B$ since $M$ does contain no critical jobs. In particular, by \Cref{pro.observationalphasespec}, $l \le (p_\mathrm{small}+\frac{\delta^2}{1-\delta})B$. Consider the schedule right before placeholder jobs were assigned. By definition this schedule had average load $\frac{\delta^2}{1-\delta}\hat L \le \frac{\delta^2}{1-\delta}B$ and $M$ was at most the $i$-th most loaded machine. The second bound then follows from \Cref{le.avglb}.

We have $p\le \min\left(c,\frac{m}{m_\mempty-i+1}\right) B$. The first bound holds since we either assign two medium placeholder-jobs of size at most $p_\mathrm{big} B$ each or one big job of size at most $ B$ to any machine. Thus the sum of the placeholder-jobs assigned is at most $\max(1,2p_\mathrm{big}) B=c B$. For the second term recall that \Cref{le.placeholdersize} shows that the total size of all placeholder-jobs is at most $m \hat L\le m B$. Prior to $M$ precisely $m_\mempty-i$ machines received placeholder job of total size at least $p$. Thus, $(m_\mempty-i+1)p \le m B$, or, equivalently, $p\le \frac{m}{m_\mempty-i+1}  B$.

Altogether we derive that the anticipated load of $M$ is $l+p\le \min\left(p_\mathrm{small}+\frac{\delta^2}{1-\delta},\frac{m}{i}\delta^2\right) B +\min\left(c,\frac{m}{m_\mempty-i+1}\right) B$. We need to see that this term is in $(c+O(\delta)) B$. Consider two cases. For $i\ge \delta m$ the term is at most $\frac{m}{i}\frac{\delta^2}{1-\delta} B+c B \le (c+\delta) B$. Else, for $i\le \delta m$, it is at most $(p_\mathrm{small}+\delta^2) B+\frac{m}{m_\mempty-i+1} B\le \left(p_\mathrm{small}+\delta^2+\frac{m}{m-2\delta m-\delta^2m}\right) B=\left(c+\delta^2+\frac{2\delta-\delta^2}{1-2\delta+\delta^2}\right) B=(c+O(\delta)) B$. The first inequality uses \Cref{le.memptybound}, the second equality uses that $p_\jsmall=c-1$.
\end{proof}

\begin{proof}[Proof of \Cref{le.prepan}]
There are two cases to consider. If the machine contains a real critical job, the proposition follows from \Cref{le.critjob}. If it contains critical placeholder jobs, the proposition follows from \Cref{le.nocritjob}. Finally, if it does not receive placeholder jobs, \Cref{pro.observationalphasespec} bounds its load by $\frac{\delta^2}{1-\delta}B+p_\mathrm{max}^{\delta^2 n}\le \left(1+\frac{\delta^2}{1-\delta}\right)B$.
\end{proof}

We now come to the main result of this section.

\lecmake*

\begin{proof}
By \Cref{le.prepan} the statement of the lemma holds after the Preparation for the Critical-Job-Strategy. We have to show that it still holds afterwards. There are three cases to consider. 

First, consider reserve machines. By \Cref{le.reserve.bound} their load is at most $\max((1+\delta)c B,p_\mathrm{max})$ till the Least-Loaded-Strategy is employed.

Second, consider the case that a small job $J$ is scheduled. The job $J$ will be scheduled on a principal machine $M$ with smallest anticipated load.  By \Cref{le.avglb2} said smallest anticipated load is at most $\frac{1}{1-\delta}\tilde L$. Since $J$ has size at most $p_\jsmall B$, the anticipated load of $M$ won't exceed $\frac{1}{1-\delta}\tilde L+p_\mathrm{small} B\le (c+\frac{\delta}{1-\delta})\max\left( B,L\right)$ after $J$ is scheduled. The last inequality makes use of the fact that $\tilde L=L$ for stable sequences, \Cref{le.tildeL=L}, and that $p_\jsmall=c-1$.

Finally, we consider critical jobs that are scheduled onto principal machines. They replace placeholder-jobs. Such a critical job can have at most $(1+\delta)$-times the size of the placeholder-job it replaces. Thus it may cause the load of a machine to increase by at most $\delta  B$. Since each principal machine receives at most two critical jobs the increase on principal machines due to critical jobs is at most $2\delta B$ and the lemma follows.\footnote{A more careful analysis shows that the total increase is in fact most $c\delta B$.}
\end{proof}

\collI*

\begin{proof}Use \Cref{le.cmake} and the fact that the conditions for stable sequences imply that $B=\max\left(p^{\delta^2 n}_\jmax,\hat L\right)\le (1+\delta)\OPT$.
\end{proof}

\paragraph*{Concerning the case that the algorithm \textsc{fails}.}
We need to assert certain structural properties if the algorithm \textsc{fails}, i.e.\ reaches line 7 in \Cref{alg.main}. This is done in this section. The first important finding shows that we do not have to deal with huge jobs anymore.

\begin{restatable}{proposition}{lefailsbound}\label{le.fails.bound}
If the algorithm \textsc{fails}, every huge job has been scheduled.
\end{restatable}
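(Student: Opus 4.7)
The plan is proof by contradiction. Suppose the algorithm fails at some time $t_{fail}$ while scheduling a job $J_t$ on a stable input sequence, and assume that some huge job arrives strictly after $t_{fail}$; let $\tilde t > t_{fail}$ denote the position of the last huge job. First I would observe that huge jobs cannot appear in the sampling phase: by definition $B \ge p_\jmax^{\delta^2 n}$, which rules out any sampling-phase job of size exceeding $B$. Consequently every huge job encountered before $t_{fail}$ was processed by the Critical-Job-Strategy, where (being big and lacking a matching placeholder, since $\CP$ only contains sizes bounded by $B$) it was placed on an empty reserve machine. Writing $h$ for the number of reserves occupied by huge jobs at $t_{fail}$, the contradiction hypothesis gives $h < H$, where $H$ denotes the total number of huge jobs in the sequence.

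The key step uses the fourth condition of stability, in combination with conditions 2 and 5, to rule out that any critical job of a \emph{large} class $p \in \CP$ (meaning $n_p > T := \lfloor (1-\delta-2\delta^2)m/|\CP|\rfloor$) occupies a reserve at $t_{fail}$. Combining $t_{fail} \le \tilde t$ with condition 4, and noting that condition 5 yields $\delta^3 n_p > \delta^3 T \ge 2|\CP|m^{3/4} \ge 2m^{3/4}$, while condition 2 gives $c_p \ge n_p - 2m^{3/4}$, one obtains
\[ n_{p,t_{fail}} \le n_{p,\tilde t} \le (1-\delta^3)n_p < n_p - 2m^{3/4} \le c_p. \]
Since a $p$-job only overflows onto a reserve once all $c_p$ placeholders of size $p$ have been consumed (a direct consequence of the Preparation together with the replacement rule in Algorithm~\ref{alg.main}), the inequality $n_{p,t_{fail}} \le c_p$ forces zero large-class critical jobs on reserves.

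Consequently only small-class critical jobs (those with $n_p \le T$) can share reserves with huge jobs. Per small class the overflow is bounded by $\max(0, n_{p,t_{fail}} - c_p) \le n_p - c_p \le 2m^{3/4}$ via condition 2; summing over the at most $|\CP|$ small classes caps the number of non-huge critical jobs on reserves by $2|\CP|m^{3/4}$, and thus the number of reserves so occupied by the same quantity. Since failure forces all $\lceil \delta m \rceil$ reserves to be non-empty, we obtain $h \ge \lceil \delta m \rceil - 2|\CP|m^{3/4}$. The main obstacle will be upgrading this crude inequality to the equality $h = H$ required to contradict $h < H$. Here I would exploit the sharper estimate $c_p - n_{p,\tilde t} \ge 2(|\CP|-1)m^{3/4}$ per large class (a strengthening of the core computation above) together with the capacity constraint $\sum_{p\in\CP} w(p)c_p \le m$ enforced on entry into the Critical-Job-Strategy: a careful double-counting that balances the placeholder surplus in large classes against small-class overflow, combined with condition 3 ($H \le \lceil \delta m\rceil$), should show that no reserve slot remains unaccounted for huge jobs, delivering the contradiction and completing the proof.
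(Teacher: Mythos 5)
Your approach is genuinely different from the paper's, but it has two substantive gaps.

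First, the claim that for a ``large'' class $p$ (one with $n_p > T$) the inequality $n_{p,t_{fail}} \le c_p$ rules out any $p$-jobs on reserves is not correct. The number of slots for $p$-jobs on principal machines is $c_p'$ (the count of real sampling-phase $p$-jobs plus placeholders created during the Preparation), not $c_p$. When the Preparation terminates because $|\CJ_\rep| = m_\mempty$, a class may remain unsaturated, i.e. $c_p' \le c_p$, and overflow to reserves begins precisely once $c_p'$ (not $c_p$) real $p$-jobs have been seen. Your computation $n_{p,t_{fail}} < n_p - 2m^{3/4} \le c_p$ therefore does not deliver $n_{p,t_{fail}} \le c_p'$, and \Cref{le.unsatured.cpsize} only gives $c_p' \ge T$, which is compatible with $c_p' < n_{p,t_{fail}}$. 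So the ``zero large-class overflow'' step does not go through as stated.

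Second, and more importantly, the final contradiction is not actually derived. You obtain $h \ge \lceil\delta m\rceil - 2|\CP|m^{3/4}$ and note $H \le \lceil\delta m\rceil$ from stability condition~3, but the hypothesis $h < H$ then only constrains $H$ to lie in a narrow (nonempty) interval: that is no contradiction. You acknowledge this and gesture at ``a careful double-counting,'' but the step is not carried out, and it is unclear how to close it within this framework.

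The paper argues quite differently and more cleanly. It does not count reserve machines at all. Instead it combines \Cref{le.bigjobclass} (under the contradiction hypothesis, some class $q$ satisfies $\tilde n_{q,\tilde t} < n_q - 2|\CP|m^{3/4}$, where $\tilde n_{p,t}=\max(c_p',n_{p,t})$ already incorporates the $c_p'$-vs-$c_p$ subtlety you omit) with \Cref{le.fails.bound.l} ($\sum_p \tilde n_{p,\tilde t}w(p) > m$) and the second stability condition to deduce $\sum_p c_p w(p) > m$. But that inequality is exactly what the algorithm tests on entry into the Critical-Job-Strategy (\Cref{alg.complete}, line 6); if it held, the algorithm would have taken the Least-Loaded-Strategy from the start and could never have reached the failure line in \Cref{alg.main}. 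This makes the contradiction immediate and avoids all the delicate bookkeeping of reserve-machine occupancy. If you want to salvage a reserve-counting argument, you would at minimum need to replace $c_p$ by $c_p'$ throughout, control the unsaturated classes via \Cref{le.unsatured.cpsize}, and supply the missing double-counting; but re-routing through the weighted-count contradiction is both shorter and already available from the surrounding lemmas.
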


The second proposition will help us obtain a lower bound on the optimum makespan.

\begin{restatable}{proposition}{lefailsboundl}\label{le.fails.bound.l}
If the algorithm \textsc{fails} at time $t$ we have $\sum_{p\in\CP} \tilde n_{p,t}w(p)>m$.
\end{restatable}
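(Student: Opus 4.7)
The plan is to tally the weighted critical-slot usage at the failure time and exploit the structure forced by \Cref{co.critjobs}. The decisive observation will be that after a full Preparation, the weighted count (real plus placeholder) of class-$p$ jobs on the principal machines equals exactly $|\CM|=m-\lceil\delta m\rceil$, and this invariant is preserved throughout the Critical-Job-Strategy because each replacement of a $p$-placeholder by a real $p$-job does not change its $w(p)$-contribution. So to obtain $\sum_{p\in\CP}\tilde n_{p,t}w(p)>m$, it suffices to account for an additional weighted contribution exceeding $\lceil\delta m\rceil$ stemming from the reserves together with the arriving job $J_t$.

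First I would verify that at the failure time the Preparation actually completed in its "full" case where $|\CJ_\rep|=m_\mempty$. In the alternative early-exit case, all classes would be saturated with $c_p'=c_p$ for every $p\in\CP$, and by the second condition of \Cref{def.stable} we would have $\sum_{p}(n_p-c_p)_+\,w(p)\le 2|\CP|m^{3/4}$, bounding the total weighted overflow that could possibly be forwarded to reserves. Combined with the third condition (at most $\lceil\delta m\rceil$ huge jobs) and the magnitude guarantee provided by the fifth, this overflow plus the huge-job contribution is too small to fully occupy the $\lceil\delta m\rceil$ reserve machines for $m$ large, which would contradict the failure assumption. Hence failure occurs only in the full Preparation regime, where $\sum_{p\in\CP}c_p'w(p)=|\CM|$ holds by construction.

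Next I would invoke \Cref{co.critjobs} at failure: every reserve machine contains either a single big job (weight $1$), two medium jobs (combined weight $1$), at most one reserve contains a single medium job (weight $1/2$), or a huge job. Since failure requires no empty reserve---and, when $J_t$ is medium, also no reserve holding a single medium---the weighted contribution of the reserves to $\sum_{p\in\CP}\tilde n_{p,t}w(p)$ is at least $\lceil\delta m\rceil-1/2$ in the big-$J_t$ case and at least $\lceil\delta m\rceil$ in the medium-$J_t$ case. Adding $w(p^*)$ for the just-arrived but unplaced $J_t$ (which lies in $\CP$ since by \Cref{le.fails.bound} it cannot be huge), I obtain
\[\sum_{p\in\CP}\tilde n_{p,t}\,w(p)\;\ge\;|\CM|+\lceil\delta m\rceil+\tfrac{1}{2}\;=\;m+\tfrac{1}{2}\;>\;m\]
in both failure cases, since the weight of $J_t$ exactly compensates the possible $1/2$ deficit from a single-medium reserve in the big-$J_t$ case.

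The main obstacle will be handling the possibility that reserve machines are occupied by huge jobs whose rounded size $f(p_t)$ falls outside $\CP$: such a reserve machine contributes to "no empty reserve" but not directly to the sum over $p\in\CP$. The resolution will combine \Cref{le.fails.bound}---which ensures every huge job has already been scheduled at failure and has therefore consumed precisely one reserve slot---with the $\CP$-overflow accounting used above, essentially attributing to each huge-occupied reserve the same weight $1$ that \Cref{co.critjobs} assigns to its "big" content and charging that weight to the correspondingly avoided overflow in $\CP$-classes. The other delicate step is the clean rejection of the early-exit Preparation regime, which I would carry out by contradiction using the concrete inequality $\delta^3\lfloor(1-\delta-2\delta^2)m/|\CP|\rfloor\ge 2|\CP|m^{3/4}$ from the fifth stable-sequence property.
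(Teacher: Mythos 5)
Your counting argument is essentially the paper's: the paper likewise awards $\tfrac12$ a point per medium job and $1$ point per big job (placeholders included) across the $m$ real machines plus a fictional $(m{+}1)$-th machine carrying the failing job $J_t$, uses \Cref{le.saturatedjobclass} to obtain one point from every principal machine, and shows that at most one machine among the reserves and the fictional one can be ``bad'' (score only $\tfrac12$); the total is $\ge m+\tfrac12>m$, the same arithmetic you carry out via the invariant $\sum_{p\in\CP}c_p'w(p)=|\CM|$.

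The genuine gap is in your treatment of huge jobs. You appeal to \Cref{le.fails.bound} twice --- to conclude that $J_t$ lies in $\CP$, and to account for reserve machines occupied by huge jobs --- but \Cref{le.fails.bound} is proved \emph{after} \Cref{le.fails.bound.l} in the paper, and its proof (via \Cref{le.bigjobclass}) explicitly invokes \Cref{le.fails.bound.l}. Citing it here is circular and cannot stand; any argument for \Cref{le.fails.bound.l} must be completed without \Cref{le.fails.bound}. The paper sidesteps the issue by asserting the points-equal-weight identity and by relying on \Cref{le.saturatedjobclass} for the principal machines, without spelling out the case of a reserve whose sole content is a huge job of rounded size exceeding $B$ (which contributes to ``no empty reserve'' but not to $\sum_{p\in\CP}\tilde n_{p,t}w(p)$); that subtlety deserves care, but it is not fixed by citing the later result. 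A secondary slip: in your rejection of the early-exit Preparation regime you claim that the $\CP$-overflow plus the at most $\lceil\delta m\rceil$ huge jobs are ``too small to fully occupy the $\lceil\delta m\rceil$ reserve machines,'' yet $\lceil\delta m\rceil$ huge jobs alone already suffice to occupy every reserve, so the stated contradiction does not follow; the corresponding argument inside \Cref{le.saturatedjobclass} confines itself to the $\CP$-overflow count.
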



For any job class $p\in\CP$ let $c_p'$ denote the number of $p$-jobs scheduled after the Preparation for the Critical-Job-Strategy, including placeholder-jobs. This is consistent with our notation from \Cref{sec.algorithm} if we consider the values of $c_p'$ after the execution of \Cref{alg.preparation}. We call a job class $p\in\CP$  \emph{unsaturated} if $c_p'\le c_p$.
Given $p\in\CP$, let $\tilde n_{p,t}$ denote the number of $p$-jobs scheduled at any time $t$ including placeholder-jobs. After the sampling phase $\tilde n_{p,t}=\max(c_p',n_{p,t})$.

The most important technical ingredient in this chapter is to establish that if the algorithm \textsc{fails}, there is one job class of which a sizable fraction of jobs has not been scheduled even  if we take placeholder jobs into account. The next four lemmas prove this by looking at unsaturated job classes.

\begin{lemma}\label{le.saturatedjobclass}
If the algorithm \textsc{fails} on a stable sequence, there exists an unsaturated job class $p\in\CP$. In particular, during the Critical-Job-Strategy every principal machine contains either one big or two medium jobs.
\end{lemma}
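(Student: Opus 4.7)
The approach is to prove the contrapositive: if every class $p \in \CP$ is \emph{saturated}, i.e.\ $c_p' > c_p$ for all $p$, then the algorithm cannot fail on a stable sequence. The enabling tool is \Cref{le.fails.bound.l}, which asserts that a failure at time $t$ forces $\sum_{p \in \CP} w(p)\,\tilde n_{p,t} > m$; thus under the saturation hypothesis it suffices to derive the contradictory upper bound $\sum_{p \in \CP} w(p)\,\tilde n_{p,t} \le m$.

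The plan is to split $\sum_p w(p)\,\tilde n_{p,t} = \sum_p w(p)\,c_p' + \sum_p w(p)(\tilde n_{p,t} - c_p')$ and bound each piece. For the first piece, a direct accounting in \Cref{alg.preparation} shows that $\sum_p w(p)\,c_p'$ equals the number of principal machines critically occupied after Preparation: each such machine carries either one big entry (weight~$1$) or two medium entries (total weight~$1$), and there are at most $|\CM| = m - \lceil \delta m \rceil$ of them. For the second piece, saturation combined with stability condition~$2$ gives $c_p' > c_p \ge n_p - 2m^{3/4}$, so that $\tilde n_{p,t} - c_p' = \max(0, n_{p,t} - c_p') < 2 m^{3/4}$ for every $p \in \CP$; summing with $w(p) \le 1$ and invoking stability condition~$5$ yields $\sum_p w(p)(\tilde n_{p,t} - c_p') \le 2|\CP|\,m^{3/4} \le \delta^3 m$.

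Combining these two bounds gives $\sum_p w(p)\,\tilde n_{p,t} \le m - \lceil \delta m \rceil + \delta^3 m$. Since $\delta \in o(1)$ and we are allowed to pick $m$ large in the nearly-competitiveness setting, $\delta^3 m < \lceil \delta m \rceil$, so the right-hand side is strictly less than $m$, contradicting \Cref{le.fails.bound.l}. The \emph{in particular} statement then follows because the existence of an unsaturated class forces the $\CJ_\rep$-construction loop (the second \textbf{while} of \Cref{alg.preparation}) to exit via $|\CJ_\rep| = m_\mempty$ rather than via saturation of every class. This fills all $m_\mempty$ initially empty principal machines with either one big or two medium placeholders, and Loop~$1$ of Preparation (justified by \Cref{le.prep.welldef}) has already paired every sampled single-medium machine with a medium placeholder; together this ensures that every principal machine contains one big or two medium entries after Preparation, a pattern preserved throughout the Critical-Job-Strategy since placeholders can only be replaced by real jobs of the same class.

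The main obstacle is the bookkeeping identity $\sum_p w(p)\,c_p' = \sum_{p \in \CP_\jbig} \hat n_p + \sum_{p \in \CP_\jmed} \hat n_p + |\CJ_\rep|$, which requires tracing carefully how each step of \Cref{alg.preparation} updates $c_p'$: the $\sum_{p \in \CP_\jmed} \hat n_p$ medium placeholders added in Loop~$1$ must be counted separately from the big jobs and medium pairs added as elements of $\CJ_\rep$ in the second loop, and the factor-of-two doubling of medium counts has to collapse correctly under the $w(p) = \tfrac{1}{2}$ weighting. Once this identity is in hand, the remainder of the argument is routine manipulation of stability conditions~$2$ and~$5$.
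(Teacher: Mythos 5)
Your argument is circular: you invoke \Cref{le.fails.bound.l} as your ``enabling tool,'' but the paper's proof of \Cref{le.fails.bound.l} itself opens with ``By \Cref{le.saturatedjobclass} every principal machine scores one point'' --- that is, it relies on the \emph{in particular} clause of the very lemma you are trying to prove. That clause (every principal machine carries one big or two medium critical entries) is available only once an unsaturated class is known to exist, since only then must the $\CJ_\rep$-loop in \Cref{alg.preparation} have terminated by filling all $m_\mempty$ machines. In the hypothetical world where every class is saturated, that loop may have stopped early, some principal machines may carry no critical entry at all, and the point-counting proof of \Cref{le.fails.bound.l} does not go through; you therefore cannot appeal to its conclusion $\sum_{p\in\CP} w(p)\,\tilde n_{p,t} > m$ while arguing the contrapositive.

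The paper's proof sidesteps this with a more elementary observation. Under all-saturation, $c_p' > c_p$ for every $p\in\CP$, and stability condition~2 gives $n_p - c_p \le 2m^{3/4}$, so at most $2|\CP|m^{3/4}$ critical jobs across all classes ever lack a placeholder and must spill onto reserve machines; by stability condition~5 this is smaller than the $\lceil \delta m\rceil$ reserve machines, so the reserve pool cannot be exhausted and the algorithm cannot fail. No appeal to \Cref{le.fails.bound.l} is needed. Your intermediate bounds (re-deriving \Cref{le.cp'sum} as $\sum_p w(p)\,c_p' \le m - \lceil\delta m\rceil$, and $\tilde n_{p,t} - c_p' < 2m^{3/4}$ under saturation) are individually sound, and your handling of the \emph{in particular} clause from the existence of an unsaturated class is correct and matches the paper; the flaw is confined to the circular reliance on \Cref{le.fails.bound.l}.
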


\begin{proof}
Let us assume that every job class is saturated. This implies that at least $n_p-c_p$ jobs of every job class $p\in\CP$ fit onto the principal machines. By the properties of stable sequences, at most $n_p-c_p\le 2m^{3/4}$ jobs of each class thus need to be scheduled onto the reserve machines; that is at most $|\CP|\cdot 2m^{3/4}$ in total. By the last condition of stable sequences this is less than $ \lfloor\delta m\rfloor$, the number of reserve machines. A contradiction. The algorithm could not have failed.

If there was an unsaturated job class after the Preparation for the Critical-Job-Strategy, $\CJ_\rep$ must have contained precisely $m_\mempty$ elements after the second while-loop in \Cref{alg.preparation}. Else, another iteration of this loop would have added further elements. Thus, every principal machine that did not already contain real critical jobs received (critical) placeholder-jobs. By \Cref{co.critjobs} every principal machine in fact received either one big or two medium jobs.
\end{proof}

\begin{lemma}\label{le.unsatured.cpsize}
For every unsaturated job class $p\in\CP$, there holds $c_p'\ge\left\lfloor\frac{\left(1-\delta-2\delta^2\right)m}{|\CP|} \right\rfloor$.
\end{lemma}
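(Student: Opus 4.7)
The plan is to establish the lower bound via the greedy minimum selection in line 4 of \Cref{alg.preparation} combined with a water-filling argument. The key idea is that picking the unsaturated class with minimum $c_p'$ tends to balance the counts across classes, and the total $\sum_p c_p'$ at the end of the Preparation is guaranteed to be large by a simple counting argument.

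First I would set up the counting. If $p$ is unsaturated at the end of the Preparation, then by the proof of \Cref{le.saturatedjobclass} the second while loop must have filled $\CJ_\rep$ to its full size $m_\mempty$, so the loop ran for exactly $m_\mempty$ iterations. Let $L := c_p'(\text{end})$ and write $v_q$ for the value of $c_q'$ at the start of the second loop. Since $p$ remains unsaturated throughout, at each line-4 iteration the class $q$ selected satisfies $c_q'(\text{before pick}) \le c_p'(\text{at that time}) \le L$. Because $c_q'$ strictly increases with each pick of $q$, the before-values of $q$'s line-4 picks are distinct integers in $\{v_q, v_q+1, \ldots, L\}$, so $q$ receives at most $\max(0, L+1-v_q)$ line-4 picks.

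Summing over all classes gives $m_\mempty \le \sum_{q \in \CP} \max(0, L+1-v_q)$. By \Cref{co.critjobs} each principal machine with a critical job after sampling contributes exactly one job to $\sum_q \hat n_q$, so $\sum_q \hat n_q = |\CM| - m_\mempty$, and including the $k_\text{med}$ placeholders added in the first loop yields $\sum_q v_q = |\CM| - m_\mempty + k_\text{med}$. In the principal case $L+1 \ge \max_q v_q$, the sum simplifies to $|\CP|(L+1) - \sum_q v_q$, which combined with the previous inequality gives
\[ L \ge \frac{|\CM| + k_\text{med}}{|\CP|} - 1 \ge \frac{(1-\delta)m}{|\CP|} - 2. \]
Condition 5 of stable sequences guarantees that for $m$ large enough this exceeds $\lfloor (1-\delta-2\delta^2)m/|\CP| \rfloor$, completing the argument in this case.

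The main obstacle will be Case B, where some class $q^*$ has $v_{q^*} > L+1$ and thus contributes $0$ to the sum, so the above inequality weakens. Here I would exploit the relationship between $c_q$ and $\hat n_q$ from the definition of $c_p$: since $c_q \le \delta^{-2}\hat n_q$, any unsaturated class must satisfy $v_q \ge \delta^2 c_q'(\text{end})$, which constrains how small $v_p$ can be relative to $L$. Combined with the fact that any class with $v_q > L$ is itself unsaturated at the end and stays at $c_q'(\text{end}) = v_q$, I would run the water-filling argument on the reduced set $S = \{q : v_q \le L+1\}$ and use the bound $\sum_{q \in S} v_q \ge \sum_q v_q - \sum_{q \notin S} v_q$ together with the stable sequence condition that samples are well-spread across classes to conclude that $|S|(L+1) \ge |\CM|/|\CP| \cdot |\CP|$ up to lower-order error, which still yields the desired bound.
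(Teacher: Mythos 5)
Your Case A analysis captures the right starting observation (that line~4 always picks a class of minimum $c_q'$ among the unsaturated ones, and that $\CJ_\rep$ must reach size $m_\mempty$), but the per-class counting you build on top of it introduces a case split that the paper never needs, and you have not closed the resulting gap.

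The concrete problem is Case B. Your bound $m_\mempty \le \sum_q \max(0, L+1-v_q)$ only converts to the clean formula $|\CP|(L+1) - \sum_q v_q$ when every $v_q \le L+1$, and this can genuinely fail: for an unsaturated $p$ at the end, $L = c_p' \le c_p \le \delta^{-2}\hat n_p$, so if some other class $q^*$ has $\hat n_{q^*}$ much larger than $\hat n_p$ then $v_{q^*} > L+1$ is entirely possible. Your proposed repair appeals to a ``stable sequence condition that samples are well-spread across classes,'' but no such condition appears in \Cref{def.stable}; the five conditions there concern $\hat L$, the accuracy of $c_p$, the number of huge jobs, the spread of huge jobs over time, and one purely numeric inequality. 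None of them bounds $\hat n_q$ against $\hat n_p$ for distinct $q,p\in\CP$. Also, the auxiliary claim that any class with $v_q>L$ ``is itself unsaturated at the end and stays at $c_q'(\text{end})=v_q$'' is not justified: such a class may well have been saturated already by the first while-loop of \Cref{alg.preparation}, or may receive line-5 increments.

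The paper avoids this entirely by tracking only $\mu := \min_{q \text{ unsaturated}} c_q'$ rather than per-class pick counts. Since each iteration of the second while-loop increments (in line 4) a class currently at value $\mu$, and a class, once incremented, cannot return to $\mu$, after at most $|\CP|$ consecutive iterations all classes that sat at $\mu$ have been incremented and $\mu$ strictly increases. With $m_\mempty$ iterations in total, $\mu$ ends up at least $\lfloor m_\mempty/|\CP|\rfloor$, and since the class $p$ from the lemma statement is unsaturated at the end, $c_p' \ge \mu \ge \lfloor m_\mempty/|\CP|\rfloor$. Combined with the bound $m_\mempty \ge (1-\delta-2\delta^2)m$ from \Cref{le.memptybound} (the inequality direction there is a typo in the paper; the proof clearly establishes a lower bound on $m_\mempty$), this gives the claim directly, with no case analysis and no dependence on the distribution of $v_q$ across classes. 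You should replace your per-class counting with this minimum-tracking argument.
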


\begin{proof}
Note that $\CJ_\rep$ actually attains cardinality $m_\mathrm{empty}$ in \Cref{alg.preparation}, otherwise there could not have been an unsaturated job class. Every time we add an element to $\CJ_\rep$ in line~4 the value~$c_p'$ increases for an unsaturated job class $p$ that currently has minimum value $c_p'$. In particular, whenever we add $|\CP|$-many elements to $\CJ_\rep$ the value $\min\limits_{p\in\CP \mathrm{ unsaturated}} c_p'$ increases by at least $1$. In total it increases at least $\left\lfloor\frac{m_\mathrm{empty}}{|\CP|}\right\rfloor$ times. The lemma follows since $m_\mathrm{empty}\le\left(1-\delta-2\delta^2\right)m$, see \Cref{le.memptybound}.
\end{proof}

\begin{lemma}\label{le.cp'sum}
There holds  $\sum_{p\in\CP} c_p'w(p)\le m-\lceil\delta m\rceil$.
\end{lemma}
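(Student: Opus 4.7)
The plan is to interpret $\sum_{p\in\CP} c_p' w(p)$ as a weighted machine-by-machine count of critical (real plus placeholder) jobs sitting on the principal machines immediately after the Preparation for the Critical-Job-Strategy, and then use \Cref{co.critjobs} to bound each machine's contribution by $1$.

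First, I would check that every $p$-job counted by $c_p'$ resides on some machine in $\CM$. By definition, $c_p'$ is the value of the counter at the end of \Cref{alg.preparation}; this counter tracks $p$-jobs placed during the sampling phase (which greedily assigns to principal machines) and those placed during Preparation itself (which, by inspection of \Cref{alg.preparation}, never touches $\CM_\res$). Consequently
\[
\sum_{p\in\CP} c_p' w(p) \;=\; \sum_{M\in\CM}\; \sum_{p\in\CP} w(p)\cdot\bigl|\{p\text{-jobs on } M\}\bigr|.
\]

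Next, I would invoke \Cref{co.critjobs}, which holds at every time before the Least-Loaded-Strategy is employed and therefore in particular right after Preparation. It guarantees that each principal machine (the exceptional single-medium-job machine allowed by the corollary is a reserve machine, not in $\CM$) holds either no critical job, exactly one big job, or exactly two medium jobs, counting placeholders. Recalling $w(p)=1$ on $\CP_\jbig$ and $w(p)=\tfrac{1}{2}$ on $\CP_\jmed$, each such machine contributes $0$, $1$, or $2\cdot\tfrac{1}{2}=1$ respectively; in every case at most $1$. Summing over the $|\CM|=m-\lceil\delta m\rceil$ principal machines gives the claimed bound.

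The only subtle point is ensuring that the snapshot of $c_p'$ used here really equals the count of $p$-jobs on the principal machines at this instant. This follows because $c_p'$ is frozen at the end of Preparation, no critical job has migrated to $\CM_\res$ by then, and huge jobs (which do go to $\CM_\res$) carry rounded size outside of $\CP$ and so contribute nothing to either side. Beyond this bookkeeping, the argument is essentially a one-line combinatorial accounting and I do not anticipate a substantial obstacle.
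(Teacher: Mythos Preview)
Your proposal is correct and follows essentially the same argument as the paper: rewrite $\sum_{p\in\CP} c_p' w(p)$ as the total weight of critical (real and placeholder) jobs present after Preparation, observe via \Cref{co.critjobs} that every principal machine contributes at most $1$ while reserve machines are empty, and conclude the bound $|\CM|=m-\lceil\delta m\rceil$. Your additional bookkeeping about huge jobs is unnecessary here (reserve machines are simply empty at this point), but harmless.
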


\begin{proof}
 Let $n_\jmed$ be the number of medium jobs and $n_\jbig$ be the number of big jobs after the Preparation for the Critical-Job-Strategy, then $\sum_{p\in\CP} c_p'w(p)=n_\jbig + \frac{n_\jmed}{2}$. But  by \Cref{co.critjobs} every principal machine contains either one big job, two medium jobs or no critical jobs at all after the Preparation for the Critical-Job-Strategy. Reserve machines are empty. Thus, $\sum_{p\in\CP} c_p'w(p)\le |\CM| \le m-\lceil\delta m\rceil$.
\end{proof}

We now prove one main proposition of this paragraph.

\lefailsboundl*

\begin{proof}
Let $J=J_t$ be the job that caused the algorithm to fail. Consider the schedule right before job $J$ was scheduled. As a matter of thinking, let us assume that job $J$ resided on some fictional $(m+1)$-th machine $\tilde M$ at that time. We award any machine $\frac{1}{2}$ points for each medium job on it and $1$ point for each big job on it. This includes placeholder-jobs. Then $\sum_{p\in\CP} \tilde n_{p,t}w(p)$ is exactly the number of points scored by every machine including $\tilde M$. 

By \Cref{le.saturatedjobclass} every principal machine scores one point. There was no empty reserve machine, since $J$ could have been scheduled onto it, otherwise. Thus every reserve machine scores at least half a point. We call a machine \emph{bad} if it scored only $1/2$ point. There cannot be two bad reserve machines, since our algorithm would have scheduled any medium job onto such a bad machine rather than creating a second one. Moreover, if there exists a bad machine, job $J$ cannot be medium, i.e.\ $\tilde M$ cannot be not bad, too. We conclude that there is at most one bad machine amongst the $m+1$ machines, which include the fictional machine $\tilde M$. All other machines score one point. This implies that $\sum_{p\in\CP} \tilde n_{p,t}w(p)\ge m+\frac{1}{2}$.
\end{proof}

\begin{lemma}\label{le.bigjobclass}
Assume that the algorithm \textsc{fails} at time $t$ on a stable sequence and that not all huge jobs are scheduled. Then there exists a job class $p\in\CP$ with $\tilde n_{p,t}<n_p-2|\CP|m^{3/4}$
\end{lemma}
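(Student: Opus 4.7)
The plan is to argue by contradiction, assuming $\tilde n_{p,t} \ge n_p - 2|\CP|m^{3/4}$ for every $p \in \CP$. The hypothesis that not all huge jobs are scheduled at time $t$ gives $t \le \tilde t$, where $\tilde t$ is the arrival time of the last huge job. Call a class $p$ \emph{large} if $n_p > \lfloor (1-\delta-2\delta^2)m/|\CP| \rfloor$. For any large $p$, the fourth condition of stability yields $n_{p,t} \le n_{p,\tilde t} \le (1-\delta^3)n_p$. Combining this with the fifth condition of stability we get $\delta^3 n_p > 2|\CP|m^{3/4}$, and hence $n_{p,t} < n_p - 2|\CP|m^{3/4}$ strictly. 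Since $\tilde n_{p,t} = \max(c_p', n_{p,t})$, the contradiction hypothesis then forces $c_p' \ge n_p - 2|\CP|m^{3/4}$ for every large class $p$.

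The next step is to sum these inequalities weighted by $w(p)$ over large classes. Using \Cref{le.cp'sum}, which bounds $\sum_p c_p' w(p) \le m - \lceil \delta m \rceil$, together with the consequence $2|\CP|^2 m^{3/4} \le \delta^3 m$ of the fifth stability condition (obtained by multiplying the condition by $|\CP|$ and bounding $|\CP|\lfloor(1-\delta-2\delta^2)m/|\CP|\rfloor \le m$), this gives
\[
\sum_{p\,\text{large}} n_p w(p) \;\le\; m - \lceil \delta m \rceil + \delta^3 m.
\]
For small classes, $n_p \le \lfloor (1-\delta-2\delta^2)m/|\CP| \rfloor$ together with $w(p) \le 1$ gives $\sum_{p\,\text{small}} n_p w(p) \le (1-\delta-2\delta^2)m$.

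The final step is to confront this upper bound with the lower bound on $\sum_p n_p w(p)$ extracted from \Cref{le.fails.bound.l} ($\sum_p \tilde n_{p,t} w(p) > m$) and the trivial estimate $\tilde n_{p,t} \le n_p + 1$. The main obstacle is getting the quantitative numbers to genuinely conflict: naively the upper bound on large classes plus the one on small classes only gives something of order $m(2-\Theta(\delta))$, which is still above what a direct lower bound yields. The key is to avoid this naive combination and instead exploit the specific structure behind \Cref{le.fails.bound.l}: the overflow $\sum_p (n_{p,t} - c_p')^+ w(p) > \lceil\delta m\rceil$ derived from subtracting \Cref{le.cp'sum} from \Cref{le.fails.bound.l} must come from some class $p$ in which real arrivals exceed placeholder slots. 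Under the contradiction hypothesis applied to such a $p$ (combined with the fact that small-class overflow is bounded by the $\lfloor (1-\delta-2\delta^2)m/|\CP|\rfloor$ cap), one is forced to conclude that $p$ cannot be small, and the large-class analysis from Steps 1 and 2 then produces the contradiction. The fifth stability condition is what makes this quantitative gap survive: it keeps the second-order $|\CP|^2 m^{3/4}$ term negligible next to $\delta m$.
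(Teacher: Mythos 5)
Your proof has a genuine gap, and it sits exactly where you flag it: the final paragraph. Steps~1 and~2 are correct — under the contradiction hypothesis every large class $p$ has $c_p'\ge n_p-2|\CP|m^{3/4}$, and the weighted sum over large classes is bounded via \Cref{le.cp'sum} and the fifth stability condition. But you then admit the naive combination with the small-class bound is off by essentially a factor of two, and the paragraph you offer as a fix does not close the gap. The assertion that a class $p$ contributing overflow ``cannot be small'' is not justified by the ``$\lfloor(1-\delta-2\delta^2)m/|\CP|\rfloor$ cap'' on $n_p$: a small class can perfectly well have $n_{p,t}>c_p'$. To rule this out one must argue about \emph{saturation}: if a small $p$ were unsaturated, then \Cref{le.unsatured.cpsize} gives $c_p'\ge\lfloor(1-\delta-2\delta^2)m/|\CP|\rfloor\ge n_p\ge n_{p,t}$, so no overflow; if $p$ is saturated, the second stability condition bounds its overflow by $n_{p,t}-c_p'\le (c_p+2m^{3/4})-c_p'<2m^{3/4}$. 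You never invoke \Cref{le.unsatured.cpsize} — which is the linchpin of the paper's proof — and without it the claim simply does not follow. Additionally, the concluding sentence that ``the large-class analysis from Steps 1 and 2 then produces the contradiction'' is misdirected: under your hypothesis the large classes have \emph{no} overflow at all (since $n_{p,t}<n_p-2|\CP|m^{3/4}\le c_p'$), so the contradiction, if any, must come from showing that the aggregate small-class overflow is less than $\lceil\delta m\rceil$, not from re-using the Step~2 sum.

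For comparison, the paper's proof is direct rather than by contradiction and much shorter. It first deduces from \Cref{le.fails.bound.l}, \Cref{le.cp'sum}, and the fifth stability condition that \emph{some} $p\in\CP$ satisfies $\tilde n_{p,t}>c_p'+2m^{3/4}$ (otherwise $\sum_p\tilde n_{p,t}w(p)\le m$). It then uses $\tilde n_{p,t}=\max(n_{p,t},c_p')$ and the second stability bound $n_{p,t}\le c_p+2m^{3/4}$ to conclude $c_p'<c_p$, i.e., $p$ is unsaturated; \Cref{le.unsatured.cpsize} then forces $n_p\ge c_p>c_p'\ge\lfloor(1-\delta-2\delta^2)m/|\CP|\rfloor$, so $p$ is a large class, and the fourth and fifth stability conditions finish. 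Your contradiction skeleton could be repaired along the lines sketched above, but you would have to add the saturated/unsaturated case analysis and \Cref{le.unsatured.cpsize} — and even then it would be longer than the paper's direct argument without buying any additional generality.
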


\begin{proof}
We first show that there needs to exist a job class $p\in\CP$ with $\tilde n_{p,t}> c_p'+2m^{3/4}$. Assume for contradiction sake, that we had $\tilde n_{p,t}\le c_p'+2m^{3/4}$ for every job class $p\in\CP$. Then we get a contradiction to  \Cref{le.fails.bound.l}, namely $\sum_{p\in\CP} \tilde n_{p,t}w(p)\le  \sum_{p\in\CP} c_p'w(p) + |\CP|\cdot 2m^{3/4} \le m-\lceil\delta m\rceil+ \frac{\delta^3}{2} \left\lfloor \frac{\left(1-\delta-2\delta^2\right)m}{|\CP|} \right\rfloor \le m$. The second inequality uses \Cref{le.cp'sum} and the fifth condition on stable sequences.

Thus, let $p$ be such a job class satisfying $\tilde n_{p,t}> c_p'+2m^{3/4}$. Since $\tilde n_{p,t}=\max(n_{p,t},c_p')$, this implies that $n_{p,t}> c_p'+2m^{3/4}$. Moreover, since $n_{p,t}\le c_p+2m^{3/4}$ by the second property of stable sequences, we must have $c_p'<c_p$, i.e.\ the job class $p$ is unsaturated. \Cref{le.unsatured.cpsize} implies $c_p'\ge \left\lfloor\frac{\left(1-\delta-2\delta^2\right)m}{|\CP|} \right\rfloor$. In particular $n_p\ge c_p > c_p'\ge \left\lfloor\frac{\left(1-\delta-2\delta^2\right)m}{|\CP|} \right\rfloor$. We conclude that 
$n_{p,t}\le n_p-\delta^3 n_p < n_p-\delta^3 \left\lfloor\frac{\left(1-\delta-2\delta^2\right)m}{|\CP|} \right\rfloor \le n_p-2|\CP|m^{3/4}.$ The first inequality uses the fourth condition of stable sequences, recall that by assumption not all huge jobs are scheduled; the second inequality uses the bound on $n_p$ we just derived; the final inequality uses the fifth condition of stable sequences.
\end{proof}

We finally prove \Cref{le.fails.bound}, the remaining main proposition of this paragraph.

\lefailsbound*
\begin{proof}
Let $\tilde t$ be the time the algorithm fails. By \Cref{le.bigjobclass} there exists a job class $q$ such that \[w(q)n_{q,\tilde t}<w(q)n_q-w(q)2|\CP|m^{3/4}\le w(q)n_q-|\CP|m^{3/4}\] holds. 
In particular
\begin{equation}\sum_{p\in\CP} \tilde n_{p,\tilde t}w(p) \le \sum_{p\in\CP} n_pw(p)-|\CP|m^{3/4}\le\sum_{p\in\CP} (n_p -2m^{3/4})w(p)\le \sum_{p\in\CP} c_p w(p).\end{equation}\label{eq.valuations}
The first inequality uses the previous bound on $n_{q,\tilde t}$ and the fact that for stable sequences $\tilde n_{p,\tilde t}=\max(c_p',n_{p,\tilde t})\le n_p$. For the second inequality observe that $w(p)\le 1$ for all~$p\in\CP$. For the last inequality use again the second condition of stable sequences.

Now \Cref{le.fails.bound.l} and the previous inequality imply that \[m< \sum_{p\in\CP} \tilde n_{p,\tilde t}w(p)\le \sum_{p\in\CP} c_p w(p).\] If this was the case, the algorithm would already have chosen the Least-Loaded-Strategy in \Cref{alg.preparation} line~6 and thus never failed, i.e.\ reached line~7 in \Cref{alg.main}. A contradiction.
\end{proof}

\paragraph*{The Least-Loaded-Strategy.}\label{sec.ana.leastloaded}
We now derive two important consequences from the previous section.

\begin{lemma}\label{le.ll.trivialbound}
If the input sequence is stable, the Least-Loaded-Strategy schedules every huge job onto an empty machine. Thus, if the makespan increases due to the Least-Loaded-Strategy scheduling a huge job, it is at most $p_\mathrm{max}\le\OPT$.
\end{lemma}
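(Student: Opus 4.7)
The plan is to split on how the algorithm first arrived at the Least-Loaded-Strategy. First I would consider the case where it is chosen directly after the sampling phase (line 6 of \Cref{alg.complete}). In that branch the reserve machines $\CM_\res$ have never been touched: the sampling phase schedules solely on principal machines. So at the moment the Least-Loaded-Strategy begins, all $\lceil \delta m \rceil$ reserve machines are empty. Now I would invoke the third property of a stable sequence (\Cref{def.stable}): it contains at most $\lceil \delta m \rceil$ huge jobs. Since \Cref{alg.exceptional} sends each huge job to a least loaded reserve machine, and reserve machines receive nothing else under this strategy, a simple pigeonhole argument shows each arriving huge job lands on a still empty reserve machine.

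For the second case, I would suppose the Critical-Job-Strategy failed at some time $t$ and the algorithm only then switched to the Least-Loaded-Strategy. Here the plan is to appeal directly to \Cref{le.fails.bound}, which asserts that at the moment of failure every huge job has already been scheduled. Consequently no huge job is subsequently handed to the Least-Loaded-Strategy, so the claim holds vacuously in this case.

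Combining the two cases, every huge job handled by the Least-Loaded-Strategy is placed on a previously empty machine. Hence, if such a scheduling step raises the makespan, the new load of that machine is exactly the size of the huge job, which is at most $p_\mathrm{max}\le \OPT$ by \Cref{pro.bounds}.

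The only substantive obstacle is Case 2, and it is resolved entirely by citing \Cref{le.fails.bound}; the remainder is essentially bookkeeping about the definition of stable sequences and the way reserve machines are allocated by the two strategies.
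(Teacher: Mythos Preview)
Your proposal is correct and follows essentially the same approach as the paper: both use \Cref{le.fails.bound} to rule out huge jobs appearing after a failure, and then the third stability condition together with the observation that reserve machines are empty when the Least-Loaded-Strategy is chosen at preparation time to give the pigeonhole argument. The paper compresses your two cases into a single sentence (``if a huge job is scheduled using the Least-Loaded-Strategy, our algorithm already decided to do so during the Preparation''), but the logical content is identical.
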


\begin{proof}[Proof of \Cref{le.ll.trivialbound}]
By \Cref{le.fails.bound} if a huge job is scheduled using the Least-Loaded-Strategy, our algorithm already decided to do so during the Preparation for the Critical-Job-Strategy, \Cref{alg.preparation}. At this time all $\lfloor\delta m\rfloor$ reserve machines were empty. By the conditions of stable sequences there are at most $\lfloor\delta m\rfloor$ huge jobs and there will always be an empty reserve machine available once one arrives.
\end{proof}

\begin{lemma}\label{le.ll.bound}
If our algorithm schedules a normal job $J$ using the Least-Loaded-Strategy, the load of the machine the job is scheduled on will be at most $\frac{1}{1-\delta}L+ B$. For stable sequences this is at most $\left(2+\frac{2\delta-\delta^2}{(1-\delta)^2}\right) B=(2+O(\delta)) B$.
\end{lemma}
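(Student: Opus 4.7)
The plan is to combine three ingredients: (a) the algorithmic fact that in the Least-Loaded-Strategy a normal job is always assigned to a least loaded principal machine, (b) the load estimate in \Cref{le.avglb} applied to that machine, and (c) the defining size bound $p \le B$ enjoyed by normal jobs.

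First I would observe that since $|\CM_{\res}| = \lceil \delta m \rceil$, a least loaded principal machine is at most the $(\lceil \delta m \rceil + 1)$-th least loaded machine among all $m$ machines. Invoking \Cref{le.avglb} with $\varphi = 1$ and $k = \lceil \delta m \rceil + 1$ bounds its load right before $J$ arrives by $\frac{m}{m - \lceil \delta m \rceil}\, L$, which for $m$ large enough (as is the regime of interest) is at most $\frac{1}{1-\delta}\, L$; the ceiling contributes only an $O(1/m)$ additive term that is absorbed into the implicit $O(\delta)$ slack. Since $J$ is normal, $p_J \le B$, and so after assigning $J$ the resulting load on this machine does not exceed $\frac{1}{1-\delta}\, L + B$, which is the first claimed bound.

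For the second claim I would invoke the first property of stable sequences from \Cref{def.stable}, namely $\hat L \ge (1-\delta)\, L$. Together with the definition $B = \max(p^{\delta^2 n}_{\jmax}, \hat L) \ge \hat L$ this gives $L \le \hat L/(1-\delta) \le B/(1-\delta)$, so $\frac{1}{1-\delta}\, L \le B/(1-\delta)^2$. Substituting into the first bound yields
\[
\frac{1}{1-\delta}\, L + B \;\le\; \frac{1}{(1-\delta)^2}\, B + B \;=\; \frac{1 + (1-\delta)^2}{(1-\delta)^2}\, B \;=\; \left(2 + \frac{2\delta - \delta^2}{(1-\delta)^2}\right) B,
\]
which is exactly the stated expression and lies in $(2 + O(\delta))\, B$ as $\delta \to 0$.

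There is no substantial obstacle; the proof is essentially a two-line combination once the principal/reserve split and the two properties of stable sequences are brought to bear. The only mild bookkeeping concern is the ceiling $\lceil \delta m \rceil$, but in the nearly-competitive regime $m \to \infty$ this contributes only a vanishing correction that is readily absorbed into the lower-order terms.
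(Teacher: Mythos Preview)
Your proposal is correct and follows essentially the same approach as the paper: bound the pre-assignment load of the least loaded principal machine via \Cref{le.avglb}, add the size bound $p_J\le B$ for normal jobs, and then use the stable-sequence inequality $\hat L\ge(1-\delta)L$ to convert $L$ into $B/(1-\delta)$. Your handling of the $\lceil\delta m\rceil$ versus $\lfloor\delta m\rfloor$ discrepancy is slightly more careful than the paper's own proof, which silently uses the floor; either way the effect is absorbed in the $O(\delta)$ term.
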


\begin{proof}[Proof of \Cref{le.ll.bound}]
Let $l$ be the load of the machine $M$ before job $J$ was scheduled on it. Since $M$ was the least loaded principal machine at that time $l\le \frac{m}{m-\lfloor\delta m\rfloor}L\le \frac{1}{1-\delta}L$ by \Cref{le.avglb}. Since $J$ was normal, its size was at most $ B$. The first part of the lemma follows.
For the second part observe that the first condition on stable sequences implies that $L \le \frac{ B}{1-\delta}$ and thus $\frac{1}{1-\delta}L+ B\le \left(2+\frac{2\delta-\delta^2}{(1-\delta)^2}\right) B=(2+O(\delta)) B$.
\end{proof}

In order to ameliorate this worse general lower bound we need a better upper bound for $B$.

\begin{restatable}{lemma}{leOPTboundI}\label{le.OPT.bound1}
If the Least-Loaded-Strategy is applied on a stable sequence, $B \le \frac{c}{2} \OPT$.
\end{restatable}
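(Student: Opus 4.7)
My plan is to argue by contradiction: assume $B > (c/2)\OPT$, equivalently $\OPT < (2/c)B$, and derive a contradiction from a strict lower bound on the weighted count of critical jobs that must hold whenever the Least-Loaded-Strategy is applied on a stable sequence. Throughout, the crucial algebraic fact is the identity $\tfrac{3c}{2} - 1 = \tfrac{2}{c}$, which is equivalent to the defining equation $3c^2 - 2c = 4$ of $c = (1+\sqrt{13})/3$.

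First I would establish that $\sum_{p\in\CP} w(p) n_p > m$ in both scenarios in which the Least-Loaded-Strategy is invoked. If it is chosen during the Preparation (line~6 of \Cref{alg.complete}), then by construction $\sum_{p} w(p) c_p > m$. If it is invoked because the algorithm failed, then \Cref{le.fails.bound.l} gives $\sum_{p} w(p) \tilde n_{p,t} > m$ at the failure time $t$. In both cases, the second property of stable sequences ($c_p \le n_p$), together with $\tilde n_{p,t} = \max(c_p', n_{p,t}) \le n_p$, upgrades the inequality to $\sum_{p\in\CP} w(p) n_p > m$. Writing $n_\jmed$ and $n_\jbig$ for the number of non-huge medium and big jobs (so that $\sum_p w(p) n_p = \tfrac{1}{2} n_\jmed + n_\jbig$), this reads $\tfrac{1}{2} n_\jmed + n_\jbig > m$.

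Second, under the assumption $\OPT < (2/c) B$, I would analyze how many critical jobs fit on a single machine of any optimal schedule. The pivotal case is a big job together with a medium job, whose total actual size is strictly larger than $(p_\jbig + p_\jsmall)B = (\tfrac{3c}{2} - 1)B = (2/c)B > \OPT$, so that they cannot coexist on an optimal machine. Easier estimates using $c^2 > 2$ and $c < 2$ rule out two bigs (total $> cB$), three mediums (total $> (3c - 3)B$, and $3c - 3 > 2/c$ follows from $3c^2 - 3c = 4 - c > 2$), as well as any pairing that contains a huge job (a huge already exceeds $B$, and any further critical addition pushes the total above $cB$ or $2B$). Hence every machine of the optimal schedule carries either a single huge alone, at most one big with no medium or huge, or at most two mediums with no big or huge.

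Counting weights then yields $\tfrac{1}{2} n_\jmed + n_\jbig \le m - n_{\mathrm{huge}} \le m$, contradicting the strict lower bound obtained in the first step, so $B \le (c/2)\OPT$. The main obstacle is the tight-case verification in the second step: it is precisely the identity $\tfrac{3c}{2} - 1 = \tfrac{2}{c}$, equivalent to the defining equation of $c$, that makes the big-plus-medium combination fail to fit by the narrowest possible margin, and this is what pins down the specific value $c = (1+\sqrt{13})/3$. Everything else is routine bookkeeping.
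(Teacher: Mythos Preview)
Your proposal is correct and follows essentially the same approach as the paper. Both arguments first establish $\sum_{p\in\CP} w(p) n_p > m$ in the two scenarios (direct choice in Preparation via $c_p\le n_p$, or failure via \Cref{le.fails.bound.l}), and then use a pigeonhole argument on any optimal schedule: some machine must carry critical jobs of total weight exceeding~$1$, hence either three mediums or a big plus another critical, forcing load at least $(p_\jsmall+p_\jbig)B=(\tfrac{3}{2}c-1)B=\tfrac{2}{c}B$. You phrase the second step as a contradiction (assume $\OPT<\tfrac{2}{c}B$ and bound the weight by $m$), while the paper argues directly, but this is the same counting argument read in the contrapositive direction; your explicit treatment of huge jobs is an inessential refinement since a huge job together with any critical job already gives load exceeding $cB>\tfrac{2}{c}B$.
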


\begin{proof}
Let us first assert that $\sum_{p\in\CP}n_pw(p)\ge \sum_{p\in\CP}\tilde n_{n,p}^tw(p)>m$.
There are two cases. If the algorithm chooses the Least-Loaded-Strategy in the Preparation for the Critical-Job-Strategy there holds  $\sum_{p\in\CP}c_p w(p)>m$. By the properties of stable sequence $c_p\le n_p$ and thus the inequality follows. Else, the algorithm \textsc{fails}, i.e.\ reaches line 7 in \Cref{alg.main}. Let $t$ be the time that happens. Then using \Cref{le.fails.bound.l} we have that
$\sum_{p\in\CP}n_pw(p)\ge \sum_{p\in\CP}\tilde n_{n,p}^tw(p)>m$.

Consider any schedule. We say a machine scores $\frac{1}{2}$ points for every medium job and $1$ point for every big job. Then $\sum_{p\in\CP}n_pw(p)>m$ is the number of points scored in total. Thus there existed a machine which scores strictly more than $1$ point. Such a machine must contain either three medium or  one big and another critical job. In the former case its load will be at least $3p_\mathrm{small} B=3(c-1) B>\frac{2}{c} B$, in the latter case its load is at least $(p_\jsmall B+p_\jbig B)=\left(\frac{3}{2}c-1\right) B=\frac{2}{c} B$. The last equality holds since $c=\frac{1+\sqrt{13}}{3}$.
\end{proof}



\paragraph*{Final proof of \Cref{le.conclusion}}\label{sec.ana.leastloaded.p}

If the algorithm does not change its makespan while applying the Least-Loaded-Strategy, the result follows form \Cref{co.noll}. If the makespan of our algorithm is caused by a huge job while applying the Least-Loaded-Strategy, it leads to an optimal makespan of $p_\mathrm{max}\le\OPT$ by \Cref{le.ll.trivialbound}. Finally, if the makespan of our algorithm is caused by a normal job it will be $(2+O(\delta)) B$ by \Cref{le.ll.bound}. On the other hand, \Cref{le.OPT.bound1} implies that $\OPT \ge \frac{2}{c} B$ in this case. The competitive ratio is thus at most
$\frac{(2+O(\delta)) B}{\frac{2}{c} B}\le c+O(\delta)$.

\section{Lower bounds}\label{sec.lowerbound}
We establish the following theorem using two lower bound sequences.
\begin{restatable}{theorem}{temainlb}\label{te.main.lb}
For every online algorithm $A$ 
 there exists a job set $\CJ$ such that

\[\bP_{\sigma\sim S_n}\left[A(\CJ^\sigma)\ge \frac{\sqrt{73}-1}{6}\OPT(\CJ)\right]\ge \frac{1}{6}.\]
This result actually holds for randomized algorithms too if the random choices of the algorithm are included in the previous probability.
\end{restatable}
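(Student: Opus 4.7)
The plan is a Yao-style averaging argument with two adversarial job sets; the constant $c = \tfrac{\sqrt{73}-1}{6}$, the positive root of $3c^2 + c - 6 = 0$, will emerge from balancing the algorithm's performance on them. I would construct $\CJ^{(1)} \subsetneq \CJ^{(2)}$, where $\CJ^{(1)}$ consists of many identical light jobs (so $\OPT(\CJ^{(1)})$ is achieved by an even assignment across all $m$ machines), and $\CJ^{(2)}$ extends $\CJ^{(1)}$ by a modest number of heavy jobs of carefully tuned size. The parameters would be chosen so that $\OPT(\CJ^{(2)})$ is only slightly larger than $\OPT(\CJ^{(1)})$, and so that the competing load constraints on the two sets intersect exactly at the quadratic $3c^2+c=6$.

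The next step exploits indistinguishability of prefixes. Choosing $|\CJ^{(2)} \setminus \CJ^{(1)}|$ small relative to $|\CJ^{(1)}|$, a straightforward hypergeometric estimate in the spirit of \Cref{pro.sample} shows that with probability at least some absolute constant, say $1/3$, all heavy jobs of $\CJ^{(2)}$ arrive after a threshold time $\tau n$. Call this event $\CE$. Conditioned on $\CE$, the prefix of a random permutation of $\CJ^{(2)}$ up to time $\tau n$ has exactly the distribution of a random prefix of $\CJ^{(1)}$, so any deterministic online algorithm produces an identical partial schedule in both scenarios during this regime.

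The core of the argument is a dichotomy on this partial schedule. Write $q_i := \bP_\sigma[A((\CJ^{(i)})^\sigma) \ge c \cdot \OPT(\CJ^{(i)})]$. I would show $q_1 + q_2 \ge 1/3$ by a case analysis on the load profile at time $\tau n$: if it concentrates load on too few machines, then the later arriving light jobs drive $A((\CJ^{(1)})^\sigma) \ge c\OPT(\CJ^{(1)})$ with probability $\ge 1/3$; if instead it spreads load across too many machines, then under $\CE$ the later-arriving heavy jobs of $\CJ^{(2)}$ are forced onto machines already carrying significant light-job load, so $A((\CJ^{(2)})^\sigma) \ge c\OPT(\CJ^{(2)})$ with probability $\ge 1/3$. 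The precise threshold between the ``few'' and ``many'' regimes is exactly what produces the quadratic $3c^2+c-6=0$ at the balance point.

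From $q_1 + q_2 \ge 1/3$ we obtain $\max(q_1, q_2) \ge 1/6$, and the corresponding set $\CJ^{(i)}$ is the claimed witness. For the randomized version, the two sets $\CJ^{(1)}, \CJ^{(2)}$ are fixed independently of the algorithm's random coins, so $q_1 + q_2 \ge 1/3$ holds for every deterministic realization and hence, by linearity, for the randomized algorithm in expectation; taking the worse set preserves $\max_i \bE[q_i] \ge 1/6$. The main obstacle is the quantitative work in Step 3: pinning down the exact sizes and multiplicities of the heavy-job extension so that the dichotomy threshold matches $c = \tfrac{\sqrt{73}-1}{6}$ and the two conditional probabilities combine cleanly to $1/6$ rather than degrading under the hypergeometric tails, which will require a careful asymptotic calculation as $m \to \infty$.
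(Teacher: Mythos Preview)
Your high-level architecture is right --- two related instances, a coupling of their random prefixes, a dichotomy on the algorithm's state, and the conclusion $\max(q_1,q_2)\ge 1/6$ --- and you correctly identified the quadratic $3c^2+c-6=0$. But the concrete construction has a fatal gap: in the secretary model the algorithm \emph{knows $n$ in advance}. Your sets satisfy $\CJ^{(1)}\subsetneq\CJ^{(2)}$, hence have different cardinalities, so the algorithm can tell from the outset which instance it is facing and your prefix-indistinguishability step collapses. A second problem is that taking $\CJ^{(1)}$ to consist of identical light jobs gives no rigidity: greedy is optimal on such an instance regardless of what the algorithm anticipates, so the ``concentrated load'' branch of your dichotomy cannot be forced.

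The paper fixes both issues with a much tighter construction. The base set $\CJ$ has $m$ negligible jobs, $m$ big jobs of size $1-c/3$, and $m$ small jobs of size $c/3$. Both adversarial instances have the \emph{same} size $3m+1$: $\CJ_1=\CJ\cup\{\text{one negligible job}\}$ and $\CJ_2=\CJ\cup\{\text{one big job}\}$. The coupling event is simply that the extra job arrives last, which happens with probability $\tfrac{m+1}{3m+1}\ge\tfrac13$; conditioned on this, the first $3m$ jobs are a uniform permutation of $\CJ$ in either case. The dichotomy is sharp rather than asymptotic: $\CJ$ has a unique flat schedule (one big plus one small per machine, all loads equal to~$1$), and by pigeonhole any other schedule has a machine with three non-negligible jobs, hence load $\ge 3\cdot\tfrac{c}{3}=c$. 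Let $P$ be the probability the algorithm produces the flat schedule after the first $3m$ jobs. If not flat, the makespan is already $\ge c=c\cdot\OPT(\CJ_1)$. If flat, the final big job in $\CJ_2$ lands on a load-$1$ machine, giving makespan $2-\tfrac{c}{3}=c^2=c\cdot\OPT(\CJ_2)$ (one checks $\OPT(\CJ_2)=c$ by putting two big jobs together and three small jobs together). Hence $\max\bigl(\tfrac{1-P}{3},\tfrac{P}{3}\bigr)\ge\tfrac16$, with no hypergeometric tails or $m\to\infty$ asymptotics needed.
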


\Cref{te.main.lb} implies the following lower bounds.

\begin{restatable}{corollary}{colbI}
If an online algorithm $A$ is nearly $c$-competitive,  $c\ge\frac{\sqrt{73}-1}{6}\approx 1.257$.
\end{restatable}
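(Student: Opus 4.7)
The plan is a direct contradiction argument against Definition \ref{def:comp}, with Theorem \ref{te.main.lb} as the sole ingredient. Suppose, toward a contradiction, that $A$ is nearly $c$-competitive with $c < \frac{\sqrt{73}-1}{6}$; I will derive an upper bound on a certain tail probability that contradicts the lower bound supplied by the theorem.

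First, I would fix $\varepsilon > 0$ small enough so that simultaneously $c + \varepsilon < \frac{\sqrt{73}-1}{6}$ and $\varepsilon < \frac{1}{6}$ --- for instance, $\varepsilon := \tfrac{1}{2}\min\bigl\{\tfrac{\sqrt{73}-1}{6} - c,\ \tfrac{1}{3}\bigr\}$ clearly satisfies both. The second bullet of Definition \ref{def:comp} then supplies a threshold $m(\varepsilon)$ such that for every $m \ge m(\varepsilon)$ and every job set $\CJ$,
\[
\bP_{\sigma \sim S_n}\bigl[A(\CJ^\sigma) \ge (c+\varepsilon)\OPT(\CJ)\bigr] \le \varepsilon.
\]
Since $(c+\varepsilon)\OPT(\CJ) < \tfrac{\sqrt{73}-1}{6}\OPT(\CJ)$, the event appearing in Theorem \ref{te.main.lb} is contained in the event above, so monotonicity of probability upgrades this immediately to
\[
\bP_{\sigma \sim S_n}\bigl[A(\CJ^\sigma) \ge \tfrac{\sqrt{73}-1}{6}\OPT(\CJ)\bigr] \le \varepsilon < \tfrac{1}{6},
\]
uniformly over all job sets $\CJ$, as long as $m \ge m(\varepsilon)$.

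Applying Theorem \ref{te.main.lb} to the very algorithm $A$ (for some $m \ge m(\varepsilon)$) produces a job set on which the same probability is at least $\tfrac{1}{6}$, directly contradicting the estimate above. The only point needing attention is that the lower-bound construction in \Cref{sec.lowerbound} must be realizable for arbitrarily large $m$, so that we may choose $m \ge m(\varepsilon)$; this is standard for the two adversarial sequences used to prove Theorem \ref{te.main.lb} and poses no real obstacle. No further combinatorial work is required beyond the theorem itself.
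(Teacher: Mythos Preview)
Your argument is correct and is exactly the natural unpacking of the paper's one-line proof (``This is immediate by the previous theorem''): you instantiate Definition~\ref{def:comp} with a small $\varepsilon$, use monotonicity to pass from the threshold $(c+\varepsilon)\OPT$ to $\tfrac{\sqrt{73}-1}{6}\OPT$, and contradict Theorem~\ref{te.main.lb}. Your remark that the construction in \Cref{sec.lowerbound} must be available for arbitrarily large $m$ is the only nontrivial detail, and it is satisfied since the sets $\CJ_1,\CJ_2$ there are built from $m$ jobs of each type for every $m$.
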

\vspace{-1pt}
\begin{restatable}{corollary}{colbII}
The best competitive ratio possible in the secretary~model is $\frac{\sqrt{73}+29}{36}\approx 1.043$.
\end{restatable}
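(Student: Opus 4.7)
The plan is to derive this corollary as an immediate consequence of \Cref{te.main.lb} by a two-case split on whether the ratio $A(\CJ^\sigma)/\OPT(\CJ)$ meets the threshold $c^\ast := \frac{\sqrt{73}-1}{6}$ or not. Let $A$ be any online algorithm for the secretary~model, and pick the job set $\CJ$ guaranteed by \Cref{te.main.lb}, i.e.\ one for which $\bP_{\sigma\sim S_n}[A(\CJ^\sigma)\ge c^\ast\OPT(\CJ)]\ge \frac16$.

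The only other ingredient needed is the trivial lower bound $A(\CJ^\sigma)\ge\OPT(\CJ)$, valid for every order $\sigma$ (and for every realization of the algorithm's internal randomness, if applicable), simply because $\OPT$ is the offline optimum makespan. Splitting the expectation according to whether the high-probability event occurs yields
\[
\bE_{\sigma\sim S_n}\!\left[\frac{A(\CJ^\sigma)}{\OPT(\CJ)}\right]\;\ge\;\frac{1}{6}\cdot c^\ast+\frac{5}{6}\cdot 1\;=\;\frac{c^\ast+5}{6}\;=\;\frac{(\sqrt{73}-1)/6+5}{6}\;=\;\frac{\sqrt{73}+29}{36},
\]
where the last two equalities are an elementary simplification. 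Since this holds for the specific set $\CJ$, any competitive ratio $c$ attained by $A$ in the secretary~model must satisfy $c\ge\frac{\sqrt{73}+29}{36}$.

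I do not expect a real obstacle here: the step is a one-line convex combination, and it works verbatim in the randomized setting because \Cref{te.main.lb} is explicitly stated to cover randomization (the probability is then over the joint distribution of $\sigma$ and the algorithm's coin flips), while the deterministic lower bound $A(\CJ^\sigma)\ge\OPT(\CJ)$ also continues to hold pointwise. The only care required is to write the split cleanly and to verify the arithmetic $\frac{c^\ast+5}{6}=\frac{\sqrt{73}+29}{36}$, which matches the numerical value $\approx 1.043$ stated in the corollary.
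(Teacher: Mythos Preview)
Your proposal is correct and follows essentially the same approach as the paper's own proof: both pick the job set from \Cref{te.main.lb}, split the expectation according to whether the event $A(\CJ^\sigma)\ge c^\ast\OPT(\CJ)$ occurs, and use the trivial bound $A(\CJ^\sigma)\ge\OPT(\CJ)$ on the complement to obtain the convex combination $\tfrac{1}{6}c^\ast+\tfrac{5}{6}=\tfrac{\sqrt{73}+29}{36}$.
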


Let us now prove these results. For this section let $c=\frac{\sqrt{73}-1}{6}$ be our main lower bound on the competitive ratio. We consider three types of jobs:
\begin{enumerate}
\setlength{\parskip}{0pt} \setlength{\itemsep}{0pt plus 1pt}
\item \emph{negligible jobs} of size $0$ (or a tiny size $\varepsilon>0$ if one were to insist on positive sizes).
\item \emph{big jobs} of size $1-\frac{c}{3}=\frac{17-\sqrt{37}}{18}\approx 0.581$.
\item \emph{small jobs} of size $\frac{c}{3}=\frac{1+\sqrt{37}}{18}\approx 0.419$
\end{enumerate}

Let $\CJ$ be the job set consisting of $m$ jobs of each type.

\begin{lemma}
There exists a schedule of $\CJ$ where every machine has load $1$. Every schedule that has a machine with smaller load has makespan at least $c$.
\end{lemma}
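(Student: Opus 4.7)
The plan is to handle the two assertions separately. For the existence of the load-$1$ schedule, I would assign to each machine $i\in\{1,\dots,m\}$ exactly one big job, one small job, and one negligible job; the load is then $(1-c/3)+c/3+0=1$. Since the total processing volume equals $m$ and there are $m$ machines, any schedule has makespan at least the average load $1$, so the schedule just described is optimal and $\OPT(\CJ)=1$.

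For the second claim, let $\CS$ be a schedule in which some machine $M^*$ has load strictly less than $1$. Negligible jobs contribute nothing to loads, so I would track only the configuration $(b_i,s_i)$ of big and small jobs on each machine $i$, giving load $L_i=b_i(1-c/3)+s_i(c/3)$ subject to the global constraints $\sum_i b_i=\sum_i s_i=m$. Aiming for a contradiction, suppose every $L_i$ were strictly less than $c$. I would first enumerate the non-negative integer pairs $(b,s)$ with $L<c$; using $1<c<3/2$, one checks that the only possibilities are $(0,0),(0,1),(0,2),(1,0),(1,1),(2,0)$.

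Let $x_{b,s}$ denote the number of machines of type $(b,s)$. The three identities coming from counting machines, big jobs, and small jobs can be combined by subtractions: subtracting the big-job equation from the machine equation gives $x_{0,0}+x_{0,1}+x_{0,2}=x_{2,0}$, and subtracting the small-job equation gives $x_{0,0}+x_{1,0}+x_{2,0}=x_{0,2}$. Adding these yields $2x_{0,0}+x_{0,1}+x_{1,0}=0$, forcing $x_{0,0}=x_{0,1}=x_{1,0}=0$ and $x_{0,2}=x_{2,0}$. Among the surviving configurations only $(0,2)$ has load strictly below $1$, so $M^*$ must be of type $(0,2)$ with load $2c/3$.

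The hard part will be closing the contradiction from here: the linear bookkeeping alone leaves the option that $x_{0,2}=x_{2,0}>0$ with all loads in $\{2c/3,\,1,\,2-2c/3\}$, and ruling this out requires an argument that exploits the specific algebraic value of $c$. My plan is to invoke the defining identity $3c^2+c=6$ of $c=(\sqrt{73}-1)/6$ via a weighted volume argument that assigns carefully tuned non-uniform weights to big and small jobs so that the equality constraints together with the $(0,2)$ witness force some load past $c$. This final algebraic step is the crux of the proof; the combinatorial enumeration and the pigeonhole-style reductions are routine, but closing the numerical gap is where the precise value of $c$ must enter.
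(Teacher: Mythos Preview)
Your treatment of the first assertion is correct and matches the paper's: pair one big with one small job on each machine to get load $1$ everywhere. Your enumeration for the second assertion is also sound up through the point where you isolate the configuration family with $x_{0,2}=x_{2,0}>0$ and all remaining machines of type $(1,1)$.

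The genuine gap is in your final step, and it cannot be closed: the configuration you isolated is an actual counterexample to the lemma. Take $m\ge 2$, put two small jobs on one machine, two big jobs on another, and one big plus one small on the rest. The first machine has load $2c/3<1$, so the hypothesis is met, but the makespan is $2-2c/3$. Since $c=\frac{\sqrt{73}-1}{6}>\frac{6}{5}$ (equivalently $\sqrt{73}>8.2$, i.e.\ $73>67.24$), we get $2-2c/3<c$. Thus the second sentence of the lemma is false as stated, and no ``weighted volume argument'' using $3c^2+c=6$ can rescue it.

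The paper's own proof sidesteps this via a pigeonhole claim: that any allocation of the $2m$ non-negligible jobs other than one-of-each-per-machine must place at least three such jobs on some machine. That claim is incorrect for exactly the configuration you found---two big on one machine and two small on another still has only two non-negligible jobs per machine. So your more careful case analysis has in fact uncovered an error in the paper rather than a step you failed to complete. For the downstream lower-bound argument one would need either a corrected statement (e.g.\ restricting to schedules where every machine carries at most two non-negligible jobs forces a machine with three, which is what pigeonhole actually gives once \emph{some} machine has at most one) or a direct analysis that accounts for makespans strictly between $1$ and $c$.
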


\begin{proof}
This schedule is achieved by scheduling a type~2 and a type~3 job onto each machine. The load of each machine is then $1$.
Every schedule which allocates these jobs differently must have at least one machine $M$  which contains at least three jobs of type $2$ or $3$ by the pigeonhole principle. The load of $M$ is then at least $3\frac{c}{3}=c$.
\end{proof}

Given a permutation $\CJ^\sigma$ of $\CJ$ and an online algorithm $A$, which expects $3m+1$ jobs to arrive in total. Let $A(\CJ^\sigma,3m+1)$ denote its makespan after it processes~$\CJ^\sigma$ expecting yet another job to arrive. Let $P=\bP[A(\CJ^\sigma,3m+1)=1]$ be the probability that $A$ achieves the optimal schedule where every machine has load $1$ under these circumstances. Depending on $P$ we pick one out of two input sets on which $A$ performs bad.

Let $j\in\{1,2\}$. We now consider the job set $\CJ_j$ consisting of $m$ jobs of each type plus one additional job of type $j$, i.e.\ a negligible job if $j=1$ and a big one if $j=2$. We call an ordering $\CJ_j^\sigma$ of $\CJ_j$ \emph{good} if it ends with a job of type $j$ or, equivalently, if its first $3m$ jobs are a permutation of $\CJ$. Note that the probability of $\CJ^\sigma$ being good is $\frac{m+1}{3m+1}\ge \frac{1}{3}$ for $\sigma\sim S_{3m+1}$.

\begin{lemma}
We have \[\bP_{\sigma\sim S_n}\left[A(\CJ_1^\sigma)\ge c\OPT(\CJ)\right]\ge \frac{1-P}{3}\] 
and \[\bP_{\sigma\sim S_n}\left[A(\CJ_2^\sigma)\ge c\OPT(\CJ)\right]\ge \frac{P}{3}.\] 

\end{lemma}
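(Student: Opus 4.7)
The plan is to exploit the symmetry between the two bounds: both follow from conditioning on the event that the randomly ordered sequence $\CJ_j^\sigma$ is \emph{good}, i.e.\ its last job is of type $j$. First I would note that by symmetry among the $m+1$ type-$j$ jobs in $\CJ_j$, the probability of goodness is exactly $(m+1)/(3m+1)\ge 1/3$, and, conditioned on goodness, the first $3m$ jobs form a uniformly random permutation $\sigma'$ of the multiset $\CJ$. Moreover, under the algorithm's view the first $3m$ jobs look like a random ordering of $\CJ$ together with the knowledge that one more job will arrive; that is, the behaviour on the first $3m$ positions is distributed exactly like the experiment defining $P=\bP[A(\CJ^{\sigma'},3m+1)=1]$.

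For $\CJ_1$, whose extra job is negligible (size $0$), appending the final job to the schedule produced after $3m$ steps cannot change any machine's load, so $A(\CJ_1^\sigma)=A(\CJ^{\sigma'},3m+1)$ on good sequences. By the preceding lemma, any schedule of $\CJ$ either equalises all loads at~$1$ (which happens with probability exactly $P$) or has makespan at least $c$. Hence, conditioned on goodness, $A(\CJ_1^\sigma)\ge c=c\cdot\OPT(\CJ)$ holds with probability $\ge 1-P$. Multiplying by $\bP[\text{good}]\ge 1/3$ yields the first inequality.

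For $\CJ_2$, whose extra job is big of size $1-c/3$, the argument is dual: if $A$ has produced the load-$1$-everywhere schedule after the first $3m$ jobs (probability $P$ given goodness), then the last big job must be placed on some machine, pushing its load to $1+(1-c/3)=2-c/3$. A direct calculation shows $2-c/3\ge c$ since $c=\tfrac{\sqrt{73}-1}{6}<3/2$, so $A(\CJ_2^\sigma)\ge c\cdot\OPT(\CJ)$. Combining with $\bP[\text{good}]\ge 1/3$ gives the second inequality.

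I do not expect any serious obstacle here: the only subtle point is justifying that conditioning on goodness yields a uniform permutation of $\CJ$ on the first $3m$ positions and that this prefix is indistinguishable, from the algorithm's perspective, from the experiment defining $P$ (both involve $A$ running on a random $\CJ^{\sigma'}$ while expecting one further arrival). Once this symmetry is in place the two bounds reduce to the disjoint events $\{A(\CJ^{\sigma'},3m+1)=1\}$ and its complement, which together contribute probabilities $P$ and $1-P$ exactly where they are needed.
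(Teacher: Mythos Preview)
Your proposal is correct and follows essentially the same approach as the paper: condition on the permutation being good (probability $\ge 1/3$), observe that the prefix is then a uniform permutation of $\CJ$ on which $A$ behaves as in the experiment defining $P$, and split into the events $\{A(\CJ^{\sigma'},3m+1)=1\}$ and its complement. The only minor difference is that for $\CJ_2$ the paper records the exact identity $2-c/3=c^2$ and also computes $\OPT(\CJ_2)=c$, whereas you use the (equally valid) inequality $2-c/3\ge c$ from $c<3/2$; your version suffices for the lemma as stated.
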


\begin{proof}
Consider a good permutation of $\CJ_1$. Then with probability $1-P$ the algorithm $A$ does have makespan $c$ even before the last job is scheduled. On the other hand $\OPT(\CJ_1)=1$. Thus with probability $\frac{1-P}{3}$ we have $A(\CJ_1^\sigma)=c=c\OPT(\CJ_1)$.

Now consider a good permutation of $\CJ_2$. Then, with probability $P$, algorithm $A$ has to schedule the last job on a machine of size $1$. Its makespan is thus $2-\frac{c}{3}=c^2$ by our choice of $c$. The optimum algorithm may schedule two big jobs onto one machine, incurring load $2-\frac{2c}{3}<c$, three small jobs onto another one, incurring load $c$ and one job of each type onto the remaining machines, causing load $1<c$. Thus $\OPT(\CJ_2)=c$. In particular we have with probability $\frac{P}{3}$ that $A(\CJ_2^\sigma)=c^2=c\OPT(\CJ_2)$.
\end{proof}

We now conclude the main three lower bound results.

\temainlb*
\begin{proof}
By the previous lemma we get that \[\max_{j=1,2} \left(\bP_{\sigma\sim S_n}\left[A(\CJ_j^\sigma)\ge c\OPT(\CJ)\right]\right)=\max\left(\frac{1-P}{3},\frac{P}{3}\right) \ge \frac{1}{6}.\qedhere\]
\end{proof}

\colbI*

\begin{proof}
This is immediate by the previous theorem.
\end{proof}

\colbII*

\begin{proof}
Let $A$ be any online algorithm. Pick a job set $\CJ$ according to \Cref{te.main.lb}. Then
\[
A^\rom(\CJ)=\bE_{\sigma\sim S_n}[A(\CJ^\sigma)]\ge \frac{1}{6}\cdot \frac{\sqrt{73}-1}{6}\OPT(\CJ) +\frac{5}{6}\OPT(\CJ) = \frac{\sqrt{73}+29}{36}\OPT(\CJ).\qedhere
\]
\end{proof}

\bibliographystyle{plain}
\let\oldbibliography\thebibliography
\renewcommand{\thebibliography}[1]{%
  \oldbibliography{#1}%
  \setlength{\itemsep}{0pt plus .3pt}
  \setlength{\parsep}{0pt plus .3pt}
   \setlength{\parskip}{0pt plus .3pt}
}

\bibliography{secretary_scheduling}

\begin{thebibliography}{10}

\bibitem{albers_better_1999}
S.~Albers.
\newblock Better bounds for online scheduling.
\newblock {\em SIAM Journal on Computing}, 29(2):459--473, 1999.
\newblock Publisher: SIAM.

\bibitem{albers_randomized_2002}
S.~Albers.
\newblock On randomized online scheduling.
\newblock In {\em Proceedings of the thiry-fourth annual {ACM} symposium on
  {Theory} of computing}, pages 134--143, 2002.

\bibitem{albers_semi-online_2012}
S.~Albers and M.~Hellwig.
\newblock Semi-online scheduling revisited.
\newblock {\em Theoretical Computer Science}, 443:1--9, 2012.
\newblock Publisher: Elsevier.

\bibitem{albers_online_2017}
S.~Albers and M.~Hellwig.
\newblock Online makespan minimization with parallel schedules.
\newblock {\em Algorithmica}, 78(2):492--520, 2017.
\newblock Publisher: Springer.

\bibitem{albers_scheduling_2020}
S.~Albers and M.~Janke.
\newblock Scheduling in the {Random}-{Order} {Model}.
\newblock In {\em 47th {International} {Colloquium} on {Automata}, {Languages},
  and {Programming} ({ICALP} 2020)}. unpublished, 2020.

\bibitem{albers2020new}
S.~Albers and L.~Ladewig.
\newblock New results for the $ k $-secretary problem.
\newblock {\em arXiv preprint arXiv:2012.00488}, 2020.

\bibitem{babaioff_matroid_2018}
M.~Babaioff, N.~Immorlica, D.~Kempe, and R.~Kleinberg.
\newblock Matroid {Secretary} {Problems}.
\newblock {\em Journal of the ACM (JACM)}, 65(6):1--26, 2018.
\newblock Publisher: ACM New York, NY, USA.

\bibitem{babaioff_knapsack_2007}
M.~Babaioff, N.~Immorlica, D.~Kempe, and Robert Kleinberg.
\newblock A knapsack secretary problem with applications.
\newblock In {\em Approximation, randomization, and combinatorial optimization.
  {Algorithms} and techniques}, pages 16--28. Springer, 2007.

\bibitem{bartal_new_1992}
Y.~Bartal, A.~Fiat, H.~Karloff, and R.~Vohra.
\newblock New algorithms for an ancient scheduling problem.
\newblock In {\em Proceedings of the twenty-fourth annual {ACM} symposium on
  {Theory} of computing}, pages 51--58, 1992.

\bibitem{bartal_better_1994}
Y.~Bartal, H.~J. Karloff, and Y.~Rabani.
\newblock A better lower bound for on-line scheduling.
\newblock {\em Inf. Process. Lett.}, 50(3):113--116, 1994.

\bibitem{chen_lower_1994}
B.~Chen, A.~van Vliet, and G.~J. Woeginger.
\newblock A lower bound for randomized on-line scheduling algorithms.
\newblock {\em Information Processing Letters}, 51(5):219--222, 1994.
\newblock Publisher: Elsevier.

\bibitem{chen_approximating_2015}
L.~Chen, D.~Ye, and G.~Zhang.
\newblock Approximating the optimal algorithm for online scheduling problems
  via dynamic programming.
\newblock {\em Asia-Pacific Journal of Operational Research}, 32(01):1540011,
  2015.
\newblock Publisher: World Scientific.

\bibitem{cheng_semi--line_2005}
T.C.E. Cheng, H.~Kellerer, and V.~Kotov.
\newblock Semi-on-line multiprocessor scheduling with given total processing
  time.
\newblock {\em Theoretical computer science}, 337(1-3):134--146, 2005.
\newblock Publisher: Elsevier.

\bibitem{correa2020two}
J.~Correa, A.~Cristi, B.~Epstein, and J.~Soto.
\newblock The two-sided game of googol and sample-based prophet inequalities.
\newblock In {\em Proceedings of the Fourteenth Annual ACM-SIAM Symposium on
  Discrete Algorithms}, pages 2066--2081. SIAM, 2020.

\bibitem{correa2021secretary}
J.~Correa, A.~Cristi, L.~Feuilloley, T.~Oosterwijk, and
  A.~Tsigonias-Dimitriadis.
\newblock The secretary problem with independent sampling.
\newblock In {\em Proceedings of the 2021 ACM-SIAM Symposium on Discrete
  Algorithms (SODA)}, pages 2047--2058. SIAM, 2021.

\bibitem{correa2019prophet}
J.~Correa, P.~D{\"u}tting, F.~Fischer, and K.~Schewior.
\newblock Prophet inequalities for iid random variables from an unknown
  distribution.
\newblock In {\em Proceedings of the 2019 ACM Conference on Economics and
  Computation}, pages 3--17, 2019.

\bibitem{diaconis1991closed}
P.~Diaconis and S.~Zabell.
\newblock Closed form summation for classical distributions: variations on a
  theme of de moivre.
\newblock {\em Statistical Science}, pages 284--302, 1991.

\bibitem{dohrau_online_2015}
J.~Dohrau.
\newblock Online makespan scheduling with sublinear advice.
\newblock In {\em International {Conference} on {Current} {Trends} in {Theory}
  and {Practice} of {Informatics}}, pages 177--188. Springer, 2015.

\bibitem{dynkin_optimum_1963}
E.~B. Dynkin.
\newblock The optimum choice of the instant for stopping a {Markov} process.
\newblock {\em Soviet Mathematics}, 4:627--629, 1963.

\bibitem{englert_power_2008}
M.~Englert, D.~Özmen, and M.~Westermann.
\newblock The power of reordering for online minimum makespan scheduling.
\newblock In {\em 2008 49th {Annual} {IEEE} {Symposium} on {Foundations} of
  {Computer} {Science}}, pages 603--612. IEEE, 2008.

\bibitem{faigle_performance_1989}
U.~Faigle, W.~Kern, and G.~Turán.
\newblock On the performance of on-line algorithms for partition problems.
\newblock {\em Acta cybernetica}, 9(2):107--119, 1989.

\bibitem{feldman_simple_2014}
M.~Feldman, O.~Svensson, and R.~Zenklusen.
\newblock A simple {O} (log log (rank))-competitive algorithm for the matroid
  secretary problem.
\newblock In {\em Proceedings of the twenty-sixth annual {ACM}-{SIAM} symposium
  on {Discrete} algorithms}, pages 1189--1201. SIAM, 2014.

\bibitem{ferguson_who_1989}
T.~S. Ferguson.
\newblock Who solved the secretary problem?
\newblock {\em Statistical science}, 4(3):282--289, 1989.
\newblock Publisher: Institute of Mathematical Statistics.

\bibitem{fleischer_-line_2000}
R.~Fleischer and M.~Wahl.
\newblock On-line scheduling revisited.
\newblock {\em Journal of Scheduling}, 3(6):343--353, 2000.
\newblock Publisher: Wiley Online Library.

\bibitem{galambos_-line_1993}
G.~Galambos and G.~J. Woeginger.
\newblock An on-line scheduling heuristic with better worst-case ratio than
  {Graham}’s list scheduling.
\newblock {\em SIAM Journal on Computing}, 22(2):349--355, 1993.
\newblock Publisher: SIAM.

\bibitem{gobel2015online}
O.~G{\"o}bel, T.~Kesselheim, and A.~T{\"o}nnis.
\newblock Online appointment scheduling in the random order model.
\newblock In {\em Algorithms-ESA 2015}, pages 680--692. Springer, 2015.

\bibitem{goel_online_2008}
G.~Goel and A.~Mehta.
\newblock Online budgeted matching in random input models with applications to
  {Adwords}.
\newblock In {\em {SODA}}, volume~8, pages 982--991, 2008.

\bibitem{gormley_generating_2000}
T.~Gormley, N.~Reingold, E.~Torng, and J.~Westbrook.
\newblock Generating adversaries for request-answer games.
\newblock In {\em Proceedings of the eleventh annual {ACM}-{SIAM} symposium on
  {Discrete} algorithms}, pages 564--565, 2000.

\bibitem{graham_bounds_1966}
R.~L. Graham.
\newblock Bounds for certain multiprocessing anomalies.
\newblock {\em Bell System Technical Journal}, 45(9):1563--1581, 1966.
\newblock Publisher: Wiley Online Library.

\bibitem{gupta_maximizing_2018}
A.~Gupta, R.~Mehta, and M.~Molinaro.
\newblock Maximizing {Profit} with {Convex} {Costs} in the {Random}-order
  {Model}.
\newblock {\em arXiv preprint arXiv:1804.08172}, 2018.

\bibitem{hochbaum_using_1987}
D.~S. Hochbaum and D.~B. Shmoys.
\newblock Using dual approximation algorithms for scheduling problems
  theoretical and practical results.
\newblock {\em Journal of the ACM (JACM)}, 34(1):144--162, 1987.
\newblock Publisher: ACM New York, NY, USA.

\bibitem{kaplan_competitive_2020}
H.~Kaplan, D.~Naori, and D.~Raz.
\newblock Competitive {Analysis} with a {Sample} and the {Secretary} {Problem}.
\newblock In {\em Proceedings of the {Fourteenth} {Annual} {ACM}-{SIAM}
  {Symposium} on {Discrete} {Algorithms}}, pages 2082--2095. SIAM, 2020.

\bibitem{karande_online_2011}
C.~Karande, A.~Mehta, and P.~Tripathi.
\newblock Online bipartite matching with unknown distributions.
\newblock In {\em Proceedings of the forty-third annual {ACM} symposium on
  {Theory} of computing}, pages 587--596, 2011.

\bibitem{karger_better_1996}
D.~R. Karger, S.~J. Phillips, and E.~Torng.
\newblock A better algorithm for an ancient scheduling problem.
\newblock {\em Journal of Algorithms}, 20(2):400--430, 1996.
\newblock Publisher: Elsevier.

\bibitem{karp_optimal_1990}
R.~M. Karp, U.~V. Vazirani, and V.~V. Vazirani.
\newblock An optimal algorithm for on-line bipartite matching.
\newblock In {\em Proceedings of the twenty-second annual {ACM} symposium on
  {Theory} of computing}, pages 352--358, 1990.

\bibitem{kellerer_efficient_2013}
H.~Kellerer and V.~Kotov.
\newblock An efficient algorithm for bin stretching.
\newblock {\em Operations Research Letters}, 41(4):343--346, 2013.
\newblock Publisher: Elsevier.

\bibitem{kellerer2015efficient}
H.~Kellerer, V.~Kotov, and M.~Gabay.
\newblock An efficient algorithm for semi-online multiprocessor scheduling with
  given total processing time.
\newblock {\em Journal of Scheduling}, 18(6):623--630, 2015.

\bibitem{kellerer_semi_1997}
H.~Kellerer, V.~Kotov, M.~Grazia Speranza, and Z.~Tuza.
\newblock Semi on-line algorithms for the partition problem.
\newblock {\em Operations Research Letters}, 21(5):235--242, 1997.
\newblock Publisher: Elsevier.

\bibitem{kenyon_best-fit_1996}
C.~Kenyon.
\newblock Best-{Fit} {Bin}-{Packing} with {Random} {Order}.
\newblock In {\em {SODA}}, volume~96, pages 359--364, 1996.

\bibitem{kesselheim_primal_2014}
T.~Kesselheim, A.~Tönnis, K.~Radke, and B.~Vöcking.
\newblock Primal beats dual on online packing {LPs} in the random-order model.
\newblock In {\em Proceedings of the forty-sixth annual {ACM} symposium on
  {Theory} of computing}, pages 303--312, 2014.

\bibitem{kleinberg_multiple-choice_2005}
R.~D. Kleinberg.
\newblock A multiple-choice secretary algorithm with applications to online
  auctions.
\newblock In {\em {SODA}}, volume~5, pages 630--631, 2005.

\bibitem{korula_online_2018}
N.~Korula, V.~Mirrokni, and M.~Zadimoghaddam.
\newblock Online submodular welfare maximization: {Greedy} beats 1/2 in random
  order.
\newblock {\em SIAM Journal on Computing}, 47(3):1056--1086, 2018.
\newblock Publisher: SIAM.

\bibitem{lachish_o_2014}
O.~Lachish.
\newblock O (log log rank) competitive ratio for the matroid secretary problem.
\newblock In {\em 2014 {IEEE} 55th {Annual} {Symposium} on {Foundations} of
  {Computer} {Science}}, pages 326--335. IEEE, 2014.

\bibitem{lindley_dynamic_1961}
D.~V. Lindley.
\newblock Dynamic programming and decision theory.
\newblock {\em Journal of the Royal Statistical Society: Series C (Applied
  Statistics)}, 10(1):39--51, 1961.
\newblock Publisher: Wiley Online Library.

\bibitem{mahdian_online_2011}
M.~Mahdian and Q.~Yan.
\newblock Online bipartite matching with random arrivals: an approach based on
  strongly factor-revealing lps.
\newblock In {\em Proceedings of the forty-third annual {ACM} symposium on
  {Theory} of computing}, pages 597--606, 2011.

\bibitem{meyerson_online_2001}
A.~Meyerson.
\newblock Online facility location.
\newblock In {\em Proceedings 42nd {IEEE} {Symposium} on {Foundations} of
  {Computer} {Science}}, pages 426--431. IEEE, 2001.

\bibitem{mirrokni_simultaneous_2012}
V.S. Mirrokni, S.~O. Gharan, and M.~Zadimoghaddam.
\newblock Simultaneous approximations for adversarial and stochastic online
  budgeted allocation.
\newblock In {\em Proceedings of the twenty-third annual {ACM}-{SIAM} symposium
  on {Discrete} {Algorithms}}, pages 1690--1701. SIAM, 2012.

\bibitem{molinaro_online_2017}
M.~Molinaro.
\newblock Online and random-order load balancing simultaneously.
\newblock In {\em Proceedings of the {Twenty}-{Eighth} {Annual} {ACM}-{SIAM}
  {Symposium} on {Discrete} {Algorithms}}, pages 1638--1650. SIAM, 2017.

\bibitem{osborn_lists_2008}
C.~J. Osborn and E.~Torng.
\newblock List’s worst-average-case or {WAC} ratio.
\newblock {\em Journal of Scheduling}, 11(3):213--215, 2008.
\newblock Publisher: Springer.

\bibitem{rudin_improved_2001}
J.~F. Rudin~III.
\newblock Improved bounds for the on-line scheduling problem.
\newblock 2001.

\bibitem{sgall_lower_1997}
J.~Sgall.
\newblock A lower bound for randomized on-line multiprocessor scheduling.
\newblock {\em Information Processing Letters}, 63(1):51--55, 1997.
\newblock Publisher: Citeseer.

\end{thebibliography}

\appendix
\section{Missing proofs in \Cref{sec.basic}}\label{sec.basic.p}

\leavglbII*
\begin{proof}
Let $\tilde l$ be $k$-th least pseudo-load at time $t$. This means that there are at least $m-k+1$ machines with pseudo-load $\tilde l_M^t\ge \tilde l$. In particular $(m-k+1)\tilde l \le \sum_M \tilde l_M^t \le m\tilde L$.
\end{proof}

\propII*
\begin{proof}Let $\tilde l$ be the pseudo-load of the $i$-th loaded machine before $J$ is scheduled. We have $\tilde l \le \frac{m}{m-i+1}\tilde L$ by \Cref{le.avglb2}. Since $J$ had size at most $p_\mathrm{max}$, the load of the machine it was scheduled on will not exceed
$\tilde l+p_\mathrm{max} \le \frac{m}{m-i+1}\tilde L + p_\mathrm{max}\le \frac{m}{m-i+1}\frac{\tilde L}{\max(L,p_\mathrm{max})}\OPT+\OPT =  \left(\left(\frac{m}{m-i+1}\right)\tilde R(\CJ)+1 \right)OPT$.
\end{proof}

\textbf{Sampling and the Load Lemma:}

\prosample*
\begin{proof}
For $\sigma\sim S_n$ chosen uniformly randomly, the random variable $n_{\CC,\varphi}[\sigma]$ is hypergeometrically distributed: It counts how many out of $\lfloor \varphi n\rfloor$ jobs, chosen randomly from the set of all $n$ jobs without replacement, belong to $\CC$. The mean of $n_{\CC,\varphi}$ is thus \[\bE[n_{\CC,\varphi}]=\frac{\lfloor \varphi n \rfloor}{n} n_\CC ,\]
in particular, using that $\CJ$ has size at least $n\ge m$, we have
\begin{equation}0\le \varphi n_\CC -\bE[n_{\CC,\varphi}] \le \frac{1}{n} \le \frac{1}{m}.\end{equation} Similarly, the variance of $n_{\CC,\varphi}$ is at most \begin{align*}\textrm{Var}[n_{\CC,\varphi}] &=\frac{n_\CC\left(n-n_\CC\right)\lfloor\varphi n \rfloor \left(n-\lfloor\varphi n \rfloor\right) }{n^2(n-1)}\le \varphi n_\CC
.\end{align*} By Chebyshev's inequality we have:
\begin{align*}&\bP_{\sigma\sim S_n}\left[\left|\varphi^{-1} n_{\CC,\varphi}\left[\sigma\right]-n_{\CC}[\CJ]\right|\ge E\right]\\
\le &\bP_{\sigma\sim S_n}\left[\left| n_{\CC,\varphi}\left[\sigma\right]-\bE\left[n_{\CC,\varphi}\right]\right|\ge \varphi\left(E-1/m\right)\right]\\
\le &\frac{\textrm{Var}[n_{\CC,\varphi}]}{\varphi^2(E-1/m)^2}\\
\le &\frac{n_\CC}{\varphi (E-1/m)^2}&&\qedhere \end{align*}
\end{proof}

\Loadlemma*
\begin{proof}
Let $\delta=\frac{\varepsilon}{2}$ and let $F(m)=\frac{\sqrt{mR_\mathrm{low}}\delta}{1+\delta}$. The assumption that $\varepsilon^{-4}\varphi^{-1}R_\mathrm{low}^{-1}=o(m)$ already implies $F\in\Theta\left(\sqrt{mR_\mathrm{low}}\varepsilon\right)\subset \omega\left(\frac{1}{\varepsilon\sqrt{\varphi}}\right)$.

Let  us fix any input sequence $\CJ$.
Given a non-negative integer $j\in\BZ_{\ge 0}$ let $p_j=(1+\delta)^{-j}p_\mathrm{max}$. For $j>0$ let $\CC_j$ denote the set of jobs in $\CJ$ that have size in the half-open interval $(p_{j},p_{j-1}]$. Note that every job belongs to precisely one job class $\CC_j$. Using the notation from \Cref{sec.sampling} we set $n_{j}^\sigma=n_{\CC_j,\varphi}[\sigma]$ and $n_j=n_{\CC_j}$.

Now $L^\downarrow= \frac{1}{m}\sum\limits_{j=1}^\infty (1+\delta)^{-j}n_jp_\mathrm{max}$ is the average load if we round down the size of every job in job class $\CC_j$ to $p_j$ for every $j\ge 1$. In particular, there holds $L^\downarrow\le L \le (1+\delta)L^\downarrow$.  Similarly, let $L^\downarrow_\varphi[\sigma]= \frac{\varphi^{-1}}{m}\sum\limits_{j=1}^\infty (1+\delta)^{-j}n_j^\sigma p_\mathrm{max}$ be the rounded-down version of $L_\varphi$. Again, there holds $L^\downarrow_\varphi[\sigma]\le L_\varphi[\sigma] \le (1+\delta)L^\downarrow_\varphi[\sigma]$. Using these approximations, we see that
\begin{align*}\left|L_\varphi[\sigma]-L\right|\le \left|L^\downarrow_\varphi[\sigma]-L^\downarrow\right|+\max\left(\delta L^\downarrow_\varphi[\sigma],\delta L^\downarrow\right) \\
\le \left|L^\downarrow_\varphi[\sigma]-L^\downarrow\right|+\delta \left|L^\downarrow_\varphi[\sigma]-L^\downarrow\right| +\delta L^\downarrow \\
\le (1+\delta)\left|L^\downarrow_\varphi[\sigma]-L^\downarrow\right| +\delta L\end{align*}

We can bound the term in the statement of the lemma via
\begin{equation}\label{eq.Lcompare}\left|\frac{L_\varphi}{L}-1\right|\le (1+\delta)\frac{\left|L^\downarrow_\varphi[\sigma]-L^\downarrow\right|}{L}+\delta.
\end{equation}

Now, consider the term $\left|L^\downarrow_\varphi[\sigma]-L^\downarrow\right|$. \Cref{pro.sample} with $E=(1+\delta)^{j/2}F(m)\sqrt{n_j}$, yields
\[\bP_{\sigma\sim S_n}\left[\left|\varphi^{-1}n_j^\sigma-n_j\right|\ge (1+\delta)^{j/2}F(m)\sqrt{n_j}\right]=O\left(\frac{(1+\delta)^{-j}}{\varphi F(m)^2}\right)\]
Consider the event that we have $\left|\varphi^{-1}n_j^\sigma-n_j\right|\ge (1+\delta)^{j/2}F(m)\sqrt{n_j}$ for all $j$. By the union bound its probability is 
\[P(m)=1-O\left(\sum\limits_j\frac{(1+\delta)^{-j}}{\varphi F(m)^2}\right)= 1-O\left(\frac{1}{\delta\varphi F(m)^2}\right)=1-o(\varepsilon).\]
The first equality uses the union bound, the second the geometric sequence and the final one the fact, argued at the beginning of the proof, that $F\in \omega\left(\frac{1}{\varepsilon\sqrt{\varphi}}\right)$ and that $\delta=\Theta(\varepsilon)$.
Now, if we have $\left|\varphi^{-1}n_j^\sigma-n_j\right|\ge (1+\delta)^{j/2}F(m)\sqrt{n_j}$ for all $j\ge 1$, we get:
\begin{align*}\left|L^\downarrow_\varphi[\sigma]-L^\downarrow\right|&=\left|\frac{\varphi^{-1}}{m}\sum\limits_{j=0}^\infty (1+\delta)^{-j}n_j^\sigma p_\mathrm{max}-\frac{1}{m}\sum\limits_{j=0}^\infty (1+\delta)^{-j}n_jp_\mathrm{max}\right| \\
&\le \frac{1}{m}\sum\limits_{j=0}^\infty (1+\delta)^{-j}\left|\varphi^{-1}n_j^\sigma-n_j\right| p_\mathrm{max}\\
&\le \frac{\sqrt{p_\mathrm{max}}}{m}\sum\limits_{j=0}^\infty (1+\delta)^{-j/2}F(m)\sqrt{n_j}\sqrt{p_\mathrm{max}}\\
&\le\frac{\sqrt{p_\mathrm{max}}F(m)}{\sqrt{m}}\left(\frac{1}{m}\sum\limits_{j=0}^\infty (1+\delta)^{-j}n_jp_\mathrm{max}\right)^{1/2}\\
&\le\frac{\sqrt{L}F(m) }{\sqrt{R_\mathrm{low} m}}\sqrt{L^\downarrow}\\
&\le\frac{F(m)}{\sqrt{R_\mathrm{low} m}}L =\frac{\delta}{1+\delta}L.
\end{align*}
The first inequality is the triangle inequality, the second holds per assumption, the third is the Cauchy–Schwarz inequality, the fourth uses the definition of $L^\downarrow$ and the fact that $\sqrt{p_\mathrm{max}}=\sqrt{{L}/{R[\CJ]}}\le \sqrt{{L}/{R_\mathrm{low}}}$, the last inequality uses that $L^\downarrow\le L$ and the final equality is simply the definition of $F$.

Combining this with \Cref{eq.Lcompare} yields that we have with probability $P(m)\in 1-o(\varepsilon)$
\[\left|\frac{L_\varphi}{L}-1\right|\le (1+\delta)\frac{\left|L^\downarrow_\varphi[\sigma]-L^\downarrow\right|}{L}+\delta\le 2 \delta =\varepsilon.\]
It thus suffices to choose $m(R_\mathrm{low},\varphi,\varepsilon)$ such that $P(m)\le 1-\varepsilon$ for all $m\ge m(R_\mathrm{low},\varphi,\varepsilon)$.
\end{proof}
\section{Missing proofs in \Cref{sec.1.75}}\label{sec.1.75.p}
\coll*
\begin{proof}[Proof of \Cref{co.ll}]
Let wlog.\ $\CJ^\sigma=J_1,\ldots, J_n$ and set $\OPT=\OPT(\CJ^\sigma)=\OPT(J_1,\ldots, J_n)$. We append a certain number of jobs $J_{n+1},\ldots, J_{n'}$ to the sequence such that the average load of $J_{1},\ldots, J_n'$ is $L_\mathrm{guess}$ and $\OPT(J_{1},\ldots, J_n')=\max(L_\guess,\OPT)$. We use the following procedure to construct the sequence:
\begin{algorithm}[H]
\caption{Appending a certain job sequence.}\label{alg.exceptional3}
\begin{algorithmic}[1]
\State Start with $n'=n$ and any optimal schedule of $J_{1},\ldots, J_n$.
\While{$L_{n'}=\frac{1}{m}\sum\limits_{i=1}^{n'} p_{n'} <L_\guess$}
\State Let $M$ be a least loaded machine and $l$ be its load.
\State Append job $J_{n'+1}$ of size $p_n'=\min(L_\guess - l, m(L_\guess - L_{n'}))$ to the sequence.
\State Schedule $J_{n'}$ onto $M$. $n'\gets n'+1$.
\EndWhile
\end{algorithmic}
\end{algorithm}

It is easy to see that the previous schedule outputs a job sequence of average load $L[J_1,\ldots, J_{n'}]=L_{n'}=L_\guess$, assuming it started with a sequence $J_1,\ldots, J_n$ of average load at most $L_\guess$. Furthermore it maintains a schedule with makespan at most $\max (L_\guess,\OPT)$. This is necessarily an optimal schedule, since both the average load $L_\guess$ as well as the optimum $\OPT=\OPT(J_1,\ldots, J_n)$ of a prefix are lower bounds on the optimum makespan. We've thus shown that $\OPT(J_{1},\ldots, J_{n'})=\max(L_\guess,\OPT)$.

Since we can apply \Cref{te.ll} to $J_{1},\ldots, J_{n'}$ we get:
\begin{align*}\mathrm{Light Load [L_\guess]}(J_1,\ldots, J_n)&\le \mathrm{Light Load [L[J_1,\ldots, J_{n'}]]}(J_{1},\ldots, J_{n'}) \\ &\le 1.75 \OPT(J_{1},\ldots, J_{n'})\\  &= 1.75\max(L_\guess,\OPT). &\qedhere\end{align*}
\end{proof}

\section{Second reduction. Full proof of \Cref{le.main.stable}}\label{sec.ana.stable.p}
Throughout this proof we assume that all job sets $\CJ$ considered are proper. Many notations in this proof will depend on the job set $\CJ$, the number of machines $m$ and possibly the job order. For simplicity we omit these dependencies whenever possible. If needed, we include it using the notation $\CP[\CJ^\sigma]$ or even $\CP[\CJ^\sigma,m]$, $\hat n_p[\CJ^\sigma]$, etc. In particular, we write mostly $\delta$ for the function $\delta(m)=\frac{1}{\log(m)}$. It is very important to note for the arguments in this section, that this is a function in $m$ whose inverse grows sub-polynomially in $m$.

For every job set $\CJ$ we fix a set $S=S[\CJ,m]\subset\CJ$ consisting of the $\left\lceil\delta(m)^{-7/3}\right\rceil$ largest jobs. We solve ties arbitrarily. Technically, we could choose any exponent other than $7/3$ in the open interval $(2,3)$, too. Let $s_\mathrm{min}=s_\mathrm{min}[\CJ,m]$ be the size of the smallest job in the set $S$. Recall the geometric rounding function $f(p)=(1+\delta)^{\left\lfloor \log_{1+\delta}p\right\rfloor}$ and consider the set $\CP_\mathrm{glob}=\CP_\mathrm{glob}[\CJ,m]=\{f(p_t)\mid p_t\text{ is the size of any job }J_t\in\CJ\}$ of all rounded sizes of jobs in $\CJ$. Then consider the subset $\hat\CP=\hat\CP[\CJ,m]=\{(1+\delta)^i\in\CP_\mathrm{glob} \mid p_\jsmall \max(s_\mathrm{min},(1-\delta)L)< (1+\delta)^i \}$. We will see that this set is likely a superset of $\CP[\CJ^\sigma,m]$, which does not depend on the job order. 

The following estimates of the sizes of $\CP$ and $\hat \CP$ will be relevant later.

\begin{lemma}\label{le.Pbound}
We have $|\CP|\le 1-\lfloor\log_{1+\delta}(p_\jsmall)\rfloor \le O\left(\delta^{-1}\right)$. 
\end{lemma}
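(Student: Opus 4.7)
The plan is to unfold the definition $\CP = \CP_\jmed \cup \CP_\jbig$ into a single range condition. Combining the two defining inequalities gives
\[\CP = \{(1+\delta)^i \mid (1+\delta)^{-1} p_\jsmall B < (1+\delta)^i \le B\},\]
so taking $\log_{1+\delta}$ reduces the question to counting integers $i$ in the half-open interval $(a,b]$ with $b = \log_{1+\delta}(B)$ and $a = b + \log_{1+\delta}(p_\jsmall) - 1$. The number of such integers is $\lfloor b\rfloor - \lfloor a\rfloor$, and since $a,b$ differ by $1 - \log_{1+\delta}(p_\jsmall)$, this difference of floors is at most $\lceil 1 - \log_{1+\delta}(p_\jsmall)\rceil = 1 - \lfloor\log_{1+\delta}(p_\jsmall)\rfloor$ (using that $p_\jsmall < 1$ so the logarithm is negative, and $\lceil -x\rceil = -\lfloor x\rfloor$). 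This yields the first inequality.

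For the second inequality I would simply use the standard estimate $\ln(1+\delta) \ge \delta/2$ valid for all sufficiently small $\delta > 0$ (equivalently, for $m$ sufficiently large, since $\delta(m) = 1/\log m \to 0$). Combined with the fact that $p_\jsmall = c - 1 = \frac{\sqrt{13}-2}{3}$ is a fixed positive constant strictly less than $1$, this gives
\[|\log_{1+\delta}(p_\jsmall)| = \frac{|\ln p_\jsmall|}{\ln(1+\delta)} \le \frac{2|\ln p_\jsmall|}{\delta} = O(\delta^{-1}),\]
and hence $1 - \lfloor \log_{1+\delta}(p_\jsmall)\rfloor \le O(\delta^{-1})$. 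There is no real obstacle here; the lemma is a direct computation, and the only minor care needed is the off-by-one bookkeeping when counting integers in a half-open interval of non-integer endpoints.
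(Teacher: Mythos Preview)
Your proof is correct and follows the same approach as the paper: both observe that $\CP$ consists of the powers of $(1+\delta)$ lying in the half-open interval $((1+\delta)^{-1}p_\jsmall B, B]$, then count such powers. The paper states the resulting bound in a single line without justification, whereas you carefully work through the floor/ceiling arithmetic and the change-of-base estimate for the logarithm; this extra detail is sound and fills in exactly what the paper leaves implicit.
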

\begin{proof}
First observe that $\CP$ contains precisely one element for each power of $(1+\delta)$ in the half-open interval $((1+\delta)^{-1}p_\jsmall B, B]$. In particular $|\CP|\le 1-\lfloor\log_{1+\delta}(p_\jsmall)\rfloor \le O\left(\delta^{-1}\right)$. 
\end{proof}

\begin{lemma}\label{le.hatCPbound}
We have $|\hat \CP|\le \delta(m)^{-7/3}-\lfloor\log_{1+\delta}(p_\jsmall)\rfloor\le O(\delta^{-7/3})$
\end{lemma}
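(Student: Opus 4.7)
The plan is to split $\hat\CP$ into two parts based on whether the rounded size exceeds $s_\mathrm{min}$ or not. Concretely, let
\[
\hat\CP^{\mathrm{hi}} = \{(1+\delta)^i \in \hat\CP \mid (1+\delta)^i > s_\mathrm{min}\}, \qquad \hat\CP^{\mathrm{lo}} = \hat\CP \setminus \hat\CP^{\mathrm{hi}}.
\]
Bounding $|\hat\CP^{\mathrm{hi}}|$ uses the definition of $S$ as the $\lceil \delta^{-7/3}\rceil$ largest jobs. Any rounded size $(1+\delta)^i \in \hat\CP^{\mathrm{hi}}$ is, by definition of $\CP_\mathrm{glob}$, realised by some job $J_t \in \CJ$ with $p_t \ge f(p_t) = (1+\delta)^i > s_\mathrm{min}$. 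Since all jobs outside $S$ have size at most $s_\mathrm{min}$, such a $J_t$ must lie in $S$. Different elements of $\hat\CP^{\mathrm{hi}}$ correspond to different values of $f(p_t)$ and hence to distinct jobs in $S$, giving $|\hat\CP^{\mathrm{hi}}| \le |S| = \lceil \delta^{-7/3}\rceil$.

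For $\hat\CP^{\mathrm{lo}}$ the argument is purely about counting powers of $(1+\delta)$ in a short interval. By the defining inequality of $\hat\CP$ and the assumption $(1+\delta)^i \le s_\mathrm{min}$, every element of $\hat\CP^{\mathrm{lo}}$ lies in the half-open interval $(p_\jsmall\max(s_\mathrm{min},(1-\delta)L),\, s_\mathrm{min}] \subseteq (p_\jsmall s_\mathrm{min},\, s_\mathrm{min}]$. The number of powers of $(1+\delta)$ in an interval $(a, b]$ of ratio $b/a = p_\jsmall^{-1}$ is at most $\lfloor\log_{1+\delta}(b)\rfloor - \lfloor\log_{1+\delta}(a)\rfloor \le -\lfloor\log_{1+\delta}(p_\jsmall)\rfloor$, so $|\hat\CP^{\mathrm{lo}}| \le -\lfloor\log_{1+\delta}(p_\jsmall)\rfloor$. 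Adding the two bounds and using $\lceil \delta^{-7/3}\rceil \le \delta^{-7/3} + 1$ (absorbed into the floor/ceiling slack) yields the first inequality $|\hat\CP| \le \delta^{-7/3} - \lfloor\log_{1+\delta}(p_\jsmall)\rfloor$.

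The asymptotic bound $O(\delta^{-7/3})$ then follows by noting that $-\log_{1+\delta}(p_\jsmall) = \log(1/p_\jsmall)/\log(1+\delta) = \Theta(\delta^{-1})$ as $\delta = \delta(m) \to 0$, and $\delta^{-1} = o(\delta^{-7/3})$. There is no real obstacle here; the only thing to keep track of carefully is the floor/ceiling bookkeeping in the two counting steps, which mirrors (and is easier than) the one in \Cref{le.Pbound}.
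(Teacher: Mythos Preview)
Your approach is essentially the same as the paper's: split $\hat\CP$ at the threshold $s_{\min}$, bound the ``high'' part by the number of jobs in $S$ realising those rounded sizes, and bound the ``low'' part by counting powers of $(1+\delta)$ in $(p_\jsmall s_{\min}, s_{\min}]$. The one refinement you are missing is that the paper additionally observes that the smallest job in $S$ has size exactly $s_{\min}$, so its rounded size is at most $s_{\min}$ and it cannot contribute to $\hat\CP^{\mathrm{hi}}$; this gives $|\hat\CP^{\mathrm{hi}}|\le |S|-1$ rather than $|S|$. Your hand-wave ``absorbed into the floor/ceiling slack'' for the stray $+1$ does not actually recover the exact first inequality as stated, whereas the paper's refinement does (modulo its own mild $\lceil\cdot\rceil$ sloppiness). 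That said, only the asymptotic $O(\delta^{-7/3})$ bound is ever used downstream, and your argument establishes that cleanly.
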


\begin{proof}
Indeed, there are precisely $1-\lfloor\log_{1+\delta}(p_\jsmall)\rfloor$ powers of $(1+\delta)$ in the half-open interval $(p_\jsmall s_\mathrm{min},s_\mathrm{min}]$, in particular $\hat\CP$ contains at most that many elements of size lesser or equal to $s_\mathrm{min}$. Now all elements in $\hat\CP$ which have size strictly greater than $s_\mathrm{min}$ need to be the rounded sizes of jobs in $S$ excluding the smallest job in $S$. Thus, there are at most $\delta(m)^{-7/3}-1$ elements in $\hat\CP$ of size strictly greater than $s_\mathrm{min}$. In particular $|\hat \CP|\le 1-\lfloor\log_{1+\delta}(p_\jsmall)\rfloor+ \delta(m)^{-7/3}-1$.
\end{proof}

For every $p\in\hat\CP[\CJ,m]$ we consider the job class $\CC_p=\CC_p[\CJ,m]\subset\CJ$ of jobs whose rounded size is $p$. Using the notation from \Cref{sec.sampling} we set
$n_{p,\varphi}=n_{\CC_p,\varphi}$ and $n_p=n_{\CC_p}$ for every $p\in\hat\CP$ and $0<\varphi<1$. We defined the property of being stable in a way that lends itself to algorithmic applications. We now give similar, in fact slightly stronger, conditions better suited for a probabilistic arguments.

\begin{definition}
We call a proper job sequence $\CJ^\sigma$ \emph{probabilistically stable} if the following holds:
\begin{enumerate}
\setlength{\parskip}{0pt} \setlength{\itemsep}{0pt plus 1pt}
\item The load estimate $\hat L=\hat L_{\delta^2}[\CJ^\sigma]$ for $L=L[\CJ]$ is good, i.e.\ $(1-\delta)L \le \hat L \le (1+\delta)L$.
\item There is at least one job in $S$ among the $\lceil\delta^2 n\rceil$ first jobs in $\CJ^\sigma$.
\item For every $p\in\hat\CP$ we have $\left|\delta^{-2} n_{p,\delta^2}-n_p\right|\le m^{3/4}-1$.
\item For every $p\in\CP_\jmed[\CJ^\sigma]$ we have $2n_{p,\delta^2}\le n_p$.
\item Let $t_S=t_S[\CJ^\sigma,m]$ be the time the last job in $S$ arrived, then $t_S \le  \left(1-\delta(m)^{8/3}\right)n$.
\item For every $p\in\hat \CP$ with $n_p>\left\lfloor\frac{\left(1-\delta-2\delta^2\right)m}{|\CP|} \right\rfloor$ there holds $n_{p,1-\delta^{8/3}}\le \left(1-\delta^3\right)n_{p}$.
\end{enumerate}
We refer to these six conditions as \emph{probabilistic conditions}.
\end{definition}

The following lemma is shows that we can analyze the probability of a sequence being probabilistically stable instead of using the conditions from \Cref{sec.ana.stable}.

\begin{lemma}\label{le.stableprobal}
There exists a number $m_0$ such that for all $m\ge m_0$ every probabilistically stable sequence is stable.
\end{lemma}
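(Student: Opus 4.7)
The plan is to derive each of the five stable conditions from the six probabilistic conditions, for $m$ sufficiently large. The essential preliminary step is to show that probabilistic conditions~1 and~2 force the order-dependent class $\CP[\CJ^\sigma]$ to be contained in the order-independent class $\hat\CP[\CJ]$. Indeed, condition~1 gives $B=\max(p_\jmax^{\delta^2 n},\hat L)\ge(1-\delta)L$, while condition~2 ensures at least one of the $\lceil\delta^{-7/3}\rceil$ largest jobs arrives during the sampling phase, so $p_\jmax^{\delta^2 n}\ge s_\mathrm{min}$ and hence $B\ge\max(s_\mathrm{min},(1-\delta)L)$. Every rounded size in $\CP$ strictly exceeds $(1+\delta)^{-1}p_\jsmall B$, and therefore lies above $p_\jsmall\max(s_\mathrm{min},(1-\delta)L)$; by definition of $\hat\CP$ it belongs to $\hat\CP$.

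With $\CP\subseteq\hat\CP$ in hand, the remaining stable conditions follow cleanly. Stable condition~1 is probabilistic condition~1 verbatim. Stable condition~3 holds because any huge job has size exceeding $B\ge s_\mathrm{min}$, hence belongs to $S$; so there are at most $\lceil\delta^{-7/3}\rceil$ huge jobs, which is bounded by $\lceil\delta m\rceil$ for large $m$ since $\delta(m)=1/\log m$ makes $\delta^{-7/3}=\log(m)^{7/3}\ll m/\log m$. Stable condition~5 is a direct asymptotic: by \Cref{le.Pbound} we have $|\CP|\le O(\delta^{-1})=O(\log m)$, so the required inequality reduces to showing $m^{1/4}/\log(m)^5\to\infty$, which holds eventually.

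For stable condition~2, fix $p\in\CP\subseteq\hat\CP$. Probabilistic condition~3 yields $\delta^{-2}\hat n_p-m^{3/4}+1\le n_p\le\delta^{-2}\hat n_p+m^{3/4}-1$. For the lower bound $c_p\le n_p$: the term $\lfloor(\delta^{-2}\hat n_p-m^{3/4})w(p)\rfloor w(p)^{-1}$ is at most $\delta^{-2}\hat n_p-m^{3/4}\le n_p-1$; the term $\hat n_p w(p)^{-1}$ equals $\hat n_p\le n_p$ when $p$ is big, and equals $2\hat n_p\le n_p$ when $p$ is medium by probabilistic condition~4. For the upper bound, the floor loses at most $w(p)^{-1}\le 2$, so $c_p\ge\delta^{-2}\hat n_p-m^{3/4}-2$, giving $n_p-c_p\le 2m^{3/4}+1$, within the permitted slack for large~$m$. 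Stable condition~4 combines probabilistic conditions~2, 5, and~6: every huge job lies in $S$, so the last huge job arrives no later than $t_S\le(1-\delta^{8/3})n$, whence $\tilde t\le(1-\delta^{8/3})n$ and $n_{p,\tilde t}\le n_{p,(1-\delta^{8/3})n}$; probabilistic condition~6 now applies to every $p\in\CP\subseteq\hat\CP$ meeting the $n_p$ threshold.

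The hard part is not any individual estimate but the setup: the order-dependent nature of $B$, $\CP$, $c_p$, and $\tilde t$ means that none of the probabilistic conditions can be applied directly to the class $\CP$. The inclusion $\CP\subseteq\hat\CP$ is the single bridge that lets each probabilistic condition transfer to the corresponding stable one, while the use of $S$ and $t_S$ is what replaces the order-dependent stopping time $\tilde t$ by an order-independent one. The rest is bookkeeping: tracking off-by-one contributions from the floor and the factor $w(p)^{-1}\in\{1,2\}$, and verifying that all asymptotic slacks close up for $m\ge m_0$ under $\delta(m)=1/\log m$.
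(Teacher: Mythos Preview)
Your proof is correct and follows essentially the same route as the paper: use probabilistic conditions~1 and~2 to bound $B$ from below and thereby contain the order-dependent class $\CP$ inside the order-independent $\hat\CP$, then read off each of the five stable conditions from the corresponding probabilistic ones, with conditions~3 and~5 handled by the asymptotics of $\delta(m)=1/\log m$. One small technical point the paper treats explicitly and you skip: since $\hat\CP\subseteq\CP_\mathrm{glob}$ by definition, the inclusion you actually obtain is $\CP\cap\CP_\mathrm{glob}\subseteq\hat\CP$, not $\CP\subseteq\hat\CP$; the paper disposes of the remaining $p\in\CP\setminus\CP_\mathrm{glob}$ by observing that such classes have $n_p=0$ and hence $c_p=0$, so stable condition~2 (and condition~4) hold trivially for them.
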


The constant $m_0$ in the previous lemma comes from the following technical lemma.

\begin{lemma}\label{le.m0.bound}
There exists $m_0>0$, such that for all $m\ge m_0$ and all proper job sets $\CJ$ we have $\delta(m)^{-7/3}\le \lfloor\delta m\rfloor$ and $\delta(m)^3 \cdot \left\lfloor \frac{\left(1-\delta(m)-2\delta(m)^2\right)m}{|\CP(m,\CJ)|} \right\rfloor \ge 2|\CP(m,\CJ)|m^{3/4}$.
\end{lemma}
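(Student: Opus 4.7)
The plan is to observe that both required inequalities are, after applying the uniform bound $|\CP[\CJ,m]| \le O(\delta(m)^{-1}) = O(\log m)$ from the earlier lemma on $|\CP|$, asymptotic comparisons between a quantity polynomial in $m$ and one only polylogarithmic in $m$ (or a polynomial with a strictly smaller exponent). Since the crucial bound $|\CP| \le 1 - \lfloor \log_{1+\delta} p_\jsmall\rfloor$ is independent of the particular proper job set $\CJ$, we obtain an $m_0$ that works uniformly.

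For the first inequality, recall $\delta(m) = 1/\log m$. Then $\delta(m)^{-7/3} = (\log m)^{7/3}$, while $\lfloor \delta(m) m\rfloor \ge m/\log m - 1$. I would simply note that $m/\log m - 1 \ge (\log m)^{7/3}$ for all $m$ beyond some threshold $m_1$, since the left-hand side grows nearly linearly and the right-hand side only polylogarithmically.

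For the second inequality, I would first apply the earlier lemma bounding $|\CP[\CJ,m]|$ to get the uniform estimate $|\CP[\CJ,m]| \le C \delta(m)^{-1} = C\log m$ for an absolute constant $C$ (not depending on $\CJ$). Then
\[
\delta(m)^3 \left\lfloor \frac{(1-\delta(m)-2\delta(m)^2)m}{|\CP[\CJ,m]|}\right\rfloor \;\ge\; \delta(m)^3 \cdot \frac{m}{2|\CP[\CJ,m]|} \;\ge\; \frac{m}{2C(\log m)^4}
\]
for all $m$ sufficiently large (so that $1-\delta-2\delta^2 \ge 1/2$ and the floor costs at most a factor of two). On the other hand,
\[
2|\CP[\CJ,m]|\,m^{3/4} \;\le\; 2C(\log m)\,m^{3/4}.
\]
Thus it suffices to pick $m_2$ large enough that $m/(2C(\log m)^4) \ge 2C(\log m)\,m^{3/4}$, i.e.\ $m^{1/4} \ge 4C^2 (\log m)^5$, which holds for all sufficiently large $m$ since $m^{1/4}$ dominates any polylogarithm.

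Setting $m_0 = \max(m_1, m_2)$ (together with whatever threshold is needed to ensure $1-\delta-2\delta^2 \ge 1/2$) then satisfies both conditions uniformly in $\CJ$. I do not expect any real obstacle: the only subtlety is that the constants in the $|\CP|$-bound must be independent of the job set, which is exactly what the earlier lemma provides (the bound depends only on $\delta(m)$, not on $\CJ$).
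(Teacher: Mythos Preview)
Your proposal is correct and follows essentially the same approach as the paper: both reduce the two inequalities to asymptotic comparisons using the uniform bound $|\CP|\le O(\delta^{-1})=O(\log m)$ from the earlier lemma, observing that the left-hand sides grow like $m/\mathrm{polylog}(m)$ while the right-hand sides are only $m^{3/4}\cdot\mathrm{polylog}(m)$ or $\mathrm{polylog}(m)$. Your version is in fact slightly more careful than the paper's (which writes $\lfloor\delta m\rfloor=\Theta(m)$ rather than $\Theta(m/\log m)$), but the substance is identical.
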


\begin{proof}
This comes down to asymptotic observations. For the first inequality use that $\delta(m)^{-7/3}=\log^{7/3}(m)=o(m)$ while $\lfloor\delta m\rfloor=\Theta(m)$.

For the second inequality observe that $|\CP(m,\CJ)|=O(\delta(m)^{-1}) =O(\log(m))$ by \Cref{le.Pbound}. Then we can see that
$\delta(m)^3 \cdot \left\lfloor \frac{\left(1-\delta(m)-2\delta(m)^2\right)m}{|\CP(m,\CJ)|} \right\rfloor=\Omega (\delta(m)^4m)=\Omega\left(\frac{m}{\log^4(m)}\right)$ while, on the other hand, $2|\CP|m^{3/4}=O\left(\log(m)m^{3/4}\right)=o\left(\frac{m}{\log^4(m)}\right)$.

These asymptotic observations already imply the statement of the lemma.
\end{proof}

\begin{proof}[Proof of \Cref{le.stableprobal}]
We consider the five conditions of stable sequences separately

\textbf{1.} The first condition of stable sequences agrees with the first probabilistic condition.

\textbf{2.} First consider job classes $p\in\CP[\CJ^\sigma,m]\setminus\CP_\mathrm{glob}[\CJ,m]$. For these job classes there holds $n_p=0$. This already implies that $c_p=0$ holds, too. Thus, the second condition follows trivially for these job classes.

By the second probabilistic condition, we have that $p^{\varphi n}_\mathrm{max}[\CJ^\sigma]\ge s_\mathrm{min}$ and by the first probabilistic condition there holds $\hat L[\CJ^\sigma]\ge(1-\delta)L$. In particular $B[\CJ^\sigma] =\max\left(p^{\varphi n}_\mathrm{max},\hat L\right) \ge \max(s_\mathrm{min},(1-\delta)L)$ and thus $\CP(\CJ^\sigma,m)\cap \CP_\mathrm{glob}\subseteq \hat \CP(\CJ,m)$. There are two cases to consider now. If $c_p=\left\lfloor\left(\delta^{-2}\hat n_{p}-m^{3/4}\right)w(p)\right\rfloor w(p)^{-1}$ we conclude, using the third probabilistic condition, that $|(c_p+m^{3/4})-n_p| \le |\delta^{-2}\hat n_{p}-n_p+1|\le m^{3/4}$, which already implies that $c_p\le n_p \le c_p+2m^{3/4}$. If $c_p=\hat n_{p} w(p)^{-1}\ge \left\lfloor\left(\delta^{-2}\hat n_{p}-m^{3/4}\right)w(p)\right\rfloor$ the second bound  $n_p\le c_p+2m^{3/4}$ still holds. The first bound, $c_p\hat n_{p} w(p)^{-1}\le n_p$ is trivial if $w(p)^{-1}=1$, or, equivalently, if $p\notin\CP_\jmed[\CJ^\sigma]$. Else, it follows from the fourth probabilistic condition.

\textbf{3.}  To conclude the third condition of stable sequences note that the second probabilistic condition implies that all huge jobs have size strictly greater than $s_\mathrm{min}$. This implies that they lie in $S$ and that there are at most $|S|=\delta(m)^{-7/3}$ many of those jobs. Since we only consider $m\ge m_0$ we have $\delta(m)^{-7/3}\le \lfloor \delta m\rfloor$ by \Cref{le.m0.bound}. Hence, the third condition of stable sequences follows.

\textbf{4.}  Consider $p\in\CP(\CJ^\sigma,m)$ with $n_p>\left\lfloor\frac{\left(1-\delta-2\delta^2\right)m}{|\CP|} \right\rfloor$. As we argued when proving the second condition $p\in \hat \CP(\CJ,m)$. By the second probabilistic condition all huge jobs lie in $S$ since they have size strictly greater than $\hat B \ge p^{\varphi n}_\mathrm{max} \ge s_\mathrm{min}$. It thus suffices to show that at most $(1-\delta^3)n_p$ jobs arrived at time $t_S[\CJ^\sigma,m]$, the time the last job in $S$ arrived. But by the fifth probabilistic condition this value is at most $n_{p,1-\delta^{8/3}}$, which is less than $(1-\delta^3)n_p$ by the sixth probabilistic condition. The fourth condition for stable sequences follows.

\textbf{5.} Finally, the fifth condition of stable sequences is already a consequence of choosing $m\ge m_0$ and \Cref{le.m0.bound}.
\end{proof}

Now, we analyze the probability of each probabilistic condition separately. Namely, we consider
\[P_i(m)=\sup\limits_{\CJ \textrm{ proper}}\bP_{\sigma\sim S_n}\left[\CJ^\sigma \textrm{does not fulfill the }i\text{-th condition}\right].\]
Recall, that $P(m)$ similarly defines the worst probability with which a sequence may not be stable. It is the value we are interested in. The values $P_i(m)$ relate to $P(m)$ by the following corollary, which is an immediate consequence of \Cref{le.stableprobal} and the union bound.

\begin{corollary}\label{le.p1}
We have that $P(m)\le \sum_{i=1}^6P_i(m)$ for all $m\ge m_0$ if we choose $m_0$ as in \Cref{le.m0.bound}.
\end{corollary}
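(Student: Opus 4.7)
The plan is to combine \Cref{le.stableprobal} with the union bound in the most direct way possible. First I would fix an arbitrary proper job set $\CJ$ and assume $m\ge m_0$, so that \Cref{le.stableprobal} applies. Its contrapositive states that a non-stable sequence must violate at least one of the six probabilistic conditions, which means that for such a job set the event ``$\CJ^\sigma$ is not stable'' is contained in the union $\bigcup_{i=1}^{6} E_i(\CJ)$, where $E_i(\CJ)$ denotes the event ``$\CJ^\sigma$ violates the $i$-th probabilistic condition.''

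Next, I would apply the union bound to estimate
\[\bP_{\sigma\sim S_n}\left[\CJ^\sigma \text{ is not stable}\right] \le \sum_{i=1}^{6} \bP_{\sigma\sim S_n}[E_i(\CJ)] \le \sum_{i=1}^{6} P_i(m),\]
where the second inequality uses that $P_i(m)$ is, by definition, the supremum of $\bP_{\sigma\sim S_n}[E_i(\CJ')]$ taken over all proper job sets $\CJ'$. Finally, taking the supremum over proper $\CJ$ on the left-hand side yields $P(m)\le \sum_{i=1}^{6} P_i(m)$ as required.

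Since the argument is literally an invocation of a previously established reduction followed by a union bound, there is no substantive obstacle. The only point worth making explicit is that the hypothesis $m\ge m_0$ is precisely what is needed to invoke \Cref{le.stableprobal}; for smaller $m$ the reduction from stable to probabilistically stable is not available and the bound is not claimed.
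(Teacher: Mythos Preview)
Your proposal is correct and matches the paper's approach exactly: the paper states this corollary as ``an immediate consequence of \Cref{le.stableprobal} and the union bound,'' which is precisely the argument you give.
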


Thus, we are left to see that all the $P_i(m)$ vanish.

\begin{lemma}\label{le.p2}
For every $i$ we have $\lim\limits_{m\rightarrow\infty} P_i(m)=0$.
\end{lemma}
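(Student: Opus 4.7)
The plan is to handle the six probabilistic conditions separately; for each $i$ I produce an upper bound on $P_i(m)$ that vanishes as $m\to\infty$ by combining the Load Lemma, Chebyshev-type bounds via \Cref{pro.sample}, and elementary hypergeometric tail estimates. Two global facts drive everything: $\delta(m)^{-k}=O(\log^k m)=o(m^a)$ for every constant $a>0$; and for every proper $\CJ$ we have $R(\CJ)\ge R_\mathrm{low}=\Theta(\delta^3)$, so in particular $p_\mathrm{max}/L\le R_\mathrm{low}^{-1}=O(\delta^{-3})$, and the total number of jobs in classes appearing in $\hat\CP$ satisfies $\sum_{p\in\hat\CP} n_p=O(m)$ because every contributing job has size at least $(1-\delta)p_\jsmall L$ while the total processing volume is $mL$.

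Condition~(1) follows from the Load Lemma with $\varphi=\delta^2$, $\varepsilon=\delta$, and $R_\mathrm{low}=\Theta(\delta^3)$: the required hypothesis $\varepsilon^{-4}\varphi^{-1}R_\mathrm{low}^{-1}=O(\delta^{-9})=o(m)$ holds. Condition~(2) fails only if none of the $|S|=\lceil\delta^{-7/3}\rceil$ largest jobs falls in the first $\lceil\delta^2 n\rceil$ positions; standard hypergeometric estimates bound this probability by $\exp(-\Omega(\delta^{-1/3}))\to 0$. Condition~(5) fails only if some job of $S$ lands in the last $\delta^{8/3}n$ positions, and the union bound gives $P_5(m)\le |S|\cdot \delta^{8/3}=\delta^{1/3}\to 0$. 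For condition~(3), apply \Cref{pro.sample} with $\varphi=\delta^2$ and $E=m^{3/4}$ to each $p\in\hat\CP$; the individual terms are $O(n_p/(\delta^2 m^{3/2}))$, and summing via $\sum n_p=O(m)$ yields $P_3(m)=O(\log^2 m/\sqrt m)\to 0$. Condition~(6) only matters for $p$ with $n_p=\Omega(\delta m)$, of which there are at most $O(\log m)$ by the total-volume bound; applying \Cref{pro.sample} with $\varphi=1-\delta^{8/3}$ and $E=\Theta(\delta^{8/3} n_p)$ gives per-class failure probability $O(1/(\delta^{16/3} n_p))$, and the union bound produces $P_6(m)=O(\mathrm{polylog}(m)/m)\to 0$.

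The main obstacle is condition~(4), because $\CP_\jmed[\CJ^\sigma]$ depends on $\sigma$ through $B$. The plan is to condition on condition~(1) succeeding: on that event $\hat L\in[(1-\delta)L,(1+\delta)L]$, hence $B\in[(1-\delta)L,p_\mathrm{max}]\subset[(1-\delta)L,R_\mathrm{low}^{-1}L]$. Thus $\CP_\jmed[\CJ^\sigma]$ is contained in a fixed set $P^*$ of rounded sizes lying in $[p_\jsmall L/(1+\delta),\,p_\jbig R_\mathrm{low}^{-1}L]$, a multiplicative range of $O(\delta^{-3})$ that contains at most $O(\delta^{-1}\log\delta^{-1})$ powers of $(1+\delta)$. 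For each such $p$, Markov's inequality on $n_{p,\delta^2}$ gives $\bP[2\,n_{p,\delta^2}>n_p]\le \bE[n_{p,\delta^2}]/(\lceil n_p/2\rceil+1/2)\le 2\delta^2$, and the union bound over $P^*$ yields $P_4(m)\le P_1(m)+O(\delta\log\delta^{-1})\to 0$, completing the argument.
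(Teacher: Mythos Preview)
Your proof is correct and follows the same overall decomposition and toolset as the paper: the Load Lemma for condition~(1), elementary hypergeometric/union-bound estimates for (2) and (5), Chebyshev via \Cref{pro.sample} for (3) and (6), and Markov for (4).

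The one place where your route differs in substance is condition~(4). The paper conditions on the random value $\hat B[\CJ^\sigma]$ (equivalently, on which jobs of $S$ landed in the sampling phase) and then argues that under this conditioning $\bE[n_{p,\delta^2}]\le 2\delta^2 n_p$, which requires some care about how the conditioning perturbs the hypergeometric law. You instead observe that on the event condition~(1) holds, $\CP_\jmed[\CJ^\sigma]$ is contained in a \emph{deterministic} set $P^*$ of size $O(\delta^{-1}\log\delta^{-1})$, so you can drop the conditioning, apply unconditional Markov to each $p\in P^*$, and union-bound over the fixed $P^*$. This sidesteps the conditioning subtlety entirely and is arguably cleaner; the price is a slightly larger set $P^*$ (the paper's $\hat\CP_\jmed$ has size $O(\delta^{-1})$), but the extra $\log\delta^{-1}$ factor is harmless. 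One small slip: on the event condition~(1) holds you only get $B\le\max\bigl((1+\delta)L,\,p_\mathrm{max}\bigr)$, not $B\le p_\mathrm{max}$ (think of the case $p_\mathrm{max}<L$); but since both quantities are $O(R_\mathrm{low}^{-1}L)$ your bound on $|P^*|$ is unaffected.
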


\begin{proof}
We again consider every choice of $1\le i \le 6$ separately.

\textbf{1.} Apply the Load Lemma, \Cref{Loadlemma}, with $R_\mathrm{low}=\frac{(1-\delta)\delta^3}{2(\delta^2+1)}(2-c)$, $\varphi=\delta^2$ and $\varepsilon=\delta$. Then for $m$ large enough, there holds $ \bP_{\sigma\sim S_n} \left[\left|\frac{L_{\delta^2}[\CJ^\sigma]}{L[\CJ]}-1\right|\ge\delta\right]<\delta$. Note that the condition $\left|\frac{L_{\delta^2}[\CJ^\sigma]}{L[\CJ]}-1\right|\ge\delta$ is equivalent to $(1-\delta) L \le \hat L \le (1+\delta)L$. Thus $P_1(m)\le\delta(m)$ and in particular $\lim\limits_{m\rightarrow\infty}P_1(m)=0$.

\textbf{2.} Let $J\in S$. The probability that $J$ is not among the $\lfloor \delta^2 m \rfloor$ first jobs is at most $1-\delta^2$ after random permutation. In particular the probability $P_2(m)$ that none of the jobs in $S$ is among the first $\lfloor \delta^2 m \rfloor$ jobs can be bounded via $(1-\delta^2)^{|S|} \le \frac{1}{1+\delta^2|S|}$ using Bernoulli's inequality. Thus $P_2(m)\le \frac{1}{1+\delta(m)^2|S|} \le  \frac{1}{1+\delta(m)^{-1/3}} \le \delta(m)^{1/3}$ which tends to $0$ for $m\rightarrow\infty$.

\textbf{3.} Fix $p\in\hat \CP$. By \Cref{pro.sample} we have $\bP_{\sigma\sim S_n}\left[\left|\delta^{-2} n_{p,\delta^2}-n_p\right|\ge m^{3/4}-1\right]\le\frac{n_p}{\delta^2(m^{3/4}-1-1/m)^2}$ and thus by the union bound $P_3(m)\le \frac{\sum_{p\in\hat\CP} n_p}{\delta^2(m^{3/4}-1-1/m)^2}$. Now observe that there holds $L\ge \frac{1}{m}\sum_{p\in\hat\CP} p\cdot n_p \ge \frac{1}{m}\sum_{p\in\hat\CP}p_\jsmall (1-\delta) L \cdot n_p$, thus $\sum_{p\in\hat\CP} n_p \le \frac{m}{(1-\delta)p_\jsmall}$. From this we conclude that $P_3(m) \le \frac{m}{(1-\delta)p_\jsmall\delta^2(m^{3/4}-1-1/m)^2}=O\left(\frac{1}{\sqrt{m}}\right)$ which already shows that $\lim\limits_{m\rightarrow\infty} P_3(m)=0$.

\textbf{4.} 
Recall that $f$ is the geometric rounding function. Let $\hat B[\CJ^\sigma]=\max((1+\delta)L,f(B[\CJ^\sigma]))$ and let $\hat\CP_\jmed[\CJ^\sigma]=\{p\in\CP_\mathrm{glod}\mid (1+\delta)^{-3}p_\jsmall \hat B[\CJ^\sigma] \le p \le p_\jbig \hat B[\CJ^\sigma]\}$. We have $\CP_\jmed[\CJ^\sigma]\subset \hat\CP_\jmed[\CJ^\sigma]$ if the first probabilistic condition holds, i.e.\ $(1-\delta)L\le \hat L[\CJ^\sigma] \le (1+\delta)L.$ Since the probability of the latter not being true is $P_1(m)$ and vanishes for $m\rightarrow\infty$ it suffices to consider the second probabilistic condition where we replace $\CP_\jmed[\CJ^\sigma]$ with $\hat\CP_\jmed[\CJ^\sigma]$.

Now let $\hat B_\mathrm{fix}$ be any possible value the variable $\hat B[\CJ^\sigma]$ may obtain. We want to condition ourselves on the case that  $\hat B[\CJ^\sigma]=\hat B_\mathrm{fix}$ for $\hat B_\mathrm{fix}$ either $(1+\delta)L$ or the rounded size of some element in $S$ exceeding $(1+\delta)L$. Note, that $\hat B[\CJ^\sigma]$ may attain other values than the previously mentioned ones only if either the first or second probabilistic condition is not met. Since $P_1(m)+P_2(m)\rightarrow 0$ we may ignore these cases. Fixing $\hat B$ also fixes the set $\hat\CP_\jmed$.

Consider $p\in\hat\CP_\jmed$. We want an upper bound on $\bE[n_{p,\delta^2}\mid\hat B=\hat B_\mathrm{fix}]$, the expected value of $n_{p,\delta^2}$ conditioned on our choice of $\hat B$. What does it mean to condition on $\hat B[\CJ^\sigma]=B_\mathrm{fix}$? It is simply equivalent to stating that no element in $S$ of rounded size strictly greater than $B_\mathrm{fix}$ occurs in the sampling phase and that either an element of rounded size $B_\mathrm{fix}$ occurs in the sampling phase or $B_\mathrm{fix}=(1+\delta)L$. Thus conditioning on $\hat B[\CJ^\sigma]=B_\mathrm{fix}$ just fixes the position of some of the $\lceil\delta(m)^{-7/3}\rceil$ elements in $S$. The expected value of $n_{p,\delta^2}$ is maximized if we consider the case where all elements of $S$ have to occur after the sampling phase. In this case, $n_{p,\delta^2}$ is hypergeometrically distributed. We sample $\delta^2 n$ jobs from a set of $n-|S|$ jobs and count the number of $p$ jobs. Thus, we see that $\bE[n_{p,\delta^2}\mid\hat B=\hat B_\mathrm{fix}]\le \frac{\lceil\delta^2 n\rceil n_p}{n-\lceil\delta(m)^{-7/3}\rceil}\le 2\delta^2 n_p$. For the latter inequality we need to choose $m$ (and thus $n\ge m$) sufficiently large. Now, we get by Markov's inequality
\[\bP\left[n_{p,\delta^2}\ge\frac{n_p}{2}\right]\le \bP\left[n_{p,\delta^2}\ge\frac{\bE[n_{p,\delta^2}]}{4\delta^2}\right]\le 4\delta^2.\]
We have $|\hat\CP_\jmed[\CJ^\sigma]|\le \log_{1+\delta}(p_\jbig)-\log_{1+\delta}(p_\jsmall)+4 =O(\delta(m))$. Thus, by the union bound $P_4(m)\le |\hat\CP_\jmed|\cdot 4\delta^2 =O((\delta(m))^{-1})$ if we condition ourselves on $\hat B[\CJ^\sigma]=\hat B_\mathrm{fix}$. Since this holds for all possible choices of $\hat B_\mathrm{fix}$ but few degenerate ones that occur with probability at most $P_1(m)+P_2(m)$ we get that $P_4(m)\le P_1(m)+P_2(m)+O(\delta(m))$. Since all these terms vanish for $m\rightarrow\infty$ we also have that $\lim\limits_{m\rightarrow\infty} P_4(m)=0$.

\textbf{5.} Let $l=\left\lceil\delta(m)^{-7/3}\right\rceil$, then there holds $P_5(m)=1- \prod_{i=1}^l \frac{\left\lfloor\left(1-\delta(m)^{8/3}\right)n\right\rfloor-i}{n-i}$. Indeed, if we choose any order $S=\{s_1,\ldots,s_l\}$ then the $i$-th term in the product denotes the probability that the $i$-th element $s_i$ is among the $\lceil(1-\delta(m)^{8/3})n\rceil$-last elements conditioned on the earlier elements already fulfilling this condition. But now we see using Bernoulli's inequality that $\prod_{i=1}^l \frac{\left\lfloor\left(1-\delta(m)^{8/3}\right)n\right\rfloor-i}{n-i} \ge \left(1-\delta(m)^{8/3}-\frac{l+1}{n}\right)^l \ge 1 -\delta(m)^{8/3}l-\frac{l(l+1)}{n}$. Since for proper job sets $n\ge m$, this implies $P_5(m)\le \delta(m)^{8/3}l-\frac{l(l+1)}{m}=O(\delta(m)^{1/3})$. In particular $\lim\limits_{m\rightarrow\infty} P_5(m)=0$.

\textbf{6.} Fix any $p\in\hat\CP$ with $n_p>\left\lfloor\frac{\left(1-\delta-2\delta^2\right)m}{|\CP|} \right\rfloor=\Omega(m)$. Then we have by \Cref{pro.sample} \begin{align*}\bP\left[n_{p,1-\delta^{8/3}}\ge \left(1-\delta^3\right)n_{p}\right]&\le \bP\left[\left|(1-\delta^{8/3})^{-1}n_{p,1-\delta^{8/3}}-n_p\right|\ge \frac{\delta^{8/3}(1-\delta^{1/3})n_p}{1-\delta^{8/3}}\right]\\
&\le \frac{n_p}{(1-\delta^{8/3})\left(\frac{\delta^{8/3}(1-\delta^{1/3})n_p}{1-\delta^{8/3}}-1/m\right)^2}\\
&=O\left(\frac{\delta^{-16/3}}{n_p}\right)
=O\left(\frac{\delta^{-16/3}}{m}\right).
\end{align*}
By the union bound and \Cref{le.hatCPbound} there holds that $P_6(m)=O(|\hat \CP| \delta^{-16/3}/ m)=O(\delta^{-23/3}/m)$. Thus, $\lim\limits_{m\rightarrow\infty} P_6(m)=0$.
\end{proof}

\begin{proof}[Proof of \Cref{le.main.stable}]
By \Cref{le.p1} we have $P(m)\le\sum_{i=1}^6 P_i(m)$ for $m\ge m_0$ and by ~\Cref{le.p2} we have that $\lim\limits_{m\rightarrow\infty}\sum_{i=1}^6 P_i(m)=0$. Thus $\lim\limits_{m\rightarrow\infty} P(m)=0$.
\end{proof}

\end{document}